\pdfoutput=1
\documentclass[11pt]{article}

\usepackage[margin=1in]{geometry}
\usepackage{amsmath,amssymb,amsthm}
\usepackage{booktabs}
\usepackage{array}  
\usepackage{graphicx}
\usepackage{setspace}
\usepackage{natbib}
\setcitestyle{aysep={}}
\usepackage[colorlinks=true,linkcolor=black,citecolor=black,urlcolor=black]{hyperref}
\usepackage{caption}
\usepackage{subcaption}
\newcommand{\captitle}[1]{{\normalsize\bfseries #1\par}\vspace{1pt}\noindent\ignorespaces}
\usepackage{float}   

\graphicspath{{figures/}}

\newtheorem{proposition}{Proposition}
\newtheorem{lemma}{Lemma}
\newtheorem{corollary}{Corollary}
\theoremstyle{definition}
\newtheorem{definition}{Definition}
\theoremstyle{remark}
\newtheorem{remark}{Remark}

\newcommand{\E}{\mathbb{E}}
\newcommand{\Var}{\mathrm{Var}}
\newcommand{\DN}{\mathbb{D}}
\newcommand{\KRW}{KRW~}

\begin{document}

\title{Preying on Leveraged ETFs\thanks{%
Claude Opus 5 and Fable 5 provided extensive assistance with the coding and analysis underlying the paper's tables, figures, appendix, and the polishing of the manuscript under the author's direction. I thank Daniel Chen, Cam Harvey, Jakub Kastl, Baolian Wang, and Wei Xiong for extremely helpful comments and discussions. All errors are my own.}}
\author{Yinhong Zhao\thanks{%
Princeton University. Corresponding author: \texttt{williamzyh@princeton.edu}.}}
\date{This draft: August 28, 2026\\[4pt]
{\small Preliminary version, updated frequently.
\href{https://papers.ssrn.com/sol3/papers.cfm?abstract_id=7229918}{See the latest version here.}}}
\maketitle

\begin{abstract}
\noindent We argue that speculators preying on the closing rebalances of leveraged exchange-traded funds (LETFs) contributed to the Korean market's extreme volatility in 2026. An LETF's mandated daily rebalance is sized by the day's return, which generates an upward-sloping demand at market close. In response, rational speculators pre-position, enlarge the fund's order, and liquidate into the demand they have induced. Consistent with this mechanism, Korean stocks tracked by LETFs reverse about $75\%$ of their first-day response to pre-open U.S. news by the next close and oscillate for several days thereafter, a pattern absent in every control group. Our quantification implies that self-reinforcing rebalance raised SK~Hynix's annualized volatility from $100.0\%$ to $136.7\%$ over nine weeks and cost its products' predominantly retail holders $17.6\%$ of their initial investment. Dispersing the rebalance across the trading day may backfire, whereas a flexible leverage multiple could help.
\end{abstract}

\bigskip
\noindent\textbf{Keywords:} leveraged ETFs, predatory trading, closing auctions,
price impact.\\
\textbf{JEL:} G12, G14, G18, G23.

\clearpage

\section{Introduction}
\label{sec:intro}

The Korean stock market experienced extraordinary volatility in 2026. In the three months after May~27, the KOSPI triggered its market-wide circuit breaker seven times, more than in the preceding twenty-six years.\footnote{The Korea Exchange suspends trading on an $8\%$ fall in the index. The halt was triggered on June~8, 23 and~26 and July~7, 13, 28 and~29, among them a $-10.8\%$ close on July~28. A rally instead triggers a sidecar, which suspends program orders for five minutes when index futures move $5\%$ in either direction. Sidecars fired 31 times over the same window, 16 sell-side and 15 buy-side, more than in any full year before 2026, when the record was 26. The count runs through August~24, 2026.} The index moved 4.1\% on an average day compared to 1.0\% in 2025. The regulators and the public are converging on a single culprit: the sixteen single-stock leveraged exchange-traded funds (LETFs) launched on May~27, 2026.\footnote{Koo Yun-cheol, Deputy Prime Minister and Minister of Economy and Finance of Korea, apologized to the National Assembly on July~29, 2026 for the products' introduction (``I am sorry that we failed to thoroughly anticipate the future repercussions when launching these products'') \citep{asiaeconomy2026}. Lee~Eog-weon, chairman of the Financial Services Commission, apologized separately the same day \citep{reuters2026}.} The funds reference Samsung Electronics and SK~Hynix, which together account for half the index. Within six weeks, all sixteen funds traded below their listing price, the worst down 43\%. Holding 92\% of the funds, retail investors bore these losses.

Standard asset-pricing theory, however, implies that these products could not have caused the volatility. An LETF repackages a levered position in the underlying. A mechanical function of an observed return, its passive daily rebalance conveys no information about fundamentals. The empirical evidence has generally supported this view. Studies of LETF rebalancing in U.S. index and single-stock markets find modest price impact that declines with predictability \citep{ivanovlenkey2018, brogger2021, barbon2021, lenkey2024}, as the theory of anticipated trading implies \citep{admatipfleiderer1991, bessembinder2016}.

This paper reconciles the two views by showing how rational speculators can prey on an LETF, amplifying volatility and extracting substantial wealth from its investors. The mechanism requires only that order flow move prices, as it must when risk-bearing capacity is finite \citep{grossmanmiller1988, duffie2010, gabaixkoijen2021, hartzmarksolomon2025}. A large mandated purchase at the close implies a predictable closing return, allowing speculators to profit by buying earlier in the day. Their trades bring forward the price impact and amplify the required rebalance, causing the closing price to overshoot and subsequently reverse. The LETF thus buys high and sells low in each cycle. This mechanism compounds the familiar ``volatility drag'' of LETFs\footnote{If the market rises 10\% and then falls 10\%, the stock price falls by 1\%. A portfolio levered two-to-one without daily rebalancing would lose 2\%, whereas a double LETF would lose 4\% because it rebalances daily. Volatility drag is the 2-percentage-point extra loss triggered by the LETF rebalance.} and transfers the incremental losses to the counterparties whose trading magnifies the price movements.

An LETF's rebalancing obligation is predictable, inelastic, and, critically, increasing in the price at which it is met. Consider a double-long fund, which means that it must deliver double of its underlying's return \emph{every day}. If the underlying asset rises by one percent, the fund's net asset value rises by two percent while its exposure rises by only one percent. The fund must therefore purchase an additional two percent of its assets before market close to maintain the same leverage multiple for the next trading day. If the underlying falls by one percent, it must sell the same amount. As a result, the fund must submit an upward-sloping rebalance order whose quantity increases with price, an inherently destabilizing feature.

We embed this schedule in a limits-of-arbitrage microstructure model of the trading day following~\citet{delongetal1990}. The market opens at fundamental value, impounding public overnight news such as earnings announcements or foreign-market movements. That news already fixes most of the fund's closing rebalance, so a continuum of rational speculators accumulate positions ahead of it. Their counterparties are risk-averse market makers, who absorb the residual imbalance and warehouse it until the next open. Price impact is what compensates them for bearing that overnight risk. The day ends in a call auction that sets the reference price for the fund, where the fund's mandated rebalance is met by pre-positioned speculators and market makers. Deep capital arrives overnight and corrects part of the mispricing, so the closing displacement itself contributes to the following day's return.

The central result concerns the behavior of rational speculators.
Although each speculator is an atomistic price-taker, speculators collectively build positions during the session in the direction of the fund's predictable rebalance and liquidate those positions at the close. The LETF therefore trades at a price distorted from fundamental value and in a quantity larger than the fundamental return would require. The resulting equilibrium is predatory in aggregate by the definition of \citet{brunnermeierpedersen2005}: speculators profit by inducing another agent's non-discretionary trading. It requires no manipulation or collusion --- the behavior of speculators may be indistinguishable from a benign liquidity provider or a intraday momentum factor trader.

The model quantifies predatory trading through a closed-form \emph{predatory share}: the fraction of the rebalance order induced by the price distortion. The magnitude of this response is governed by a measurable scalar, the \emph{loop gain} $\ell$. It is the marginal price impact of an LETF's rebalance in response to an exogenous stock return and equals the total rebalancing capital of every leveraged product tied to that stock---its \emph{complex}---multiplied by the price impact of the venue that clears the order. When $\ell = 0$, an infinitesimal fund cannot be preyed upon and the market attains the efficient outcome. At the small values prevalent in developed markets, anticipatory capital absorbs the predictable order and stabilizes the market. Preying on a leveraged fund becomes profitable only if $\ell$ is large enough. An LETF becomes prey because of its size relative to the venue that clears its rebalance order, not because of leverage alone.

No market exhibited a comparably large loop gain before Korea introduced single-stock LETFs in 2026. Their listing was the largest in Korean ETF history: combined assets began at \KRW 4.3 trillion and reached \KRW 14 trillion within three weeks. During the post-launch period, the worldwide mandated rebalance in SK~Hynix averaged $22.4\%$ of the stock's total daily traded value and reached $50.4\%$ on the peak day. The resulting loop gains have no U.S. counterpart. Samsung's post-launch average matches that of MicroStrategy, the most extreme U.S. single-stock complex, while SK~Hynix's more than doubles it and runs an order of magnitude above the index complexes behind the benign evidence.

A predatory overshoot at the close should reverse over the next cycle, appreciably so only when $\ell$ is large. Because news itself is permanent, transience in the closing price's response to public news is the mechanism's distinguishing prediction. Instrumenting public news with U.S. market movements that occur before Korea opens, we find no reversal in any comparison groups: U.S. index complexes, U.S. stocks with leveraged ETFs, SK~Hynix and Samsung before their LETF launch, or other Korean stocks after the launch. The reversal is only seen in the treated pair after the LETF launch: on average, $75\%$ of the first-day response reverses by the next close. The reversal itself also overcorrects, so the response oscillates over the next few days. A difference-in-differences design around the Korean launch establishes the causality and shows it also increases conditional volatility of the market.

We then quantify the effect of predatory trading. After calibrating the model to market data and replaying the recovered shock history, we find that SK~Hynix's annualized volatility would have been $100.0\%$, rather than the realized $136.7\%$, in a counterfactual with an infinitesimal LETF complex. The complex therefore accounts for over a quarter of the stock's realized volatility over the episode. Consequently, the mechanism transferred \KRW 1.38 trillion (US\$0.91 billion) away from the SK~Hynix products' own retail holders in nine weeks and cost 17.6\% of their initial investment. That said, the self-reinforcement amplified the path of the July repricing rather than its destination: the level of the pullback itself moves only modestly in the counterfactual.

The same accounting evaluates measures proposed or adopted by the authorities. Korean regulators acted on LETFs' execution schedule, encouraging the funds to disperse their rebalancing trades.\footnote{On July~28, 2026, the Financial Services Commission asked managers to advance and disperse the rebalancing moment; the industry association adopted a version of this guidance two days later.} The policy backfires in our simulation, further inflating the volatility of SK~Hynix and Samsung because dispersal changes only the timing of the mandated rebalance but not its size or predictability. The public news that determines most of the mandate is already known at the open, so speculators can simply front-load their trades there to capture any rebalance trade throughout the day regardless of their timing. Hong~Kong regulators instead acted on order size, granting LETF managers discretion to vary the leverage multiple daily beneath the announced ceiling.\footnote{Revised circular of July~24, 2026 \citep{sfc2026}. Accordingly, the largest cross-listed LETF for SK~Hynix, stock number 7703, changed its name from ``double leveraged'' to ``at most double leveraged'' on August~2, 2026.} We find this may mitigate the self-reference if managers use the discretion appropriately, especially if they could limit the size of rebalance during stress time, but this inevitably generates tracking error against the original fund mandate.

Based on our diagnosis, we propose two measures that could help stabilize the market. The first splits the single rebalance order into twice a day: the opening auction applies to the overnight gap and the closing auction to the session return. This design denies speculators the opportunity to pre-position against public overnight news, cutting holder losses by nearly one half without introducing tracking error. The second design changes the definition of the reference price.\footnote{A senior research fellow at the regulator's research institute independently proposed this approach on July~20, 2026; we provide the first quantitative evaluation of it.} Basing the mandate on a moving average of past prices lets any single displacement move the mark only fractionally, mitigating the predatory incentive and the excess volatility it generates. The fund then tracks a stale benchmark and carries large daily tracking errors, but those errors mean-revert rather than accumulate over longer horizons.

The paper contributes to three strands of literature. First, an extensive literature on institutional investors' mandated rebalancing generally reaches reassuring conclusions. Anticipatory traders supply liquidity to large, predictable index demand rather than exploit it \citep{bessembinder2016, pegorarosammonshim2026, sammonshim2026}; although such flows are inelastic and priced \citep{covalstafford2007, shiveyun2013, petajisto2011, pavlovasikorskaya2023, etula2020, harvey2026}, they have not been associated with financial instability. Evidence on leveraged funds is more mixed, as reviewed in \citet{madhavancheng2026}. Our model unifies the established evidence of late-day amplification \citep{tuzun2013, baibondhatch2015, shum2016, todorov2024} and the skeptical counterevidence \citep{ivanovlenkey2018, brogger2021, barbon2021, lenkey2024} as different regions of a response surface governed by the loop gain. We also contribute to recent work on single-stock ETFs \citep{bessembinder2025, huang2025, murraysammon2026} by proposing a new mechanism based on predatory trading.

Second, we contribute to the literature on predatory trading by documenting a new form of it. Existing models typically feature a prey with a fixed liquidation need that the predator must trigger \citep{brunnermeierpedersen2005, attari2005, carlin2007}. Predatory trading around LETFs instead enlarges the mandated rebalance itself. The mechanism also relates to reference-price manipulation \citep{allengale1992, kumarseppi1992, hubermanstanzl2004}, particularly at the market close \citep{hillionsuominen2004, comertonfordeputnins2014, bogousslavskymuravyev2023}, but the friction in our setting is contractual rather than informational. Related conduct led to a \KRW 2.44 trillion judgment in Korea in 2010 \citep{eom2021}. The conduct studied here is more difficult to police because no single large trader need be responsible for it.

Third, the upward-sloping demand we study in the LETF setting relates to a broad literature on positive feedback trading, first named by \citet{delongetal1990}. In general, predatory trading may arise whenever a trader follows a self-reinforcing investment strategy, as in fire sales \citep{shleifervishny2011}, portfolio insurance \citep{grossmanzhou1996}, and volatility targeting \citep{moreiramuir2017}. Our paper documents a new instance of this old insight in the LETF market and shows that it generated economically significant financial instability in Korea, much as constrained intermediary capital does elsewhere in the literature \citep{hekrishnamurthy2013, brunnermeiersannikov2014}.

The rest of the paper is organized as follows: 
Section~\ref{sec:background} introduces the rebalancing arithmetic and the Korean episode.
Section~\ref{sec:model} develops the model.
Section~\ref{sec:empirics} presents the reduced-form evidence.
Section~\ref{sec:quant} calibrates the model, prices the mechanism, and
evaluates the design counterfactuals. 
Section~\ref{sec:conclusion} concludes.

%

\section{LETF Rebalance Mechanism}
\label{sec:background}

This section introduces the rebalancing mechanism of leveraged exchange-traded funds. The fund's publicly predictable rebalancing trade is determined mechanically by the day's return, executed mostly in the closing auction that sets its reference price, and reached a market-wide scale within weeks of its 2026 introduction in Korea.

\subsection{Rebalance Arithmetic}
\label{sec:arithmetic}

An LETF promises to deliver a constant multiple $L$ of the \emph{daily} return of an underlying security: $+2\times$ for a double-long fund, $-2\times$ for a double-inverse fund. The promise disciplines the fund's portfolio at the start of every trading day, in that a fund with net asset value $A$ must hold exposure $LA$ to the underlying. However, the fund's net asset value moves with the underlying during the trading session, so it must rebalance before the market closes to restore its target multiple for the following day.

For example, consider a $2\times$ fund that begins the day with $200$ of exposure, financed by $100$ of investor capital and $100$ of borrowing. If the underlying declines by $10\%$, the value of the fund's position falls to $180$. After repaying the unchanged $100$ liability, the fund has a net asset value of $80$. Its effective multiple has therefore risen from $2.00$ to $2.25$, and the portfolio no longer satisfies its prospectus mandate. To restore a $2\times$ multiple for the next trading day, the fund must reduce its exposure to $160$ by selling $20$, an amount equal to $20\%$ of its initial net asset value. Following a $10\%$ increase in the underlying, the multiple instead falls to $1.83$, requiring the fund to purchase additional exposure. More generally, an $L\times$ fund with initial net asset value $A$ and an underlying asset return $r$ must trade
\begin{equation}
\underbrace{LA(1+Lr)}_{\text{required exposure}} \;-\;
\underbrace{LA(1+r)}_{\text{current exposure}}
\;=\; A\left(L^2 - L\right) r .
\label{eq:rebalance}
\end{equation}

LETF's rebalance has three important properties that foreshadow our analysis below. First, the mandated trade is always momentum-directional for both leveraged and inverse product. We note that the coefficient $L^2 - L > 0$ for any $L \notin [0,1]$: a $+2\times$ fund has $L^2-L = 2$, while a $-2\times$ fund has $L^2-L = 6$. The case of the inverse funds are counterintuitive: when the price rises, an inverse fund loses net asset value and gains short exposure in magnitude, so both margins oblige it to buy exposure back at a magnitude even larger than its long counterpart. Long and inverse funds therefore accumulate in rebalance order rather than cancel out.

Second, the mandated trade is publicly predictable, with each input of equation~\eqref{eq:rebalance} publicly known by the market close.
For an underlying asset referenced by many funds, we define the total \emph{rebalancing capital} by summing over every leveraged and inverse product referencing the same underlying asset, wherever domiciled and however replicated:
\begin{equation}
K \;\equiv\; \sum_i A_i\left(L_i^2 - L_i\right),
\label{eq:K}
\end{equation}
We auction that set the underlying's \emph{complex}: a grouping by the reference asset rather than by sponsor. The rebalancing capital $K$ can be interpreted as follows: a 1\% move in the underlying obliges the complex to aggregately trade $K \times 1\%$ in the direction of the move at the close. At any moment during the session, any trader can thus compute the size of the rebalance order the complex will submit at market close under the current price, as well as their entire demand schedule under any alternative closing price.

Third, the mandated trade cannot contractually be postponed. The fund is legally obligated to deliver the leverage multiple with minimal tracking error, leaving the managers essentially no discretion over the timing, size and pricing of the trade. In this respect, the fund is a passive and perfectly inelastic price taker.

LETF rebalancing is known to be costly when the underlying price oscillates. Because the fund buys after the underlying rises and sells after it falls, a round trip in the underlying imposes a loss on the fund. For example, if the underlying first rises by 10\% and then falls by 10\%, it ends 1\% below its initial value. A $2\times$ fund, by contrast, delivers $+20\%$ and then $-20\%$, exactly as promised, and therefore ends 4\% below its initial value---rather than 2\% below, as implied by twice the underlying's cumulative return.\footnote{A $-2\times$ fund, which loses 20\% and then gains 20\%, likewise ends 4\% below its initial value.} The additional 2\% loss is known as \emph{volatility drag}, the textbook cost of a daily-reset contract \citep{chengmadhavan2009, avellanedazhang2010}. \citet{murraysammon2026} measure this drag alongside fees and financing costs across the LETF universe, and show that single-stock products are launched on stocks near the top of the volatility distribution, where the drag is largest. This drag is not a deadweight loss, but it accrues to the counterparties that supply liquidity to the fund's rebalancing trades.

Volatility drag therefore creates the basis for predatory trading. We argue that speculators can deliberately amplify price oscillations by trading during the session in the direction of the fund's anticipated rebalance. Supplying liquidity to the fund's rebalance, speculators profit from the increase in volatility drag. The drag is thus no longer solely a byproduct of a daily-reset contract applied to an exogenous price path; it includes an endogenous, predatory component created by the traders who capture it.

\subsection{Rebalance Implementation}
\label{sec:close}

The exact volume of the rebalance is determined endogenously by the outcome of the closing auction, in which the LETF itself participates. Modern equity markets often end a trading day with a single-price batch clearing. All market participants submit limit orders during a pre-close window. The exchange then computes the price that maximizes executable volume as the official close and settles all matched shares at it. This price serves as the reference for a wide variety of purposes including fund net asset values and day-to-day returns.

Accordingly, LETF managers execute the rebalancing trade during or immediately after the closing auction. The standard practice is to submit limit orders and revise them in real time as the indicative price evolves.\footnote{During the closing-auction window, exchanges disseminate at high frequency an indicative price---the price at which the auction would clear if held immediately---to all qualified market participants.} The simultaneity of sizing and execution eliminates measured tracking error regardless of how the order displaces the closing price. LETF's demand schedule at market close is therefore upward-sloping. Section~\ref{sec:model} characterizes the fixed-point equilibrium under this.

That said, the rebalancing order is not always submitted by the fund itself or executed directly in the auction. Many funds, especially the cross-border products, maintain the exposure through over-the-counter total-return swaps with banks or other intermediaries. The intermediary carries the exposure and rebalances on the fund's behalf. Because its liability is determined by the same market close, it faces the same constraint as the fund and transmits the fund's rebalancing demand one-for-one to the closing market. Lemma~\ref{lem:swap} establishes this result formally.

Second, funds may rebalance in venues other than the closing auction, among them the futures market and the over-the-counter markets. LETFs often hold part of their exposure through single-stock futures and rebalance that portion in the futures market, which closes after the stock market by fifteen minutes in Korea. This portion is executed after the reference price has been settled in the closing auction. In addition, when the mandated order exceeds what the closing auction can absorb, the residual is executed in after-hours trading off exchange. Neither displacement alters the size of the order, which remains determined by the closing price. We discuss displacement as a potential policy decision for fund managers in Section~\ref{sec:quant} and formalize it in Appendix~\ref{app:phi}.

\subsection{The Korean Episode}
\label{sec:episode}

Korea entered 2026 as the world's best-performing major equity market. The KOSPI, Korea's principal stock-market index, rose 76\% in 2025, its third-largest annual gain on record. Two flagship semiconductor stocks, Samsung Electronics and SK~Hynix, together represented roughly half of index capitalization and benefited from shortages of memory chips used by artificial intelligence companies. The rally continued into 2026, and the index reached a record high on June~22.

On May~27, 2026, the Korea Exchange listed sixteen single-stock leveraged exchange-traded funds referencing Samsung Electronics and SK~Hynix. The launch followed the regulatory approval of single-stock leveraged funds in April 2026, subject to an eligibility screen that only these two stocks passed. With initial net assets of \KRW 4.3 trillion, the launch was the largest in the history of Korean ETFs; combined net assets reached \KRW 14 trillion within three weeks. Cross-listed products further increased the total: the worldwide SK~Hynix complex peaked at \KRW 37.5 trillion in leveraged assets, while the Samsung complex peaked at \KRW 6.6 trillion and was almost entirely domestic.\footnote{Offshore listings do not disperse the rebalancing pressure. By prospectus, the Hong Kong products rebalance ``at or around the close of trading of the Korea Exchange,'' concentrating their flows in the same ten-minute Seoul auction as the domestic products. Appendix Table~\ref{tab:complex} reports the composition of each complex.}

Unlike the relatively small LETFs observed elsewhere, the Korean funds were large enough to have market-wide effects. Consider June~5, 2026, when SK~Hynix closed 9.9\% below its previous close. Leveraged products referencing the stock held \KRW 21.5tn in net assets that day, implying rebalancing capital of \KRW 43.3tn under equation~\eqref{eq:K}. The $-9.9\%$ return therefore required the complex to sell \KRW 4.3tn of SK~Hynix stock, equivalent to 0.3\% of the firm's market capitalization. However, only 0.8\% of the firm's capitalization trades on an average day, so the required sale represented 36\% of daily trading volume. Moreover, the order had to be executed near the market close, during a brief window that normally accounts for only one-tenth of the day's traded value. 

June~5 was not an outlier. Over the post-launch weeks, the worldwide mandated rebalance averaged 22.4\% of SK~Hynix's entire daily traded value, reaching 50.4\% of it on the peak day. In comparison, the corresponding number for the largest U.S. index complexes is one to two orders of magnitude smaller. An order of this magnitude posed a substantial challenge to the market's capacity to absorb it while creating a significant profit opportunity for speculators with sufficient capital to participate.

Figure~\ref{fig:timeline} plots what followed. The market rose to a record high on June~22 and then collapsed, with July 2026 delivering the largest monthly KOSPI decline on record. Market-wide circuit breakers were repeatedly triggered following both upward and downward movements and a record level of market volatility. The funds' retail holders bore substantial losses: by July~8, all sixteen products launched on May~27 traded below their \KRW 20{,}000 listing price, and the worst performer had lost 43\% in six weeks.

While the coincidence between LETFs and market volatility remains suggestive, the Korean public and authorities have attributed the volatility to the LETFs. The financial authorities froze new single-stock LETF listings on July~16 together with a bundle of new restrictions on participation. The Deputy Prime Minister and Minister of Economy and Finance apologized to the National Assembly for the products' introduction on July~29, as did the chairman of the Financial Services Commission.

These expressions of regret embody a causal claim, but the correlation alone does not establish it. The remainder of the paper formalizes the mechanism, quantifies its strength, and evaluates several policy proposals from Korean and Hong Kong authorities.

\section{Model}
\label{sec:model}

This section develops a model to characterize the equilibrium asset-pricing implications of the self-reinforcing LETF rebalance described in Section~\ref{sec:background}. The model follows the seminal paper of~\citet{delongetal1990} to characterize the response of a continuum of rational, informed, yet atomistic speculators to the LETF trade. Although no individual trader can manipulate prices, speculators may collectively distort the LETF's reference price away from the stock's fundamental value. The model identifies the conditions under which this predatory trading occurs and quantifies its magnitude.

\subsection{Environment}
\label{sec:environment}

There is one risky asset and a trading day of length $T$, the rebalancing interval under the status quo. Returns are measured relative to the previous day's closing price, normalized to zero. We first present a static model of within-day trading and then link successive trading sessions in the dynamic extension of Section~\ref{sec:chaining}.

Over the day the fundamental value of the asset gains
\begin{equation}
v \;=\; \varepsilon + u,
\label{eq:fundamental}
\end{equation}
of which $\varepsilon \sim N(0,\sigma_0^2)$ is \emph{public overnight
news}, realized before the session opens and observed by all agents,
including the market makers; it can include earnings announcements,
macroeconomic news, or foreign market returns. The remainder
$u \sim N(0,\sigma_u^2)$ is \emph{private session news}, arriving
continuously during the session and observed in real time only by the
rational speculators defined below, and it can include investors' demand
for the asset or private information from speculators' own research.

The day itself unfolds in four stages, with their institutional
counterparts in the Korean market indicated in the table below:
\begin{center}
\begin{tabular}{lll}
\toprule
Stage & What happens & Counterpart (KST) \\
\midrule
Open $t=0$ & overnight news already public; speculators & \\
 & \quad pre-position at price $r_o$ & 08:30--09:00 \\
Session $t \in (0,T)$ & speculators and noise traders trade & \\
 & \quad continuously at price $r_s$ & 09:00--15:20 \\
Close $t=T$ & single-price call auction sets $r_c$; & \\
 & \quad the rebalance executes inside the auction & \\
 & \quad that determines its own size & 15:20--15:30 \\
Overnight & outside deep capital arrives & \\
 & \quad fundamental value revealed & \\
\bottomrule
\end{tabular}
\end{center}
We write $g$ for the \emph{fundamental correction}: the price change from the previous close that moves the price toward fundamental value. In a static model with a single trading day, $g = \varepsilon$. Once trading days are chained in Section~\ref{sec:chaining}, $g$ also carries the overnight correction of the error left in the previous close. The market does not open at $g$, however: speculators have a strategic motive to pre-position, and the opening print impounds their positions (equation~\eqref{eq:sessprice}).

Four agents trade across these stages. The first is the LETF itself. An $L\times$ LETF with assets $A$ must trade $A(L^2-L)r$ under a daily return of $r$, as introduced in equation~\eqref{eq:rebalance} of Section~\ref{sec:background}. Aggregating every product keyed to this reference price, the complex's demand at the auction can be written as
\begin{equation}
q_L \;=\; K\, r_c, \qquad K \equiv \sum_i A_i (L_i^2 - L_i),
\label{eq:automaton}
\end{equation}
where $r_c$ is the return that clears the closing auction in which the rebalance executes. This demand is upward-sloping in price and publicly known to other agents.

Products may differ in the venue they hold, with many using a swap contract rather than holding the underlying directly. Nevertheless, the following lemma establishes the swap structure changes neither the equilibrium price nor the behavior behind it, so the model applies to both.
\begin{lemma}[swap pass-through]
\label{lem:swap}
Consider a competitive, risk-averse swap counterparty whose notional
resets each day to $L\times$NAV struck at the reference close, and who
seeks to remain delta-flat. Then the counterparty (i) does not trade
during the session, (ii) must trade exactly $A(L^2-L)r$ at the reset, and
(iii) optimally executes in the underlying at the reference price.
\end{lemma}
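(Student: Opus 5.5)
We set up the swap counterparty's books explicitly and show that the delta-flat requirement pins down its hedge position at every instant. Normalize the previous reference close to price $1$. The fund's NAV at that close is $A$. The swap resets the notional to $N_0 = LA$ on the underlying. Over the day the counterparty owes the fund the total return on that notional. Its liability is therefore $-N_0 (1+r_t)$ in market value at any intraday return $r_t$, and its delta with respect to the underlying is $-N_0$. This is fixed from one reset to the next, because the contract's notional is contractually frozen between reference closes.

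Part (i) follows directly. A delta-flat counterparty holds a hedge $h_t$ with $h_t - N_0(1+r_t)$ having zero sensitivity to the price. Measured in shares, it holds a constant number of shares $N_0$ (value $N_0(1+r_t)$) throughout the session. So the required share position does not depend on $r_t$, and no intraday trade is needed. Any deviation would leave it exposed to price risk that, being risk-averse and competitive, it would not hold without compensation.

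Part (ii) is a reset calculation. At the reference close with return $r$, the fund's NAV becomes $A(1+Lr)$, so the new notional is $L A(1+Lr)$. The existing hedge is worth $LA(1+r)$. Remaining delta-flat requires the hedge to jump by the difference, which is exactly equation~\eqref{eq:rebalance}: $A(L^2-L)r$. The point to stress is that $r$ is the return \emph{struck at the reference close}, so the trade size is the same function of $r_c$ as in equation~\eqref{eq:automaton}.

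Part (iii) is the main obstacle, since it requires an optimality argument rather than arithmetic. Suppose the counterparty executes the trade $A(L^2-L)r_c$ at any price $p \neq r_c$ (earlier in the session, after the close, or in another venue). Then the realized cost of the hedge adjustment differs from the mark-to-reference value by $A(L^2-L)r_c\,(p - r_c)$. That difference is a random P\&L, because $r_c$ is not yet known if it trades earlier, and the post-close price moves if it trades later. Executing inside the auction at $r_c$ makes the hedge P\&L identically offset the change in the swap liability, which is the only zero-variance strategy. Given a risk-averse objective and competitive pricing, with zero expected excess profit from the swap fee, the zero-variance strategy strictly dominates any alternative.

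I would also note a subtlety. Trading earlier could in principle earn an expected premium, for instance by front-running its own flow. But the counterparty is a single competitive agent treated as a price-taker, so such premia are attributed to the speculators modeled separately. With that convention, executing at the reference price completes the pass-through: the counterparty's order in the closing auction coincides with $q_L = K r_c$ for its share of $K$.
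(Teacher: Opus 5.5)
Your proof is correct and follows the paper's argument step for step. A notional frozen between resets gives a constant share delta and hence no intraday trade. The reset arithmetic reproduces equation~\eqref{eq:rebalance}. Part (iii) rests on the same observation as the paper's $\Var(r_j - r_c) > 0$: any execution other than the underlying at $r_c$ leaves a residual P\&L proportional to $p - r_c$, so only the reference-price fill is zero-variance. Your remark that this dominance requires alternatives to carry no expected premium, which the delta-flat premise assigns to the speculative sector, makes explicit an assumption the paper's proof uses silently. Note also that an earlier trade cannot even be sized at the unknown $r_c$, which only adds to its variance.
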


The rebalance is large enough that executing it moves the price, which we formalize through a competitive market-making sector following
\citet{grossmanmiller1988}.
A mass $\mu$ of small dealers, each with CARA risk aversion $\gamma$, absorbs the residual order flow of the opening auction and the continuous session and warehouses it until the next open. Risk tolerance is additive across agents, so the sector behaves as a single agent with risk aversion $\gamma/\mu$ and prices
inventory according to the inventory it carries at that moment $H$:
\begin{equation}
r \;=\; g \;+\; \Lambda\,H,
\qquad
\Lambda \;=\; \frac{\gamma}{\mu}(T \sigma_u^2 + \sigma_0^2).
\label{eq:impact}
\end{equation}
Even though the rebalance conveys no new information to a Bayesian dealer, they must nevertheless hold the resulting inventory until the next session. The concession therefore compensates inventory risk rather than information acquisition. Equation~\eqref{eq:impact} also holds the attending mass fixed, so marginal price impact is linear. This assumption is innocuous through the session, where order flow arrives diffusely, but it binds at the opening and closing auction, where the entire rebalance clears in one event and an order of that size mobilizes capacity beyond the incumbent mass. The opening and closing curve is concave at scale rather than linear, and we take it as given for the rest of the section.\footnote{Appendix~\ref{app:rho} microfounds the attendance decision, Appendix~\ref{app:curveproof} states and proves the properties of the curve, and Appendix~\ref{app:curve} computes the equilibrium on it.}

In addition, the closing auction is generally deeper than the opening auction or the continuous session, as we also verify empirically in Korea.
\footnote{Measured by the price concession per unit of order flow at matched size and clock, the closing auction absorbs an order at 0.43 of the session's cost --- more than twice the depth --- while the opening auction is the thinnest of the three venues, clearing about a third of the closing auction's value (Appendix~\ref{app:impact} and Table~\ref{tab:kappa}).} To capture this difference, we allow the mass of participating market makers to vary across the two venues. Let $\mu$ denote the mass of market makers in the session and $\mu_c$ the mass in the closing auction. The session's makers do not attend the auction, so a position opened in the session and unwound at the auction clears against two different books, each pricing what it absorbs against the price it inherits.\footnote{Designated liquidity providers in Korea are exempt from quoting obligations during the 15:20--15:30 auction, so the two venues need not draw the same providers. The restriction is also testable: were the session's makers to flatten inside the auction, the session's displacement would be reversed there, whereas Section~\ref{sec:empirics} finds the closing-auction window statistically flat and the reversal overnight.}
Writing $\rho \equiv \mu/\mu_c$ for the depth ratio, the close is the \emph{deeper} venue whenever $\rho < 1$. The auction therefore absorbs flow at
\begin{equation}
\Lambda_c \;=\; \rho\,\Lambda.
\label{eq:rho}
\end{equation}
The split matters because speculators' pre-positioning and their liquidity provision do not cancel in prices. The former raises the price by $\Lambda$ per share, while the latter reverses only $\Lambda_c$, whose difference is retained in the closing price.

Following \citet{kyle1985}, we introduce exogenous noise flow to generate background variation. Noise flow $z$ with variance $\sigma_z^2$ trades during the session, and an exogenous closing imbalance $z_c$ with variance $\sigma_c^2$ enters the auction. These flows provide market makers with profits that offset losses to informed speculators, generate uninformative price movements for the mechanical strategy to follow, and prevent prices from fully revealing fundamental value.

Finally, the strategic sector is a continuum of mass $n$ of identical speculators as in \citet{delongetal1990}. Each has CARA risk aversion $\gamma_A$, observes the private information $u$ in real time, and is \emph{atomistic}, taking $r_o$, $r_s$ and $r_c$ as given, so no trader in the model can individually move the reference price.\footnote{Appendix~\ref{app:largetraders} extends the model to oligopolistic traders following \citet{kyle1989} and shows that the results become stronger with a small number of large traders.} Speculator $i$ trades $x_o^i$ at the opening auction, $x_s^i$ during the session and $x_c^i$ at the auction, cumulating into the position $x^i = x_o^i + x_s^i + x_c^i$ carried overnight and closed out at fundamental value against outside deep capital. $X_o, X_s, X_c$ denote the corresponding aggregates, and equation~\eqref{eq:impact} prices the running sum a dealer is carrying at each venue. Speculators do not arbitrage the mispricing away completely because of their risk-bearing capacity is limited, which leads to the following lemma.

\begin{lemma}[speculators' carry]
\label{lem:carry}
Let an atomistic trader be able to trade at two consecutive venues, at prices $p$ and then $p'$. A position held from one venue to the next exists in finite size only as a risk-priced carry, of size $x = \big(\E[p'] - p\big)\big/\big(\gamma_A \operatorname{Var}(p')\big)$.
\end{lemma}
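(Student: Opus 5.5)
The plan is to reduce the lemma to the standard CARA--normal portfolio problem for a single atomistic trader. Because the trader takes both $p$ and $p'$ as given (atomism, as assumed for the speculator sector), a position $x$ opened at the first venue and closed at the second yields terminal wealth $W = x\,(p'-p)$. Here $p$ is known at the time of the trade, and $p'$ is a random variable conditional on the trader's information set at the first venue. I will first argue that $p'$ is conditionally Gaussian given that information. In the model every price is an affine function of the Gaussian primitives $\varepsilon$, $u$, $z$ and $z_c$, through the linear impact rule \eqref{eq:impact} and the rule \eqref{eq:rho}, together with linear conjectured strategies. It follows that $W$ is conditionally normal for every fixed $x$. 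I will state this explicitly as the maintained linear-equilibrium conjecture, and note that the concave auction curve is set aside here, since the section takes the curve as given.

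Under CARA utility with coefficient $\gamma_A$ and normal $W$, maximizing $\E[-e^{-\gamma_A W}]$ is equivalent to maximizing the certainty equivalent
\[
x\big(\E[p'] - p\big) - \tfrac{\gamma_A}{2}\,x^2 \operatorname{Var}(p').
\]
Both moments are conditional on the trader's information. The variance is unaffected by subtracting the known $p$. The objective is strictly concave in $x$ whenever $\operatorname{Var}(p')>0$, so the first-order condition gives the unique optimum $x = (\E[p']-p)/(\gamma_A \operatorname{Var}(p'))$.

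The remaining content is the phrase ``exists in finite size only as a risk-priced carry.'' There are two degenerate cases.
\begin{itemize}
\item If $\operatorname{Var}(p')=0$ and $\E[p']\neq p$, the certainty equivalent is linear in $x$ and unbounded. No finite position is optimal, so an equilibrium with finite aggregate positions $X_o, X_s, X_c$ requires either $\operatorname{Var}(p')>0$ or $\E[p'] = p$.
\item If $\operatorname{Var}(p')=0$ and $\E[p']=p$, the trader is indifferent across all $x$, and the carry is indeterminate rather than risk-priced.
\end{itemize}
Away from these cases, every finite held position takes the stated form. A finite nonzero carry therefore arises exactly when a nonzero expected price gap is priced against the variance of the next price.

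The main obstacle I expect is justifying that the trader can treat the later price as exogenous even though it depends on aggregate speculator flow. This is the step where atomism does the work. Each trader has measure zero within the mass $n$, so $x^i$ does not enter $\E[p']$ or $\operatorname{Var}(p')$. The conditioning set is fixed by the trader's observations: $\varepsilon$ at the open, and $u$ in real time. A related subtlety arises for a trader active at more than two venues, where intermediate rebalancing could change the problem. I will handle this by noting that each leg between consecutive venues is a separate mean-variance problem. This holds because CARA utility has no wealth effects, and because, with linear Gaussian prices, the increments across venues can be treated leg by leg. The lemma applies to each adjacent pair of venues.
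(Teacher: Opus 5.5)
Your argument for the two-venue case as stated is the paper's own. Atomism makes the payoff $x(p'-p)$ with both prices taken as given, and CARA--normal reduces the objective to $x(\E[p']-p)-\tfrac{\gamma_A}{2}\Var(p')\,x^2$. Strict concavity at $\Var(p')>0$ then gives the stated first-order condition. A known gap makes the objective linear, so the two venues must clear together. Your treatment of the two degenerate cases is, if anything, slightly more careful than the paper's.

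The step that fails is your last paragraph. It is not true that, for a trader active at more than two venues, ``each leg between consecutive venues is a separate mean-variance problem.'' CARA removes wealth effects on the later position. It does not remove the dependence of the continuation value on the later price itself.

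In this model, consider a speculator who sets $\bar x$ in the session, re-optimizes at the auction and holds overnight. His auction-stage certainty equivalent is $\bar x(r_c-r_s)+(v-r_c)^2/(2\gamma_A\sigma_A^2)$. The second term is free of $\bar x$ but random through $r_c$, the very price the session leg is exposed to. The session problem is therefore not the mean--variance problem in $x(p'-p)$ alone. Carrying out the Gaussian integral gives the paper's equation~\eqref{eq:asessdemand}: the carry of the lemma plus a value (intertemporal hedging) leg $(v-r_s)/(\gamma_A\sigma_A^2)$. This is exactly why the proof of Proposition~\ref{prop:sessionpos} remarks that Lemma~\ref{lem:carry} applies only to a trader who liquidates at the second venue.

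So drop the leg-by-leg claim and make the hypothesis explicit instead. The lemma holds when the trader's payoff is exactly $x(p'-p)$, that is, when he unwinds at the second venue. More generally, it holds when his continuation value at the second venue does not depend on $x$ and is independent of $p'$. In that case $\E[-\exp(-\gamma_A(x(p'-p)+C))]$ factors and the optimal $x$ is unchanged.
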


\subsection{Equilibrium Characterization}
\label{sec:characterization}
We characterize the static equilibrium and then extend it to a dynamic setting by chaining successive trading days. The equilibrium depends on a single sufficient statistic, the loop gain, which separates a stable equilibrium from an unstable one.

\subsubsection{Prices and the loop gain}
\label{sec:loopgain}

The opening auction and the continuous session clear against the same book and differ only in the flow each has absorbed:
\begin{equation}
\begin{gathered}
r_o \;=\; g + \Lambda X_o ,
\qquad
r_s \;=\; r_o + \Lambda (X_s + z) , \\
X_o \equiv \int x_o^i \, di , \quad X_s \equiv \int x_s^i \, di.
\end{gathered}
\label{eq:session}
\end{equation}
$r_s$ differs from $r_o$ by the private news $u$ and the session noise $z$, which are realized after the opening auction. Before they arrive, the position a speculator wants depends on $g$ alone, which is available at the opening auction. The session trade $X_s$ is the update as $u$ and $z$ realize, neither of which involves $g$, so the two prints load on public news identically as we confirm in Section~\ref{sec:charac}.

The closing auction inherits the price from day session, and the attending market makers warehouses the residual of the other three participants. Among them, LETF's order depends on the prices:
\begin{equation}
r_c \;=\; r_s \;+\; \Lambda_c\,H_c ,
\qquad
H_c \;=\; \underbrace{K r_c}_{\text{rebalance}}
\;+\; \underbrace{X_c}_{\text{arbitrage}}
\;+\; \underbrace{z_c}_{\text{noise}} ,
\label{eq:fixedpoint}
\end{equation}
where $H_c$ is the inventory the auction's attendees carry overnight. Only the rebalance order responds to $r_c$, so collecting it on the left solves the auction in one step,
\begin{equation}
r_c \;=\; \frac{1}{1 - \ell}\,\big[r_s + \Lambda_c\,(X_c + z_c)\big],
\qquad
\ell \;\equiv\; \Lambda_c K .
\label{eq:loopgain}
\end{equation}
\begin{definition}[loop gain]
\label{def:loopgain}
The \emph{loop gain} $\ell \equiv \Lambda_c K$ is the closing-auction price
impact multiplied by the complex's rebalancing capital. It is the model's
sufficient statistic for the self-reinforcement created by the LETF's
mechanical rebalance.
\end{definition}

Consider a unit increase in the closing price from any endogenous or exogenous source. The LETF complex's mandate rises by $K$, and executing the additional volume against a book with price-impact coefficient $\Lambda_c$ raises the close by a further $\Lambda_c K = \ell$. The higher close again raises the mandate, and the total displacement converges to
$1 + \ell + \ell^2 + \cdots = 1/(1-\ell)$, the multiplier standing in
front of equation~\eqref{eq:loopgain}. A fraction $\ell$ of every closing
movement is therefore generated by the complex's response to its own
price.

\subsubsection{Equilibrium strategies}
\label{sec:charac}

We characterize the strategic sector's behavior backward, from the auction to the session. At the close, each speculator $i$ has complete information about the market: the fundamental value $v$, aggregate noise $z$, and all positions. Therefore, each speculator holds a terminal position $x_i$ for the day that equates the expected value to marginal risk as we established in Lemma~\ref{lem:carry}:
\begin{equation}
\underbrace{\big(g + u - r_c\big)}_{\text{value minus price}}
\;=\;
\underbrace{\gamma_A \sigma_0^2\, x^i}_{\text{marginal overnight risk}}
\Longrightarrow \qquad
X = \int x^i \, di = \frac{\psi_c}{\Lambda_c}(v- r_c) ,
\qquad
\psi_c \;\equiv\; \frac{n\,\Lambda_c}{\gamma_A\,\sigma_0^2} ,
\label{eq:auctionfoc}
\end{equation}
\emph{Auction capacity} $\psi_c$ prices the overnight risk of the terminal position and determines how aggressively speculators correct at the auction: clearing their schedule against the book gives $\Lambda_c X = \psi_c\,(v - r_c)$, so each unit of mispricing is met by a $\psi_c$ push-back toward fundamental. This definition thus carries the opposite sign to the loop gain in Definition~\ref{def:loopgain}. Note that equation~\eqref{eq:auctionfoc} does \emph{not} contain the speculator's session position. The speculator's terminal holding is determined by the risk--return trade-off, so speculators trade at the auction from the session holding to a fixed target.

Substituting the session and auction strategies into equation~\eqref{eq:fixedpoint} delivers the equilibrium of the entire trading day, with every position and price linear in the shocks. 
\begin{proposition}[equilibrium auction price]
\label{prop:passthrough}\label{prop:sessload}\label{prop:loadings}
Let $\psi$ be the session arbitrage capacity built at the closing-price variance a session
position still faces,\footnote{Only the closing imbalance is unresolved once the
session position is set, so $\operatorname{Var}_s(r_c) = a_{z_c}^2\sigma_c^2$.
That loading does not involve $\bar\rho$, so $\psi$ and $\bar\rho$ are explicit
in the primitives rather than a fixed point.} and let $\bar\rho$ be
the effective auction depth
\[
\psi \equiv \frac{n\Lambda}{\gamma_A \operatorname{Var}_s(r_c)}, \qquad 
\bar\rho \;\equiv\; \frac{\rho + \psi_c + \rho^2\psi}{\rho + \psi_c + \rho\,\psi}
\;\in\; [\rho,\,1] .
\]
If the loop gain satisfies $\ell < \bar\rho + \psi_c$, the closing price is linear in the day's shocks,
\begin{equation}
r_c \;=\; \Pi_g\, g \;+\; \Pi_u\, u \;+\; a_z\, z \;+\; a_{z_c}\, z_c ,
\label{eq:Pi0}
\end{equation}
with loadings
\[
\Pi_g \;=\; \frac{1}{1- \frac{\ell}{\bar\rho+\psi_c}} ,
\qquad
\Pi_u \;=\; \chi\,\Pi_g ,
\qquad
a_z \;=\; (1-\chi)\,\Lambda\,\Pi_g ,
\qquad
a_{z_c} \;=\; \frac{\Lambda_c}{1+\psi_c-\ell} ,
\]
where $\chi \equiv \psi_c/(\rho+\psi_c)$. The position the speculators carry overnight is
\begin{equation}
X \;=\; \frac{\psi_c}{\Lambda_c}\,(v - r_c)
\;=\; \frac{\psi_c}{\Lambda_c}\,
\big[ -(\Pi_g - 1)\, g \;+\; (1 - \Pi_u)\, u \;-\; a_z\, z \;-\; a_{z_c}\, z_c \big] .
\label{eq:terminal}
\end{equation}
\end{proposition}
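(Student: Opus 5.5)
The plan is to solve the day by backward induction, using only three ingredients: the venue pricing rules \eqref{eq:session} and \eqref{eq:fixedpoint}, the terminal first-order condition \eqref{eq:auctionfoc}, and Lemma~\ref{lem:carry} applied to the carry from the session into the auction. One structural point does most of the bookkeeping. The auction book prices the residual $X_c = X - S$, where $S \equiv X_o + X_s$ is the cumulative pre-auction position. Since the terminal target $X$ does not depend on $S$ (as noted after \eqref{eq:auctionfoc}), $S$ enters the closing price only through the depth wedge between $\Lambda$ and $\Lambda_c=\rho\Lambda$. Likewise only the sum $X_o+X_s$ enters $r_s$, so the split between opening and session trading is irrelevant for $r_c$. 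This is consistent with the claim that the two prints load identically on $g$.

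\emph{Step 1: the auction given $S$.} Substitute $r_s = g + \Lambda S + \Lambda z$ and $\Lambda_c X_c = \psi_c(v-r_c) - \rho\Lambda S$ into \eqref{eq:fixedpoint}. Collecting $r_c$ on the left gives
\[
(1+\psi_c-\ell)\,r_c \;=\; g + \psi_c v + (1-\rho)\Lambda S + \Lambda z + \Lambda_c z_c .
\]
This immediately delivers $a_{z_c}=\Lambda_c/(1+\psi_c-\ell)$, because $S$ is chosen before $z_c$ is realized and so carries no $z_c$ loading. It also shows that $\operatorname{Var}_s(r_c)=a_{z_c}^2\sigma_c^2$ is independent of the session strategy, so $\psi$ is an explicit primitive and no variance fixed point arises.

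\emph{Step 2: the session carry.} Apply Lemma~\ref{lem:carry} with $p=r_s$ and $p'=r_c$ and aggregate over the mass $n$. This gives $\Lambda S = \psi\,(\E_s[r_c]-r_s)$, where $\E_s[r_c]$ is the Step~1 expression with $z_c=0$. Both sides are linear in $S$, so I would solve this single linear equation for $S$ as a function of $(g,u,z)$. Substituting back into Step~1 yields $r_c$. The expected by-product is that the coefficient of $g$ collapses to $1/(1-\ell/(\bar\rho+\psi_c))$, with $\bar\rho$ arising as $(\rho+\psi_c+\rho^2\psi)/(\rho+\psi_c+\rho\psi)$ after clearing the denominators. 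The loading on $u$ then picks up the extra factor $\chi=\psi_c/(\rho+\psi_c)$, because $u$ reaches the price only through speculators' demand and not through $g$. Noise $z$ enters through $r_s$ but is partly leaned against via $S$, which yields the $(1-\chi)\Lambda\Pi_g$ loading. To finish, \eqref{eq:terminal} follows by inserting \eqref{eq:Pi0} into $X=(\psi_c/\Lambda_c)(v-r_c)$ with $v=g+u$.

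\emph{Step 3: conditions.} I would show that $\ell<\bar\rho+\psi_c$ is exactly the condition under which the denominators produced in Step~2 are positive. Positivity keeps the speculators' second-order condition and the auction's uniqueness intact, and it rules out the explosive branch in which a price increase raises excess demand. The bound $\bar\rho\in[\rho,1]$ follows by comparing numerator and denominator termwise for $\rho\le 1$.

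I expect the main obstacle to be the algebra in Step~2. The difficulty is keeping the $(1-\rho)$ depth wedge, the $\psi_c$ feedback from the terminal target, and $\ell$ organized so that the $g$-coefficient visibly factors into the stated form with $\bar\rho$, rather than into an equivalent but opaque ratio. My first attempt would be to write everything in units of $\Lambda S$ and divide through by $(\rho+\psi_c+\rho\psi)$ at the end, checking the result against the limiting cases $\psi=0$ (where $\bar\rho=1$) and $\rho=1$ (where there is no wedge).
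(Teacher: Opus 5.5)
Your Step 1 matches the paper's auction stage, and you are right that $\operatorname{Var}_s(r_c)=a_{z_c}^2\sigma_c^2$ does not depend on the session strategy. The gap is in Step 2: the session position is not the pure carry of Lemma~\ref{lem:carry}. That lemma describes a trader who \emph{liquidates} at the second venue. Your speculator instead re-optimizes at the auction and carries $X=\tau(v-r_c)$ overnight, with $\tau=n/(\gamma_A\sigma_0^2)$. His certainty equivalent at the auction is $\bar x(r_c-r_s)+(v-r_c)^2/(2\gamma_A\sigma_0^2)$. The second term is free of $\bar x$, but it is random through the same $z_c$ that drives the session leg's payoff. Under CARA you cannot price the two legs separately. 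Carrying out the Gaussian integral over $z_c$ gives
\[
\bar x \;=\; \frac{\E_s[r_c]-r_s}{\gamma_A V} \;+\; \frac{v-r_s}{\gamma_A\sigma_0^2},
\qquad\text{i.e.}\qquad
\Lambda S \;=\; \psi\big(\E_s[r_c]-r_s\big) \;+\; \frac{\psi_c}{\rho}\,(v-r_s).
\]
This is a carry leg plus a value (hedging) leg, and the value leg is what produces the statement's objects. With it, the coefficient on $y=\Lambda S$ in your linear equation becomes $[(1+\psi_c-\ell)(\rho+\psi_c)+\rho\psi(\rho+\psi_c-\ell)]/\rho=(\bar\rho+\psi_c-\ell)(\rho+\psi_c+\rho\psi)/\rho$, with $\bar\rho$ exactly as stated. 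The $g$-loading of $y$ is then $\mathcal{S}\ell/(\bar\rho+\psi_c-\ell)$ with $\mathcal{S}=\rho\psi/(\rho+\psi_c+\rho\psi)$. The $u$-loading of $r_c$ comes out as $\chi\Pi_g$ with $\chi=\psi_c/(\rho+\psi_c)$.

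If you run Step 2 as written, the algebra does close, but on different objects. The coefficient on $y$ is $(1+\psi_c-\ell)+\psi(\rho+\psi_c-\ell)=(1+\psi)(\bar\rho'+\psi_c-\ell)$ with $\bar\rho'=(1+\rho\psi)/(1+\psi)$. You would therefore get $\Pi_g=(\bar\rho'+\psi_c)/(\bar\rho'+\psi_c-\ell)$ and $\Pi_u=\psi_c\Pi_g/(\bar\rho'+\psi_c)$. Since $\bar\rho'-\bar\rho=-\psi\psi_c(1-\rho)/[(1+\psi)(\rho+\psi_c+\rho\psi)]<0$ whenever $\rho<1$ and $\psi\psi_c>0$, neither loading matches the statement. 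For example, at $(\rho,\psi_c,\psi,\ell)=(0.5,0.5,1,0.3)$ you would get $\Pi_g=25/19$ against the stated $40/31$.

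Your planned limit checks would not catch this, because $\bar\rho$ and $\bar\rho'$ both equal one at $\psi=0$ and at $\rho=1$. The clean diagnostic is $\Pi_u$ at $\psi=0$. With no carry capacity your session position is identically zero, so Step~1 alone gives $\Pi_u/\Pi_g=\psi_c/(1+\psi_c)$. The statement still demands $\psi_c/(\rho+\psi_c)$ there, which can only come from a session position trading $v-r_s$ at the overnight capacity $\tau$, independently of $\psi$.

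Replacing Lemma~\ref{lem:carry} by the exact CARA demand above repairs Step~2. The rest of your plan then goes through: Step~1, the terminal position, and positivity of the denominator $(\bar\rho+\psi_c-\ell)(\rho+\psi_c+\rho\psi)$. One minor point: $\bar\rho\ge\rho$ is not a termwise comparison; use $\bar\rho-\rho=(1-\rho)(\rho+\psi_c)/(\rho+\psi_c+\rho\psi)$.
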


Dividing through, $\Pi_g = 1/\big(1 - \ell/(\bar\rho+\psi_c)\big)$: the multiplier of Definition~\ref{def:loopgain}, evaluated not at $\ell$ but at $\ell$ deflated by the depth that absorbs it, with market makers contributing $\bar\rho$ and speculators $\psi_c$.\footnote{Equation~\eqref{eq:Pi0} is the linear-absorption reading. Section~\ref{sec:quant} prices each day on the concave impact curve of Lemma~\ref{lem:curve}, where the same clearing identity delivers a strictly smaller loading at every positive print and a finite one at every $\ell$; Appendix~\ref{app:concaveclose} states that closed form and shows this display to be its vanishing-print limit.} However large that depth, a positive loop gain leaves the dose positive and the multiplier above one, so the close overshoots public news. The overshoot in turn signs the speculators' position: the public gap enters equation~\eqref{eq:terminal} negatively, so they sell into the overshoot and finish a favorable-news day short the asset they were long. They close out at fundamental value against outside capital, profiting from the displacement they helped create while warehousing the reversal of Section~\ref{sec:chaining}.

\begin{corollary}[a universal overshoot]
\label{cor:overshoot}
For any $\ell > 0$ the close overshoots public news, and speculators sell into
it:
\[
\frac{\partial r_c}{\partial g} \;=\; \Pi_g \;>\; 1 ,
\qquad
\frac{\partial X_c}{\partial g} \;<\; 0 .
\]
\end{corollary}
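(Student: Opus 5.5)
The plan is to read both claims directly off the closed forms in Proposition~\ref{prop:passthrough}, adding only a sign argument for the speculators' opening trade.

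\emph{The overshoot.} By the definition of $\rho = \mu/\mu_c$, $\rho > 0$. Since $\bar\rho \in [\rho,1]$ and $\psi_c = n\Lambda_c/(\gamma_A\sigma_0^2) \ge 0$, the effective depth satisfies $\bar\rho + \psi_c > 0$. Take $\ell > 0$ together with the maintained existence condition $\ell < \bar\rho + \psi_c$ of the proposition. Then $\ell/(\bar\rho+\psi_c) \in (0,1)$, so the denominator $1 - \ell/(\bar\rho+\psi_c)$ lies in $(0,1)$ and $\Pi_g > 1$. The identity $\partial r_c/\partial g = \Pi_g$ is immediate from the linear representation \eqref{eq:Pi0}, because $g$ enters only through the first term. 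This step is routine.

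\emph{The speculators' sale.} I decompose the auction trade as $X_c = X - X_o - X_s$ and sign the $g$-loading of each piece.
\begin{itemize}
\item From \eqref{eq:terminal}, the overnight position loads on $g$ with coefficient $-(\psi_c/\Lambda_c)(\Pi_g-1)$. This is strictly negative when $\psi_c > 0$, using the first step.
\item The discussion after \eqref{eq:session} states that the session update $X_s$ responds only to $u$ and $z$, so its $g$-loading is zero.
\item For the opening trade I invoke Lemma~\ref{lem:carry} between the opening auction and the close. The open position is $X_o = n\big(\E_o[r_c] - r_o\big)/\big(\gamma_A \operatorname{Var}_o(r_c)\big)$. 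Here $\E_o[r_c] = \Pi_g g$ because $u$, $z$ and $z_c$ are mean zero, and $r_o = g + \Lambda X_o$.
\end{itemize}
Solving this linear equation gives
\[
X_o = \frac{\tilde\psi}{\Lambda(1+\tilde\psi)}\,(\Pi_g - 1)\,g ,
\qquad
\tilde\psi \equiv \frac{n\Lambda}{\gamma_A \operatorname{Var}_o(r_c)} > 0 .
\]
Hence $X_o$ loads nonnegatively, indeed positively, on $g$. Speculators pre-position in the direction of the rebalance. Combining the three pieces,
\[
\frac{\partial X_c}{\partial g} = -\frac{\psi_c}{\Lambda_c}(\Pi_g - 1) - \frac{\tilde\psi}{\Lambda(1+\tilde\psi)}(\Pi_g-1) < 0 .
\]
The inequality is strict because $\Pi_g > 1$ and at least one of the two capacities is positive whenever $n > 0$.

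\emph{Main obstacle.} The delicate step is the sign of the opening loading. The carry formula must be applied with the \emph{equilibrium} $\Pi_g$, even though $r_o$ itself depends on $X_o$. I need to confirm that this is consistent with how the proof of Proposition~\ref{prop:passthrough} builds $\bar\rho$ from $\psi$: the proposition treats the session and open positions as one $g$-carry, per the remark that the two prints load on public news identically. If so, the open-stage capacity coincides with $\psi$ and $\operatorname{Var}_o(r_c)$ is finite and positive, so nothing further is needed.

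If instead the proposition's proof already expresses $X_o + X_s$ through $\psi$, I would take that expression directly. Its $g$-coefficient is a nonnegative multiple of $\Pi_g - 1$, since pre-positioning only exists because the close is expected to overshoot. The conclusion follows by subtraction as above.
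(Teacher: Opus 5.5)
Your proof is correct and matches the paper's reasoning: $\Pi_g-1=\ell/(\bar\rho+\psi_c-\ell)>0$ on the stable region. The auction trade is then the terminal holding of equation~\eqref{eq:terminal}, whose $g$-loading $-(\psi_c/\Lambda_c)(\Pi_g-1)$ is strictly negative, minus a pre-position that is long in $g$; this is the paper's ``finish short the asset they were long.''

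The one change is to use your fallback for $X_o$ instead of the Lemma~\ref{lem:carry} derivation. The model's session demand is not pure carry: it adds the value leg $\tau(v-r_s)$ and prices carry at $\operatorname{Var}_s(r_c)=a_{z_c}^2\sigma_c^2$, so your formula is not the equilibrium $X_o$. Proposition~\ref{prop:sessionpos} gives $X_o=\mathcal{S}\ell g/\big(\Lambda(\bar\rho+\psi_c-\ell)\big)$, and its nonnegative $g$-loading is all you need, because strictness already comes from $\psi_c>0$.
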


Corollary~\ref{cor:overshoot} is the model's central prediction, and it rests on weak conditions. First, overshooting is not confined to distress episodes: it resembles the overshooting in the classic predatory-trading model of \citet{brunnermeierpedersen2005}, but arises here because the LETF must rebalance every day, whereas predation in that model requires a distressed liquidation. Second, it requires no manipulation by any individual trader, being the collective outcome of a continuum none of whom has price impact. Third, it requires no irrationality: every agent in the model optimizes.

Turning to the session, the position the speculators carry into the close loads on three factors.

\begin{proposition}[the position and prices before the auction]
\label{prop:sessionpos} 
Let $\mathcal{S}$ be the share of the position that is pre-positioned and
and $\chi$ be the arbitrage share of auction depth of
Proposition~\ref{prop:loadings}:
\[
\mathcal{S} \;\equiv\; \frac{\rho\,\psi}{\rho + \psi_c + \rho\,\psi} ,
\qquad
\Psi \;\equiv\; \chi\big[(\bar\rho+\psi_c) - (1-\mathcal{S})\,\ell\big].
\]
If $\ell < \bar\rho + \psi_c$, the trades before the auction are linear in the
shocks realized when each is made,
\begin{equation}
\begin{gathered}
X_o \;=\; \frac{\mathcal{S}\,\ell\, g}{\Lambda(\bar\rho+\psi_c-\ell)} , \\[4pt]
X_s \;=\; \frac{1}{\Lambda(\bar\rho+\psi_c-\ell)} \Big[\,
\underbrace{\Psi\, u}_{\text{private news}}
\;+\;
\underbrace{(\mathcal{S}\,\ell - \Psi)\,\Lambda\, z}_{\text{known noise}}
\,\Big] ,
\end{gathered}
\label{eq:sessload}
\end{equation}
and clearing them against equation~\eqref{eq:impact} gives the opening and session prices,
\begin{equation}
r_o \;=\; \Big(1 + \frac{\mathcal{S}\,\ell}{\bar\rho+\psi_c-\ell}\Big)\, g ,
\qquad
r_s \;=\; r_o \;+\; \Lambda\,(X_s + z) ,
\label{eq:sessprice}
\end{equation}

\end{proposition}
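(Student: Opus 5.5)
We work backward from the closing auction, taking as given the auction-stage characterization of Proposition~\ref{prop:loadings}, and use Lemma~\ref{lem:carry} to pin down the pre-auction positions. Write $P \equiv X_o + X_s$ for the aggregate position carried into the close. By equation~\eqref{eq:auctionfoc}, the terminal position $X$ does not depend on $P$, so the auction trade is $X_c = X - P$. Substituting this into the clearing condition~\eqref{eq:fixedpoint} gives
\[
r_c\,(1 - \ell + \psi_c) \;=\; r_s + \psi_c v - \Lambda_c P + \Lambda_c z_c .
\]
Since $r_s = g + \Lambda(P + z)$ by equation~\eqref{eq:session}, the pre-auction position enters the close with net weight $\Lambda - \Lambda_c = (1-\rho)\Lambda$. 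This is the wedge noted after equation~\eqref{eq:rho}: pre-positioning moves the price by $\Lambda$ per unit but is unwound at only $\Lambda_c$.

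\textbf{Carry conditions.} Next we apply Lemma~\ref{lem:carry} between the session venue and the auction. A marginal unit bought at $r_s$ and sold at $r_c$ has size $(\E_s[r_c] - r_s)/(\gamma_A \operatorname{Var}_s(r_c))$. At the session stage only $z_c$ is unresolved, so $\operatorname{Var}_s(r_c) = a_{z_c}^2\sigma_c^2$. Aggregating over the mass $n$ gives
\[
\Lambda P = \psi\,(\E_s[r_c] - r_s).
\]
Together with the auction equation and $\E_s[z_c] = 0$, this is a linear system in $(P, r_s, \E_s r_c)$, and we solve it for $\Lambda P$ as a function of $(g,u,z)$. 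At the open, the same condition holds with information set $\{g\}$. Because $u$ and $z$ have mean zero given $g$, the opening trade $X_o$ equals the $g$-loading of the solution for $P$. The session trade $X_s$ is the remainder, loading only on $u$ and $z$. This is the decomposition asserted in the text before equation~\eqref{eq:fixedpoint}.

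\textbf{Solving and matching.} Eliminating $r_s$ and $\E_s r_c$ should give a $g$-coefficient of the form
\[
\Lambda X_o \;=\; \frac{\rho\psi\,\ell}{(\rho+\psi_c+\rho\psi)(\bar\rho+\psi_c-\ell)}\,g .
\]
We then check that this equals $\mathcal S\ell\, g/(\bar\rho+\psi_c-\ell)$ using the definitions of $\mathcal S$ and $\bar\rho$. The key identity is
\[
\bar\rho + \psi_c = \frac{(\rho+\psi_c+\rho^2\psi) + \psi_c(\rho+\psi_c+\rho\psi)/(\cdot)}{\cdot},
\]
which should be verified by direct expansion. The $u$-coefficient must reproduce $\Psi/(\bar\rho+\psi_c-\ell)$. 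Here we use $\chi = \psi_c/(\rho+\psi_c)$ and the fact that the auction pushes toward $v$, which includes $u$. For the $z$-coefficient, the total $\Lambda(X_s+z)$ must match $a_z$ from Proposition~\ref{prop:loadings}, and this fixes the coefficient $(\mathcal S\ell - \Psi)\Lambda$.

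\textbf{Prices and the main obstacle.} Equation~\eqref{eq:sessprice} then follows immediately by substituting $X_o$ into $r_o = g + \Lambda X_o$ and $X_s$ into $r_s$. As a consistency check, substituting back should recover $\Pi_g$, $\Pi_u$ and $a_z$ of Proposition~\ref{prop:loadings}. The condition $\ell < \bar\rho + \psi_c$ is needed for the denominators to be positive, so that the linear equilibrium exists with the stated signs. The main obstacle is the algebra that collapses the solved coefficients into the compact $\mathcal S$, $\bar\rho$, $\Psi$ forms. The $\psi$ fixed point itself is avoided, because $a_{z_c}$ does not depend on $P$. I would first solve in the raw primitives $(\rho,\psi,\psi_c,\ell)$ and only afterwards factor the common denominator $\rho+\psi_c+\rho\psi$.
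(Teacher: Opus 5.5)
\textbf{The gap is your session demand.} Your derivation breaks at the step $\Lambda P=\psi\,(\E_s[r_c]-r_s)$, where you read the session position off Lemma~\ref{lem:carry}. That lemma prices a trader who \emph{liquidates} at the second venue. The speculator here does not: he re-optimizes at the auction and carries $(\psi_c/\Lambda_c)(v-r_c)$ overnight. His auction-stage certainty equivalent is therefore $\mathrm{CE}_c=\bar x(r_c-r_s)+(v-r_c)^2/(2\gamma_A\sigma_0^2)$, and the second term is exposed to the same unresolved $z_c$ as the session leg. Under CARA the two certainty equivalents do not add. Carrying out the Gaussian integral of $\E_s[-\exp(-\gamma_A\mathrm{CE}_c)]$ over $z_c$ gives $\bar x=(\E_s[r_c]-r_s)/(\gamma_A V)+(v-r_s)/(\gamma_A\sigma_0^2)$. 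In aggregate this is $\Lambda P=\psi(\E_s[r_c]-r_s)+(\psi_c/\rho)(v-r_s)$: a carry leg plus a value (hedging) leg.

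Dropping the value leg is not harmless. Your carry-only system solves to $\Lambda P=\psi\,[\ell g+\psi_c u+(\ell-\psi_c)\Lambda z]/[(1+\psi)(\rho'+\psi_c-\ell)]$, with $\rho'=(1+\rho\psi)/(1+\psi)$. Compared with the statement:
\begin{itemize}
\item the public term carries $\psi/(1+\psi)$ where the statement has $\mathcal S=\rho\psi/(\rho+\psi_c+\rho\psi)$, and the two agree only at $\psi_c=0$;
\item the private term carries $\psi\psi_c/(1+\psi)$, which vanishes at $\psi=0$, whereas $\Psi$ does not;
\item the noise term changes sign at $\ell=\psi_c$, not at the stated noise-chasing threshold.
\end{itemize}
So the $g$-coefficient you say elimination ``should give'' is not what your equations produce, and no algebraic identity will repair it. Your ``key identity'' is left with placeholders. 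Even if you reinterpret $\mathcal S$ as a measured share, only the public term can be matched.

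\textbf{How to finish with the two-leg demand.} The proof becomes a single linear solve. Substitute $\E_s[r_c]=M[r_s+\psi_c v-\Lambda_c P]$, with $M=1/(1-\ell+\psi_c)$, and $r_s=g+\Lambda(P+z)$. After clearing $M$ and multiplying by $\rho$, the coefficient on $\Lambda P$ factors as $(1-\ell+\psi_c)(\rho+\psi_c)+\rho\psi(\rho+\psi_c-\ell)=(\rho+\psi_c+\rho\psi)(\bar\rho+\psi_c-\ell)$. Dividing through delivers the loadings $\mathcal S\ell$, $\Psi$ and $(\mathcal S\ell-\Psi)\Lambda$ directly.

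Do not try to fix the $z$-coefficient by matching $a_z$ of Proposition~\ref{prop:loadings}. Those loadings are obtained by substituting this very session solution into the auction stage, so the step is circular. Moreover, $\Lambda(X_s+z)$ is the session price's loading on $z$, not the close's; the close loads on $z$ through $M[\Lambda+(1-\rho)\Lambda\,\partial P/\partial z]$.

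Your split of $P$ into an opening trade on $g$ and a session trade on $(u,z)$ is fine and matches the paper.
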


We note that the public-news position has been incorporated entirely at the opening auction: $X_o$ carries all of $g$ and the session trade $X_s$ carries none. Public news is known before the
first trade, so no speculator waits to act on it and the clearing of the opening auction fully prices in $g$ as in an efficient market. The opening and the session prices therefore have the same loading on public news, $1 + \mathcal{S}\ell/(\bar\rho+\psi_c-\ell)$, above one. The
overshoot is standing in the price from the open, hours before the mandated order is submitted, as plotted in Figure~\ref{fig:loading_path}. This prediction is confirmed later empirically in Section~\ref{sec:empirics}.

The loading on noise, by contrast, depends on the size of the complex, and its sign balances two motives: \emph{liquidity provision} and \emph{pre-positioning}. 
Without the complex, noise trading creates mispricing, and speculators supply liquidity against it at rate $\Psi$. With the complex present, the same noise also forecasts order flow because the rebalance is sized on the closing return: a unit of noise-induced price movement is worth $\ell$ at the auction, and pre-positioning earns the share $\mathcal{S}$ of it that the session leg captures. 
The session position is the net of the two, $(\mathcal{S}\ell-\Psi)$. Speculators fade noise while liquidity provision dominates and ride it once pre-positioning does, the two crossing
at the noise-chasing threshold of Proposition~\ref{prop:ladder}. The timing of the $u$- and $z$-components, which resolve during the session, is characterized in Appendix~\ref{app:strategies} (Proposition~\ref{prop:profile}).

Pre-positioning also raises the closing price. Speculators source in the session the shares they resell to the fund at the auction, so part of the depth the rebalance appears to face is the inventory acquired in the thinner venue. Arbitrage capital therefore cuts both ways: the same capital, with the same information and preferences, reduces the overshoot when it trades at the auction and increases it when it pre-positions. No individual moves the price, but together they shift the auction's marginal supply forward into the day, raising what the LETF pays.

\subsubsection{Chaining Trading Days: A Dynamic Extension}
\label{sec:chaining}

The static model is completed by an assumption about the fraction of the news pricing error corrected by overnight capital. The dynamic extension adds little further structure within each day: each trading day has the same form, and the complex's rebalancing contract links successive days. The complication lies in the cross-day cylces: when the current overshoot unwinds, LETF must rebalance again and may create a new overshoot. This sequence can generate oscillation and further amplify the close-to-close return volatility.

We index cycles by $t$ and let $e_t$ denote the \emph{news pricing error} of cycle $t$: the part of the closing displacement $r_{c,t} - v_t$ created by the amplification of the day's public news. A share $\theta \in (0,1]$ of it is corrected overnight by the outside capital: the news is public, so anyone can see how far the close has moved beyond it. Displacements left by private information and noise cannot be told apart from fundamental value, so nothing trades against them and they do not carry across cycles. The following cycle's initial gap is then $g_{t+1} = \varepsilon_{t+1} - \theta e_t$: fresh public news \emph{net of} the overnight correction, two components the complex cannot distinguish.%
\footnote{Under the opening-auction timing, speculators also pre-position at the cycle-$t{+}1$ open on the fresh news $\varepsilon_{t+1}$. They take no new position against the correction component $-\theta e_t$: the position that profits from the coming unwind is their cycle-$t$ terminal holding, equation~\eqref{eq:terminal}, acquired by selling into the overshoot and already impounded in cycle-$t$ prices, so it arrives at the next open as inventory rather than flow. The overnight leg is therefore the $\theta$-correction alone, and part~(iii) of Proposition~\ref{prop:chaining} is unchanged. The convention is testable: pre-positioning on the full gap would cap the ratio in part~(iii) at $1/\mathcal{S} \approx 1.08$ at every loop gain, whereas the measured ratio is $1.66$ (Section~\ref{sec:calibration}).}
The complex therefore sells into the decline generated by the unwinding of the overshoot it created yesterday. Beyond that, the single-day loadings we introduced in Section~\ref{sec:charac} are adequate to characterize the consequences.

\begin{proposition}[return chaining]
\label{prop:chaining}
The news pricing error follows a first-order autoregression with
negative coefficient,
\begin{equation}
e_{t+1} \;=\; (1 - \theta\Pi_g)\, e_t \;+\; (\Pi_g - 1)\,\varepsilon_{t+1} .
\label{eq:AR}
\end{equation}
It is stationary if and only if $0 < \theta\Pi_g < 2$. Following a one-unit public shock: (i) the close overshoots by $\Pi_g - 1$ on the day of the shock; (ii) at horizon $h$ the reversal delivered overnight is
$-\theta(\Pi_g-1)(1-\theta\Pi_g)^{h-1}$ and the total is
$-\theta\Pi_g(\Pi_g-1)(1-\theta\Pi_g)^{h-1}$; and hence (iii) the ratio of the total reversal to its overnight component equals $\Pi_g$ exactly, at every horizon and every $\theta$.
\end{proposition}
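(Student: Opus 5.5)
The plan is to reduce the dynamic problem to the static loadings of Proposition~\ref{prop:loadings}, applied cycle by cycle with the public gap $g_{t+1} = \varepsilon_{t+1} - \theta e_t$ in place of $g$. The key fact is that the close loads on the public gap with coefficient $\Pi_g$, while the private and noise loadings $\Pi_u, a_z, a_{z_c}$ generate displacements that, by the convention of Section~\ref{sec:chaining}, do not enter $e$. I therefore define the news pricing error of cycle $t{+}1$ as the public-news part of the close minus the public-news part of fundamental value. The public-news part of the close is $\Pi_g g_{t+1}$. The fundamental benchmark the close is measured against is the fresh news $\varepsilon_{t+1}$, since the correction $-\theta e_t$ removes a prior error rather than adding value.

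With that definition,
\[
e_{t+1} = \Pi_g(\varepsilon_{t+1} - \theta e_t) - \varepsilon_{t+1} = (1-\theta\Pi_g)e_t + (\Pi_g-1)\varepsilon_{t+1},
\]
which is equation~\eqref{eq:AR}. By Corollary~\ref{cor:overshoot}, $\Pi_g>1$, so $\theta\Pi_g>\theta$. The coefficient is negative whenever $\theta\Pi_g>1$, and that is the oscillation case the statement describes. Stationarity of an AR(1) with i.i.d.\ Gaussian innovations holds exactly when $|1-\theta\Pi_g|<1$, that is, when $0<\theta\Pi_g<2$.

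For the impulse responses, I feed a unit $\varepsilon_1$ and set all later innovations to zero.
\begin{itemize}
\item[(i)] The first close is $\Pi_g$, so the overshoot is $e_1 = \Pi_g-1$.
\item[(ii)] Iterating the recursion gives $e_h = (\Pi_g-1)(1-\theta\Pi_g)^{h-1}$ for $h\ge1$. The overnight move into cycle $h{+}1$ is the correction $-\theta e_h$, which equals $-\theta(\Pi_g-1)(1-\theta\Pi_g)^{h-1}$. The next-day close then loads that gap by $\Pi_g$, because the complex rebalances on it exactly as on news. The total close-to-close reversal is therefore $\Pi_g\cdot(-\theta e_h)$, the stated expression.
\item[(iii)] Dividing the total reversal by the overnight component gives $\Pi_g$, independent of $h$ and $\theta$.
\end{itemize}

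The main obstacle is the bookkeeping in the first step. I must justify that the fundamental benchmark moves only by $\varepsilon_{t+1}$ while the price gap is $g_{t+1}$, and that the total move in (ii) is the full $\Pi_g g$ rather than $\Pi_g g$ minus some carried-over error. I would handle this by writing the price level as fundamental plus error, $P_t = V_t + e_t$. Then $P_{t+1}-P_t = \varepsilon_{t+1} + e_{t+1} - e_t$, and substituting the recursion shows that this equals $\Pi_g g_{t+1}$, confirming consistency. I would also invoke the footnote's convention that speculators do not pre-position on the correction component. Under that convention, the open prints exactly the $\theta$-correction, so the overnight leg in (ii) is $-\theta e_h$ without the $\mathcal{S}$ amplification of Proposition~\ref{prop:sessionpos}.
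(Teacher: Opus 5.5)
Your route is the paper's: apply the loading $\Pi_g$ to the gap $g_{t+1}=\varepsilon_{t+1}-\theta e_t$, iterate, and divide. However, the step that produces \eqref{eq:AR} is wrong as written. You define the error as the public-news part of the day's close minus the day's fresh news, $e_{t+1}=\Pi_g g_{t+1}-\varepsilon_{t+1}$. With that definition, the left side of your display equals $(\Pi_g-1)\varepsilon_{t+1}-\theta\Pi_g e_t$, not $(1-\theta\Pi_g)e_t+(\Pi_g-1)\varepsilon_{t+1}$. The extra $+e_t$ on the right is unjustified. This is not a cosmetic slip. Taken literally, your definition makes $e$ a flow and yields an AR(1) with coefficient $-\theta\Pi_g$. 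That process is stationary only for $\theta\Pi_g<1$ and has different horizon-$h$ legs in (ii). It would make the calibrated ringing regime $1<\theta\Pi_g<2$ explosive.

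The missing idea is that the news pricing error is a stock. Returns are measured from the previous close, which already stood $e_t$ away from value. So the new error is the old one, plus the day's news-driven close-to-close return, minus the day's change in value: $e_{t+1}=e_t+\Pi_g g_{t+1}-\varepsilon_{t+1}$. This is exactly the paper's one-line derivation, and substituting $g_{t+1}$ gives \eqref{eq:AR}.

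Your last paragraph has the right object, $P_t=V_t+e_t$, but you use it backwards. Substituting the recursion you are trying to prove, in order to confirm that $P_{t+1}-P_t=\Pi_g g_{t+1}$, is circular. Run it forward instead and make that the derivation:
\begin{enumerate}
\item Take the news component of the close-to-close return, $\Pi_g g_{t+1}$, as given by Proposition~\ref{prop:loadings}.
\item Take $V_{t+1}-V_t=\varepsilon_{t+1}$.
\item Conclude $e_{t+1}-e_t=\Pi_g g_{t+1}-\varepsilon_{t+1}$.
\end{enumerate}

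With this repair, (i)--(iii) go through as you wrote them. Computing the total leg as $\Pi_g$ times the overnight gap $-\theta e_h$ is equivalent to the paper's $e^{(h)}-e^{(h-1)}=-\theta\Pi_g e^{(h-1)}$. Your caveat that the coefficient is negative only when $\theta\Pi_g>1$ is correct; the paper's proof does not establish the sign unconditionally either.
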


Proposition~\ref{prop:chaining} shows that the price response to news \emph{oscillates}: it alternates around the value of the news with geometric decay ratio $-(1-\theta\Pi_g)$, dampening when $\theta\Pi_g < 2$ and becoming self-sustaining beyond that threshold. This phenomenon is illustrated in Figure~\ref{fig:loading_path} and confirmed empirically in Figure~\ref{fig:loading_cycle}. This oscillation generates excess volatility, which the model derives in closed form below.

\begin{proposition}[excess volatility]
\label{prop:vol}
Let $\sigma_0^2$, $\sigma_u^2$, $\sigma_z^2$ and $\sigma_c^2$ be the
variances of the shocks in return units. If and only if $\theta\Pi_g < 2$,
the cycle return has finite variance
\begin{equation}
V(\ell) \;=\;
\big[\Pi_g^2 + h\,(\Pi_g-1)^2\big]\,\sigma_0^2
\;+\; \Pi_u^2\,\sigma_u^2
\;+\; a_z^2\,\sigma_z^2 \;+\; a_{z_c}^2\,\sigma_c^2 ,
\qquad
h \;\equiv\; \frac{\theta\Pi_g}{2-\theta\Pi_g} ,
\label{eq:vol}
\end{equation}
$V$ is strictly increasing in $\ell$ on the whole
stationary region, channel by channel, for every $\psi_c$ and every $\theta$, and
for $\ell > 0$, $V(\ell) \to \infty$ as $\theta\Pi_g \to 2$.
\end{proposition}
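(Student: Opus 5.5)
The plan is to write the cycle return as a sum of mutually independent components and compute each variance separately. By Proposition~\ref{prop:passthrough}, within a cycle the close satisfies $r_{c,t} = \Pi_g g_t + \Pi_u u_t + a_z z_t + a_{z_c} z_{c,t}$, with $g_t = \varepsilon_t - \theta e_{t-1}$. The private, session-noise and closing-noise terms do not carry across cycles (Section~\ref{sec:chaining}), and the shocks are mutually independent and i.i.d.\ over time. The variance therefore splits into $\Pi_u^2\sigma_u^2 + a_z^2\sigma_z^2 + a_{z_c}^2\sigma_c^2$ plus the variance of the public-news channel $\Pi_g g_t$. I will measure the cycle return as close-to-close, $r_{c,t}$ relative to the previous close. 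The key is that $e_{t-1}$ depends only on $\varepsilon_{t-1},\varepsilon_{t-2},\dots$, so it is independent of $\varepsilon_t$.

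For the public channel, Proposition~\ref{prop:chaining} gives the AR(1) $e_{t} = \phi e_{t-1} + (\Pi_g-1)\varepsilon_t$ with $\phi \equiv 1-\theta\Pi_g$. It is stationary iff $|\phi|<1$, which is $0<\theta\Pi_g<2$. Since $\ell<\bar\rho+\psi_c$ forces $\Pi_g>0$, this reduces to $\theta\Pi_g<2$. In that region,
\[
\operatorname{Var}(e) = \frac{(\Pi_g-1)^2\sigma_0^2}{1-\phi^2} = \frac{(\Pi_g-1)^2\sigma_0^2}{\theta\Pi_g(2-\theta\Pi_g)} .
\]
Then $\operatorname{Var}(\Pi_g g_t) = \Pi_g^2\sigma_0^2 + \Pi_g^2\theta^2\operatorname{Var}(e)$, and the second term equals $\frac{\theta\Pi_g}{2-\theta\Pi_g}(\Pi_g-1)^2\sigma_0^2 = h(\Pi_g-1)^2\sigma_0^2$, which is exactly equation~\eqref{eq:vol}. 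For the converse, when $\theta\Pi_g\ge 2$ we have $|\phi|\ge1$ and $\Pi_g>1$ (as $\ell>0$). The variance of $e$ then diverges, or no stationary solution exists, so the return variance is infinite. The same computation gives the limit claim: as $\theta\Pi_g\uparrow2$, $h\to\infty$ while $(\Pi_g-1)^2>0$.

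Monotonicity will be proved channel by channel, using $\partial\Pi_g/\partial\ell>0$. This holds because $\Pi_g = 1/(1-\ell/(\bar\rho+\psi_c))$, and $\bar\rho$, $\psi_c$, $\rho$ do not depend on $K$, which by the footnote to Proposition~\ref{prop:passthrough} means $\psi$ is explicit in primitives. Since $\Pi_g>1$ for $\ell>0$, the term $\Pi_g^2$ increases. The function $h$ is increasing in $\Pi_g$ on $\theta\Pi_g<2$, and $(\Pi_g-1)^2$ is increasing for $\Pi_g\ge1$, so the public channel increases. Next, $\Pi_u=\chi\Pi_g$ and $a_z=(1-\chi)\Lambda\Pi_g$, where $\chi=\psi_c/(\rho+\psi_c)$ is independent of $\ell$; both loadings are positive multiples of $\Pi_g$ and hence increase. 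Finally, $a_{z_c}=\Lambda_c/(1+\psi_c-\ell)$ increases in $\ell$ provided $1+\psi_c-\ell>0$, and this follows from $\ell<\bar\rho+\psi_c\le1+\psi_c$ because $\bar\rho\le1$.

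The main obstacle is which variables move when $\ell$ varies. Since $\ell=\Lambda_cK$, I will vary $K$ holding the depth primitives fixed. The delicate point is that $\operatorname{Var}_s(r_c)=a_{z_c}^2\sigma_c^2$ itself depends on $\ell$, so $\psi$ and hence $\bar\rho$ could shift with $K$. I will first try to follow the paper's footnote and treat $\psi$ as parametrically fixed in the comparative static. If that is not legitimate, I will fall back on an argument that still yields monotonicity. Raising $\ell$ raises $a_{z_c}$, which lowers $\psi$. Lower $\psi$ raises $\bar\rho$ toward $1$, because $\bar\rho$ is increasing in $\bar\rho$'s weight on $\rho\le1$ terms falling. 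That in turn lowers the denominator. The task is then to show that the direct effect of $\ell$ on $\ell/(\bar\rho+\psi_c)$ dominates, which I would check by differentiating explicitly.
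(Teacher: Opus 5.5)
Your argument is essentially the paper's proof. It uses the same independence decomposition of the close-to-close return $\Pi_g\varepsilon_t-\theta\Pi_g e_{t-1}+\Pi_u u_t+a_z z_t+a_{z_c}z_{c,t}$, the same AR(1) variance $(\Pi_g-1)^2\sigma_0^2/[\theta\Pi_g(2-\theta\Pi_g)]$ producing $h$, and the same channel-by-channel monotonicity through $\Pi_g$. Two small differences from the paper:
\begin{itemize}
\item Your public-channel step ($\Pi_g^2$ plus the product of the nonnegative increasing functions $h$ and $(\Pi_g-1)^2$ on $\Pi_g\ge1$) replaces the paper's explicit derivative $\tfrac{2}{(2-\theta\Pi_g)^2}\bigl[(1-\theta)\Pi_g(4-\theta\Pi_g)+\theta\bigr]>0$. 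It is slightly more elementary and uses $\Pi_g\ge1$ where the paper uses $\theta\le1$.
\item Your check that $1+\psi_c-\ell>0$ via $\bar\rho\le1$ supplies a step the paper leaves implicit.
\end{itemize}

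One point needs correcting, and it resolves your ``main obstacle''. The footnote does not make $\bar\rho$ independent of $K$. It says $\psi$ is explicit rather than a fixed point, but that explicit expression is $\psi=n\Lambda(1+\psi_c-\ell)^2/(\gamma_A\Lambda_c^2\sigma_c^2)$, which falls as $K$ rises. The paper's proof states the comparative static ``at fixed capacities and $\theta$'', that is, with $\psi$ and $\psi_c$ held as parameters.

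That reading is necessary, not merely convenient, so your fallback cannot be completed. The indirect channel has the opposite sign to what you wrote. Lower $\psi$ raises $\bar\rho$, which \emph{raises} the denominator $\bar\rho+\psi_c-\ell$ and therefore lowers $\Pi_g$. On the interior region, $\Pi_g$ falls in $\ell$ exactly when $\ell\,d\bar\rho/d\ell>\bar\rho+\psi_c$, and this can happen.

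For example, take $\rho=1/3$, $\psi_c=0.01$ and $\psi=12(1+\psi_c-\ell)^2$. Then $\Pi_g$ falls from about $33.6$ at $\ell=0.90$ to about $29.3$ at $\ell=0.92$. Throughout that interval $\ell<\bar\rho+\psi_c$, and at $\theta=0.05$ also $\theta\Pi_g<2$. So the public, private and session-noise channels all decline inside the stationary region. Drop the appeal to the footnote and state the monotonicity at fixed $(\psi,\psi_c,\theta)$, as the paper does.
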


The self-reinforcement raises the volatility through two distinct channels. Within each cycle it inflates the loadings on the day's own shocks, in every channel; across cycles it carries the day's news pricing error into the next return, which is collected in the term $h$ and diverges as the process approaches the stability boundary. The second channel can be tested empirically. News reaches the next return only through the end-of-day pricing error due to overshoot, so the response factors into two steps,
\[
\frac{\partial r_{c,t+1}}{\partial \varepsilon_t}
\;=\;
\underbrace{\frac{\partial r_{c,t+1}}{\partial e_t}}_{-\theta\Pi_g}
\;\cdot\;
\underbrace{\frac{\partial e_t}{\partial \varepsilon_t}}_{\Pi_g-1}
\;=\; -\,\theta\Pi_g(\Pi_g-1).
\]
Regressing today's absolute return on yesterday's absolute news identifies $\theta\Pi_g(\Pi_g-1)$ exactly, which is our conditional volatility test in Section~\ref{sec:volvol}. By contrast, unconditional volatility increases with the variance of every shock and therefore does not isolate the loop gain.

\subsubsection{Equilibrium Regimes}
\label{sec:ladder}

The loop gain is a central determinant of equilibrium behavior, which changes at a series of thresholds summarized in Proposition~\ref{prop:ladder}. Read in order, the four thresholds describe what a market looks like as the complex grows relative to the clearing venue.

\begin{proposition}[loop gain thresholds]
\label{prop:ladder}
Write $\chi = \psi_c/(\rho+\psi_c)$. If $\theta\,\psi_c < (2-\theta)\,\rho\,\mathcal{S}$ and $2\psi_c < \theta\,(\rho+\psi_c)$, the following four thresholds rank serially in the loop gain $\ell$. Each is the effective depth $\bar\rho+\psi_c$ times a factor in $(0,1]$, so the ladder is a statement about the effective dose:
\begin{center}
\begin{tabular}{lll}
\toprule
Threshold & Location & What changes there \\
\midrule
Noise-chasing & $\ell = (\bar\rho+\psi_c)\dfrac{\psi_c}{\psi_c+\rho \mathcal{S}}$ &
\begin{tabular}[t]{@{}l@{}}speculators begin amplifying\\
uninformative movements into the close\end{tabular} \\
\addlinespace
Oscillation & $\ell = (\bar\rho+\psi_c)\big(1-\tfrac{\theta}{2}\big)$ &
\begin{tabular}[t]{@{}l@{}} The echo of the overshoot ceases to damp; \\
an oscillation sustains itself absent news
\end{tabular} \\
\addlinespace
Revelation & $\ell = (\bar\rho+\psi_c)(1-\chi)$ &
\begin{tabular}[t]{@{}l@{}} the close begins to overshoot private news \\
  in addition to public news
\end{tabular} \\
\addlinespace
Tipping & $\ell = \bar\rho + \psi_c$ &
\begin{tabular}[t]{@{}l@{}}a pole of the equilibrium map:\\
no interior equilibrium exists beyond\end{tabular} \\
\bottomrule
\end{tabular}
\end{center}
\end{proposition}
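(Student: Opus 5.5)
Each threshold is the value of $\ell$ at which one equilibrium object, already written in closed form in Propositions~\ref{prop:loadings}, \ref{prop:sessionpos} and~\ref{prop:chaining}, changes sign or regime. The plan is to derive each location, then show the ordering by comparing the factors multiplying $D\equiv\bar\rho+\psi_c$. Throughout, write $\ell = \lambda D$ with $\lambda\in[0,1)$, so that $\Pi_g = 1/(1-\lambda)$. This makes every condition a statement about $\lambda$ alone. It is also the sense in which the ladder concerns the ``effective dose.''

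\textbf{Locations.}
\begin{itemize}
\item[(a)] \emph{Tipping.} Proposition~\ref{prop:loadings} requires $\ell<D$ for the interior solution. At $\ell=D$ the denominator $1-\ell/D$ of $\Pi_g$ vanishes, giving a pole. Beyond it the linear fixed point gives $\Pi_g<0$ (or no solution), so no interior equilibrium exists and $\lambda=1$.
\item[(b)] \emph{Noise-chasing.} By Proposition~\ref{prop:sessionpos}, the noise loading of $X_s$ has sign $\mathcal S\ell-\Psi$. Expanding $\Psi=\chi[D-(1-\mathcal S)\ell]$ gives $\ell[\mathcal S+\chi(1-\mathcal S)]=\chi D$. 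Substituting $\chi=\psi_c/(\rho+\psi_c)$ and simplifying should yield the stated factor $\psi_c/(\psi_c+\rho\mathcal S)$. This is a one-line algebra check, which I will verify rather than assume.
\item[(c)] \emph{Oscillation.} By Proposition~\ref{prop:chaining}, damping fails when $\theta\Pi_g=2$. That is, $1-\lambda=\theta/2$, so $\lambda=1-\theta/2$.
\item[(d)] \emph{Revelation.} The close overshoots private news when $\Pi_u=\chi\Pi_g>1$. That is, $1-\lambda<\chi$, so $\lambda=1-\chi$.
\end{itemize}

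\textbf{Ordering.} All four factors lie in $(0,1]$, since $\chi,\theta\in(0,1]$ and $\mathcal S\ge 0$. The order noise-chasing $<$ oscillation $<$ revelation $<$ tipping is then checked pairwise.
\begin{itemize}
\item \emph{Revelation below tipping:} $1-\chi<1$ is immediate from $\psi_c>0$.
\item \emph{Oscillation below revelation:} $1-\theta/2<1-\chi$ is equivalent to $\chi<\theta/2$, i.e.\ $2\psi_c<\theta(\rho+\psi_c)$. This is exactly the second hypothesis.
\item \emph{Noise-chasing below oscillation:} $\psi_c/(\psi_c+\rho\mathcal S)<1-\theta/2$. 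Cross-multiplying gives $\psi_c<(1-\theta/2)(\psi_c+\rho\mathcal S)$, i.e.\ $\tfrac{\theta}{2}\psi_c<(1-\tfrac\theta2)\rho\mathcal S$, i.e.\ $\theta\psi_c<(2-\theta)\rho\mathcal S$. This is exactly the first hypothesis.
\end{itemize}
So the two stated conditions are precisely the two nontrivial pairwise comparisons, and transitivity finishes the ordering.

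\textbf{Main obstacle.} The ordering is mechanical. The delicate part is making the ``what changes there'' column rigorous as a regime change rather than just a boundary equation. First, I must confirm monotonicity in $\ell$: each signed quantity ($\mathcal S\ell-\Psi$, $1-\theta\Pi_g$, $\chi\Pi_g-1$) has to be monotone in $\ell$ so that it crosses zero exactly once. This holds because $\Pi_g$ is increasing on $[0,D)$, $\Psi$ is affine and decreasing in $\ell$, and $\mathcal S\ell$ is increasing. Second, I must confirm that $\bar\rho$, $\mathcal S$, $\chi$ and $\psi$ do not themselves depend on $\ell$, so that the thresholds are explicit rather than implicit equations. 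The footnote to Proposition~\ref{prop:loadings} asserts that $\psi$ and $\bar\rho$ are explicit in the primitives. However, $\operatorname{Var}_s(r_c)=a_{z_c}^2\sigma_c^2$ and $a_{z_c}$ depends on $\ell$, so $\psi$ does vary with $\ell$. The first fix I would try is to treat $\psi$, and hence $\bar\rho$ and $\mathcal S$, as evaluated at the prevailing $\ell$, and state the thresholds as fixed points. I would then check that the monotonicity argument survives, using the fact that $\bar\rho\in[\rho,1]$ and $\mathcal S$ move in the same direction as $\psi$.
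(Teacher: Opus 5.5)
Your argument is the paper's. Each location solves its defining equation ($\mathcal{S}\ell=\Psi$, $\theta\Pi_g=2$, $\chi\Pi_g=1$, and $\bar\rho+\psi_c-\ell=0$), dividing by $\bar\rho+\psi_c$ yields the four factors, and the two hypotheses are exactly the two nontrivial pairwise comparisons, with revelation below tipping automatic for $\psi_c>0$.

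The paper avoids your obstacle by holding the measured session share $\mathcal{S}$, and hence $\bar\rho=1-\mathcal{S}(1-\rho)$, fixed, and by reading each threshold as an expression rather than a solved root (Appendix~\ref{app:notation}), so the factors are constants and no fixed-point reformulation is needed. If you instead let $\psi$ vary structurally, note that it falls as $\ell$ rises, and that for $\rho<1$ the depth $\bar\rho$ moves opposite to $\psi$ and $\mathcal{S}$, not with them; monotonicity of $\ell/(\bar\rho+\psi_c)$ in $\ell$ would then need its own check.
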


Below the noise-chasing threshold, the complex is present but dominated by arbitrage capacity. Although the close already overshoots public news by Corollary~\ref{cor:overshoot}, arbitrage capital continues to trade against noise, and the pricing error reverses during the next cycle. The market differs quantitatively, but not qualitatively, from one without leveraged products. Between the noise-chasing and oscillation thresholds, speculators begin to ride noise traders rather than absorb their flow. Noise is thus converted into price movements that enters the closing price, and volatility rises through both channels in equation~\eqref{eq:vol}. The market nevertheless corrects across days, with errors diminishing over time. Between the oscillation and revelation thresholds, the response ceases to dampen, so a single shock initiates an alternation around fundamental value that persists without further news, and the variance of close-to-close return diverges to infinity. Between the revelation and tipping thresholds, the close over-prices every kind of information, even those privately known to the speculators.

When $\ell \geq \bar\rho+\psi_c$, the model's interior equilibrium ceases to exist. This nonexistence is a modeling artifact: the loop gain is sufficiently large that every movement in the closing price is amplified, allowing speculators to move the price without bound while remaining profitable.

Under the calibrated primitives of Section~\ref{sec:calibration}, which differ across the two treated names only through the auction capacity $\hat\psi_c^{\,i}$, the four thresholds sit at $\ell = 0.29$, $0.41$, $0.46$ and $0.74$ for SK~Hynix and at $0.51$, $0.53$, $0.45$ and $0.94$ for Samsung. The measured post-launch loop gains place SK~Hynix ($0.58$) between the revelation and tipping thresholds and Samsung ($0.22$) below all four. Samsung's larger arbitrage capacity also violates the ordering conditions of Proposition~\ref{prop:ladder}, putting its revelation threshold below its noise-chasing one; the conditions are tested at the calibrated values, never imposed.

\subsection{Results and Implications}
\label{sec:results}

The characterization yields three groups of implications. 
The first concerns prices. The loading path on public news of Corollary~\ref{cor:overshoot} are illustrated in Figure~\ref{fig:loading_path} and then empirically validated in Figure~\ref{fig:loading_cycle} of Section~\ref{sec:empirics}. 
The second concerns conduct. Building on the thresholds of Section~\ref{sec:ladder}, the model separates anticipation from the self-referential component of the overshoot and prices the latter in closed form. 
The third concerns welfare. We discuss the loser and winners of these trades and why the standard performance metric like volatility drag fails to quantify them.

\subsubsection{The overshoot}
\label{sec:toll}

Positive feedback trading generally produces overshooting, as in the classic papers of \citet{delongetal1990,brunnermeierpedersen2005}. The overshoot in our model is summarized by the coefficient $\Pi_g$ and generates three testable implications. First, Corollary~\ref{cor:overshoot} shows that at every positive loop gain, the close overprices public news by $\Pi_g - 1$ per unit. This loading is fixed since the market opens due to speculators' prepositioning. Second, Proposition~\ref{prop:chaining} shows that the closing pricing error reverses the next day. The reversal begins overnight and decays geometrically, and the ratio of the total reversal to its overnight component equals $\Pi_g$ at every horizon and correction speed. Finally, Proposition~\ref{prop:vol} gives the second-moment implication: the same $\Pi_g$ determines how the magnitude of yesterday's news affects the magnitude of today's return.

\subsubsection{Pre-positioning and predation}
\label{sec:conduct}

Speculators pre-position for the rebalance, but this behavior alone does not establish predation. We construct a benign benchmark by restricting speculators to pure value trading. A value trader's order is always directed against the mispricing, so each trade moves the price toward $v$. The trader takes the offsetting position at the auction, holds it overnight, bears the risk of incomplete correction, and is compensated for doing so.

The equilibrium speculator fails this standard in one channel at every positive loop gain. By Proposition~\ref{prop:sessionpos}, the speculator carries a position in the direction of the public gap, and the session price loads on that gap above one because the speculator carries it. The position cannot be attributed to private information: $g$ is public before the first trade, and by equation~\eqref{eq:sessprice} the opening print it produces already stands above fundamental value. The position is profitable only because a mechanical buyer is known to arrive at the close. Above the noise-chasing threshold, the noise channel operates for the same reason, and predatory trading becomes more pronounced as $\ell$ increases. In comparison, a benchmark value trader would hold no position on a day driven solely by public news and then corrects at the auction exactly as the equilibrium speculator does.

The pre-positioning also move the price impact at the closing auction forward into the day. The benchmark trader holds no position on a day driven solely by public news, so the benchmark opens at fundamental value and stays there, $r_o = r_s = g$, and the entire benchmark overshoot, $\ell/(1+\psi_c-\ell)$ per unit of news, is delivered as a single step inside the closing auction. The predatory equilibrium instead places the step at the opening print (Figure~\ref{fig:loading_path}) and leaves only a small portion of the price impact for the auction window. An overshoot standing in the price before the fund has traded can be put there only by capital acting in anticipation of the mandate. The proposition below quantifies the predation by comparing the closing price under speculators and benigh value traders.

\begin{proposition}[the predatory share]
\label{prop:predshare}\label{prop:relocation}
Let $\Pi_g^{\,0} = (1+\psi_c)/(1+\psi_c-\ell)$ be the closing overshoot in the
benchmark economy of value traders. Pre-positioning raises the overshoot above
$\Pi_g^{\,0}$ if and only if $\rho < 1$, and the \emph{predatory share} of the
overshoot is
\begin{equation}
\mathcal{P} \;\equiv\;
\frac{\Pi_g \;-\; \Pi_g^{\,0}}{\Pi_g - 1}
\;=\; \frac{1-\bar\rho}{1+\psi_c-\ell}
\;=\; \frac{\mathcal{S}\,(1-\rho)}{1+\psi_c-\ell} ,
\qquad
\mathcal{S} \;=\; \frac{\partial r_s/\partial g - 1}{\Pi_g - 1} ,
\label{eq:predshare}
\end{equation}
with $\mathcal{S}$ the session share of the public-news overshoot. On the interior
region, and for $\rho \le 1$, $\mathcal{P} \in [0,1)$: it vanishes exactly
at $\rho = 1$ or $\psi = 0$, increases with $\mathcal{S}$, the depth wedge $1-\rho$,
and the dose $\ell$, and decreases with $\psi_c$ at fixed $\mathcal{S}$.
\end{proposition}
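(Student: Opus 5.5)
The plan is to reduce everything to the closed-form loadings already obtained in Proposition~\ref{prop:loadings} and Proposition~\ref{prop:sessionpos}. The first step is to rewrite the equilibrium multiplier as $\Pi_g = (\bar\rho+\psi_c)/(\bar\rho+\psi_c-\ell)$, so that
\[
\Pi_g - 1 = \frac{\ell}{\bar\rho+\psi_c-\ell}.
\]

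Next I derive the benchmark loading $\Pi_g^{\,0}$ directly. With value traders only, nobody takes a position on public news before the auction. On a day driven by public news we therefore have $r_o = r_s = g$. The auction clears as in equation~\eqref{eq:fixedpoint}, and the traders' schedule is the one in equation~\eqref{eq:auctionfoc}, namely $\Lambda_c X = \psi_c(v - r_c)$ with $v = g$. Substituting gives $r_c = g + \ell r_c + \psi_c(g - r_c)$, that is, $r_c(1+\psi_c-\ell) = (1+\psi_c)g$. This is the stated $\Pi_g^{\,0}$. Equivalently, the benchmark is the equilibrium formula evaluated at $\psi = 0$, where $\bar\rho = 1$.

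The algebra then proceeds in two steps. First, I subtract the two loadings:
\[
\Pi_g - \Pi_g^{\,0} = \frac{\ell}{\bar\rho+\psi_c-\ell} - \frac{\ell}{1+\psi_c-\ell} = \frac{\ell(1-\bar\rho)}{(\bar\rho+\psi_c-\ell)(1+\psi_c-\ell)}.
\]
Dividing by $\Pi_g - 1$ gives $\mathcal{P} = (1-\bar\rho)/(1+\psi_c-\ell)$. Second, I use the definition of $\bar\rho$ to compute
\[
1-\bar\rho = \frac{\rho\psi - \rho^2\psi}{\rho+\psi_c+\rho\psi} = \mathcal{S}(1-\rho),
\]
which yields the third expression for $\mathcal{P}$. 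For the characterization of $\mathcal{S}$, Proposition~\ref{prop:sessionpos} gives $\partial r_s/\partial g - 1 = \mathcal{S}\ell/(\bar\rho+\psi_c-\ell)$. Dividing by $\Pi_g - 1$ recovers $\mathcal{S}$. Throughout I take $\ell>0$ so that the ratio is defined, and treat $\ell \to 0$ as a limit.

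For the sign and range claims I will argue as follows.
\begin{itemize}
\item Since $\ell>0$ and the interior condition makes both denominators positive, $\Pi_g - \Pi_g^{\,0}$ has the sign of $1-\bar\rho = \mathcal{S}(1-\rho)$.
\item When $\psi>0$ we have $\mathcal{S}>0$, so $\Pi_g > \Pi_g^{\,0}$ if and only if $\rho<1$. When $\psi=0$, $\mathcal{S}=0$ and $\mathcal{P}$ vanishes identically; I will state this degenerate case separately.
\item Nonnegativity for $\rho \le 1$ is immediate. For $\mathcal{P}<1$, the interior condition $\ell < \bar\rho+\psi_c$ gives $1+\psi_c-\ell > 1-\bar\rho$, hence $\mathcal{P}<1$.
\item For the comparative statics, I hold $\mathcal{S}$ fixed in the form $\mathcal{S}(1-\rho)/(1+\psi_c-\ell)$. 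This is increasing in $\mathcal{S}$, in $1-\rho$ and in $\ell$, because the denominator falls with $\ell$. It is decreasing in $\psi_c$.
\end{itemize}

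The only step needing care is justifying that the benchmark really corresponds to setting the pre-positioning channel to zero while keeping $\psi_c$ unchanged. Concretely, this means checking that restricting traders to value trading removes $X_o$ and the $g$-component of $X_s$ but leaves the auction first-order condition~\eqref{eq:auctionfoc} intact. I will argue this from Lemma~\ref{lem:carry}: the terminal target does not depend on the session position, so only the pre-auction legs change.
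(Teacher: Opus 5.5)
You are following the paper's own route. The benchmark is the equilibrium with the carry leg switched off and the auction schedule \eqref{eq:auctionfoc} left intact; in the paper this is $\tau_s\to0$, i.e.\ $\psi=0$, so $\mathcal{S}=0$ and $\bar\rho=1$. Subtracting the two closed-form loadings, dividing by $\Pi_g-1$, and using $1-\bar\rho=\mathcal{S}(1-\rho)$ then gives $\mathcal{P}$. Your algebra for all three expressions of $\mathcal{P}$ is correct, and so is your recovery of $\mathcal{S}$ from $\partial r_s/\partial g$.

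The step that fails is the claim that interiority makes both denominators positive. Interiority only says $\ell<\bar\rho+\psi_c$. When $\rho>1$ and $\psi>0$ you have $\bar\rho>1$, so the band $1+\psi_c<\ell<\bar\rho+\psi_c$ is interior while $1+\psi_c-\ell<0$. In that band $1-\bar\rho<0$ as well, so $\Pi_g-\Pi_g^{\,0}=\ell(1-\bar\rho)/[(\bar\rho+\psi_c-\ell)(1+\psi_c-\ell)]>0$ even though $\rho>1$. This holds trivially, because $\Pi_g^{\,0}<0$ there: the benchmark has no interior equilibrium. Your ``only if'' argument would therefore prove something false.

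The fix, which is what the paper's proof does, is to add the benchmark's own interiority condition $\ell<1+\psi_c$ as a hypothesis of the equivalence. It is automatic for $\rho\le1$, since then $1+\psi_c-\ell\ge\bar\rho+\psi_c-\ell>0$. So everything you claim on $\rho\le1$ goes through as written: $\mathcal{P}\in[0,1)$, vanishing exactly at $\rho=1$ or $\psi=0$, and the comparative statics at fixed $\mathcal{S}$.

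Two smaller points.
\begin{itemize}
\item The fact that the terminal holding $\tau(v-r_c)$ does not depend on the session position comes from equation~\eqref{eq:auctionfoc}, not from Lemma~\ref{lem:carry}.
\item The value trader's zero position on public news should be derived, not asserted. With session demand $\bar X=\tau(v-r_s)$ and $r_s=g+\Lambda(\bar X+z)$, you get $\bar X=\tau(u-\Lambda z)/(1+\tau\Lambda)$, which has no $g$-loading. Hence $r_o=r_s=g$ on a public-news day.
\end{itemize}
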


The share has a causal interpretation. An economy of value traders would deliver a closing price that is less costly for the fund by exactly $\mathcal{P}$ of the overshoot. The measure therefore defines conduct by its consequence for the prey rather than by the form of any individual position. It also depends materially on the size of the complex. Holding conduct $(\mathcal{S}, \rho, \psi_c)$ fixed, the same pre-positioning generates a larger fraction of a larger overshoot as the complex grows, with $\mathcal{P} \to 1$ as the economy approaches the pole of the equilibrium map.

The same margin can be expressed in terms of the LETF's rebalancing quantity. Because $q_L = K r_c$, a share carried into the auction enlarges the complex's mandated order by
\[
\frac{\partial q_L}{\partial X_s}
\;=\; \frac{K\,(\Lambda - \Lambda_c)}{1-\ell+\psi_c} \;>\; 0 ,
\]
the depth wedge of Section~\ref{sec:environment} scaled by the size of the
complex. Predatory trading increases both the price the fund pays and, at that price,
the quantity it is obligated to buy. When the session is driven only by public news, the
fraction of the order induced by the price distortion is also $\mathcal{P}$.

No individual trader manipulates the fund's order. Each speculator is of measure zero, so $\partial q_L / \partial x_s^i = 0$ for every $i$. The invariance of the
prey's trade to others' trades --- the boundary
\citet{brunnermeierpedersen2005} draw between predatory trading and
manipulation --- fails in the aggregate while holding for every individual speculator. 
A conduct test evaluated on individual holdings accordingly returns zero for this economy while the aggregate harm is undiminished.

\subsubsection{Welfare}
\label{sec:welfare}

Predatory trading has direct welfare consequences for LETF holders. The overshoot creates a mechanical market distortion that transfers wealth from holders to their counterparties. The following proposition quantifies holder welfare per cycle and per unit of rebalancing capital.

\begin{proposition}[holder welfare]
\label{prop:loss}
Marking all positions at fundamentals, the holders' expected welfare per
cycle per unit of rebalancing capital, $W_h \equiv \E[r_c(v - r_c)]$,
decomposes across the shocks:
\begin{equation}
W_h \;=\;
\underbrace{-\,\Pi_g (\Pi_g - 1)\, \sigma_0^2}_{\text{public channel}}
\;\underbrace{-\,\Pi_u (\Pi_u - 1)\, \sigma_u^2}_{\text{private channel}}
\;\underbrace{-\,\big(a_z^2\, \sigma_z^2 + a_{z_c}^2\, \sigma_c^2\big)}_{\text{noise channels}} .
\label{eq:loss}
\end{equation}
The public and noise channels are negative at every $\ell > 0$, while the private
channel is positive if and only if $\ell$ lies below the revelation threshold
$(\bar\rho+\psi_c)(1-\chi)$. 

Meanwhile, there exists a threshold $\ell^{\ast}$ decreasing in $\sigma_0/\sigma_u$
\begin{equation}
\ell^{\ast} \;=\;
\frac{(\bar\rho+\psi_c)\,(1-\chi)\,
      \big[\chi\,\sigma_u^2 \;-\; (1-\chi)\,\sigma_z^2\big]}
     {\sigma_0^2 \;+\; \chi\,\sigma_u^2} ,
\qquad \chi = \frac{\psi_c}{\rho+\psi_c} ,
\label{eq:losthreshold}
\end{equation}
such that $W_h < 0$ whenever $\ell > \ell^{\ast}$, with equivalence at $\sigma_c = 0$.
\end{proposition}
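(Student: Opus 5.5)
The plan is a direct second-moment computation on the linear closing price of Proposition~\ref{prop:loadings}, followed by sign analysis in the single variable $s \equiv \ell/(\bar\rho+\psi_c)$.

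\textbf{Step 1: the decomposition.} I work within a cycle, so $v = g + u$ with $g = \varepsilon$ and $\sigma_0^2$ the variance of public news. I substitute equation~\eqref{eq:Pi0} for $r_c$. The four shocks $g, u, z, z_c$ are mutually independent with mean zero, so the cross terms vanish. That gives
\[
\E[r_c v] = \Pi_g\sigma_0^2 + \Pi_u\sigma_u^2
\]
and
\[
\E[r_c^2] = \Pi_g^2\sigma_0^2 + \Pi_u^2\sigma_u^2 + a_z^2\sigma_z^2 + a_{z_c}^2\sigma_c^2 .
\]
Subtracting the second from the first yields equation~\eqref{eq:loss} channel by channel. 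The noise variances are read in return units, as in Proposition~\ref{prop:vol}, so that $a_z$ enters as the dimensionless $(1-\chi)\Pi_g$.

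\textbf{Step 2: the signs.} On the interior region $\ell < \bar\rho+\psi_c$ we have $0 < s < 1$.
\begin{itemize}
\item Public channel: $\Pi_g = 1/(1-s) > 1$ for every $\ell > 0$ by Corollary~\ref{cor:overshoot}, so $-\Pi_g(\Pi_g-1)\sigma_0^2 < 0$.
\item Noise channels: these are minus sums of squares, and $a_z$ and $a_{z_c}$ are nonzero, so they are strictly negative.
\item Private channel: $\Pi_u = \chi\Pi_g > 0$, so the term $-\Pi_u(\Pi_u - 1)\sigma_u^2$ is positive iff $\Pi_u < 1$. That is iff $\chi < 1 - s$, i.e.\ iff $\ell < (\bar\rho+\psi_c)(1-\chi)$, which is exactly the revelation threshold of Proposition~\ref{prop:ladder}.
\end{itemize}

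\textbf{Step 3: the threshold.} I rewrite $W_h$ in terms of $s$, using $\Pi_g = 1/(1-s)$, $\Pi_g - 1 = s/(1-s)$, $\Pi_u = \chi/(1-s)$ and $a_z = (1-\chi)/(1-s)$. First I drop the $z_c$ term, which is nonpositive. Multiplying the rest by $(1-s)^2 > 0$ turns the condition $W_h < 0$ (with $\sigma_c = 0$) into
\[
s\,\sigma_0^2 + \chi\big(\chi - (1-s)\big)\sigma_u^2 + (1-\chi)^2\sigma_z^2 > 0 .
\]
This is linear in $s$:
\[
s\,(\sigma_0^2 + \chi\sigma_u^2) > (1-\chi)\big[\chi\sigma_u^2 - (1-\chi)\sigma_z^2\big].
\]
Multiplying back by $\bar\rho+\psi_c$ gives $\ell > \ell^\ast$ with $\ell^\ast$ exactly as in equation~\eqref{eq:losthreshold}. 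This establishes the equivalence at $\sigma_c = 0$. For $\sigma_c > 0$, the extra term $-a_{z_c}^2\sigma_c^2 < 0$ only lowers $W_h$, so $\ell > \ell^\ast$ remains sufficient for $W_h < 0$.

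\textbf{Monotonicity.} None of $\bar\rho$, $\psi_c$ or $\chi$ depends on $\sigma_0$ or $\sigma_u$: they are built from $\rho$, $\psi$ and $\psi_c$, and I treat these as held fixed. Dividing the numerator and denominator of $\ell^\ast$ by $\sigma_u^2$ therefore shows that $\ell^\ast$ is decreasing in $\sigma_0/\sigma_u$ wherever its numerator is positive. Where the numerator is nonpositive, $\ell^\ast \le 0$ and the statement is trivial.

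\textbf{Main obstacle.} The hard part is not algebraic but definitional. One must justify that the noise loading $a_z = (1-\chi)\Lambda\Pi_g$ from Proposition~\ref{prop:loadings} becomes $(1-\chi)\Pi_g$ once $\sigma_z$ is measured in return units. Only under that reading does the threshold come out in the stated form with $(1-\chi)\sigma_z^2$. I would fix this convention explicitly at the start, consistent with Proposition~\ref{prop:vol}. I would also note that $\psi_c$ enters $\psi$ only through $\operatorname{Var}_s(r_c)$, so holding $\chi$ and $\bar\rho$ fixed in Step 3 is legitimate.
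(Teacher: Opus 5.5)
Your argument is correct and is essentially the paper's own. It computes $W_h=\E[r_cv]-\E[r_c^2]$ from the linear loadings and clears the common denominator $D^2=(\bar\rho+\psi_c)^2(1-s)^2$ to get a condition affine in $\ell$ whose root is $\ell^\ast$. The $a_{z_c}^2\sigma_c^2$ term is then a strictly negative add-on, so $\ell>\ell^\ast$ is sufficient for $W_h<0$ when $\sigma_c>0$ and the equivalence is exact at $\sigma_c=0$.

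Two small points in your commentary on the comparative static. First, monotonicity in $\sigma_0/\sigma_u$ holds $(\ell,\psi_c,\bar\rho,\chi)$ fixed by convention, as the paper also does; these are not structurally independent of $\sigma_0$, since $\psi_c=n\Lambda_c/(\gamma_A\sigma_0^2)$. Second, where the numerator of $\ell^\ast$ is negative, $\ell^\ast$ rises toward zero as $\sigma_0$ grows, so the monotonicity fails there rather than being trivial (the paper says it holds only where $\ell^\ast>0$). It is merely irrelevant in that region, because holders then lose at every positive dose.
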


Equation~\eqref{eq:losthreshold} bears directly on the policy debate. The industry argues that a fund's momentum rebalancing may earn a drift subsidy. The model supports this claim only for the private-information channel, in which the close underreacts. It does not extend to public information, which accounts for most daily information arrival. 

The model also demonstrates who receives the wealth transferred from LETF holders. Every transaction occurs at a common price (Appendix Proposition~\ref{prop:ledger}), so the holders' loss must accrue to speculators and market makers. The market makers' share rises with the loop gain: the inelastic complex pays an inventory concession on gross flow that grows with $\Pi_g^2$, while speculator positions are bounded by risk tolerance. However, neither party earns a markup because no individual has market power in the model. Speculators' gains compensate them for the overnight risk they bear in correcting the closing price.
Their certainty-equivalent surplus is proportional to the mean squared pricing error they fail to eliminate, so their profit is a by-product of the distortion.

\section{Empirical Evidence}
\label{sec:empirics}

The model generates testable implications for three families of outcomes. First, the closing price should overshoot the level implied by public news, with the excess partially reversing over the next cycle (Proposition~\ref{prop:chaining}). Without predation, public news is incorporated at the open and persists throughout the session, so the continuation coefficient is nonnegative. Second, overshooting public news generates conditional, news-driven excess volatility in stocks with leveraged products. Third, the mandated order and the trading it attracts should migrate toward the mechanisms that set the reference price. All three predictions are supported by our tests in the recent Korean episode.

The empirical analysis utilizes a stock-day panel covering both the Korean and U.S. markets from January 2024 through July~31, 2026. The Korean panel includes the two treated stocks, Samsung Electronics and SK~Hynix; ten large-capitalization KOSPI controls, the largest firms that failed to meet the LETF eligibility screen; and ten KOSPI semiconductor peers. The U.S. panel includes eight stocks with listed leveraged products and four major indices, providing comparison groups with relatively small LETF products. We also obtain intraday data directly from the Korea Exchange (KRX): one-minute quotes and signed volume for every KOSPI stock and futures contract from May~4 through July~31, 2026.

\subsection{Measuring the Loop Gain}
\label{sec:dose}

The loop gain $\ell$ measures the self-reinforcing effect of an LETF complex's mandated rebalance and equals the product of the complex's price impact and rebalancing capital. It directly quantifies both the pressure the complex exerts on the underlying asset and the resulting predatory response by speculators. We measure each input and compare the resulting loop gains across the Korean and U.S. markets.

We measure the two inputs separately. Rebalancing capital $K$ is the worldwide sum of equation~\eqref{eq:K} over every LETF product referencing the stock: the Korean domestic funds, the Hong Kong cross-listings, the U.S. products, and the index products at their pass-through weights, each converted to Korean won at the day's foreign exchange rate. Same-day creations and redemptions could offset the mandated trade, so we measure them as well from the outstanding shares implied by net assets. We note that LETF fund flow is not significantly correlated with its same-day return.

The impact coefficient $\Lambda_c$ is the capital needed to displace the
\emph{closing price} by one percent. Unlike a generic price-impact measure, it is estimated over the pre-close window in which the complex trades rather than over an average trading day. Using the native intraday exchange data, we estimate it from binned conditional means of the price change to
the close on imbalanced pre-close order flow, which additionally delivers the curvature
of the impact curve. More details of the implementation are reported in Appendix~\ref{app:impact}. Because that curve is concave, the loop gain depends on the size of the order at which it is evaluated.

Figure~\ref{fig:dose} plots loop gain over the episode, computed identically for the Korean stocks and the U.S. index and single-stock complexes.
Most U.S. stocks and indices reproduce the null documented in the literature. The loop gains of all but one U.S. single-stock complex lie between $0.03$ and $0.10$, so the mandated order contributes only a small fraction of the session's movement in the closing price. MicroStrategy is the exception, with a loop gain of $0.22$. It is also among the most volatile U.S. stocks, with a daily return standard deviation of 5.3\%, compared with 2\% for the average large-capitalization U.S. stock.

After the domestic LETF launch, Samsung Electronics occupied a position similar to MicroStrategy, with an average loop gain of $0.25$. This comparison is informative because MicroStrategy is a U.S. outlier: Samsung is comparable to the most extreme U.S. case and thereby locates the upper bound of the range over which the U.S. evidence remains benign.

SK~Hynix reached a distinctly higher level. At the beginning of 2026, a Hong Kong-listed product already generated a loop gain of $0.25$ in the underlying stock; the domestic launch increased it further.
Its loop gain averaged $0.51$ during the post-launch period, twice Samsung's value and above any observed U.S. value. At its measured peak, SK~Hynix crossed the oscillation threshold beyond which the overshoot and reversal persist across days without further news, given the parameters calibrated in Section~\ref{sec:quant}.\footnote{Proposition~\ref{prop:ladder} characterizes the oscillation
threshold at $\ell = (\bar\rho+\psi_c)(1-\theta/2)$, or equivalently
$\theta\Pi_g \ge 2$. It computes as $\ell^{\ast} = 0.63$ at the calibrated $(\bar\rho, \hat\psi_c, \hat\theta) = (0.48,\, 0.65,\, 0.88)$. Because the threshold is a property of the calibration, the crossing is read on the calibrated loop-gain path of Section~\ref{sec:calibration}, which peaks at $0.79$ for SK~Hynix and at $0.28$ for Samsung against its own threshold of $0.94$. The path plotted here is the same measurement placed on the cross-market convention that puts every market on one panel, and peaks at $0.62$ and $0.32$. Because $\hat\tau$ is inverted to
match the pooled $\hat\Pi_g$, which pins $\bar\rho+\psi_c$, $\ell^{\ast}$ moves only between $0.629$ and $0.633$ across the four readings of $\rho$, spanning $0.25$ to $0.48$.} If the
mechanism has a threshold anywhere in the plotted range, only Korea crossed
it. Section~\ref{sec:echo} scales the mandated flow directly by
realized closing-auction values as an independent check on this ordering. Consistent with this ranking, SK~Hynix exhibited record volatility in the
post-launch window, with a daily return standard deviation of $8.8\%$, the highest of any equal-length window in our data since 2015.

\subsection{The Overnight Reversal}
\label{sec:echo}

Proposition~\ref{prop:chaining} establishes that the next cycle's close-to-close return loads \emph{negatively} on the current public shock. The closing price overweights the news by $\Pi_g - 1$, which unwinds overnight. For each cell $j$, we therefore estimate the following level regression
\begin{equation}
r_{i,t+1} \;=\; \alpha_j \;+\; \beta_j\, \varepsilon_t
\;+\; \gamma_j\, \varepsilon_{t+1} \;+\; e_{i,t+1},
\label{eq:echoreg}
\end{equation}
where $r_{i,t+1}$ is the next-day close-to-close return. The intervening shock $\varepsilon_{t+1}$ is controlled so that $\beta_j$ is protected from its serial correlation. 
We first separately estimate $\beta_j$ in equation~\eqref{eq:echoreg} for each cell and test which cells exhibit a significantly negative coefficient. This allows us to inspect every cell that could have produced the result but did not.

We use returns on three U.S. indices---SMH (semiconductors), QQQ (broad technology), and SPY (broad market)---as instruments for the public-news shock $\varepsilon_t$. These returns provide valid instruments for three reasons. First, the U.S. market is roughly 25 times larger than the Korean market, making these large, diversified indices unlikely to respond materially to prior Korean returns. Second, the U.S. session closes around 05:00 KST, four hours before the Korean market opens, so the shocks are realized before Korean trading begins and are observable to all Korean traders. Third, the instruments are relevant because Korea's economy is highly export-oriented, and its two leading producers of AI-memory semiconductors serve substantial U.S. demand. U.S. index returns therefore capture a salient component of the model's public-news shock. We use and report results for all three instruments throughout to demonstrate robustness.

Before examining next-day return loadings, we note that the same-day loadings are strongly positive in every cell. A one-unit overnight QQQ movement is incorporated into the same day's close with a loading of $1.59$ for the treated Korean pair after the LETF launch, $0.77$ for the same stocks before the launch, $0.59$ for the broad Korean controls, and $1.49$ and $0.62$ for the U.S. single-stock and index complexes, respectively. These coefficients reflect the usual cross-market transmission scaled by economic exposure.

Table~\ref{tab:echo} reports the next-day loadings. In the treated-post cell---Samsung and SK~Hynix after May~27, 2026---the next-day close-to-close return exhibits a statistically significant reversal, with loadings on the first-day shocks of $-0.83$, $-1.76$, and $-3.92$ across the three instruments. Netting out the three control cells, the triple differences are $-0.79$, $-1.55$, and $-2.29$; scaled by the first-day response measured at the day-$t$ open, they imply that approximately $80\%$, $75\%$, and $68\%$ of a trading day's news-driven movement, respectively, reverses by the next close. The three instruments therefore deliver consistent and robust estimates of the overshoot factor.

One might argue that the reversal instead reflects the mechanical price impact of the mandated order at the close. Panel~(a) of Figure~\ref{fig:loading_cycle} rules this out by tracing the complete impulse response to the overnight news shock. The underlying's loading on that shock is priced from the open and stays flat through the first day, so the overshoot is already in place hours before the mandated order is submitted, and no further impact appears in the closing-auction window. The order's price impact is anticipated and absorbed during the session rather than delivered at the close.

The reversal begins only on the following day. The right panel of Table~\ref{tab:echo} shows that between $32$ and $45$ percent of the first-day loading unwinds overnight, before the next open, with the remainder arriving during the following session. That reversal is itself an overreaction, through the chaining described in Section~\ref{sec:chaining}. Panel~(b) of Figure~\ref{fig:loading_cycle} follows the response over a longer horizon using local projections \citep{jorda2005}. The reversal is reversed again over the next two days, and by the third day the loading has returned close to its first-day level. Proposition~\ref{prop:ladder} predicts exactly such oscillation when the loop gain is larger than a threshold.

No comparison cell exhibits a comparable reversal: neither the U.S. index complexes, the eight U.S. stocks with their own $2\times$ products, the treated stocks before the domestic launch, nor the Korean controls before or after the launch (Table~\ref{tab:echo}; Figure~\ref{fig:loading_cycle}; Appendix Table~\ref{tab:echo_full}). Nvidia and AMD provide the closest comparison. Both are semiconductor firms with leveraged products, yet neither exhibits a statistically significant loading. The Korean reversal therefore cannot be attributed to the semiconductor cycle, AI-related news flow, or the mere existence of a leveraged product. Instead, the evidence points to the scale of the Korean LETFs discussed in Section~\ref{sec:loopgain}, which distinguishes the post-launch treatment cell.

Collapsing the treatment and control groups into a single estimate, we stack
the Korean panel and estimate
\begin{equation}
r_{i,t+1} \;=\; b_3\,\varepsilon_t T_i P_t \;+\; \gamma' x_{i,t}
\;+\; e_{i,t+1},
\label{eq:ddd}
\end{equation}
where $T_i$ marks the treated pair, $P_t$ the post-launch period, and
$x_{i,t}$ collects every lower-order term of the triple interaction. The coefficient $b_3$ is therefore the causal effect of the launch on the treated pair's next-day loading, net of the
controls' loading over the same days, the pair's own loading before the
launch, and any shift common to the market. The estimate is $-0.79$ using large-capitalization Korean stocks as controls and $-0.48$ using the ten Korean semiconductor peers, providing further evidence of a causal effect of the LETF launch.

We further rule out two behavioral interpretations of the result. The first is investor psychology, under which the treated names simply became more volatile or more sentiment-driven after the launch. 
The mandated rebalance trade scales with the size of the day's move, however, and so does the reversal: splitting shock days at the median absolute shock, the reversal phenomenon loads almost entirely on the large-shock half, while the small-shock half demonstrates a statistical null (Appendix Table~\ref{tab:mechanical}). This proportionality can not be explained by sentiment. 
The second is return-chasing retail flow, as in \citet{frazzinilamont2008}. Instead, we see the reverse in the Korean episode, where retail buy the dip. Net purchases load $-0.30$ on the leveraged product's own same-day return and $-0.78$ and $-0.75$ on SK~Hynix and Samsung shares,
strengthening after the launch (Appendix Table~\ref{tab:flowchasing}). Retail investors were trading against the move while the mandated order trade towards it.

These findings contrast with a literature that has generally found LETF rebalancing benign in the U.S. market. Existing studies document small price effects that decline over time and are readily absorbed for U.S. index, VIX-linked, and single-stock products \citep{ivanovlenkey2018, brogger2021, barbon2021, lenkey2024, bessembinder2025, huang2025}, and \citet{murraysammon2026} find that holders of U.S. broad-index products gained on net, both in absolute terms and against the counterfactual of holding the underlying. Our estimates extend, rather than contradict, those results by examining a scale not previously observed in the United States.
As Section~\ref{sec:dose} shows, the loop gains of the Korean single-stock complexes are orders of magnitude larger than those of their U.S. counterparts. Because equation~\eqref{eq:echoreg} is
estimated identically in every market, it produces near-zero estimates where rebalancing pressure is small and detects reversal where pressure is large. The two bodies of evidence therefore represent different regions of the same response surface (Section~\ref{sec:toll}).

\subsection{Volatility and Volume}
\label{sec:volvol}

This subsection examines two additional classes of outcomes: volatility and the location of trading. The overshoot and reversal of the closing price predict both greater volatility and a shift in trading volume.

The magnitudes are substantial. After the launch, annualized volatility reached $136.7\%$ for SK~Hynix and $115.3\%$ for Samsung, three to four times their levels in the two years before any leveraged product referenced them ($46.6\%$ and $29.0\%$). Figure~\ref{fig:timeline} shows consecutive days with absolute returns above 5\%; only two such days occurred in all of 2025. Because the treated pair represented approximately 50\% of the Korean market, aggregate market volatility rose concurrently: annualized KOSPI volatility increased from $19.5\%$ to $86.1\%$ over the same comparison. We therefore measure the treated pair's increase relative to this common movement using a difference-in-differences design.

We use a conditional design because unconditional volatility increases with the variance of any shock, whereas conditional volatility isolates the relation between the magnitude of daily returns and the magnitude of overnight news. We estimate
\begin{equation}
\left|r_{i,t+1}\right| \;=\;
\sum_{b} \left(\theta_b \left|\varepsilon_t\right| + \phi_b\right)
   T_i \mathbf{1}\{t \in b\}
\;+\; \beta \left|\varepsilon_t\right| T_i \;+\; \gamma' z_{i,t}
\;+\; \mu_i \;+\; \lambda_t \;+\; u_{i,t+1},
\label{eq:aftershock}
\end{equation}
where $b$ indexes calendar periods, $\theta_b$ is the treated-versus-control difference in the news-volatility slope in period $b$ and $\phi_b$ the corresponding level shift, $\mu_i$ and $\lambda_t$ are stock and date fixed effects, $z_{i,t}$ carries the intervening shock
$|\varepsilon_{t+1}|$ interacted the same way. The date effects absorb the common tendency of large-news days to move every stock, so each $\theta_b$ compares the treated pair's
news-conditional volatility with that of the controls on the same days. Parallel
trends therefore impose the testable restriction that $\theta_b = 0$ in every
pre-launch period.

Figure~\ref{fig:aftershock} shows the pre-launch $\theta_b$ individually and
jointly indistinguishable from zero in both universes, then rising at the
launch and strengthening into July as the loop gain itself grew. Pooled after
the launch, each unit of overnight semiconductor-index movement generated
$0.66$ units of additional treated-stock movement the next day against the
large-capitalization group and $0.81$ against the technology peers. The slope maps into
a 55\% increase in the treated pair's idiosyncratic daily volatility relative to the broad union of control sets, and an imprecisely estimated 18\% increase relative to the technology peers alone. The latter estimate is robust to removing time-varying factor exposures. Appendix Table~\ref{tab:battery} reports the complete set of control-group specifications, including both statistically significant and insignificant results.

The design provides a conservative estimate because volatility may spill over to the control group. The treated pair represents roughly half of the index, and several well-established channels propagate volatility across assets \citep{kylexiong2001, kodrespritsker2002, antonpolk2014}. Any effect transmitted by the launch to the
controls is differenced away, rendering our estimate a lower bound on the treated pair's excess volatility. A complementary within-stock design, differencing close-anchored against
open-anchored absolute returns, removes cross-stock contagion by construction
and confirms the effect at the magnitude the model predicts.\footnote{Fundamental volatility moves both marks alike, whereas the closing
self-reinforcement creates variance specifically at the reference price, so
$z_{it} = \log|r^{cc}_{it}| - \log|r^{oo}_{it}|$ nets out every common
component \citep{frenchroll1986}. The launch-window rise is $+0.52$
close-anchored against $+0.45$ open-anchored, leaving a close-specific
component of $+7$ to $+15$ per cent (Appendix Table~\ref{tab:closespec}).
The model predicts a small effect: the volatility contribution consists primarily of actual
price movements from executed trades, and only the displacement of the mark
itself is leg-specific. A mechanism that doubled volatility through the close
would be rejected here.}

We apply the same design to trading volume in Appendix Table~\ref{tab:quantity}. The most precisely estimated effect occurs in the futures market, the hedging venue for the futures-replicating half of
the domestic complex. We find the treated contracts' closing-auction share of
the day's traded value quadrupled at the launch, from 0.3\% to 1.4\%, and
their final-thirty-minute share rose by 1.9 percentage points. On the
cash side the treated stocks' closing-auction share rose from 7.3\% to
9.8\%, alongside a market-wide movement toward the close that is itself
consistent with the arrival of anticipatory volume.

\section{Policy Implications}
\label{sec:quant}

The reduced-form analysis provides direct evidence of the predatory trading targeting LETFs, supporting the model predictions. This section quantifies the distortion and evaluates alternative designs that could mitigate it.

\subsection{Calibration}
\label{sec:calibration}

We measure and calibrate the model primitives using Korean intraday data, with values presented in Panel~A of Table~\ref{tab:calibration} and the methodology detailed in Appendix~\ref{app:impact}. The average loop gain is $\bar\ell = 0.58$ for SK~Hynix and $0.22$ for Samsung.\footnote{One convention is used throughout: the unweighted mean of the uncapped post-launch daily path $\ell_t=\Lambda_c K_t$ ($0.583$ and $0.219$), the same number in the calibration, the counterfactual tables, and the text. Figure~\ref{fig:dose} alone keeps a cross-market re-anchored construction ($0.51$ and $0.25$) so every market sits on one panel.} In terms of price impact, it takes \KRW 1{,}252 billion of imbalanced flow at SK~Hynix and \KRW 1{,}302 billion at Samsung to displace the closing price by one percent, which is concave in size with $\hat\delta$ measured at 0.73. The closing auction absorbs order imbalances at $\hat\rho = 0.43$ of the continuous session's impact, making it the deeper venue, as Proposition~\ref{prop:attendance} predicts.
Correction of the news pricing error is almost complete across cycles, with $\hat\theta$ measured at 0.88 in Appendix Table~\ref{tab:theta}; the displacements private information and noise leave behind show no detectable within-window correction, as the model assumes (Appendix~\ref{app:calibration}).
As for information, we find public overnight news carries about as much variance as private in-session news ($\hat\sigma_\varepsilon^2/\hat\sigma_u^2 = 1.07$), while noise trading contributes about as much variance as private news ($\hat\lambda_{\mathrm{post}} = 1.06$, from the session-residual identity of Appendix~\ref{app:calibration}).

We cannot directly measure aggregate dealer capacity in either the continuous session or the closing auction, so we recover them from observed moments. Session capacity is identified from the diffusion of public news into session prices. The session share of the overshoot, $\mathcal{S}$, is what the closing price responds to, and it enters through the effective depth ratio $\bar\rho=1-\mathcal{S}(1-\rho)$ alone; our estimate is $\hat{\mathcal{S}}=0.93$, which places the effective depth ratio at $\bar\rho=0.48$. The capacity behind it follows from the session-share map and is common to both names, $\hat\psi=22.8$ at the pooled $\hat\psi_c$. Nothing below depends on that inversion: the calibration and every counterfactual run on $\hat{\mathcal{S}}$ and $\bar\rho$ directly. Auction capacity $\psi_c$ is identified from the overshoot factor. The model implies that the ratio of the total one-cycle reversal to its overnight component equals $\Pi_g$. Using the broad-technology instrument in Table~\ref{tab:echo}, we adopt $\hat\Pi_g=1.66$.\footnote{The three instruments yield $1.85$ for SMH, $1.66$ for QQQ, and $1.35$ for SPY. We adopt the middle estimate, whose implied correction speed, $\hat\theta=0.88$, lies strictly within the unit interval. The estimate is conservative because the gradual diffusion of foreign news documented in every benign cell adds positive drift to both components.} Inverting the pooled overshoot of the concave close at the measured loop gains (Appendix~\ref{app:calibration}) yields a common capacity per unit of impact, $\hat\tau$, and stock-specific auction capacities $\hat\psi_c^{\,i}=\hat\tau\Lambda_c^i$: $0.26$ for SK~Hynix and $0.47$ for Samsung. The data locate $\hat\tau$ only weakly; the appendix reports the diagnostic, and every result below is invariant across its admissible range.

Panel~B of Table~\ref{tab:calibration} evaluates the model's fit on the untargeted moments.
The model implies that the overshoot parameter $\Pi_g$ is $1.18$ for Samsung and $1.69$ for SK~Hynix, both lying inside the confidence intervals around the per-name estimates of $1.75$ for Samsung and $1.60$ for SK~Hynix.
Meanwhile, the recursion requires the next session to carry $\Pi_g - 1$ times the overnight component. That ratio is $0.68$ in the data against $0.69$ in the model, so a little under three fifths of the reversal is accomplished before the next open; because the ratio is mechanically the targeted moment less one, we report it as a consistency check rather than a validation.
The closing-return autocorrelation, the moment most sensitive to the loop's day-to-day echo, is matched without being targeted: the model delivers $-0.06$ against an empirically measured $-0.08$ at Samsung and $-0.14$ against $-0.16$ at SK~Hynix. On the pre-launch window the model further requires a zero autocorrelation and a unit variance ratio exactly, with no free parameter, and both restrictions pass (Appendix~\ref{app:calibration}).

\subsection{Price and Welfare Effects}
\label{sec:effect}
\label{sec:mechanical}

Annualized volatility reached $136.7\%$ for SK~Hynix and $115.3\%$ for Samsung during the nine weeks following the domestic launch, as we introduced in Section~\ref{sec:episode}. Both stocks also fell sharply with the repricing of the AI-memory cycle in July 2026, with peak-to-trough drawdowns of $50\%$ and $38\%$ respectively. Losses were concentrated among retail investors, who held approximately $92\%$ of the single-stock products. Public commentary attributes both the increase in volatility and the subsequent pullback to LETFs. We evaluate these claims by quantifying the effect of the LETF launch on the asset prices of the underlying with our calibrated model.

To do so, we replay the episode's own shocks through a counterfactual market without the self-reinforcement, i.e., with zero loop gain. We apply a Kalman filter and smoother to the observed daily returns to estimate the unobservable structural shocks. The same shock history is then replayed with the loop gain set to zero from the
launch date. The details are provided in Appendix~\ref{app:quant}. The replay is deterministic and runs uncapped: at the calibration both names remain strictly inside the model's stability region on every post-launch day, so no stability device operates on the reported magnitudes.

We find that SK~Hynix's realized annualized volatility of $136.7\%$ would have been $100.0\%$ in a counterfactual with zero loop gain. The complex thus accounts for over one quarter of the stock's volatility during the episode. Samsung also documents $+13.2$ percentage points of excess volatility from the counterfactual $102.1\%$ to the realized $115.3\%$.\footnote{Appendix~\ref{app:replay} reports the bands: the SK~Hynix figure varies between $33.6$ and $37.8$ points over the admissible capacity range and the opaque-correction set, and between $18.7$ and $36.3$ under the out-of-sample rescaling of the dose.}

However, the rebalance did not materially alter the destination of the underlying assets. SK~Hynix's July drawdown of $50.1\%$ compares with $43.5$--$49.2\%$ without self-reinforcement, and Samsung's $38.0\%$ with $34.4$--$37.5\%$, where each range spans the opaque-correction rates the data do not pin down (Appendix~\ref{app:replay}): the counterfactual attributes at most a few points of the drawdown to the loop, and almost none at the upper end of the set. The decline in levels is driven by either sentiment or fundamental news. The self-reinforcement mainly added repeated day-to-day overshoot and reversals that transferred wealth from LETF holders to their counterparties. The paper thus supports the public narrative on volatility but, at most weakly, on the magnitude of the crash.

LETF holders suffer the exaggerated volatility drag generated by the market swings. An $L$-times fund incurs drag equal to $(L^2-L)/2$ times the variance of the tracked asset, which accrues through every reversal and cumulates to a substantial amount over a short period. Marking the complex's realized daily flows against the identified public component of the recovered pricing error (Appendix~\ref{app:replay}), we find the complex transferred \KRW 1.46 trillion (US\$0.97~billion) away from its own holders in nine weeks, \KRW 1.38 trillion of it at SK~Hynix. On the cash-price channel alone, where our structural estimate is \KRW 1.22 trillion, the reduced-form engine---which uses no impact estimate and therefore cannot price the futures leg---puts the transfer at \KRW 1.5 trillion.\footnote{Korean won amounts are converted at
\KRW 1{,}509 per U.S. dollar, the average of the daily rate over the
post-launch window (May~27--July~31, 2026). The same rate is applied to
every holder-loss figure reported in this section.} Per unit, the same
distortion cost SK~Hynix holders $17.6$ per cent of their initial
investment: \KRW 10{,}000 placed in a $2\times$ SK~Hynix product at the launch was worth \KRW 4{,}790 at the end of the window against \KRW 6{,}554 on the counterfactual path; the corresponding figures at Samsung are \KRW 5{,}971 against \KRW 6{,}740, a shortfall of $7.7$ per cent of the initial investment.\footnote{Both marks are taken at the recovered fundamental rather than at the displaced close. The two replays share a fundamental path, so the only wedge between their closing marks is the pricing error standing on the last day of the window.} The magnitude greatly exceeds estimates for the rebalancing of other institutional portfolios, such as eight basis points per year in \citet{harvey2026} and tens of basis points in \citet{petajisto2011}.

A modest but distinct share of this loss is induced by pre-positioning rather than by the mechanical rebalance alone.
If we prohibit speculators from pre-positioning while holding all other parameters fixed, the model's overshoot factor falls from $1.69$ to $1.50$ at SK~Hynix and from $1.18$ to $1.14$ at Samsung, so roughly one quarter of each name's overshoot is manufactured. Replaying the same shut-off on the recovered shocks, pre-positioning accounts for $10\%$ of the transfer (\KRW 117 billion at SK~Hynix and \KRW 7 billion at Samsung) and for $+5.7$ and $+2.6$ percentage points of the two names' excess volatility.
Define $\bar A$, the \emph{manufacturing leverage}, as the additional LETF rebalance triggered per unit of a trader's own pre-close purchases. At the median complex, $\bar A = 1.77$ at SK Hynix and $0.34$ at Samsung. The manufacturing leverage depends only on the price the mandate refers to, not on when the mandate executes. Mitigating this predatory trading therefore requires changing what price the rebalance reacts to.

Having documented the costs of LETF self-reinforcement, we next evaluate several alternative designs that may mitigate it.
Three features of LETF design are subject to consideration: the \emph{schedule} on which the order executes, the \emph{size} of
the order, or the \emph{reference price} against which it is struck.  Tables~\ref{tab:schedule}--\ref{tab:reference} evaluate each design by replaying the shock history of Section~\ref{sec:effect} under the measured calibration.\footnote{The tables hold each venue's depth fixed while changing the flow that arrives there. Attendance is a decision
(Proposition~\ref{prop:attendance}), so a lever that shrinks the mandated order drives away the capacity that came to meet it, and one that routes the order to a second venue makes that venue deeper.}

\subsection{The Execution Schedule}
\label{sec:schedule}

On July~28, 2026, the chairman of the Korean Financial Services Commission asked managers to move rebalancing activity earlier in the session rather than concentrate it immediately before the close. Financial Services Commission release \#87417 argues that concentrating rebalancing near the close ``further amplifies market volatility'' and expresses the aim of reducing ``other investors' anticipatory trading.'' The Commission calls for dispersing the rebalancing \emph{moment}, which moves the reference price with the trade.\footnote{The industry's implementing guidance instead describes dispersing the rebalancing \emph{trades} while retaining the close as the reference price. The latter is the execution split studied in Section~\ref{sec:reference_benchmark}, not a schedule change. The public record does not establish which interpretation managers adopted.} 

Such proposal rests on a natural intuition: distributing demand across time should reduce the price pressure at any single moment. However, our reduced-form evidence cast doubt on this narrative because the rebalance's price impact has emerged since market open due to speculator's pre-positioning. In this section, we evaluate three implementations of this idea that alter the frequency of rebalances, the number of orders submitted, or the participating auctions, while keeping the leverage multiple unchanged. Under our estimated parameters, we find the dispersing the rebalancing moment is indeed counterproductive.

\subsubsection{Dispersing the rebalancing moment}
\label{sec:dispersal}

To disperse the rebalance into the day session, a fund must size its order according to the contemproraneous price rather than a future closing price.\footnote{Executing an order throughout the session while continuing to settle against the closing price is a distinct design, analyzed as the execution split in Section~\ref{sec:reference_benchmark}.} Dispersal therefore creates a distinct reference price at each rebalancing time. The fixed point clearing from Section~\ref{sec:model} repeats at every such rebalancing moment. For simplicity, we model $m$ funds that equally split the complex's capital and rebalance at separate times against their respective contemporaneous references. We evaluate this design at its most favorable form, with every rebalance auction granted the full capacity that the closing auction attracts, and the order splits on the measured concavity $\hat\delta$.\footnote{Granting each auction the close's full attending capacity replicates that capacity $m$ times, an upper bound on what Proposition~\ref{prop:attendance} could deliver; conserving it instead---each auction carrying $\mu+(\mu_c-\mu)/m$ while absorbing $K/m$---leaves no staggered configuration for SK~Hynix with a damped equilibrium. The bound reported in the table therefore fails on concavity and persistence alone.}

Panel~A of Table~\ref{tab:schedule} shows that even this best case does not help: every staggered configuration is worse than the single closing print, monotonically in $m$. Two prints raise SK~Hynix's volatility from $136.7\%$ to $146.9\%$ and its holder loss from \KRW 1.15 to \KRW 1.56 trillion; three and five reach $152.0\%$ and $155.7\%$, lower bounds because past two prints the compounded loading passes the ringing bound on most days. The result can be attributed to two factors. First, staggering divides the order, but not the \emph{shock}. Each fund's reference return still incorporates the entirety of overnight public news, which accounts for more than half of the information each day, so staggering could not fragment the majority of the rebalance.\footnote{Proposition~\ref{prop:dispersal} characterizes the limit. Let $\pi(x)$ denote the loading from one auction with per-auction gain $x$. Then $m$ staggered auctions produce $[\pi(\ell/m)]^m\to e^\ell$ under any microstructure. Because this limit remains strictly above one, fragmentation cannot eliminate daily amplification even with linear impact.} Second, price impacts accumulate rather than cancel out across different rebalance auctions in the session. Due to the concave impact assumption, the accumulation of their price impacts therefore exceeds that brought about from a single closing auction rebalance.~\footnote{With $\hat\delta=0.73$, halving an order reduces its impact by only 40\%. The per-auction loop gain is $0.583$ at the close, compared with $0.351$, $0.261$, and $0.180$ under two, three, and five auctions, but their sum increases with $m$.}

The Commission's second rationale---reducing anticipatory trading---also does not remove the distortion in the model. Even if rebalance orders arrive at a random moment, speculators can still pre-position for it, knowing that the rebalance must react to the overnight public news. Consistent with the evidence provided in Figure~\ref{fig:loading_cycle}, pre-positioning foreloads the price impact since market opens. Concealing the order's arrival may also reduce the capacity attracted by its predictability (Proposition~\ref{prop:attendance}), weakening depth without providing an offsetting benefit from fragmentation.

A related schedule policy alters the frequency of rebalance. A fund that rebalances every $T$ days must trade against the cumulative return over those $T$ days in a single auction. Less frequent rebalancing is among the most common proposals for reforming LETFs. We find the proposal backfires in the counterfactual presented in Panel~B of Table~\ref{tab:schedule}. At a two-day interval, SK~Hynix's loop gain rises to $0.68$, its volatility to $143.2\%$, and its holder loss by one third; Samsung moves in the same direction.

The observation is straightforward from the lens of the loop gain measure. Lengthening the interval further increases it because the fund must respond to a larger cumulative return. Shortening the interval attenuates the loop gain, but it cannot drive the gain to zero due to the unresolved pre-session public news, no matter how finely the trading day is partitioned.\footnote{As $T\to0$, the floor on $\ell(T)/\ell(1)$ is $0.65$ for Samsung and $0.84$ for SK~Hynix.}

\subsubsection{A two-auction rebalance}
\label{sec:twoprintsplit}

The importance of pre-session public news fails a family of design related to rescheduling, as we discussed above. A natural extension is to explicitly address the overnight public news through a dedicated rebalance at market open, so speculators do not have the opportunity of pre-positioning for them. This leads to our first recommended schedule design --- executing the rebalance twice a day at market open and market close. The opening auction absorbs the overnight return, and the closing auction absorbs the in-session return. This arrangement is not equivalent to rebalancing twice as often. A midday rebalance leaves intact the incentive to distort the opening price, whereas participating directly in the opening auction substantially reduces it.

The effectiveness of this policy still relies on one measured quantity: the depth of the opening auction relative to that of the closing auction. Panel~C of Table~\ref{tab:schedule} reports three cases, ordered from optimistic to pessimistic: the opening auction is as deep as the closing auction; its depth responds endogenously to dealer participation; and the fund's own order leaves it unchanged. All three cases deliver gains, the net of two opposing forces: concavity penalizes splitting the order across venues, whereas the dedicated opening trade curbs predation. With an endogenously determined depth, SK~Hynix's volatility drops from $136.7\%$ to $120.5\%$ and its holder loss from \KRW 1.38 trillion to \KRW 592 billion, a reduction of nearly three fifths. For Samsung, volatility falls from $115.3\%$ to $110.3\%$, and the loss falls from \KRW 79 billion to \KRW 51 billion.

However, such design still has two limitations. First, when the overnight and in-session returns have opposite signs, the two orders would run in opposite directions, raising SK~Hynix's total traded value by roughly one half. Second, as with any rescheduling policies, the design reduces holder's loss without addressing its strategic source. Manufacturing leverage remains the same at $\bar A=1.77$.

\subsection{The Size of Rebalance}
\label{sec:flexlev}
\label{sec:threshold}

A second class of policies admits a flexible leverage multiple to make the size of the mandated rebalance more elastic. This was proposed by Hong~Kong's revised circular on July~24, 2026 \citep{sfc2026}, requires products ``with highly dynamic capacity dependent on evolving market conditions'' to adopt a flexible leverage structure under which ``the leverage factor may vary on a daily basis,'' subject to an unchanged maximum of $2\times$. The target for the following day must be published after each close.\footnote{SFC circular 26EC43 does not impose a minimum but ask each product to choose its own permissible range. Effective August~3, 2026, CSOP, the largest provider of leveraged products in Hong Kong, adopted a range of $1.1\times$ to $2\times$ for its twelve single-stock products, including those tracking Samsung and SK~Hynix.} Such flexibility gives managers discretion over daily rebalancing quantities. We therefore evaluate four rules that could govern this discretion.

Because each rule reduces the delivered multiple to some extent, we first isolate the direct effect of lower leverage. The mandated order is quadratic, rather than linear, in the leverage multiple: reducing the leverage multiple by $10\%$ from $2$ to $1.8$ already lowers the loop gain by $28\%$. Panel~A of Table~\ref{tab:flexrules} provides a benchmark fund holding a constant $1.8\times$ leverage multiple. Volatility falls from $136.7\%$ to $132.4\%$, while the holder loss falls from \KRW 1.38 trillion to \KRW 1.17 trillion, and manufacturing leverage falls from $1.77$ to $1.30$. These benefits come with a tracking error of $12.8\%$ each year. We treat the fixed multiple as a benchmark, rather than a policy proposal, because it does not deliver the same leverage exposure as the flexible designs. For the rest of the evaluations, we calibrate the policy parameters so they all deliver the same leverage multiple of $1.8\times$, so any policy impact arises from the implementation itself.

Under the first rule, the fund remains passive during episodes of turbulence or high volatility, but, given the leverage cap, only after positive returns. Indeed, the rule triggers far less rebalancing volume than the $1.8\times$ benchmark and, with it, much smaller manufacturing leverage. The holder loss falls to \KRW 881 billion and volatility to $127.1\%$, removing about a quarter of the excess volatility, at a mandate deviation of $7.8\%$ a year, smaller than under the benchmark. The rule helps because the fund's own rebalances no longer add to the turbulence in a crisis, damping the feedback loop; managers can avoid part of the year's most damaging losses by staying passive through turbulent episodes if granted more flexibility.

The second rule generalizes the first by imposing an inaction band between $1.8\times$ and $2\times$. The fund remains passive while the leverage multiple lies inside the band and rebalances to the nearer boundary once the multiple leaves it. The average leverage multiple again stays at $1.8\times$. This rule reduces volatility by about seven per cent and cuts the holder loss by nearly a third, performing on par with the first; the band is, however, triggered repeatedly during consecutive upturns or downturns of the market, as in the Korean episode, which limits how much of the crisis losses it can avert.

The third rule instead imposes a volatility cap, adjusting the multiple to target constant volatility relative to a rolling anchor. This rule delivers no improvement over the fixed $1.8\times$ benchmark, with both volatility and the holder loss marginally worse. Because changing the multiple itself requires a trade, the rule increases the total order the closing auction must absorb, to more than $40\%$ above the fixed-multiple benchmark, even as it reduces the capital underlying the order. Such an unconditional rule does not mitigate the inelasticity of the order, so it is important for managers to adjust their rebalances according to market conditions rather than to a pre-imposed rule.

The fourth rule is a capacity cap, which constrains the fund's rebalancing order relative to the capacity of the venue that must absorb it. This rule is the strongest of the four: volatility falls to $122.3\%$ and the holder loss to \KRW 743 billion, the largest reduction on both margins, because the cap conditions the order directly on the venue's measured capacity and therefore binds precisely on the stocks and days on which the loop gain is extreme.

This comparison yields three broader conclusions. First, flexibility benefits fund holders overall: three of the four rules we evaluate deliver gains in both volatility and holder welfare. Flexibility makes the rebalancing order more elastic and can avert the most damaging rebalances in a crisis, saving a substantial amount of volatility drag even by simply staying passive. While flexibility comes at the cost of tracking error, that cost can be disclosed in advance in the fund prospectus. If the fund markets itself as a $1.8\times$ fund, the tracking error would in fact be minimal. Second, the majority of the holder loss occurs in episodes of extreme volatility, that is, in the crisis regime. Limiting the size of the rebalance or staying passive on these occasions appears to be first-order for protecting holders. Finally, the evaluation we provide here offers only a few prototypes, and fund managers may well design more effective policies that balance holder benefits against tracking error, whether by combining these rules or by applying their own discretion. We therefore expect the Hong~Kong regulators' policies to have positive effects on the market.

\subsection{The Reference Price}
\label{sec:twoprint}
\label{sec:reference_benchmark}

The LETF rebalances at the market close according to one single reference price. This convention minimizes measured tracking error, but it also invites predators to distort the price against the LETF. The reference price is thus the margin through which policy can directly affect the strategic channel without reducing the nature of the product, yet regulators have paid comparatively little attention to this dimension. Table~\ref{tab:reference} evaluates two alternatives: executing a smaller share of the order in the auction that determines the reference price, and basing the mandate on an average of multiple prices.

\subsubsection{Splitting execution}

LETFs typically implement part of the rebalance in the closing auction and the other part in the futures market that closes a few minutes after the auction, as we introduced in Section~\ref{sec:close}.\footnote{The practical execution share reflects funding constraints, the securities transaction tax on spot but not futures transactions, and short-sale restrictions affecting inverse funds (Appendix~\ref{app:phi}). SK~Hynix's rebalancing capital is dominated by a swap product whose counterparty hedges at the close.} The first design thus changes the share of the mandated order that executes in the price-setting auction. Let $\varphi$ denote this share. To fix ideas, we assume that the remaining orders execute in a venue identical to the closing auction except that they do not participate in the determination of the closing price, so the results are not driven by differences in depth across venues.

Panel~A of Table~\ref{tab:reference} considers alternative values of $\varphi$. The effect reflects a trade-off between less self-referential pricing and concave price impact and is therefore non-monotonic in $\varphi$, as we established theoretically in Proposition~\ref{prop:phi}. The trade-off mostly cancels out in at the calibrated parameters, with different $\phi$ parameters barely changing holder's loss. Less self-referential pricing can mitigate the excess volatility, but the magnitude of this policy is limited. Mandate tracking error remains zero in every case because $\varphi$ changes only where the order is executed; the fund continues to deliver twice the same daily return.

However, the choice of $\varphi$ is no free lunch for fund managers. The mixture of stocks and futures reflects other considerations, such as collateral management. Moreover, our counterfactuals still price the off-auction impact at the auction's favorable depth, so pragmatic considerations would de-prioritize $\varphi$ as an approach to address the loop gain problem given. We therefore do not make strong claims about the effect of this policy.

\subsubsection{Averaging the reference}

The second design changes the reference price while leaving execution at the market close. The mandate is based on an average of several prices rather than on a single closing print, so it alters the target rather than the flow. Because part of that reference is already fixed, the target a trader faces is smaller and less responsive to the current price: displacing one print moves the reference by only a fraction of the original amount, which weakens the self-reinforcing rebalance that predatory trading exploits. This attenuation is arithmetic in the number of prices averaged and, unlike the designs considered above, does not depend on the depth of any alternative venue. As a comparison, reducing the size of the product lowers manufacturing leverage by limiting the rebalance size, while rescheduling the order leave the target just as predictable as before.

Panel~B of Table~\ref{tab:reference} evaluates several averaging rules. Averaging prices within the trading day fails for the same reason rescheduling does. The overnight news is already public at the open and is impounded in every price thereafter, as Figure~\ref{fig:loading_cycle} establishes, so each of those prices already carries the speculators' pre-positioning. Intraday averaging therefore offers no refuge; at the calibration the same-day averaged mark admits no damped equilibrium at all, the open's own anticipation loop passing the ringing bound (Table~\ref{tab:reference}).

A reference averaged across days, by contrast, leaves the self-reinforcing rebalance far less responsive to any last-minute price distortion. Averaging therefore reduces both the loop gain and the manufacturing leverage substantially, almost in proportion to the number of prices averaged. Under the most favorable rule, using a five-day moving average closing price as the reference price, SK~Hynix's volatility falls from $136.7\%$ to $110.7\%$, eliminating $71\%$ of the excess volatility, and the holder loss falls from \KRW 1.38 trillion to \KRW 161 billion; a three-day average already removes $56\%$ of the excess volatility and $83\%$ of the loss.

One might expect such a policy to produce large tracking errors, because it continually tracks a stale benchmark. This is true at the daily frequency: the averaging rules deliver daily deviations of about $8.5\%$ against the original mandate, the highest of any design we evaluate. These deviations are mean-reverting rather than cumulative, however, so they largely cancel out across days. Over the entire sample period, the mandate tracking error is $3.0$ annualized percentage points for the three-day moving average and $17.3$ for the five-day. We therefore recommend the three-day average, which delivers better performance over longer horizons.

\subsection{Policy Recommendation}
\label{sec:policy}
\label{sec:recommendation}

The counterfactuals identify three effective designs, one from each policy family. Each operates on a distinct margin, so the policies are complements rather than substitutes. Used together, these three policies address the location, scale, and reference-price margins of predatory trading around LETFs and substantially reduce the losses borne by retail investors.

The two-auction rebalance determines where the order arrives. It reduces holder losses by nearly one half while generating exactly zero mandate tracking error, because the fund continues to deliver twice the same daily return. Among the three, it is the only reform that fully preserves investors' contracted exposure and is therefore the natural first step.

A flexible multiple adjusts \emph{how large} the order can become. The best size rule, a cap tied to the absorbing venue's measured capacity, removes about $40\%$ of the excess volatility and $45\%$ of the welfare transfer, because it binds precisely on the stocks and trading days on which the loop gain is extreme.

A multi-day average alters \emph{which price} sizes the order. It reduces manufacturing leverage in proportion to the number of prices averaged, independently of venue depth: a three-day average cuts the loop gain to one-third of its original value and the holder loss to \KRW 237 billion, the largest effect of the three designs. It is also the only policy that directly weakens the incentive to pre-position, because part of the reference price a trader would need to move is already fixed.


\section{Conclusion}
\label{sec:conclusion}

A leveraged exchange-traded fund submits a demand schedule into the closing auction that is upward-sloping in price. The
consequences are governed by one measurable scalar, the loop gain $\ell = \Lambda_c K$: the price impact of absorbing order imbalance at the reference price, multiplied by the total rebalancing capital of all related funds. For any $\ell > 0$, the closing price overweights public news, and the excess loading reverses during the following cycle. The pattern was only documented among Korean stocks with LETF offerings after May~27, 2026, where the loop gain runs an order of magnitude above the index complexes behind the benign evidence and more than double the most extreme United States single-stock complex. In a replay without the self-reinforcement, the arrangement added approximately 36.7 percentage points to SK~Hynix's annualized volatility and cost its products' holders \KRW 1.38 trillion (US\$0.91~billion) over nine weeks, or 17.6\% of a unit's initial investment. The mechanism nevertheless destabilized the path of the July repricing rather than its destination.

Two policy implications follow. The first concerns the market into which such an instrument is introduced. The mandated rebalance is fixed by prospectus and identical across markets, but the venue obligated to absorb it may have different capacities. Measured against that mechanism, the loop gain of the largest United States single-stock complexes lies between $0.06$ and $0.22$, while SK~Hynix's post-launch loop gain averaged $0.51$ on the same cross-market construction. The launch of Korean single-stock LETFs accordingly generated far greater damage to financial stability than any single-stock LETF has caused in the United States. The governing quantity is thus a combination of the product's size and the clearing venue's capacity, so regulators in emerging markets should investigate that capacity directly rather than infer it from an instrument's success in a developed market.

The second concerns the scale of the instrument, which governs both the sign and magnitude of its effect. At loop gains near zero, the traders who meet the fund's predictable order supply liquidity to it, damping the
closing displacement and stabilizing the market. As the loop gain rises, the same behavior by the same agents
widens the displacement it formerly damped, because a share purchased at the close now enlarges the order into which it will be sold. The transition requires only growth in $K$, with no change in any agents' conduct, which the
product's own success generates: Korea's rose from \KRW 4.3 to approximately \KRW 14 trillion within three weeks of listing. The response is in addition nonlinear in $\ell$, so evidence accumulated while an instrument is small
does not extrapolate to the same instrument once it is large.

More broadly, the mechanism we proposed is not specific to leverage itself as in the intermediary asset pricing literature \citep{hekrishnamurthy2013, brunnermeiersannikov2014}. Any mandate that could lead to upward-sloping demand, as previously documented in firesales \citep{shleifervishny2011}, portfolio insurance \citep{grossmanzhou1996}, and volatility targeting strategies \citep{moreiramuir2017}, carries the same opportunity for predatory trading in the spirit of \cite{brunnermeierpedersen2005}. The same measurements of loop gain quantify those opportunities: the size of the rebalance capital and the capacity of the venue that clears it. This paper documents the example of leveraged products, where the loop gain is large enough to generate instability in the Korean market.

We cannot conclude whether any of the conduct documented here was manipulative or unlawful. Manipulation is a claim about intent; ours is a claim about prices and aggregate quantities, and the second does not establish the first. At the measured loop gain, the observable strategy---pre-positioning during the session and liquidating at the close---is qualitatively identical to that of a benign liquidity provider or an intraday momentum or reversal factor trader. Moreover, the same aggregate response can arise from a continuum of atomistic traders, each too small to move the price and each individually incapable of manipulation, so the mechanism requires no single trader to hold market power. What separates these motives is intent, which only account-level records of the closing auction---orders, cancellations, and identified accounts---can reveal, and those records are a supervisory instrument rather than a research one. We accordingly document a mechanism and quantify the transfer it generates, taking no position on the legality of any participant's conduct.

\clearpage
\bibliographystyle{econ-aea}  
\bibliography{references}     

\clearpage

\section*{Tables and Figures}
\addcontentsline{toc}{section}{Tables and Figures}

\begin{table}[H]
\centering
\caption{\captitle{Overnight reversal of pre-open foreign news} The table presents the loading of Korean stock returns on a pre-open U.S. return, the public-news shock $\varepsilon$.
The columns present the shock instrument, either SMH (semiconductors ETF), QQQ (broad technology ETF), or SPY (broad market ETF), for the \emph{next-day}  close-to-close return $\ln(C_{t+1}/C_t)$ in the left block and its \emph{overnight} leg $\ln(O_{t+1}/C_t)$ in the right, with the cell's observation count.
The rows present the four cells of the difference-in-differences design, the treated
pair (Samsung Electronics, SK~Hynix) against matched large-cap controls, before versus after the May~27, 2026, domestic launch, followed by the triple difference that pools the four in one regression. ``Pre'' is the
offshore-products-only window. 
The last three rows present the same triple differences expressed as shares of the first-day response to the news, measured as the loading of the day-$t$ opening leg $\ln(O_t/C_{t-1})$ in the treated post-launch cell. The opening leg shares no price mark with either numerator, and its own triple difference is zero, so it scales the reversal without being moved by the launch.
The third leg of the cycle, the next session, which
Proposition~\ref{prop:chaining} requires to carry $\Pi_g-1$ times the overnight one, is in Appendix Table~\ref{tab:window_split}; the earlier
no-product window and the same-sector triple difference are in Appendix Tables~\ref{tab:echo_full} and~\ref{tab:echo_tech}.}
\label{tab:echo}
\medskip
\small
\setlength{\tabcolsep}{4.5pt}
\begin{tabular}{lccccccr}
\toprule
 & \multicolumn{3}{c}{Next-day (total), $\ln(C_{t+1}/C_t)$} &
   \multicolumn{3}{c}{Overnight leg, $\ln(O_{t+1}/C_t)$} & \\
\cmidrule(lr){2-4}\cmidrule(lr){5-7}
 & SMH & QQQ & SPY & SMH & QQQ & SPY & Obs. \\
\midrule
Treated $\times$ post &
  $-0.83^{***}$ & $-1.76^{***}$ & $-3.92^{***}$ &
  $-0.44^{**}$  & $-1.05^{**}$  & $-2.88^{***}$ & 82 \\
 & (0.29) & (0.66) & (1.34) & (0.21) & (0.43) & (0.93) & \\
Treated $\times$ pre &
  $-0.07$ & $-0.09$ & $-0.07$ &
  $-0.07$ & $-0.10$ & $-0.10$ & 365 \\
 & (0.15) & (0.26) & (0.37) & (0.09) & (0.15) & (0.24) & \\
Control $\times$ post &
  $0.17$ & $0.23$ & $-0.27$ &
  $0.07$ & $0.11$ & $-0.18$ & 410 \\
 & (0.13) & (0.23) & (0.42) & (0.06) & (0.13) & (0.23) & \\
Control $\times$ pre &
  $0.03$ & $0.02$ & $0.02$ &
  $-0.01$ & $-0.03$ & $-0.04$ & 5{,}430 \\
 & (0.04) & (0.07) & (0.09) & (0.02) & (0.04) & (0.06) & \\
\midrule
Triple difference &
  $-0.79^{***}$ & $-1.55^{***}$ & $-2.29^{***}$ &
  $-0.32^{***}$ & $-0.79^{***}$ & $-1.50^{***}$ & 7{,}008 \\
\quad (treated $\times$ post)
 & (0.19) & (0.46) & (0.76) & (0.11) & (0.23) & (0.40) & \\
\midrule
First-day response &
  $+0.99$ & $+2.06$ & $+3.35$ &
  $+0.99$ & $+2.06$ & $+3.35$ & 88 \\
\quad at market open & (0.24) & (0.52) & (1.04) & (0.24) & (0.52) & (1.04) & \\
Share reverted &
  $80\%$ & $75\%$ & $68\%$ &
  $32\%$ & $38\%$ & $45\%$ & \\
\bottomrule
 \multicolumn{8}{c}{Standard errors clustered by date in parentheses, $^{*}$, $^{**}$, $^{***}$ for $10\%$, $5\%$, $1\%$.}
\end{tabular}
\end{table}

\begin{table}[H]
\centering
  \caption{\captitle{Calibrated parameters}
The table presents the calibration of the model of Section~\ref{sec:model} for
Samsung and SK~Hynix. Panel~A presents the parameters: the six top rows are
measured outside the model, and the arbitrage capacity is identified from
the overshoot factor $\Pi_g$. Panel~B compares the targeted and untargeted
moments in the data with those the model implies; its brackets are $95\%$
intervals, date-block bootstrap except the reversal share, which is
delta-method. See Table~\ref{tab:g4calibration} for the curve's own primitives,
and Appendix Table~\ref{tab:calibration_full} for the robustness of the measured
rows and the share of bootstrap draws at the zero-capacity boundary.}
\label{tab:calibration}
\medskip
\footnotesize
\begin{tabular}{>{\raggedright\arraybackslash}p{5.0cm}lcc>{\raggedright\arraybackslash}p{5.1cm}}
\toprule
 & & Samsung & SK Hynix & Identified from \\
\midrule
\multicolumn{5}{l}{\textit{Panel A. Parameters}} \\
loop gain, post-launch mean & $\bar\ell = \Lambda_c K$ & 0.219 & 0.583 & impact curve $\times$ worldwide $K$, unweighted \\
impact concavity & $\hat\delta$ & \multicolumn{2}{c}{0.73} & exponent of the impact curve \\
correction speed & $\hat\theta$ & \multicolumn{2}{c}{0.88} & decay of the echo's legs \\
news / liquidity variance ratio & $\sigma^2_\varepsilon/\sigma^2_u$ & \multicolumn{2}{c}{1.07} & pre-launch window \\
noise / liquidity ratio, post-launch & $\lambda_{\mathrm{post}}$ & \multicolumn{2}{c}{1.06} & session-residual variance identity \\
closing-book depth ratio & $\hat\rho = \Lambda_c/\Lambda$ & \multicolumn{2}{c}{0.43} & auction vs.\ continuous impact at matched size \\
\addlinespace
arbitrage capacity & $\hat\psi_c^{\,i} = \hat\tau\,\Lambda_c^i$ & 0.466 & 0.260 & one pooled $\hat\tau$ inverted from $\hat\Pi_g$ \\
\bottomrule
\end{tabular}

\medskip

\begin{tabular}{p{7.6cm}ccc}
\toprule
\multicolumn{4}{l}{\textit{Panel B. Moments: data versus model}} \\
\addlinespace
Moment & Data & Model & Targeted \\
\midrule
pooled overshoot factor $\hat\Pi_g$ (reference print) & 1.66 [0.84, 3.27] & 1.66 & yes \\
\quad at the actual daily gaps & 1.66 [0.84, 3.27] & 1.37 & no \\
\quad per name, SK~Hynix & 1.60 [0.72, 2.89] & 1.69 & no \\
\quad per name, Samsung (attenuation prediction) & 1.75 [0.85, 3.96] & 1.18 & no \\
\addlinespace
closing-return autocorrelation $\rho_1$, SK~Hynix & -0.16 & -0.14 & no \\
\quad Samsung & -0.08 & -0.06 & no \\
pre-launch $\rho_1$ & 0.02 [-0.08, 0.12] & 0 & no \\
pre-launch variance ratio VR(5) & 1.00 & 1 & no \\
overnight share of daily variance, post & 0.62 & 0.51 & no \\
total\,/\,overnight variance ratio, post & 1.73 & 2.48 & no \\
Table~\ref{tab:echo} reversal share & 0.75 [0.17, 1.32] & 0.31 & no \\
next-session\,:\,overnight leg, $\hat\Pi_g - 1$ & 0.68 & 0.69 & no \\
\bottomrule
\end{tabular}
\end{table}

\begin{table}[H]
\centering
\caption{\captitle{Policy counterfactuals on order scheduling, Korea's proposal} 
Focusing on SK~Hynix, this table evaluates designs that move the order temporally by varying its rebalance interval, order splitting and auction participation.
Panel~A considers $m$ equally sized funds, each rebalancing at its own intraday moment against its own reference price, a proposal suggested by Korean regulators. The panel prices the design at its most favorable---every auction granted the closing auction's full attending capacity, with the orders split on the measured concavity $\hat\delta$.
Panel~B rebalances every $T$ days on the cumulative $T$-day return.
Panel~C splits the order across two reference prices, the opening auction trading the overnight gap and the closing auction the session return. 
The columns present the flow-weighted loop gain $\bar\ell$; the manufacturing leverage $\bar A$ of Proposition~\ref{prop:manufactured}; annualized volatility; the \emph{mandate tracking error}, the total departure over the $46$ post-launch days between the exposure the fund delivers and the $2\times$ its holders bought, measured by compounding both and comparing terminal values so that offsetting daily misses cancel; and the holder loss over the same window, the realized Proposition~\ref{prop:loss} transfer $\sum_t K_t r_{cc,t} e_t$ together with the execution cost of the futures leg, each measured net of the same design's $\ell\to0$ benchmark, so that the column reports what the loop causes rather than what running a leveraged fund costs (positive $=$ holders lost). 
$^{\S}$ The compounded loading passes Proposition~\ref{prop:chaining}'s ringing bound on some days, and those days are held at the largest dose the model can price, so the entry is a lower bound.
See Appendices~\ref{app:designs} and~\ref{app:replay} for detailed methodology.
Evaluations on Samsung are shown in Appendix Table~\ref{tab:schedule_both}.}
\label{tab:schedule}
\medskip
\footnotesize
\setlength{\tabcolsep}{5pt}
\begin{tabular}{lrrrrr}
\toprule
 & $\bar\ell$ & manufacturing & vol & mandate & holder loss \\
 &  & leverage & (\%) & TE (\%) & (\KRW bn) \\
\midrule
\multicolumn{6}{l}{\textit{Benchmarks}} \\
status quo$^\dagger$ & 0.583 & 1.77 & 136.7 & 0.0 & 1{,}380 \\
no loop ($\ell\to0$) & 0.000 & 0.00 & 100.0 & 0.0 & 0 \\
\addlinespace
\multicolumn{6}{l}{\textit{Panel A. Staggered reference prints: $m$ equal funds, each at its own intraday moment and print}} \\
$m = 2$ & 0.351 & 1.77 & 146.9 & 0.0 & 1{,}858 \\
$m = 3$ & 0.261 & 1.77 & 152.0$^{\S}$ & 0.0 & 2{,}098$^{\S}$ \\
$m = 5$ & 0.180 & 1.77 & 155.7$^{\S}$ & 0.0 & 2{,}235$^{\S}$ \\
\addlinespace
\multicolumn{6}{l}{\textit{Panel B. The rebalancing interval: every $T$ days, on the cumulative $T$-day return}} \\
$T=2$ & 0.675 & 1.77 & 143.2 & -- & -- \\
\addlinespace
\multicolumn{6}{l}{\textit{Panel C. Two reference prints: the order split between the opening and closing auctions}} \\
equal depth & 0.665 & 1.77 & 118.2 & 0.0 & 497 \\
endogenous attendance & 0.738 & 1.77 & 120.5 & 0.0 & 592 \\
pre-launch depth & 0.736 & 1.77 & 121.0 & 0.0 & 635 \\
\bottomrule
\end{tabular}
\end{table}

\begin{table}[H]
\centering
\caption{\captitle{Policy counterfactuals on rebalance flexibility, Hong~Kong's proposal} This table evaluates Hong Kong regulators' proposal of allowing flexible leverage multiples for LETFs. 
\emph{Panel~A} presents a fixed leverage multiple in the status quo and one alternative multiple of 1.8 as benchmark for evaluations below. 
\emph{Panel~B} evaluate a few flexible proposals, all holding the \emph{band} at $1.1\times$--$2\times$, the structure Hong~Kong's SFC circular 26EC43 \citep{sfc2026}, and with an average leverage multiples of 1.8. The rules managers could potentially adopt are listed as follows: restore $2\times$ at every close \emph{except} after an up move in stress; hold a no-trade band, reflected to the nearer edge on exit; target constant volatility against a rolling $250$-day anchor; or target capacity, keeping the predicted order inside a fixed share of the closing auction.
Every rule carries one free threshold, which is calibrated so that the rule delivers the same leverage multiple of $1.8\times$.
In terms of columns, total close flow is the whole order the closing auction receives, as a ratio to the status quo at each design's own equilibrium; the manufacturing leverage is the Table~\ref{tab:schedule} object, which these rules reach only through the feedback capital, since none of them touches the reference price; the \emph{daily miss} is $\mathrm{mean}\,|(L_t-2)\,r_{cc,t}|$, the average size of one day's departure from the mandate; the mandate tracking error and the holder loss are defined in the same way as Table~\ref{tab:schedule}.
See Appendix~\ref{app:replay} for more details on methodology.}
\label{tab:flexrules}
\medskip
\footnotesize
\setlength{\tabcolsep}{3pt}
\begin{tabular}{lrrrrrrr}
\toprule
 & mean & total close & manufacturing & vol & daily & mandate & holder loss \\
rule & $L$ & flow & leverage & (\%) & miss (\%) & TE (\%) & (\KRW bn) \\
\midrule
\multicolumn{8}{l}{\textit{Panel A. A fixed multiple $L$: the leverage CAP}} \\
\quad $L = 2\times$$^\dagger$ & 2.00 & 1.00 & 1.77 & 136.7 & 0.00 & 0.0 & 1{,}380 \\
\quad $L = 1.8\times$ & 1.80 & 0.72 & 1.30 & 132.4 & 1.28 & 12.8 & 1{,}173 \\
\addlinespace
\multicolumn{8}{l}{\textit{Panel B. A flexible band $1.1\times$--$2.0\times$: what the manager's own rule delivers inside it}} \\
\quad Passive on up moves in stress & 1.95 & 0.54 & 0.90 & 127.1 & 0.19 & 7.8 & 1{,}044 \\
\quad Inaction band $1.8\times$--$2\times$ & 1.91 & 0.30 & 0.68 & 127.4 & 0.50 & 15.1 & 974 \\
\quad Volatility cap ($1.40\times$ long-run vol) & 1.80 & 1.04 & 1.26 & 132.8 & 1.08 & 11.8 & 1{,}187 \\
\quad Capacity cap ($S \le 1.61$) & 1.80 & 0.57 & 0.44 & 122.3 & 0.85 & 15.1 & 743 \\
\addlinespace
\bottomrule
\end{tabular}
\end{table}

\begin{table}[H]
\centering
\caption{\captitle{Policy counterfactuals on reference price} 
The table evaluates the designs that affect the
strategic channel without shrinking the product, also focusing on SK~Hynix.
\emph{Panel~A} varies the venue split $\varphi$, the share of the worldwide mandate settling in the cash closing auction rather than in single-stock futures, holding the mandate itself fixed; the measured $\varphi = 0.49$ is the status quo and reproduces the realized replay.
\emph{Panel~B} changes the reference price itself: the average of the two same-day auction prices, or
a $k$-day moving average of closes. The three-day average design was proposed by Korea Capital Market Institute July~20, 2026.
Columns and method are the same as in Table~\ref{tab:schedule}, with one exception: the daily miss and the mandate tracking error are the Table~\ref{tab:flexrules} columns applied to a different departure, since the multiple is held at $2\times$ here and the settlement mark leaves the day's close, so the design delivers $2\,dR_t$ for reference return $dR_t$. 
Samsung is priced on every design in Appendix Table~\ref{tab:reference_both}.}
\label{tab:reference}
\medskip
\footnotesize
\setlength{\tabcolsep}{5pt}
\begin{tabular}{lrrrrrr}
\toprule
 & $\bar\ell$ & manufacturing & vol & daily & mandate & holder loss \\
 & & leverage & (\%) & miss (\%) & TE (\%) & (\KRW bn) \\
\midrule
\multicolumn{7}{l}{\textit{Panel A. The venue split: share $s$ of the mandate settling in the closing auction, the rest in futures}} \\
$s = 0.25$ & 0.356 & 1.77 & 126.3 & 0.0 & 0.0 & 1{,}332 \\
$s = 0.49$$^\dagger$ & 0.583 & 1.77 & 136.7 & 0.0 & 0.0 & 1{,}380 \\
$s = 0.75$ & 0.795 & 1.77 & 145.1 & 0.0 & 0.0 & 1{,}356 \\
\addlinespace
\multicolumn{7}{l}{\textit{Panel B. Reference averaging: the mandate settles on an average of several prints}} \\
same-day average (open, close)$^{\ddagger}$ & 0.291 & 1.49 & \multicolumn{4}{c}{\textit{no damped eq.}} \\
MA(3) settlement mark & 0.194 & 0.59 & 116.2 & 8.8 & 3.0 & 237 \\
MA(5) settlement mark & 0.117 & 0.35 & 110.7 & 8.4 & 17.3 & 161 \\
\bottomrule
\end{tabular}
\end{table}

\clearpage

\begin{figure}[H]
\centering
\includegraphics[width=\textwidth]{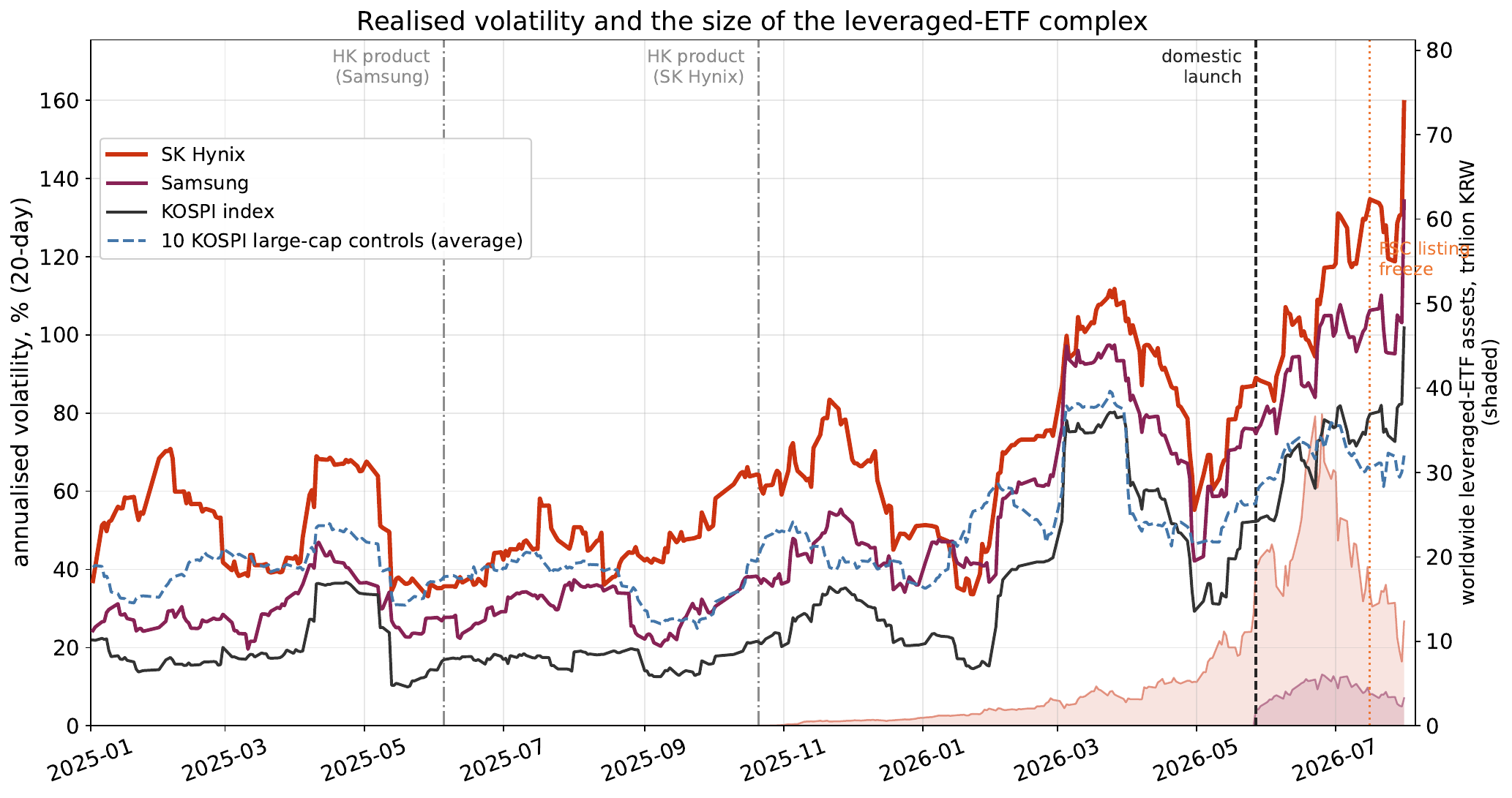}
\caption{\captitle{The Korean episode, May--July 2026} The figure plots the rise of annualized volatility of Korean stocks (Samsung, SK~Hynix, KOSPI index, a few other control stocks) in 2026, together with the worldwide rebalancing capital $K$ of LETFs with the two Korean underlyings. Vertical lines mark the domestic launch of the sixteen single-stock leveraged and inverse ETFs on May 27, 2026 as well as earlier launches of Hong Kong products.}
\label{fig:timeline}
\end{figure}

\begin{figure}[H]
\centering
\includegraphics[width=\textwidth]{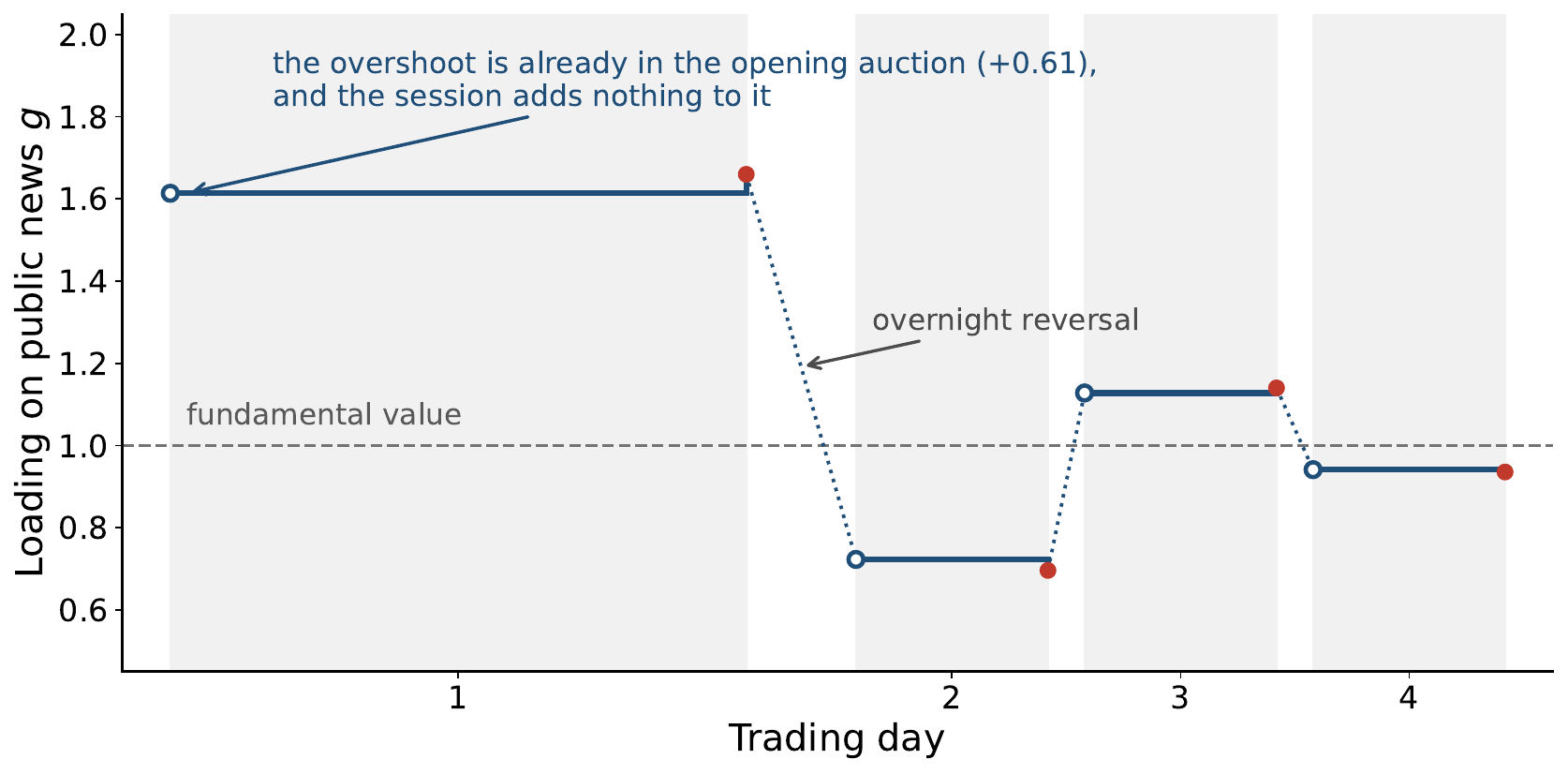}
\caption{\captitle{The price path after one unit of public news}
The figure plots the price's loading on public news after one unit of news
arrives before the opening auction of day~1, with no news thereafter, so the path
is deterministic. The loading steps up at the opening auction and is flat across
the session, and the closing auction adds almost nothing; overnight capital then
corrects a share $\theta$ of the standing error, and the price oscillates around
fundamental value with geometric decay. The shaded bands are trading sessions and
the dotted segments are overnight. The open circles represent opening prices and
the filled markers closing prices. Day~1 occupies half the horizontal axis so that
the within-day shape is legible, so the axis is not uniform in time. The
parameters are the calibration of Section~\ref{sec:calibration},
$\hat\Pi_g = 1.66$, $\hat{\mathcal{S}} = 0.93$ and $\hat\theta = 0.88$.}
\label{fig:loading_path}
\end{figure}

\begin{figure}[H]
\centering
\includegraphics[width=\textwidth]{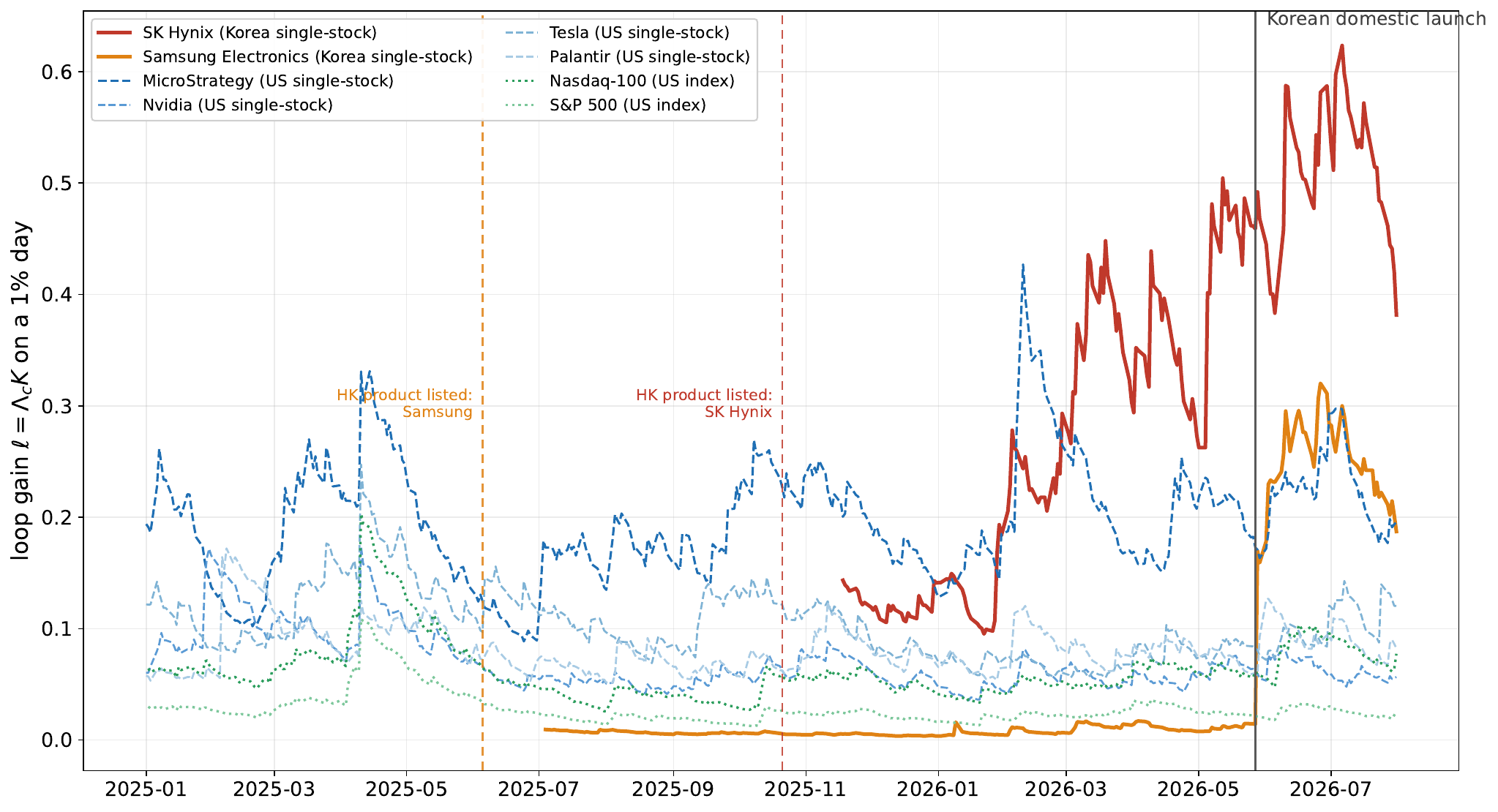}
\caption{\captitle{The loop gain over time, Korea versus the United States}
This figure plots the time series of loop gain for the two Korean treated stocks against the U.S. single-stock and index complexes computed identically. Loop gain is defined as$\ell = \Lambda_c K$, closing-auction price impact times worldwide rebalancing capital, introduced detailedly as a sufficient statistic of Section~\ref{sec:model}. 
$\Lambda_c$ is the tape-estimated impact curve of Section~\ref{sec:calibration}, measured from signed flow arriving before the close through to the close, its shape fitted on the Korean intraday data.
$K$ is worldwide rebalancing capital at the Korean closing auction.
Vertical lines mark the domestic launch of the sixteen single-stock leveraged and inverse ETFs on May 27, 2026 as well as earlier launches of Hong Kong products.
Appendix Table~\ref{tab:impact_menu} compares these estimates with the Amihud proxy.}
\label{fig:dose}
\end{figure}

\begin{figure}[H]
\centering
\begin{minipage}[t]{.48\textwidth}
\centering
\includegraphics[width=\linewidth]{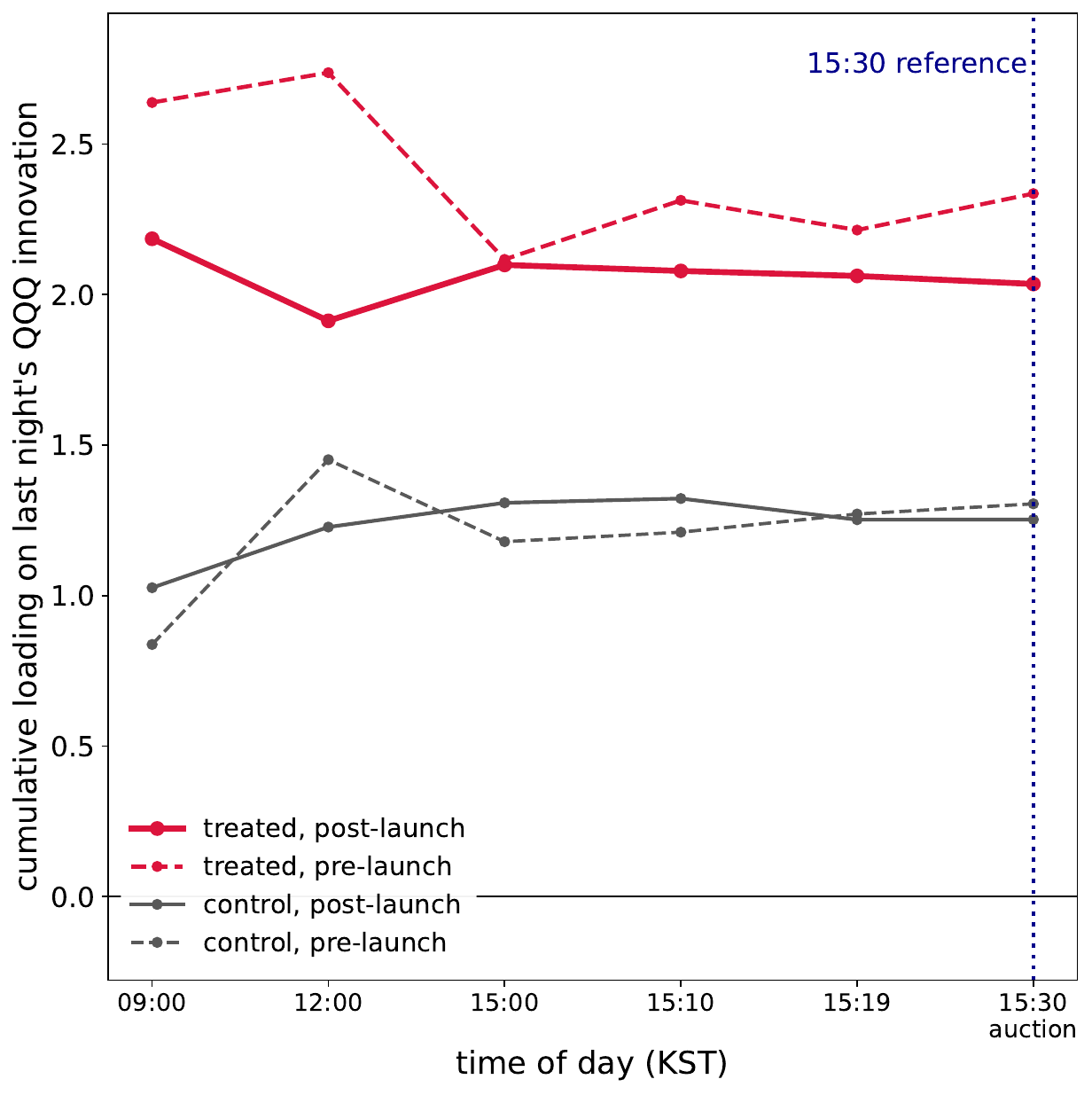}
\par\smallskip {\small (a) Within day $t$}
\end{minipage}\hfill
\begin{minipage}[t]{.48\textwidth}
\centering
\includegraphics[width=\linewidth]{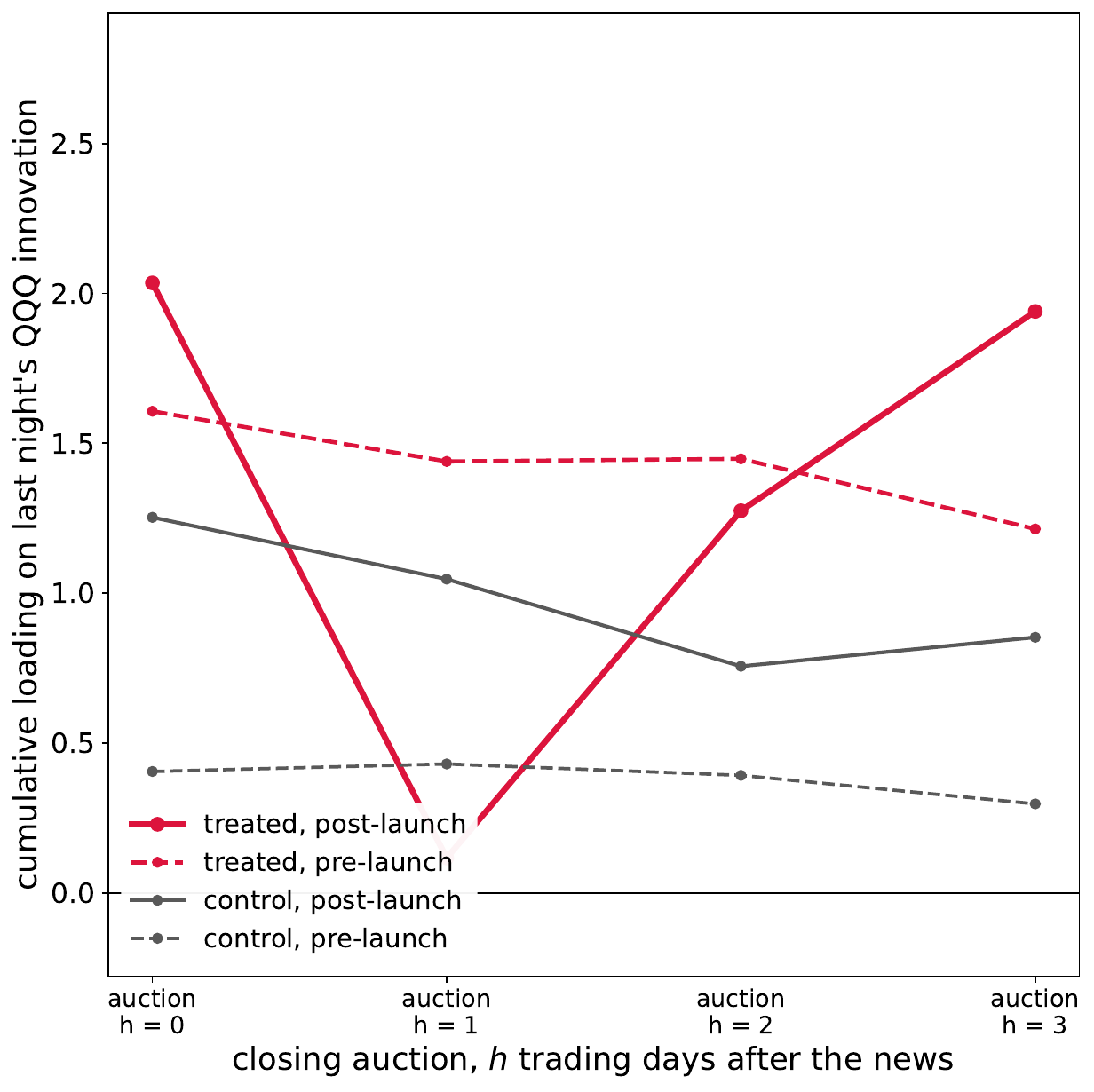}
\par\smallskip {\small (b) Across auction closes}
\end{minipage}
\caption{\captitle{The news-loading cycle}
The figure plots the loading of the treated and control stocks' cumulative return, measured from the pre-news close, on the overnight U.S.\ technology (QQQ) return, before and after the launch.
Panel~(a) presents day $t$, from the open to the 15:30 closing auction that sets the LETF reference price; all four cells are estimated on the May--July 2026 minute panel, whose pre-launch window holds fourteen trading days.
Panel~(b) presents the closing auctions of days $t$ through $t{+}3$; the post-launch cells stay on the 2026 panel, while the pre-launch cells are estimated on the full pre-launch daily history, the offshore-products window for the treated pair and 2024--26 for the controls.
The coefficients are local projections \citep{jorda2005} on the innovation in the overnight return, conditioning on two lags of the shock and of the stock's own return, with stock fixed effects and standard errors clustered by date.
After the launch the treated pair carries roughly twice the controls' loading at the reference price and gives the whole of it back by the next close, while every other cell holds its loading.
The overnight and next-session legs of the unwinding are in Appendix Table~\ref{tab:window_split}.}
\label{fig:loading_cycle}
\end{figure}

\begin{figure}[H]
\centering
\includegraphics[width=.95\textwidth]{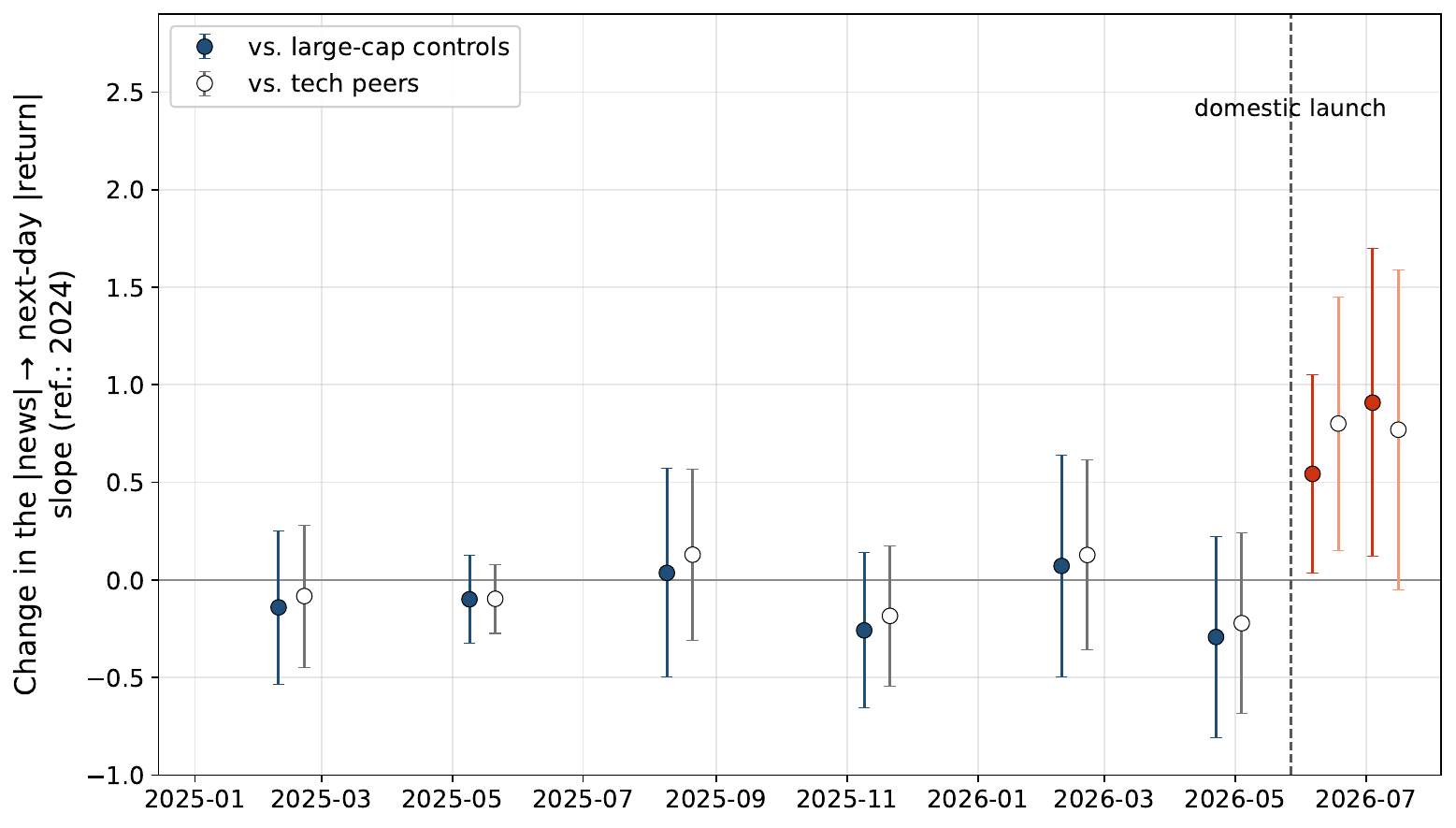}
\caption{\captitle{Parallel trends for the aftershock design}
This figure plots the dynamic treatment effects of LETF launch on conditional volatility in Korea.
The vertical axis illustrates the treated-versus-control difference in the $|$news$|$-to-next-day-movement slope of Appendix Table~\ref{tab:aftershock}, the magnitude counterpart of the signed echo of Table~\ref{tab:echo}. The filled marker (blue) represents the treatment effect against ten never-eligible large-cap stocks in Korea, while the open marker (white) represents that against the ten technology peers in Korea. Whiskers are
$95\%$ intervals. The coefficients are estimated with stock and date fixed effects and the lead-shock controls, standard errors clustered by date.
The pre-launch path is jointly null in both universes (Wald $p = 0.78$ and $0.71$), so the identifying assumption of Section~\ref{sec:empirics} is not rejected.}
\label{fig:aftershock}
\end{figure}



\clearpage

\appendix
\setcounter{table}{0}\renewcommand{\thetable}{A\arabic{table}}
\setcounter{figure}{0}\renewcommand{\thefigure}{A\arabic{figure}}

\begin{center}
{\Large\bfseries Online Appendix for}\\[6pt]
{\Large\bfseries ``Preying on Leveraged ETFs''}\\[14pt]
{\large Yinhong Zhao\footnote{This online appendix was generated by
Claude Opus 5, cross-checking the original code of the paper.}}\\[3pt]
{\normalsize Princeton University}\\[10pt]
{\normalsize This draft: August 2026}
\end{center}
\bigskip

\noindent This online appendix collects the data documentation and
validation (Appendix~\ref{app:institutions}), the proofs of every result
stated in the paper (Appendix~\ref{app:proofs}), theoretical extensions
(Appendix~\ref{app:theory}), and the details of the quantification
(Appendix~\ref{app:quant}). All appendix tables and figures are collected
in Appendix~\ref{app:exhibits}.

\section{Data and Validation}
\label{app:institutions}

\subsection{The product families and the execution share}
\label{app:families}

Four product families reference the treated stocks; they share the
arithmetic of equation~\eqref{eq:rebalance} and differ only in who executes,
in which instrument, and against which reference price. \emph{U.S.\ index
products}: physical basket ($\approx$ half of exposure), OTC total-return
swaps across $\approx$ ten counterparties, and index futures;
market-on-close orders by 3:50~p.m., executed at the 4:00 close.
\emph{U.S.\ single-stock products}: swap-based with exchange-traded and FLEX
options as overflow; the SK~Hynix ADR products reference the 4:00~p.m.\ ET
ADR close, so their hedges reach Seoul only at the next Korean open and
contribute nothing to $K$ at the Seoul close. \emph{Korean domestic
products}: in-house stock plus single-stock futures, no swap layer;
spot-type funds key to the 15:20--15:30 stock auction, futures-type and inverse
products to single-stock futures trading to 15:45. \emph{Hong Kong
cross-border products}: swap-based synthetics contractually keyed to the
Korean close, so their hedges execute into the same Seoul auction.

Sizing and execution are simultaneous by design (the order fills at the
price at which the NAV is struck, with the KRX disseminating the indicative
auction price in real time through the closing window); the sequential
components---the futures leg of spot-type funds (15:30 reference, hedging to
15:45), stressed-day overflow, next-day true-ups, and Hong~Kong post-close
option blocks---are the empirical counterpart of $\varphi < 1$.

The execution share is measured from portfolio deposit files for all
sixteen domestic products (July~24, 2026) combined with venue weights:
spot-type funds hold 44--51\% of exposure in spot (one issuer at 37\%),
futures-type funds 10--21\%, inverse funds none. The capital-weighted
domestic spot slice entering equation~\eqref{eq:saturation} is $0.44$ for
Samsung and $0.41$ for SK~Hynix ($0.49$ under the narrower published
convention). Worldwide, SK~Hynix's $K$ is 59\% swap-intermediated against
41\% in-house; Samsung's is 99\% domestic. By Lemma~\ref{lem:swap} the
implied at-the-close shares are $\varphi \approx 0.76$ (SK~Hynix) and
$\varphi \approx 0.44$ (Samsung), the cross-stock inversion used in
Appendix~\ref{app:phi}.

\subsection{The KRX intraday deliveries and their validation}
\label{app:krx}

The within-day evidence uses two native Korea Exchange deliveries: 1-minute
single-stock futures bars, all KOSPI roots, April~1--July~31, 2026 (83
trading days), and 1-minute cash-equity bars with quotes and order-flow
fields, all KOSPI issues, May~4--July~31, 2026 (61 days; 15{,}170{,}614 rows,
927 issues, 55{,}740 issue-days, 98.3\% price-forming). All pre-registered
validation gates pass. The features that affect the estimates:

\emph{Sparse grid.} The key (date, minute, issue) is unique; per-file
counts sum to the concatenated length. An absent minute is a missing
observation, never a zero return. Rows with a blank opening price (1.7\%)
are block or negotiated trades; every price mark uses price-forming rows
only.

\emph{The closing auction is a single batch.} The 15:20--15:29 window holds
eight rows in the entire cash delivery (one price-forming). On all
55{,}245 auction bars, high $=$ low and value/volume $=$ close exactly, so
the auction bar's traded value is $W^{auc}_t$: the saturation denominator
is observed, not imputed.

\emph{Volatility-interruption extensions.} A closing-auction interruption
extends the auction to 15:32; 401 issue-days on 52 of 61 days are affected,
including SK~Hynix June~24 and July~2 and Samsung June~29. The implemented
rule takes the first price-forming row in $[15{:}30,\,15{:}40)$ and carries
an extension flag; extended issue-days leave the venue comparison of
Appendix~\ref{app:impact} and are counted.

\emph{Interval fields, never cumulative.} Cumulative fields increment
through the after-hours auctions and include non-price-forming trades.
Reconciled against the daily exchange file, the auction close equals the
official close on 100\% of ticker-days and interval volume matches daily
volume within 0.5\% on 100\%.

\emph{Field semantics.} (i) The buy--sell imbalance field is a buy
\emph{share} in $[0,1]$ (signed value $=$ value $\times (2b-1)$; null $=$
one-sided interval); it is invalid at a single-price auction, where there is no
aggressor---86\% of auction-window impact slopes built from it are
negative---so the venue comparison is run gross on both sides. (ii) The
disseminated midpoint field is an interval average; every window return
uses end-of-interval quote midpoints. (iii) The close-bar spread fields are
degenerate by construction and are not used. (iv) The 09:00 bar mixes the
opening auction with the first continuous minute; the open is used
descriptively and the next-open markout mark is the 09:01 midpoint.

\emph{Futures gates.} Value $=$ price $\times$ volume $\times$ 10 on
99.10\% of single-price bars (residual: two roots with adjusted contract
size, constant, returns unaffected). The 15:45 auction is clean: OHLC coincide
on all 9{,}168 populated auction bars. The 15:35--15:44 window holds 0.9\% of
the 15:30--15:34 row count but 49.2\% of its volume (blocks, no price
formation) and is excluded, so 15:34 is the last continuous mark. The
primary contract is the nearest settlement month with a 15:45 close, ties
by traded value; returns are within contract, and every table is reported
with and without roll days. Roots map to underlyings numerically by price
level: eight of twelve cash names match inside a 0.10\% median basis
(both treated names); the four that do not are rejected.

\emph{Coverage.} The treated pair marks 100\% on all four futures marks on
83 of 83 days. Tier~A futures: 24 roots (treated pair $+$ 22 controls);
185 of 209 roots are too thin to mark at the close. Tier~A cash: 91
controls, selected on auction-bar coverage, staleness and top-decile median
traded value, with the ten named daily controls forced in. Not contained:
April on the cash side, August onward on both, the funds' own tape, KOSDAQ,
and investor-type attribution.

\subsection{Rebalancing capital and saturation}
\label{app:kappa}

Worldwide rebalancing capital per underlying, equation~\eqref{eq:K}, is
assembled from exchange filings (sixteen Korean products), issuer and
exchange files (Hong Kong), issuer files (U.S.\ ADR), and index
pass-through at KOSPI200 weights, FX-converted at daily rates; leverage
coefficients $L^2-L = 2$ ($+2\times$) and $6$ ($-2\times$). Index products
enter the capital path of Figure~\ref{fig:timeline} but are excluded from
the auction-relevant $K$ of Table~\ref{tab:complex}, since they rebalance
against the 15:45 index-futures close.

Same-day creations and redemptions are measured from the unit count implied
by net assets, not from the listed-unit series, which lags net assets by one
trading day. On the corrected series the primary market shows no same-day
response to the return (Table~\ref{tab:flowchasing}), so $K$ is reported
gross.

Restricted to the domestic spot slice and multiplied by the previous-close-to-15:20
return, lagged $K$ gives the predetermined demand $\hat d^{\,pre}_t$;
against the value the auction clears,
\begin{equation}
S_t \;=\; \frac{\left|\hat d^{\,pre}_t\right|}{W^{auc}_t}.
\label{eq:saturation}
\end{equation}
For SK~Hynix the post-launch median is $1.06$, the maximum $7.50$, above one
on 26 of 46 days (Figure~\ref{fig:saturation}).

Two reconciliations. Applying the published domestic convention (domestic
products, spot slice) to our data returns 1.58\% (Samsung) and 2.04\%
(SK~Hynix) of daily traded value against published 1.6\% and 2.1\%. The
compounding identity $K_+ = K(1+Lr)$ is recovered from the Korean panel
with slope $1.100$ (s.e.\ $0.033$, $R^2 = 0.65$); the offshore series do
not support the regression at any alignment (FX and timing conventions), so
offshore capital enters only in multi-week levels. Table~\ref{tab:complex}
assembles the complex.

\subsection{Panel, comparison sets, and inference conventions}
\label{app:inference}

\emph{Panel.} One stock-by-day panel of the three daily windows plus leads
to $h=3$: 15{,}300 stock-days, January~2, 2024--July~31, 2026, all product
regimes and the Korean controls. U.S.\ prices are split-adjusted; the
Korean offshore-window split uses each stock's actual offshore listing
date. Comparison groups: the \emph{large-cap} group (ten largest KOSPI
names never meeting the single-stock eligibility screen) and the
\emph{technology-peer} group (Samsung Electro-Mechanics, LG Electronics,
LG Display, LG Innotek, Hanmi Semiconductor, DB HiTek, LX Semicon, Samsung
SDS, NAVER, Kakao); SK~Square is excluded everywhere as
treated-contaminated.

\emph{Lead controls.} The shock series are mildly mean-reverting, so
intervening shocks are purged; the choice moves magnitude, not sign
(treated-post overnight leg, semiconductor instrument: $-0.533$
uncontrolled, $-0.386$ per-horizon baseline, $-0.302$ with three leads).
The headline triple difference uses the single lead control of the frozen
specification.

\emph{Inference.} OLS with date-clustered standard errors cell by cell.
Because clustered errors are unreliable with two treated units, every
differenced estimate is also subjected to exact randomization inference
over the treated pair (66 pairs daily, 4{,}278 intraday) and over 47
placebo launch dates in the pre-period. Nonlinear statistics use
date-block bootstraps (400--800 draws, whole dates resampled, every
statistic recomputed per draw). Randomization inference is the arbiter: a
result that fails permutation is reported as a null whatever its
$t$-statistic.

\emph{Coverage gates.} Windows with signed coverage below 0.9, a missing
boundary mark, or fewer than half their minutes traded are dropped; an
absent minute is never a zero return; winsorization is within issue;
volatility and volume scalers are lagged; VI-extended issue-days leave the
venue comparison and are counted (Appendix~\ref{app:krx}). Commercial
vendor intraday files truncate at 15:00 and are not used.

\subsection{The shared-mark rule}
\label{app:sharedmark}

Any outcome signed by $s_t = \mathrm{sign}(\hat d^{\,pre}_t)$ must lie
strictly after the window used to form the sign and share no price mark
with it: if sign and outcome share a mark $P$, $\mathrm{sign}(\hat r)\cdot
y$ inherits a one-signed covariance in the measurement error of $P$ that
scales with volatility, survives differencing, date fixed effects, and
leave-one-out diagnostics, and is removed only by construction. All five
registered futures outcomes begin at 15:20, exactly where the predetermined
return ends; the cash outcomes are built the same way.

\section{Proofs}
\label{app:proofs}

This appendix proves every lemma, proposition, corollary and remark of
Section~\ref{sec:model}, in order of appearance, together with
Proposition~\ref{prop:ledger} (with the welfare proofs) and
Proposition~\ref{prop:curve} (with Lemma~\ref{lem:curve}, in
Appendix~\ref{app:curveproof}). The strategic sector is a continuum, so
every first-order condition is an individual optimum at fixed prices and
aggregation is a single integral; closed forms are exact. All displays are
verified symbolically in the replication package.

\subsection{Notation and the equilibrium system}
\label{app:notation}

Throughout, $S_T \equiv T\sigma_u^2 + \sigma_0^2$, $\Lambda = \gamma
S_T/\mu$, $\Lambda_c = \rho\Lambda$, $\ell = \Lambda_c K$,
$m = 1/(1-\ell)$, $\psi_c = \tau\Lambda_c$, $\psi = \tau_s\Lambda$, and
$\mathcal{S}$ is the measured session share of the public-news overshoot. Write
\begin{equation}
\bar\rho \;\equiv\; \frac{1+\psi\rho}{1+\psi} \;=\; 1 - \mathcal{S}(1-\rho) ,
\qquad
\DN \;\equiv\; (1+\psi)\big(\bar\rho + \psi_c - \ell\big) ,
\qquad
M \;\equiv\; \frac{1}{1-\ell+\psi_c} ,
\label{eq:aDN}
\end{equation}
so that $\DN = (1-\ell+\psi_c) + \psi(\rho + \psi_c - \ell)$; the session
and close blocks factor only at $\rho = 1$, where $\bar\rho = 1$ and $\DN =
(1+\psi)M^{-1}$. The \emph{stable region} is $\DN > 0$, equivalently
$\ell < \ell_{tip} \equiv \bar\rho + \psi_c$; all statements below are on
that region, and $\ell_{tip}$ always denotes the expression $\bar\rho +
\psi_c$, never a solved root. Two further abbreviations are used below and
are collected here: the \emph{arbitrage share of effective depth} and the
\emph{effective dose},
\begin{equation}
\chi \;\equiv\; \frac{\psi_c}{\rho + \psi_c} ,
\qquad
\ell_{\text{eff}} \;\equiv\; \frac{\ell}{\bar\rho + \psi_c} ,
\qquad\text{so}\qquad
\Pi_g \;=\; \frac{1}{1-\ell_{\text{eff}}},
\quad
\Pi_u \;=\; \chi\,\Pi_g ,
\quad
a_z \;=\; (1-\chi)\Lambda\Pi_g .
\label{eq:achi}
\end{equation}
$\chi$ is the arbitrage sector's share of the depth the \emph{auction}
draws on, and is built on $\rho$ rather than $\bar\rho$ because the session
position reaches the close through two legs whose composition it measures;
see Remark~\ref{rem:invariance}.
Risk aversion $\gamma_A > 0$ is maintained throughout: at $\gamma_A = 0$
both capacities diverge, $\chi \to 1$, $\ell_{\text{eff}} \to 0$, and every
loading collapses to its frictionless value. A second maintained assumption
is $\sigma_c > 0$: the closing imbalance is the only shock unresolved when
the session position is set, so at $\sigma_c = 0$ the capital gain to the
close is riskless, Lemma~\ref{lem:carry} applies with $\Var(p') = 0$, and
the session carry has no interior optimum. The model's $\sigma_c \to 0$
limit is nonetheless well defined in prices: $\psi \to \infty$,
$\bar\rho \to \rho$, and $\E_s[r_c] - r_s \to 0$, exactly as
Lemma~\ref{lem:carry} requires.

\emph{Units.} In the equilibrium loadings of
equation~\eqref{eq:aloadings} the flows $z$ and $z_c$ are measured in
\emph{shares}, so $a_z$ and $a_{z_c}$ carry a depth coefficient. In every
second-moment display --- Proposition~\ref{prop:vol}, its threshold
$\ell^{\ast}$, and Proposition~\ref{prop:loss} --- the four shocks are
measured in \emph{return} units, that is, $\sigma_z^2$ and $\sigma_c^2$
denote the variances of the price displacements $\Lambda z$ and
$\Lambda_c z_c$ rather than of the share counts. Under that convention the
two noise loadings appearing in those displays are
\begin{equation}
a_z \;=\; \frac{\bar\rho}{\bar\rho+\psi_c-\ell} ,
\qquad
a_{z_c} \;=\; \frac{1}{1+\psi_c-\ell} ,
\label{eq:aunits}
\end{equation}
the same objects as equation~\eqref{eq:aloadings} divided by $\Lambda$
and $\Lambda_c$ respectively. The two conventions must not be mixed inside
one display.

Private news carries a parallel convention. In $S_T = T\sigma_u^2 +
\sigma_0^2$ and in the accumulation profile of
equation~\eqref{eq:profile}, $\sigma_u^2$ is the arrival \emph{rate} of
private-news variance, so that the session delivers $T\sigma_u^2$ in total
and $(T-t)\sigma_u^2$ remains unresolved at $t$; private news arrives at
constant intensity over $[0,T)$ and the session noise $z$ accumulates
uniformly, which is what makes the unresolved fraction $1-t/T$. In
Propositions~\ref{prop:vol} and~\ref{prop:loss}, where no time index
appears, $\sigma_u^2$ is instead the total session variance, i.e.\
$T\sigma_u^2$ of the first convention. Only the terminal value
$V_T = a_{z_c}^2\sigma_c^2$ of equation~\eqref{eq:profile} enters any closed
form below, so no result depends on the reconciliation, but the two
$\sigma_u^2$ should not be equated across the two groups of displays.

\subsection{Lemmas}
\label{app:lemmas}

\begin{proof}[Proof of Lemma~\ref{lem:swap}]
(i) A total-return swap pays the counterparty $-N_{\text{out}}$ times the
return; between resets $N_{\text{out}}$ is fixed, so the counterparty's
delta is a constant share count and a flat hedge requires no intraday
trade.

(ii) At the reset the notional moves from $LA$ to $LA(1+Lr)$. The change in
delta, in currency at the reference price, is
\[
LA(1+Lr) - LA(1+r) \;=\; A(L^2-L)r ,
\]
identical to equation~\eqref{eq:rebalance}, keyed to the same reference
return and required at the same instant.

(iii) A hedge in instrument $j$ leaves residual variance
$\Var(r_j - r_c) > 0$ against a liability struck at $r_c$, zero only for
the underlying priced at the reference price; a risk-averse counterparty
therefore selects the underlying executed at the close.
\end{proof}

\begin{proof}[Proof of Lemma~\ref{lem:carry}]
Let the trader hold $x$ from the first venue to the second. Atomistic, he
takes both prices as given; his payoff is $x(p'-p)$ and his objective
$x(\E[p'] - p) - \tfrac{\gamma_A}{2}\Var(p')\,x^2$. If $p'-p$ is known at
the time of trading the objective is linear with certain slope and has no
interior maximum unless the slope vanishes; aggregate demand at the first
venue is $\pm\infty$ whenever $p' \neq p$, so the two venues clear
together. With $\Var(p') > 0$ the objective is strictly concave and
$\E[p'] - p = \gamma_A\Var(p')\,x$ gives the stated demand.
\end{proof}

\begin{lemma}[filterability]
\label{lem:filter}
The automaton's demand $q_L = K r_c$ is conditionally deterministic given
the closing price and public arithmetic. A Bayesian dealer assigns it zero
information content but must still hold it: the inventory concession on the
rebalancing flow, and with it $\ell$ and the loss mechanism, survive
Bayesian updating unchanged. Under risk-neutral Bayesian dealers the
concession is the information content alone, so perfectly anticipated
uninformed flow trades at zero impact.
\end{lemma}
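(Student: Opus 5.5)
The plan has three parts. First we show the rebalancing order carries zero information. Then we show the dealers must still be paid to hold it. Finally we contrast with risk-neutral dealers, for whom the payment disappears.

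\emph{Step 1: the order is known in advance.} Write the dealer's information set at the auction as everything public. That means $g$, the open and session prints, the public arithmetic $K=\sum_i A_i(L_i^2-L_i)$, and the indicative price. By equation~\eqref{eq:automaton}, $q_L=Kr_c$. Conditional on $r_c$ and $K$ this is a point mass. So $\E[v\mid \mathcal{F},q_L]=\E[v\mid\mathcal{F}]$, where $\mathcal{F}$ contains $r_c$. A Bayesian dealer learns nothing about $v$ from observing $q_L$, which gives the zero-information claim. One technical point needs care: $r_c$ is itself determined by $q_L$ through the fixed point~\eqref{eq:fixedpoint}. We handle this by treating the dealer's supply schedule as a function of $r_c$, as in a Grossman--Miller market maker. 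Under that schedule, conditioning on $r_c$ already fixes the rebalance order.

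\emph{Step 2: the concession survives.} The attending dealers must still take the residual inventory $H_c=Kr_c+X_c+z_c$ into the night. Under CARA utility with additive risk tolerance, each dealer's certainty-equivalent condition makes the price the dealer's conditional expectation plus a risk premium. The premium is $(\gamma/\mu_c)\Var(v\mid\mathcal{F})\,H_c$. Step 1 says the conditional variance is the same with or without $q_L$ in the information set. So the slope equals $\Lambda_c=\rho\Lambda$ from equations~\eqref{eq:impact} and~\eqref{eq:rho}, unchanged. It follows that $\ell=\Lambda_cK$ and the clearing identity~\eqref{eq:loopgain} are identical to the non-Bayesian reading. The loss mechanism of Proposition~\ref{prop:loss} depends only on these objects and the loadings of Proposition~\ref{prop:passthrough}, so it carries over unchanged.

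\emph{Step 3: risk-neutral dealers.} Set $\gamma=0$ in the same setup. Dealers then price at $\E[v\mid\mathcal{F},\text{flow}]$. As in the Kyle zero-profit condition, the only source of impact is the covariance between the flow and $v$. By Step 1, the anticipated component $Kr_c$ contributes zero to that covariance. Its marginal impact is therefore zero, and $\ell$ collapses to the informational part alone. Uninformed flow that is perfectly anticipated then trades at zero impact.

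The main obstacle is Step 1's circularity. We must show that conditioning on the equilibrium $r_c$ does not let $q_L$ leak information about $u$ through $r_c$ itself. The approach is to argue that the leak runs only from $r_c$ to $q_L$. Since $r_c\in\mathcal{F}$, any such information is already priced. We would verify this in the linear equilibrium by checking that $\Var(v\mid\mathcal{F})$ is the same with and without $q_L$ in the information set.
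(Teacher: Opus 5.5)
Your Steps 1 and 2 match the paper's argument. Because $q_L = Kr_c$ is measurable with respect to any information set containing $r_c$, conditioning on it changes neither the posterior mean nor the conditional variance of $v$. The CARA dealer still has to warehouse $H_c$, so the inventory concession, $\Lambda_c$, $\ell$ and the welfare decomposition are untouched.

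The gap is in Step 3. The claim that ``the anticipated component $Kr_c$ contributes zero to that covariance'' is false for the Kyle covariance: $r_c$ loads on $g$ and $u$, so $\operatorname{Cov}(Kr_c,v) = K\operatorname{Cov}(r_c,v)\neq 0$. Now read the Kyle analogy literally, as a price linear in total flow. Take $r_c = g + \lambda\,(Kr_c + X_c + z_c)$ with $X_c = \beta(v-r_c)$, and the $\lambda>0$ that adverse selection from $X_c$ guarantees. This gives $\partial r_c/\partial g = (1+\lambda\beta)/(1+\lambda\beta-\lambda K) > 1$. The mechanical order re-enters the price and creates a loop with gain $\lambda K$. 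So ``$\ell$ collapses to the informational part'' is exactly what the lemma denies. The informational $\lambda$ stays strictly positive, yet the mechanical flow is priced at zero. Relatedly, setting $\gamma=0$ ``in the same setup'' makes $\Lambda=\gamma S_T/\mu$ vanish for all flow, which proves too much.

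The missing idea is separability at the quote, which the paper uses in place of a covariance argument. Because $q_L$ is a deterministic function of the very price the dealer is setting, the dealer can subtract $Kr_c$ from the observed aggregate when it quotes. It then prices on the residual $\beta(v-r_c)+z_c$. The fixed point $r_c = g + \lambda[\beta(v-r_c)+z_c]$ solves to $r_c = g + \lambda(\beta u + z_c)/(1+\lambda\beta)$. Here $K$ does not appear and $\partial r_c/\partial g = 1$, so there is no overshoot and no reversal. Your ``main obstacle'' paragraph points in this direction. But the check you propose there, that $\Var(v\mid\mathcal{F})$ is unchanged by adding $q_L$, is immediate from measurability and only serves the risk-averse case. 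The risk-neutral claim needs the explicit netting and the solved fixed point.
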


\begin{proof}
$q_L = K r_c$ is measurable with respect to the closing price and public
data, so for any information set containing $r_c$,
$\E[v \mid \text{flow},\, r_c] = \E[v \mid \text{flow} - q_L,\, r_c]$: the
dealer's posterior is unchanged. The dealer's optimal inventory
$h = (\E v - r)/(\gamma S)$ is unchanged in its information argument; its
risk argument is not, since $q_L$ still enters the warehoused inventory and
the certainty equivalent through $\tfrac{\gamma}{2}h^2 S_T$. The concession
on the rebalancing flow is therefore invariant to updating; $z_c$ is
uninformative by construction and enters only as noise.

The risk-neutral case requires one further step, because the dealer must
net $q_L$ out of a flow it observes only in total. Let the dealer set
$r_c = \E[v \mid \omega]$ from the aggregate $\omega = q_L + X_c + z_c$,
with $X_c = \beta(v - r_c)$ the arbitrage schedule. Since $q_L = K r_c$ is
a deterministic function of \emph{the very price the dealer is setting},
the residual $\omega - K r_c = \beta(v-r_c) + z_c$ is computable by the
dealer at the moment it quotes, and pricing on it is the fixed point
$r_c = g + \lambda\big[\beta(v - r_c) + z_c\big]$ for the Bayesian
$\lambda > 0$ that $z_c$ leaves strictly positive. Solving,
$r_c = g + \lambda(\beta u + z_c)/(1 + \lambda\beta)$, so
$\partial r_c/\partial g = 1$: the public gap neither overshoots nor
reverts. The mechanical flow is priced at zero not because $\lambda = 0$ ---
adverse selection from $X_c$ keeps $\lambda > 0$ --- but because a flow
measurable with respect to the quoted price is separable from it. The
contrast with equation~\eqref{eq:fixedpoint} is exactly the risk channel:
the risk-averse dealer also knows $q_L$ carries no information, yet must
still warehouse it, and it is the inventory concession on that warehoused
flow, not any informational component, that produces $\ell$.
\end{proof}

Lemma~\ref{lem:curve} is stated and proved in Appendix~\ref{app:curveproof}.

\subsection{Equilibrium strategies}
\label{app:strategies}

\begin{proof}[Proof of Proposition~\ref{prop:loadings}: the auction stage]
At the auction, speculator $i$ knows $v = g+u$ and submits a
price-contingent schedule. His terminal position $x^i$ is marked at
fundamental value, which the deep capital of Section~\ref{sec:chaining}
restores, and bears one cycle of overnight news, $\sigma_A^2 = \sigma_0^2$;
his objective is
$x^i(v - r_c) - \tfrac{\gamma_A}{2}\sigma_A^2 (x^i)^2$ at given $r_c$. The
first-order condition is equation~\eqref{eq:auctionfoc},
$v - r_c = \gamma_A\sigma_A^2 x^i$, whence $x^i = \tau(v-r_c)$ with
$\tau = n/(\gamma_A\sigma_A^2)$ after aggregation. The session position
does not appear, so, writing $\bar X \equiv X_o + X_s$ for the position
carried into the auction, the net purchase there is
$X_c = \tau(v-r_c) - \bar X$.

Substituting into equation~\eqref{eq:fixedpoint} with $\Lambda_c\tau =
\psi_c$ and $\Lambda_c K = \ell$,
\[
r_c \;=\; \tilde r + \ell\,r_c + \psi_c(v - r_c),
\qquad
\tilde r \;=\; g + \Lambda(\bar X+z) - \Lambda_c \bar X + \Lambda_c z_c ,
\]
so $r_c(1 - \ell + \psi_c) = \tilde r + \psi_c v$, i.e.
\[
r_c \;=\; \frac{\psi_c\,v + \tilde r}{\psi_c + (1-\ell)} ,
\]
which is the auction stage taken alone, with $\bar X$ still unsolved inside
$\tilde r$; combined with the session stage below it yields
equation~\eqref{eq:Pi0}. Its denominator vanishes at $\ell = 1+\psi_c$,
which is the singularity of \emph{this stage holding $\bar X$ fixed} and not
the singularity of the model: the equilibrium pole is
$\ell = \bar\rho+\psi_c$, obtained only after the session position is
solved out, and the two coincide only at $\rho = 1$.
\end{proof}

\begin{proof}[Proof of Proposition~\ref{prop:sessionpos}]
The session trader chooses $\bar x$ knowing $(g,u,z)$ and facing one
unresolved shock, the closing imbalance $z_c$; he then re-optimizes at the
auction and carries the resulting terminal position overnight. Writing total
profit as $\bar x(v - r_s) + x_c(v - r_c) - \tfrac{\gamma_A}{2}\sigma_A^2 X^2$
with $X = \bar x+x_c$ and substituting the optimal terminal holding
$X = (v-r_c)/(\gamma_A\sigma_A^2)$ of equation~\eqref{eq:auctionfoc}, the
certainty equivalent at the auction is
\begin{equation}
\mathrm{CE}_c \;=\;
\underbrace{\bar x\,(r_c - r_s)}_{\text{session leg}}
\;+\;
\underbrace{\frac{(v-r_c)^2}{2\gamma_A\sigma_A^2}}_{\text{auction leg, }\bar x\text{-free}} .
\label{eq:asessfoc}
\end{equation}
The auction leg is free of $\bar x$ but \emph{not} of $r_c$, and $r_c$ is
exactly the random variable the session leg is exposed to. Under CARA the
certainty equivalent of the sum is not the sum of certainty equivalents, so
$\bar x$ cannot be read off the session leg alone: the continuation value has
$\partial\,\mathrm{CE}_c/\partial r_c = -(v-r_c)/(\gamma_A\sigma_A^2) = -X$,
i.e.\ through the auction leg the trader is already short $X$ units of
closing-price risk, and he hedges that exposure in the session.

Evaluating the exact optimum. Write $r_c = \E_s[r_c] + a_{z_c}z_c$ and
$V \equiv \Var_s(r_c) = a_{z_c}^2\sigma_c^2$, the variance left unresolved
when the position is set. Maximizing
$\E_s\big[-\exp(-\gamma_A \mathrm{CE}_c)\big]$ is a Gaussian integral in
$z_c$ with a quadratic exponent; carrying it out and differentiating in
$\bar x$ gives the first-order condition
$(\E_s[r_c] - r_s)\big(1 + V/\sigma_A^2\big) = \gamma_A V \bar x - V(v -
\E_s[r_c])/\sigma_A^2$, i.e.
\begin{equation}
\bar x^i \;=\;
\underbrace{\frac{\E_s[r_c] - r_s}{\gamma_A V}}_{\text{carry leg}}
\;+\;
\underbrace{\frac{v - r_s}{\gamma_A \sigma_A^2}}_{\text{value leg}} .
\label{eq:asessdemand}
\end{equation}
The first term is the risk-priced carry of Lemma~\ref{lem:carry}. The
second is a pure value position and is the hedge just described: the trader
means to hold $\tau(v-r_c)$ overnight in any case, and whenever the session
price is already away from fundamental value he acquires part of that
holding early, at the same overnight risk price $\gamma_A\sigma_A^2$ that
governs the auction. Lemma~\ref{lem:carry} alone delivers only the first
term because it is stated for a trader who \emph{liquidates} at the second
venue; this trader does not.

Aggregating over the continuum with $\tau_s = n/(\gamma_A V)$ and
$\tau = n/(\gamma_A\sigma_A^2)$, and using the session price of
equation~\eqref{eq:session}, $r_s = g + \Lambda(\bar X+z)$,
\begin{equation}
\bar X \;=\; \tau_s\big(\E_s[r_c] - r_s\big)
\;+\; \tau\big(v - r_s\big) .
\label{eq:asessagg}
\end{equation}
The same $\tau$ governs both the value leg and the auction, so the session
and the close are priced off one risk aversion and one overnight variance;
$\tau_s$ enters only through the carry.

\begin{proposition}[the intraday accumulation profile]
\label{prop:profile}
Applying equation~\eqref{eq:asessdemand} at $t \in [0,T)$, the sector's
session position and the unresolved closing-price variance are
\begin{equation}
X_t \;=\; \underbrace{\frac{\E_t[r_c] - r_s(t)}{\gamma_A V_t}}_{\text{carry leg}}
\;+\; \underbrace{\frac{\E_t[v] - r_s(t)}{\gamma_A \sigma_0^2}}_{\text{value leg}},
\qquad
V_t \;=\; \Pi_u^2 \sigma_u^2 (T-t)
\;+\; a_z^2 \sigma_z^2 \Big(1 - \frac{t}{T}\Big)
\;+\; a_{z_c}^2 \sigma_c^2 .
\label{eq:profile}
\end{equation}
$V_t$ declines monotonically to $V_T = a_{z_c}^2 \sigma_c^2$, so the carry
leg's target rises through the session while the value leg's is priced at
a constant $\sigma_0^2$ and moves only with the news, and the profile
depends on primitives
only through $\sigma_u^2 T / (a_{z_c}^2 \sigma_c^2)$ and
$\sigma_0^2/(a_{z_c}^2\sigma_c^2)$.
\end{proposition}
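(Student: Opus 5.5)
The plan is to rerun the session-stage optimization of equation~\eqref{eq:asessdemand} at an arbitrary time $t\in[0,T)$. The only thing that changes is the information set: $\E_s$ becomes $\E_t$, and the unresolved variance $V$ becomes $V_t$. At $t$ the speculator knows $g$, the private news $u_{[0,t]}$ realized so far, and the noise $z_{[0,t]}$ accumulated so far. He then re-optimizes at the auction exactly as in the proof of Proposition~\ref{prop:loadings}. The certainty equivalent therefore keeps the form of equation~\eqref{eq:asessfoc}, with $v$ replaced by its time-$t$ conditional expectation in the value leg. The Gaussian integral over the still-unresolved shocks has the same quadratic structure as before. The first-order condition thus delivers a carry leg $(\E_t[r_c]-r_s(t))/(\gamma_A V_t)$ and a value leg $(\E_t[v]-r_s(t))/(\gamma_A\sigma_0^2)$, where $\sigma_A^2=\sigma_0^2$ as in the auction stage.

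Second, I compute $V_t=\Var_t(r_c)$ from the linear closing price of equation~\eqref{eq:Pi0}. Three components of $r_c$ are still random at $t$:
\begin{itemize}
\item the unrealized private news, with loading $\Pi_u$ and variance $\sigma_u^2(T-t)$, using the rate convention of Appendix~\ref{app:notation};
\item the unrealized session noise, which accumulates uniformly and so contributes the fraction $1-t/T$ of $\sigma_z^2$ at loading $a_z$;
\item the closing imbalance, with loading $a_{z_c}$.
\end{itemize}
These shocks are independent, so their variances add and give the stated $V_t$. Each term is nonincreasing in $t$, and the first two are strictly decreasing whenever their loadings and variances are positive. Hence $V_t$ falls monotonically to $V_T=a_{z_c}^2\sigma_c^2$. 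It follows that the carry leg's risk price $\gamma_A V_t$ falls, so its target per unit of expected carry rises through the session. The value leg is priced at the constant $\gamma_A\sigma_0^2$ and moves only as $\E_t[v]$ and $r_s(t)$ update with news.

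Third, for the scale invariance, I divide $V_t$ by $V_T$:
\[
\frac{V_t}{V_T}=1+\Pi_u^2\,\frac{\sigma_u^2 T}{a_{z_c}^2\sigma_c^2}\Big(1-\frac tT\Big)+a_z^2\,\frac{\sigma_z^2}{a_{z_c}^2\sigma_c^2}\Big(1-\frac tT\Big).
\]
Normalizing both legs by $V_T$, the relative weight of the carry and value legs depends on $\sigma_0^2/V_T$. The shape of the profile is therefore governed by the two stated ratios, once the loadings are taken as the equilibrium ones.

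The main obstacle is the noise term in $V_t$. I have to decide whether its variance should be written with $a_z^2$ in the return-unit convention of equation~\eqref{eq:aunits}, and I have to reconcile the $\sigma_z^2$ ratio with the claim that only two primitive ratios appear. I would handle this by measuring $\sigma_z^2$ in return units, so that $a_z^2\sigma_z^2$ is absorbed alongside the private-news term through the session-residual identity. If that fails, I would state the dependence as holding given the equilibrium loadings.

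A second subtlety is the claim that the Gaussian integral goes through unchanged. The value-leg hedge uses $\E_t[v]$ rather than $v$, so the terminal holding is still random at $t$. I would check that the extra randomness only adds terms independent of $\bar x$ to the exponent, which leaves the linear first-order condition intact.
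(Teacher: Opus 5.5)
Your skeleton matches the paper's proof: substitute time-$t$ moments into \eqref{eq:asessdemand}, get $V_t=\Var_t(r_c)$ from the linearity of $r_c$ and the independence of the three unresolved pieces, read off the monotone decline, and divide by $V_T$. The step where you re-derive the two legs at interior $t$ fails, however, and the check you plan would fail with it. At $t<T$ the auction certainty equivalent \eqref{eq:asessfoc} contains the realized $v$, which is still random. The unresolved private news therefore enters both $\bar x(r_c-r_s)$ and $(v-r_c)^2$, and so creates $\bar x$-dependent cross terms. Carry out the Gaussian integral with $c_t\equiv\operatorname{Cov}_t(r_c,v)=\Pi_u\sigma_u^2(T-t)$ and $d_t\equiv\Var_t(v)=\sigma_u^2(T-t)$. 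It gives
\[
\gamma_A\,\bar x \;=\; \frac{\big(\E_t[r_c]-r_s(t)\big)\big(V_t-2c_t+d_t+\sigma_0^2\big)+\big(V_t-c_t\big)\big(\E_t[v]-\E_t[r_c]\big)}{V_t\,\sigma_0^2+V_t\,d_t-c_t^2} .
\]
This reduces to \eqref{eq:asessdemand} at $c_t=d_t=0$. Its loading on $\E_t[v]-\E_t[r_c]$ equals the display's $1/\sigma_0^2$ only if $c_t^2-V_td_t=\sigma_0^2c_t$. For $t<T$ the left side is strictly negative (Cauchy--Schwarz, strict because $z_c$ enters $r_c$ but not $v$), while the right side is nonnegative. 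So the two-leg display is not the time-$t$ optimum. Nor is holding $\bar x$ until the auction the speculator's actual problem, since he can keep trading. The paper never re-solves the integral: its proof reads both legs off \eqref{eq:asessdemand} with $\E_t$ and $V_t$ substituted, which is all the word ``Applying'' in the statement licenses. What it actually proves is the form of $V_t$. Do the same, and drop the claim that the integral ``has the same quadratic structure.''

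Your own display for $V_t/V_T$ is correct, and it refutes the final clause as stated: the term $a_z^2\sigma_z^2/(a_{z_c}^2\sigma_c^2)$ survives. Neither of your rescues removes it. The session-residual identity only sets $\sigma_z^2=\lambda_{\mathrm{post}}\,T\sigma_u^2$, which trades that ratio for $\lambda_{\mathrm{post}}$. Conditioning on the equilibrium loadings leaves $\sigma_z^2$ untouched. The paper's proof asserts the two-ratio dependence after ``scaling by $V_T$'' without addressing this term, so it has the same hole. The defensible conclusion is that the profile depends on $t/T$, the two stated ratios, and $\sigma_z^2/(a_{z_c}^2\sigma_c^2)$, with $\Pi_u$ and $a_z$ themselves functions of $(\ell,\psi_c,\rho,\mathcal{S})$. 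It collapses to the stated pair only at $\sigma_z=0$.
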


\begin{proof}
Run the argument at interior $t$: equation~\eqref{eq:asessdemand} gives the
two legs of $X_t$, with $V_t = \Var_t(r_c)$ in the carry leg and the
overnight variance $\sigma_0^2$ in the value leg, the latter unaffected by
$t$ because the terminal position is carried to the same horizon regardless
of when it is opened. By Proposition~\ref{prop:loadings}, $r_c$ is linear in
$(g,u,z,z_c)$; at $t$ the unresolved components are private news of
variance $\sigma_u^2(T-t)$ and loading $\Pi_u$, session noise of variance
$\sigma_z^2(1-t/T)$ and loading $a_z$, and the closing imbalance of
variance $\sigma_c^2$ and loading $a_{z_c}$. Independence gives
equation~\eqref{eq:profile}; both $t$-dependent terms are strictly
decreasing, so $V_t \downarrow V_T = a_{z_c}^2\sigma_c^2 > 0$ and the carry
leg's target rises. Scaling by $V_T$ leaves a function of $t/T$,
$\sigma_u^2 T/(a_{z_c}^2\sigma_c^2)$ and $\sigma_0^2/(a_{z_c}^2\sigma_c^2)$
alone, the last entering only through the value leg.
\end{proof}

Substituting $\E_s[r_c] = M[r_s - \Lambda_c \bar X + \psi_c v]$ with
$z_c$ at its mean into equation~\eqref{eq:asessagg} and solving the linear
equation in $\bar X$, using $\tau_s\Lambda = \psi$ and
$\tau\Lambda = \psi_c/\rho$,
\begin{equation}
\bar X^\ast \;=\; \frac{1}{\Lambda\,D}
\Big[\,
\underbrace{\mathcal{S}\,\ell\, g}_{\text{public gap}}
\;+\;
\underbrace{\Psi\, u}_{\text{private news}}
\;+\;
\underbrace{(\mathcal{S}\ell - \Psi)\,\Lambda z}_{\text{noise}}
\,\Big] ,
\qquad
\Psi \;\equiv\; \chi\big[(\bar\rho+\psi_c) - (1-\mathcal{S})\,\ell\big] ,
\label{eq:asessload}
\end{equation}
where $D = \bar\rho+\psi_c-\ell$ and $\chi$, $\mathcal{S}$ are as in
equations~\eqref{eq:achi} and~\eqref{eq:asmap}.
The public-news term is taken at the opening auction and the other two during the
session, since $u$ and $z$ are unrealized when the auction prints: the first
underbrace is $X_o$ of equation~\eqref{eq:sessload} and the remaining two are
its $X_s$, so that $\bar X^\ast = X_o + X_s$.
Clearing against equation~\eqref{eq:impact}, $r_s = g + \Lambda(\bar X^\ast + z)$,
gives equation~\eqref{eq:sessprice}.

The structure is the paper's: the noise loading is the public-gap loading
net of the private-news loading, so it is the difference of a
pre-positioning motive and a liquidity-provision motive and changes sign
where they cross. On the stable region $D > 0$: the gap loading
$\mathcal{S}\ell/(\Lambda D) > 0$ is strictly proportional to $K$ and vanishes as
$K \to 0$ --- the speculator holds the public gap only because a
mechanical buyer is known to arrive; the private-news loading $\Psi$ is
strictly positive below the tipping threshold, since
$(\bar\rho+\psi_c) - (1-\mathcal{S})\ell > (\bar\rho+\psi_c) - \ell = D > 0$; and the
noise loading carries the sign of $\mathcal{S}\ell - \Psi$ and vanishes at
\begin{equation}
\ell \;=\; \frac{\psi_c\,(\bar\rho+\psi_c)}{\psi_c + \rho\,\mathcal{S}} ,
\label{eq:anoisethreshold}
\end{equation}
which replaces the equal-depth value $\ell = \psi_c$. Two limits check it.
As $\mathcal{S} \to 0$ the carry leg vanishes, the session position is pure value
trading, and the noise-chasing threshold rises to the tipping one $\bar\rho+\psi_c$: a value
trader fades noise at every dose and never chases it. At $\rho = 1$ and
$\psi_c$ small the threshold returns to $\psi_c$.
\end{proof}

\subsection{The assembled cycle}
\label{app:cycle}

\begin{proof}[Proof of Proposition~\ref{prop:loadings}: the equilibrium loadings]
Substituting equation~\eqref{eq:asessload} into the clearing equation of
the auction stage and collecting terms gives
$r_c = \Pi_g g + \Pi_u u + a_z z + a_{z_c} z_c$ with
\begin{equation}
\Pi_g = \frac{\bar\rho+\psi_c}{\bar\rho+\psi_c-\ell},
\;
\Pi_u = \chi\,\Pi_g,
\;
a_z = (1-\chi)\,\Lambda\,\Pi_g,
\;
a_{z_c} = \frac{\Lambda_c}{1+\psi_c-\ell} ,
\qquad
\chi \;=\; \frac{\psi_c}{\rho+\psi_c} .
\label{eq:aloadings}
\end{equation}
The substitution is the following two steps. $\bar X^\ast$ enters the
clearing equation with weight $+\Lambda$ through $r_s$ and
$-\Lambda_c$ through the unwind, so its net contribution is
$\Lambda(1-\rho)\bar X^\ast$, and the auction stage reads
\[
r_c\,(1-\ell+\psi_c)
\;=\; (1+\psi_c)\,g \;+\; \psi_c\,u \;+\; \Lambda z \;+\; \Lambda_c z_c
\;+\; (1-\rho)\,\Lambda \bar X^\ast .
\]
Substituting equation~\eqref{eq:asessload} for $\bar X^\ast$ and collecting,
each channel numerator carries the common factor $(1-\ell+\psi_c)$ against
the denominator $D = \bar\rho+\psi_c-\ell$, leaving the displays. Two
identities do the collapsing, both immediate from $\bar\rho = 1-\mathcal{S}(1-\rho)$
and $\chi = \psi_c/(\rho+\psi_c)$:
\[
(1-\chi)(1-\mathcal{S})\,\psi \;=\; \mathcal{S} ,
\qquad
\big(1+\rho(\tfrac{\psi_c}{\rho}+\psi)\big)
\;=\; \frac{\bar\rho+\psi_c}{1-\mathcal{S}} ,
\]
the first of which is what makes the gap loading of
equation~\eqref{eq:asessload} equal to $\mathcal{S}\ell$ exactly. The $z_c$ channel
never passes through $\bar X^\ast$ --- the closing imbalance is unforecastable
at the session, so the session position does not respond to it --- and its
numerator carries no session term at all, giving
$a_{z_c} = \Lambda_c/(1-\ell+\psi_c)$ with the auction-stage denominator
rather than $D$. The differing denominator is therefore a feature of the
timing, not an inconsistency.

Note that $\Pi_u/\Pi_g = \chi$ and $a_z/(\Lambda\Pi_g) = 1-\chi$ still
sum to one identically; what the two-leg demand changes is the value of
$\chi$, from $\psi_c/(\bar\rho+\psi_c)$ under a carry-only session to
$\psi_c/(\rho+\psi_c)$ here. The public channel does not see the
difference, and the next remark says why.

\begin{remark}[what the session demand does and does not move]
\label{rem:invariance}
Let $\mathcal{S}$ denote the session share of the public-news overshoot, the object
equation~\eqref{eq:asmap} measures. Conditional on $\mathcal{S}$, the public-news
loading $\Pi_g$, the benchmark loading $\Pi_g^{\,0}$, the closing-imbalance
loading $a_{z_c}$, and hence the predatory share $\mathcal{P}$ of
Proposition~\ref{prop:predshare}, are \emph{identical} whether the session
position is the carry of Lemma~\ref{lem:carry} alone or the two-leg demand
of equation~\eqref{eq:asessdemand}. The private-news and session-noise
loadings $\Pi_u$ and $a_z$, the revelation and noise-chasing thresholds, and the
inversion of $\psi$ from $\mathcal{S}$ are not.
\end{remark}

\begin{proof}
Under equation~\eqref{eq:asessagg} the effective depth entering $\Pi_g$ is
$\Lambda_{\text{eff}} = (1+a)(1+\rho(a+b))/(1+a+b)$ with $a = \psi_c/\rho$
and $b = \psi$, and $\Pi_g = \Lambda_{\text{eff}}/(\Lambda_{\text{eff}} -
\ell)$. Since $\mathcal{S} = b/(1+a+b)$, one has $1-\mathcal{S} = (1+a)/(1+a+b)$, and
substituting gives $\Lambda_{\text{eff}} = (1-\mathcal{S})(1+\psi_c+\rho\psi) =
\bar\rho + \psi_c$ with $\bar\rho = 1-\mathcal{S}(1-\rho)$ --- the same expression the
carry-only model produces at its own $\mathcal{S}$. Because every public-channel
object is a function of $(\ell,\psi_c,\rho,\mathcal{S})$ through
$\bar\rho + \psi_c$ alone, and $a_{z_c}$ contains no session term, all four
are invariant. The private channel is not, because $\chi$ depends on the
composition of the session position and not only on its size.
\end{proof}

The economic content is that $\mathcal{S}$ is a sufficient statistic for how much of
the closing overshoot the session leg is responsible for, whatever motive
produced it. The measurement of Section~\ref{sec:quant} reads $\mathcal{S}$ off the
window split and never uses $\psi$ structurally, so the calibration and
every counterfactual that runs through $\Pi_g$ are unaffected by the
distinction; what the distinction changes is the structural capacity one
would report behind a given $\mathcal{S}$, and the private-news predictions.

$a_z \ne a_{z_c}$ unless $\rho = 1$: session noise reaches the close
through the carried book at $\Lambda$ scaled by $\bar\rho$, while
closing-imbalance noise arrives at the auction at $\Lambda_c$; the two
variances must not be combined except at $\rho = 1$.

Dividing numerator and denominator of $\Pi_g$ by $\bar\rho+\psi_c$
gives $\Pi_g = 1/(1-\ell_{\text{eff}})$, and
\[
\Pi_g - 1 \;=\; \frac{\ell}{\bar\rho+\psi_c-\ell}
\;=\; \frac{(1+\psi)\,\ell}{\DN} \;>\; 0
\]
for every $\ell > 0$ on the stable region, with no threshold and no
restriction on either capacity. This is equation~\eqref{eq:Pi0}, and
equation~\eqref{eq:terminal} follows from
$X = (\psi_c/\Lambda_c)(v - r_c)$ on substituting the loadings.

Three properties used in the body's prose. $\Pi_u/\Pi_g = \chi$ and
$a_z/(\Lambda\Pi_g) = \bar\rho/(\bar\rho+\psi_c) = 1-\chi$, which sum to
one identically; the model is linear-quadratic with Gaussian shocks and $g$
is common knowledge, so the equilibrium map is linear and additive across
shocks, and $\Pi_u$, $a_z$ and $a_{z_c}$ coincide with the loadings of the
$g \equiv 0$ economy. At $\rho = 1$, $\bar\rho \equiv 1$ for every
$\psi$, so all four loadings are $\psi$-free; in general $\psi$
enters only through $\bar\rho$, with
$\partial\bar\rho/\partial\psi = (\rho-1)/(1+\psi)^2$ and limits
$\bar\rho \to 1$ ($\psi \to 0$) and $\bar\rho \to \rho$
($\psi \to \infty$); since $\Pi_g$ is decreasing in $\bar\rho$, session
capacity raises the overshoot when $\rho < 1$ and lowers it when
$\rho > 1$.
\end{proof}

\begin{proof}[Proof of Corollary~\ref{cor:overshoot}]
Immediate from equation~\eqref{eq:Pi0}: the numerator $\ell$ and the
denominator $\bar\rho + \psi_c - \ell$ are strictly positive on the stable
region; no threshold in $\ell$ and no restriction on $\psi_c$ or $\psi$.
As $\psi_c \to \infty$, $\Pi_g \downarrow 1$ without reaching it at finite
capacity.
\end{proof}

\begin{corollary}[when the arbitrage sector damps the overshoot]
\label{cor:pimbound}
For $\ell < 1$, $\Pi_g \le 1/(1-\ell)$ if and only if
$\psi_c \ge 1-\bar\rho$, with equality at $\psi_c = 1-\bar\rho$ for every
$\ell$. At $\bar\rho = 1$ the condition is $\psi_c \ge 0$ and the bound
always holds.
\end{corollary}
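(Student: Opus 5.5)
The plan is to compare the two closed forms directly. By the proof of Proposition~\ref{prop:loadings} we have $\Pi_g = (\bar\rho+\psi_c)/(\bar\rho+\psi_c-\ell)$, which is the reciprocal of $1-\ell/(\bar\rho+\psi_c)$. The benchmark is $1/(1-\ell)$. For $\ell<1$ and on the stable region $\ell<\bar\rho+\psi_c$, both denominators are strictly positive. Each multiplier has the form $1/(1-\ell/d)$, which is strictly decreasing in $d$ for $d>\ell>0$. The comparison therefore reduces to ranking the two effective depths, $\bar\rho+\psi_c$ against $1$.

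The steps, in order, are as follows.

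\begin{enumerate}
\item Write the inequality $\Pi_g \le 1/(1-\ell)$ in the form
\[
\frac{\bar\rho+\psi_c}{\bar\rho+\psi_c-\ell} \;\le\; \frac{1}{1-\ell}.
\]
\item Cross-multiply, which is legitimate because both denominators are positive:
\[
(\bar\rho+\psi_c)(1-\ell) \;\le\; \bar\rho+\psi_c-\ell .
\]
\item Cancel $\bar\rho+\psi_c$ on both sides. This leaves $-\ell(\bar\rho+\psi_c) \le -\ell$, that is, $\ell\,(\bar\rho+\psi_c-1)\ge 0$.
\item For $\ell>0$, this holds if and only if $\psi_c \ge 1-\bar\rho$. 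Equality holds exactly when $\bar\rho+\psi_c=1$, and it holds for every $\ell$ because the two multipliers then coincide identically.
\item For the case $\ell \le 0$, if it is admitted: at $\ell=0$ both sides equal one. For negative $\ell$ the factor $\ell$ flips the sign, so I would state the corollary for $0<\ell<1$ and note the trivial case $\ell=0$.
\end{enumerate}

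For the $\bar\rho=1$ clause, substitute $\bar\rho=1$ into the condition to get $\psi_c \ge 0$. This always holds because $\psi_c = n\Lambda_c/(\gamma_A\sigma_0^2)\ge 0$. By the proof of Proposition~\ref{prop:loadings}, $\bar\rho = 1$ occurs at $\rho=1$ or $\psi\to 0$.

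The only step needing care is the domain bookkeeping: the cross-multiplication requires both denominators to be positive. The hypothesis $\ell<1$ gives positivity of $1-\ell$. The stable region $\DN>0$ of Appendix~\ref{app:notation} gives positivity of $\bar\rho+\psi_c-\ell$. I would state both explicitly, and I would interpret the case $\bar\rho+\psi_c<1$, where $\ell$ may approach the pole before reaching $1$, as lying outside the region where $\Pi_g$ is defined.
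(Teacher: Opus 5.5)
Your proposal is correct and follows the paper's proof essentially line for line: you cross-multiply $(\bar\rho+\psi_c)/(\bar\rho+\psi_c-\ell)\le 1/(1-\ell)$ using positivity of both denominators, and the inequality reduces to $\bar\rho+\psi_c\ge 1$. Your bookkeeping is slightly tighter than the paper's in two places: you attribute positivity of $\bar\rho+\psi_c-\ell$ to the stable region rather than to $\ell<1$, and you note that the equivalence needs $\ell>0$, since at $\ell=0$ both sides equal one for every $\psi_c$.
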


\begin{proof}
For $\ell < 1$ both $1-\ell$ and $\bar\rho+\psi_c-\ell$ are positive, so
$(\bar\rho+\psi_c)/(\bar\rho+\psi_c-\ell) \le 1/(1-\ell)$ cross-multiplies
sign-preservingly to $(\bar\rho+\psi_c)(1-\ell) \le \bar\rho+\psi_c-\ell$,
i.e.\ $\bar\rho + \psi_c \ge 1$, i.e.\ $\psi_c \ge 1-\bar\rho$, with
equality at $\psi_c = 1-\bar\rho$ for every $\ell$. The restriction
$\ell < 1$ is needed and is not implied by interiority: when
$\bar\rho+\psi_c > 1$ the stable region contains $\ell > 1$, where
$1/(1-\ell) < 0 < \Pi_g$ and the comparison is vacuous. At
$\bar\rho + \psi_c \le 1$ interiority already forces $\ell < 1$.
\end{proof}

\subsection{The closing loading on the concave curve}
\label{app:concaveclose}

Proposition~\ref{prop:loadings} is derived under linear absorption, and the
quantification of Section~\ref{sec:quant} prices each day on the concave
curve of Lemma~\ref{lem:curve}, whose measured curvature is not small. This
section states the equilibrium loading on that curve, its exact relation to
equation~\eqref{eq:Pi0}, and which of the two loadings a reduced-form
$\hat\Pi_g$ estimates.

The clearing identity is unchanged. The automaton's order is sized by the
realised close, so at a concession $p$ over the frictionless mark it demands
$\ell(g+p)$, and the absorbing side supplies $Q(p)$. Linear absorption
$Q(p) = (\bar\rho+\psi_c)\,p$ returns equation~\eqref{eq:Pi0}. In the
$\varphi_c = 2$ normalisation of Appendix~\ref{app:curveproof} the same side
supplies $Q(p) = p\,(\bar\rho + \psi_c + b\,|p|)$, written odd in $p$ so that
a down move prices on the same curve as an up move, and clearing
$\ell(g+p) = Q(p)$ becomes
\begin{equation}
b\,|p|\,p \;+\; A\,p \;=\; \ell\,g ,
\qquad
A \;\equiv\; \bar\rho + \psi_c - \ell ,
\label{eq:aquadclose}
\end{equation}
which is equation~\eqref{eq:quadclose} in the continuum parameterisation.

\begin{proposition}[the concave close]
\label{prop:aconcave}
Let $b>0$, $\ell>0$ and $g \neq 0$, and define the effective depth
$A_{\mathrm{e}} \equiv \sqrt{A^2 + 4b\ell|g|}$. Then
equation~\eqref{eq:aquadclose} has a unique solution, odd in $g$, and the
close carries
\begin{equation}
\Pi_g(g) \;=\; 1 + \frac{2\ell}{A + A_{\mathrm{e}}} ,
\qquad
\Pi_g^{\mathrm{m}}(g) \;\equiv\; \frac{\partial r_c}{\partial g}
\;=\; 1 + \frac{\ell}{A_{\mathrm{e}}} ,
\label{eq:aconcaveload}
\end{equation}
the secant and the marginal loading. Both are equation~\eqref{eq:Pi0} with
the linear depth $A$ replaced --- by the mean of $A$ and $A_{\mathrm{e}}$ in
the first case, by $A_{\mathrm{e}}$ in the second. Both exceed one, both are
strictly decreasing in $|g|$ and in $\psi_c$, and
$\Pi_g > \Pi_g^{\mathrm{m}}$. Moreover
\begin{enumerate}
\item[(i)] as $b \to 0$, $A_{\mathrm{e}} \to |A|$, and for $A>0$ both
loadings collapse to equation~\eqref{eq:Pi0};
\item[(ii)] as $|g| \to 0$ at $A>0$ they collapse to the same limit, so
equation~\eqref{eq:Pi0} is the vanishing-print reading of
equation~\eqref{eq:aconcaveload} and bounds it above at every positive print;
\item[(iii)] $A_{\mathrm{e}} > 0$ for either sign of $A$, so the equilibrium
exists and is unique at $\ell \ge \bar\rho + \psi_c$ as well, with secant
loading $1 + \sqrt{\ell/(b|g|)}$ at $A = 0$; and both loadings tend to one
as $|g| \to \infty$, the secant at rate $\sqrt{\ell/(b|g|)}$.
\end{enumerate}
\end{proposition}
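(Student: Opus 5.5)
The plan is to treat equation~\eqref{eq:aquadclose} as a scalar equation $F(p)=\ell g$ with $F(p)\equiv b|p|p + Ap$. Everything else follows from elementary properties of this odd function.

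\emph{Uniqueness, sign and oddness.} For $b>0$, $F$ is continuous and odd, with $F'(p)=2b|p|+A$. If $A\ge 0$, then $F$ is strictly increasing and onto $\mathbb{R}$, so a solution exists, is unique, and has the sign of $g$. If $A<0$, $F$ is not monotone near zero. I would then restrict to the branch on which the concession has the sign of the gap, $\operatorname{sign}(p)=\operatorname{sign}(g)$. This is the economically relevant branch, the one continuous with the linear reading. On $\{p>0\}$ the map $F(p)=bp^2+Ap$ is strictly increasing from $F=0$ at the positive root onward, so for $g>0$ there is exactly one solution with $p>0$; oddness handles $g<0$. I expect this to be the main obstacle. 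For $A<0$ and small $|g|$, $F(p)=\ell g$ can have further roots of the opposite sign, so I would state uniqueness as uniqueness within the sign-consistent branch, which is what part~(iii) needs.

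\emph{Closed form.} For $g>0$ and $p>0$, solve $bp^2+Ap-\ell g=0$ to get
\[
p=\frac{-A+A_{\mathrm e}}{2b}=\frac{2\ell g}{A+A_{\mathrm e}},
\]
where the second form comes from rationalising with $(A_{\mathrm e}-A)(A_{\mathrm e}+A)=4b\ell g$. The denominator $A+A_{\mathrm e}$ is strictly positive for either sign of $A$, because $A_{\mathrm e}>|A|$ whenever $\ell g\neq0$. Since $r_c=g+p$, the secant loading is $\Pi_g=1+p/g=1+2\ell/(A+A_{\mathrm e})$. Differentiating $F(p)=\ell g$ implicitly gives $\partial p/\partial g=\ell/(2bp+A)$. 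Substituting $2bp+A=A_{\mathrm e}$ yields the marginal loading $1+\ell/A_{\mathrm e}$. Oddness extends both formulas to $g<0$ with $|g|$ in place of $g$.

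\emph{Comparative statics and limits.} The inequality $A_{\mathrm e}>|A|\ge -A$ gives $A+A_{\mathrm e}>0$, so both loadings exceed one. Also $A_{\mathrm e}>A$ gives $(A+A_{\mathrm e})/2<A_{\mathrm e}$, hence $\Pi_g>\Pi_g^{\mathrm m}$. Both $A$ and $A_{\mathrm e}$ increase in $\psi_c$ (because $\partial A_{\mathrm e}/\partial A=A/A_{\mathrm e}>-1$), and $A_{\mathrm e}$ increases strictly in $|g|$; therefore both loadings decrease strictly in $\psi_c$ and in $|g|$.

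The limits then follow directly:
\begin{itemize}
\item[(i)] Letting $b\to0$ gives $A_{\mathrm e}\to|A|$. For $A>0$ both loadings tend to $1+\ell/A=(\bar\rho+\psi_c)/(\bar\rho+\psi_c-\ell)$, which is equation~\eqref{eq:Pi0}.
\item[(ii)] Letting $|g|\to0$ gives the same limit, and monotonicity in $|g|$ makes it an upper bound at every positive print.
\item[(iii)] At $A=0$ we have $A_{\mathrm e}=2\sqrt{b\ell|g|}$, so $\Pi_g=1+\sqrt{\ell/(b|g|)}$. As $|g|\to\infty$, $A_{\mathrm e}\sim2\sqrt{b\ell|g|}$, so both excess loadings vanish, the secant at the stated rate.
\end{itemize}
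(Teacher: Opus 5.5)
Your route is the paper's: fix $g>0$, take the single positive root of $bp^2+Ap-\ell g=0$, rationalise it to $2\ell g/(A+A_{\mathrm{e}})$, differentiate implicitly using $2bp+A=A_{\mathrm{e}}$, and compare denominators. The closed forms, the ordering $\Pi_g>\Pi_g^{\mathrm m}$, the monotonicity in $|g|$, and limits (i)--(iii) all go through as you write them. Your caveat on uniqueness is also right, and it is exactly what the paper's argument proves. For $A<0$ and $0<\ell|g|<A^2/(4b)$ the equation has two further roots of the opposite sign. So ``unique'', including in (iii), holds only on the sign-consistent branch.

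The gap is the $\psi_c$ comparative static for the marginal loading. From $\partial A_{\mathrm{e}}/\partial A=A/A_{\mathrm{e}}>-1$ you can conclude only that $A+A_{\mathrm{e}}$ increases in $A$, hence in $\psi_c$; that settles the secant. It does not show that $A_{\mathrm{e}}$ increases in $\psi_c$, which needs $A/A_{\mathrm{e}}>0$, i.e.\ $A>0$. For $A<0$, raising $\psi_c$ moves $A$ toward zero, lowers $A_{\mathrm{e}}$, and raises $\Pi_g^{\mathrm m}=1+\ell/A_{\mathrm{e}}$. For example, with $b=\ell=1$ and $|g|=0.01$, $\Pi_g^{\mathrm m}\approx1.98$ at $A=-1$ but $\approx2.86$ at $A=-\tfrac12$.

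So the clause ``both strictly decreasing in $\psi_c$'' is false for the marginal loading beyond the linear tipping point, and non-monotone across it. No argument can close that step as stated. The paper's proof makes the same unrestricted assertion that $A_{\mathrm{e}}$ increases in $\psi_c$. The repair is to restrict the $\psi_c$-monotonicity of $\Pi_g^{\mathrm m}$ to $A>0$, where your argument then works, and to keep it unrestricted for the secant.
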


\begin{proof}
Take $g>0$; oddness in $g$ gives the other sign. On $p>0$ the left side of
equation~\eqref{eq:aquadclose} is $bp^2 + Ap$, so the equation is a quadratic
with strictly negative constant term $-\ell g$ and therefore, for either sign
of $A$, exactly one positive root. Written free of cancellation that root is
$p = 2\ell g/(A + A_{\mathrm{e}})$, and $\Pi_g = 1 + p/g$ is the first
display. Differentiating equation~\eqref{eq:aquadclose} gives
$(2bp + A)\,\partial p/\partial g = \ell$, and $2bp = A_{\mathrm{e}} - A$
turns the bracket into $A_{\mathrm{e}}$, which is the second. Since
$A_{\mathrm{e}} > A$ whenever $b\ell|g| > 0$, we have
$A + A_{\mathrm{e}} < 2A_{\mathrm{e}}$ and the secant exceeds the marginal.
$A_{\mathrm{e}}$ is strictly increasing in $|g|$ and in $\psi_c$, and enters
both denominators positively, which signs the two comparative statics.
For (i) and (ii), $A_{\mathrm{e}} \to A$ under either limit, so
$2\ell/(A+A_{\mathrm{e}}) \to \ell/A$; the bound follows because the secant
is decreasing in $|g|$. For (iii), $A_{\mathrm{e}}$ is a square root of a
sum of a square and a positive term and so is strictly positive at $A \le 0$;
setting $A = 0$ gives $A_{\mathrm{e}} = 2\sqrt{b\ell|g|}$ and the stated
loading, and $A_{\mathrm{e}} \sim 2\sqrt{b\ell|g|}$ as $|g| \to \infty$ gives
the rate.
\end{proof}

\begin{remark}[which loading a reduced form estimates]
\label{rem:asecant}
$\Pi_g$ is a secant, so it depends on the size of the print at which it is
read. A loading of the close on news estimated by least squares over gaps
with $\E[g] = 0$ returns
$\mathrm{Cov}(r_c,g)/\Var(g) = \E[g^2\,\Pi_g(g)]/\E[g^2]$, the
$g^2$-weighted mean secant --- neither the marginal loading nor the secant at
any single print. A measured $\hat\Pi_g$ must therefore be read at the print
size of the sample that produced it; the calibration of
Appendix~\ref{app:calibration} reads it at the $|g| = 2\%$ reference day
throughout. The two loadings and the size dependence at the calibrated
primitives are
\begin{center}
\footnotesize
\begin{tabular}{lrrr@{\hskip 18pt}rrr}
\toprule
 & \multicolumn{3}{c}{SK Hynix} & \multicolumn{3}{c}{Samsung} \\
\cmidrule(lr){2-4}\cmidrule(lr){5-7}
$|g|$ & $\Pi_g$ & $\Pi_g^{\mathrm{m}}$ & $\psi_c = 0$
      & $\Pi_g$ & $\Pi_g^{\mathrm{m}}$ & $\psi_c = 0$ \\
\midrule
$1\%$    & 1.95 & 1.54 & 2.21 & 1.22 & 1.17 & 1.32 \\
$2\%$    & 1.70 & 1.38 & 1.83 & 1.18 & 1.13 & 1.24 \\
$3\%$    & 1.58 & 1.31 & 1.67 & 1.16 & 1.11 & 1.21 \\
$5.56\%$ & 1.43 & 1.23 & 1.48 & 1.13 & 1.08 & 1.16 \\
\bottomrule
\end{tabular}
\end{center}
at $(\ell,\psi_c,b) = (0.583,\,0.260,\,49.7)$ and
$(0.219,\,0.466,\,132.5)$, the last column of each pair being the
zero-capacity ceiling of Remark~\ref{rem:aceiling} and $5.56\%$ the median
absolute daily return of SK~Hynix over the post-launch window.
\end{remark}

\begin{remark}[the zero-capacity ceiling]
\label{rem:aceiling}
Because $\Pi_g$ is decreasing in $\psi_c$, equation~\eqref{eq:aconcaveload}
is bounded above by its $\psi_c = 0$ value at the same print, and a capacity
inverted from a measured overshoot is defined only below that ceiling. The
ceiling falls as the print grows: at SK~Hynix's primitives it is $1.83$ at
the reference day and reaches the pooled target $\hat\Pi_g = 1.66$ at
$|g| = 3.04\%$. The inversion is pooled across names
(Appendix~\ref{app:calibration}) and is never performed name by name.
\end{remark}

Two consequences for Section~\ref{sec:quant} follow. First, the linear
tipping threshold $\ell = \bar\rho + \psi_c$ bounds equation~\eqref{eq:Pi0}
alone: by Proposition~\ref{prop:aconcave}(iii) the concave close prices at
every dose, so what binds the replay is the ringing bound of
Proposition~\ref{prop:vol} on the compounded loading rather than a pole.
Second, by (ii) the linear form evaluated at calibrated primitives returns
the vanishing-print value and therefore overstates the loading at every
print actually observed: at SK~Hynix's primitives equation~\eqref{eq:Pi0}
gives $5.01$ against the $1.70$ that equation~\eqref{eq:aconcaveload} prices
at the reference day.

\begin{proof}[Proof of Proposition~\ref{prop:chaining}]
Write $x$ for $x_t$ and $x_+$ for $x_{t+1}$. Overnight capital clears a
share $\theta$ of the news error, so the next cycle's gap is
$g_+ = \varepsilon_+ - \theta e$. The news component of the close is
$\Pi_g g_+$ and the news component of value is $\varepsilon_+$, so
\[
e_+ \;=\; e + \Pi_g\,g_+ - \varepsilon_+
\;=\; e + \Pi_g(\varepsilon_+ - \theta e) - \varepsilon_+ ,
\]
which collects to equation~\eqref{eq:AR}: an autoregression in $e$ with
coefficient $1-\theta\Pi_g$, stationary if and only if the coefficient lies
inside the unit circle, i.e.\ $0 < \theta\Pi_g < 2$. The session shocks
$u$, $z$ and $z_c$ load on the close through the same day's loadings and
enter no later gap.

For the impulse response, set $u = z = z_c = 0$ and consider a unit public
shock. (i) $e^{(0)} = \Pi_g - 1$. (ii) Iterating,
$e^{(h)} = (1-\theta\Pi_g)^h e^{(0)}$; the open clears $\theta$ of the
standing error, so the overnight leg at horizon $h$ is
$-\theta e^{(h-1)} = -\theta(\Pi_g-1)(1-\theta\Pi_g)^{h-1}$, and the cycle
return in excess of value is
$e^{(h)} - e^{(h-1)} = -\theta\Pi_g\,e^{(h-1)} =
-\theta\Pi_g(\Pi_g-1)(1-\theta\Pi_g)^{h-1}$. (iii) Dividing the two, the
common factor $\theta(1-\theta\Pi_g)^{h-1}$ cancels and the ratio is
$\Pi_g$ exactly, at every horizon and every $\theta$. The no-arbitrage case
nests: $\psi_c = \psi = 0$ gives $\Pi_g = 1/(1-\ell)$.
\end{proof}

\subsection{Thresholds}
\label{app:ladderproof}

\begin{proof}[Proof of Proposition~\ref{prop:ladder}]
Each threshold solves the equation defining the behavior changing there,
using equation~\eqref{eq:aloadings}.

It is convenient to state all four as multiples of the effective depth
$\bar\rho+\psi_c$, which the two-leg demand makes possible: each threshold is
$(\bar\rho+\psi_c)$ times a factor in $(0,1]$, so the ladder is a statement
about the effective dose $\ell_{\text{eff}}$ alone.

\emph{Noise-chasing.} By equation~\eqref{eq:asessload} the loading of
$\bar X^\ast$ on $z$ is $\mathcal{S}\ell - \Psi$ with
$\Psi = \chi[(\bar\rho+\psi_c)-(1-\mathcal{S})\ell]$, so it vanishes at
$\mathcal{S}\ell = \chi[(\bar\rho+\psi_c)-(1-\mathcal{S})\ell]$, i.e.\ at
$\ell = (\bar\rho+\psi_c)\,\psi_c/(\psi_c+\rho \mathcal{S})$, which is
equation~\eqref{eq:anoisethreshold}, with factor
$\psi_c/(\psi_c+\rho \mathcal{S}) \in (0,1]$.

\emph{Oscillation.} $\theta\Pi_g = 2$ reads
$\theta(\bar\rho+\psi_c) = 2(\bar\rho+\psi_c-\ell)$, i.e.\
$\ell = (\bar\rho+\psi_c)(1-\theta/2)$. By
Proposition~\ref{prop:chaining} the chained recursion has coefficient
$1-\theta\Pi_g$, of modulus below one exactly when $0 < \theta\Pi_g < 2$;
since $\theta\Pi_g > 0$ always, the binding side is $\theta\Pi_g < 2$.

\emph{Revelation.} $\Pi_u = 1$ reads $\chi\Pi_g = 1$, i.e.\
$\chi(\bar\rho+\psi_c) = \bar\rho+\psi_c-\ell$, i.e.\
$\ell = (\bar\rho+\psi_c)(1-\chi) = (\bar\rho+\psi_c)\,\rho/(\rho+\psi_c)$,
with factor $1-\chi \in (0,1)$. Under a carry-only session demand $\chi$
would be $\psi_c/(\bar\rho+\psi_c)$ and this threshold would sit at $\bar\rho$;
the two agree at $\rho = 1$.

\emph{Tipping.} $D = 0$ at $\ell = \bar\rho+\psi_c$, factor one, where the
map $\ell \mapsto (\Pi_g,\Pi_u,a_z)$ has a pole.

\emph{Ordering.} Dividing through by $\bar\rho+\psi_c$, the four thresholds are
the factors
\[
\frac{\psi_c}{\psi_c+\rho \mathcal{S}}
\;<\; 1-\frac{\theta}{2}
\;<\; \frac{\rho}{\rho+\psi_c}
\;<\; 1 ,
\]
so the serial ordering is exactly the stated hypothesis, and revelation
$<$ tipping holds whenever $\psi_c > 0$. Reading the two outer
inequalities: the left is $\psi_c\,\theta < (2-\theta)\rho \mathcal{S}$ and the right
is $\theta(\rho+\psi_c) > 2\psi_c$. Both bind on $\psi_c$ relative to
$\rho$, so the ladder is serial when the auction capacity is small
relative to the depth wedge --- and, as in the carry-only case, the
condition is a hypothesis of the proposition rather than something the
calibrated primitives satisfy.
\end{proof}

\subsection{Excess volatility}
\label{app:volproof}

\begin{proof}[Proof of Proposition~\ref{prop:vol}]
By Proposition~\ref{prop:chaining}, $e_+ = (1-\theta\Pi_g)e +
(\Pi_g-1)\varepsilon_+$, so
$\Var(e) = (\Pi_g-1)^2\sigma_0^2/[1-(1-\theta\Pi_g)^2]$,
finite exactly when $0 < \theta\Pi_g < 2$. The close-to-close cycle return
is $R = \Pi_g\varepsilon + \Pi_u u + a_z z + a_{z_c} z_c - \theta\Pi_g e_-$,
so
\[
\Var(R) \;=\; \Pi_g^2\sigma_0^2 + \Pi_u^2\sigma_u^2 + a_z^2\sigma_z^2
+ a_{z_c}^2\sigma_c^2 + \theta^2\Pi_g^2\Var(e) .
\]
Substituting $\Var(e)$ and
$\theta^2\Pi_g^2/[1-(1-\theta\Pi_g)^2] = \theta\Pi_g/(2-\theta\Pi_g)
\equiv h$ gives equation~\eqref{eq:vol}.

Monotonicity. At fixed capacities and $\theta$, $\Pi_u = \chi\Pi_g$,
$a_z = (1-\chi)\Pi_g$ and $a_{z_c} = 1/(1+\psi_c-\ell)$ are positive and
increasing in $\ell$, so the private and noise channels rise. For the
public channel $\Pi_g^2 + h(\Pi_g-1)^2$, note $\theta\Pi_g < 2$ on the
stationary region, so $A \equiv 2-\theta\Pi_g > 0$ and
$\partial[\Pi_g^2 + h(\Pi_g-1)^2]/\partial\Pi_g =
(2/A^2)\big[(1-\theta)\Pi_g(4-\theta\Pi_g) + \theta\big] \ge
(2/A^2)\,\theta > 0$, using $\theta \le 1$ and $4 - \theta\Pi_g > 2$;
since $\Pi_g$ is increasing in $\ell$, the channel rises. Divergence as
$\theta\Pi_g \to 2$ follows from $h \to \infty$ and $\Pi_g > 1$ at
$\ell > 0$.
\end{proof}

\subsection{The impact curve}
\label{app:curveproof}

A base mass $\mu_c$ attends the auction unconditionally; behind it stands a
latent fringe ordered by mobilization cost, unit $j$ paying
$c_0\,j^{\varphi_c}$ per unit per event; set $\delta \equiv
\varphi_c/(\varphi_c + 2) \in (0,1)$. The slope at the touch is $\Lambda_c
\equiv \gamma S_T/\mu_c$. Write $p$ for the concession of the closing price
over $\tilde r$, work in units $\gamma S_T = \mu_c = 1$ (so $\Lambda_c = 1$
and the touch loop gain is $\ell = K$), and let $\beta \equiv
1/\sqrt{2\gamma S_T c_0}$, $p^{\ast} \equiv \mu_c/\beta$, $b \equiv
1/p^{\ast}$.

\begin{lemma}[the impact curve at close]
\label{lem:curve}
The closing auction absorbs $Q$ at a concession $P(Q)$ that is strictly
concave, has slope $\Lambda_c$ at the origin, and satisfies
$P \propto Q^{\delta}$ with $\delta \in (0,1)$ at scale. Its ratio of
marginal to average impact declines from one at the touch to $\delta$, and
where no capacity beyond the incumbents can be drawn the curve is
equation~\eqref{eq:impact} exactly.
\end{lemma}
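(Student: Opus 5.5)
The plan is to derive the curve from the fringe's attendance decision and then read off its properties; I work in the stated units $\gamma S_T=\mu_c=1$. Let $P$ be the concession at which an order $Q$ clears. If the attending mass is $\mu_c+m$, the sector's risk aversion is $\gamma/(\mu_c+m)$, so by the logic of equation~\eqref{eq:impact} the concession on $Q$ is $P=\gamma S_T\,Q/(\mu_c+m)$. A CARA dealer holding $h$ units at concession $P$ earns certainty-equivalent surplus $hP-\tfrac{\gamma}{2}S_T h^2$. Optimizing over $h$ gives a per-capita surplus of $P^2/(2\gamma S_T)$.

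The marginal fringe unit $j$ attends when this surplus covers its mobilization cost, $c_0 j^{\varphi_c}$. That condition pins down the attending fringe as $m(P)=\big(P^2/(2\gamma S_T c_0)\big)^{1/\varphi_c}$. In the normalized units this is $m(P)=(\beta P)^{2/\varphi_c}$.

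Absorbed quantity is then
\[
Q(P)=P\,\big(1+(\beta P)^{2/\varphi_c}\big)\qquad(P\ge 0),
\]
extended as an odd function. At $\varphi_c=2$ this reduces to $P(1+bP)$ with $b=\beta$, consistent with equation~\eqref{eq:aquadclose}. If instead no fringe can be drawn ($c_0\to\infty$, so $\beta\to0$), then $Q=P$, which is equation~\eqref{eq:impact} exactly.

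Next I invert $Q$. The map $Q(P)$ is continuous, strictly increasing and convex on $P\ge0$, with $Q(0)=0$ and $Q'(0)=1$. Its inverse $P(Q)$ is therefore strictly increasing and strictly concave, and it has slope $1=\Lambda_c$ at the origin. For scaling, at large $P$ we have $Q\sim \beta^{2/\varphi_c}P^{1+2/\varphi_c}$, which gives $P\propto Q^{\varphi_c/(\varphi_c+2)}=Q^{\delta}$ with $\delta\in(0,1)$.

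For the ratio of marginal to average impact, write $s=(\beta P)^{2/\varphi_c}$. Then
\[
\frac{P'(Q)}{P/Q}=\frac{Q}{P\,Q'(P)}=\frac{1+s}{1+(1+2/\varphi_c)s}.
\]
This equals one at $s=0$ and decreases monotonically in $s$, tending to $\varphi_c/(\varphi_c+2)=\delta$. Since $s$ is increasing in $Q$, the ratio declines from one at the touch toward $\delta$ at scale.

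The main obstacle I expect is justifying the attendance step. The auction has to price with the pooled mass $\mu_c+m$ while each fringe unit's surplus is computed at the common concession $P$ taken as given, and I need this to be a consistent rational-expectations fixed point. I would handle it by treating fringe units as atomistic price-takers, so that $P$ is common to all attendees. Free entry at the margin then yields $m(P)$. Monotonicity in $P$ guarantees that the clearing $P$ for each $Q$ is unique. Strict concavity of the inverse follows from strict convexity of $Q(P)$, which holds because $m(P)$ is strictly increasing and convex-enough for $\varphi_c>0$. I would verify this directly from $Q''(P)>0$.
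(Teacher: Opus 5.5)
Your proposal is correct and follows the paper's proof essentially step for step. The marginal attendee's condition $P^2/(2\gamma S_T)=c_0 j^{\varphi_c}$ gives $F(P)=(\beta P)^{2/\varphi_c}$; absorption $Q(P)=P[\mu_c+F(P)]/(\gamma S_T)$ is strictly increasing and convex, so its inverse is strictly concave with $P'(0)=\Lambda_c$; $PF'(P)=(2/\varphi_c)F(P)$ gives the marginal-to-average ratio $(\mu_c+F)/[\mu_c+(1+2/\varphi_c)F]$, falling from one to $\delta$; and dropping the base and taking $c_0\to\infty$ deliver the power law and the linear limit, exactly as in the paper.
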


\begin{proof}[Proof of Lemma~\ref{lem:curve}]
An attending unit facing concession $p$ holds $h = p/(\gamma S_T)$ and
collects surplus $p^2/(2\gamma S_T)$; unit $j$ attends iff the surplus
covers $c_0 j^{\varphi_c}$. The marginal attendee is therefore
$p^2/(2\gamma S_T) = c_0 j^{\varphi_c}$, so the fringe mass is
\begin{equation}
F(p) \;=\; \Big(\frac{p^2}{2\gamma S_T c_0}\Big)^{1/\varphi_c}
\;=\; (\beta p)^{2/\varphi_c} ,
\label{eq:afringe}
\end{equation}
with $\beta = 1/\sqrt{2\gamma S_T c_0}$ as defined above and $F = \beta p$
at $\varphi_c = 2$. Absorption is
$Q(p) = p\,[\mu_c + F(p)]/(\gamma S_T)$, strictly increasing and
convex, with inverse $P$ strictly concave and $P'(0) = \Lambda_c$. Since
$p F'(p) = (2/\varphi_c) F(p)$, marginal impact is
$P'(Q) = \gamma S_T/[\mu_c + (1 + 2/\varphi_c) F]$ against average
$P(Q)/Q = \gamma S_T/[\mu_c + F]$, so the marginal-to-average ratio
declines from one at $F = 0$ to $\delta$ as $F/\mu_c \to \infty$. Dropping
the base and inverting gives $P = \tilde\lambda\, Q^{\delta}$ with
$\tilde\lambda = (\gamma S_T)^{\delta} (2\gamma S_T c_0)^{(1-\delta)/2}$;
as $c_0 \to \infty$, $F \to 0$ and $P(Q) = \Lambda_c Q$.
\end{proof}

\begin{proof}[Proof of Proposition~\ref{prop:curve}]
(i) At $\varphi_c = 2$ the fringe is $F = \beta p$, absorption is
$\gamma S_T\, Q = p\,(\mu_c + \beta p)$, marginal impact
$\gamma S_T/(\mu_c + 2\beta p)$. Each speculator's auction first-order
condition carries own impact $\partial r_c/\partial x_c^i = P'/(1 - K P')
\equiv m_m \Lambda_m$ (curvature does not enter a first derivative), so the
symmetric rule is $x_c = (v - r_c)/(m_m \Lambda_m)$ with $v \equiv g + u$.
Substituting it with $r_c = \tilde r + p$ and $\bar Q = p(1 + bp)$ into the
clearing consistency $\bar Q = K r_c + N x_c$ yields
\begin{equation}
b\,(2N{+}1)\,p^2
\;+\; \big[(N{+}1)(1-\ell) - 2bN(v - \tilde r)\big]\,p
\;-\; \big[\ell\,\tilde r + N(1-\ell)(v - \tilde r)\big] \;=\; 0 ,
\label{eq:quadclose}
\end{equation}
whose positive root is the equilibrium concession; at $b \to 0$ the root is
$r_c = [\psi_c v + \tilde r]/[\psi_c + (1-\ell)]$, i.e.\
Proposition~\ref{prop:passthrough}. (The closed form is stated for the
$N$-trader specification of Appendix~\ref{app:largetraders}; the
continuum's counterpart follows by Proposition~\ref{prop:nesting} on
substituting $\psi_c = N(1-\ell)$, and clauses (i)--(iii) hold in both.)
Equivalently, decomposing the curve at the close into tangent and
intercept, $P(\bar Q) = \Lambda_m \bar Q + A$ with $A \equiv P(\bar Q) -
P'(\bar Q)\,\bar Q$, clearing reads $r_c = m_m(\tilde r + A + \Lambda_m
X_c)$ and the pass-through identity becomes
$\big[\psi_c + (1-\ell_m)\big] r_c = \psi_c v + (1-\ell_m)\,m_m(\tilde r + A)$.

(ii) $P' \le P'(0) = \Lambda_c$ gives $\ell_m \le \ell$. In the
automaton-dominated auction $\bar Q = K r_c$, average loop gain $\ell_a =
K P(\bar Q)/\bar Q = (r_c - \tilde r)/r_c < 1$, and $\ell_m =
[\text{marg}/\text{avg}]\,\ell_a \le \ell_a < 1$. Beyond the linear tipping
threshold the map $r \mapsto \tilde r + P(Kr + X_c)$ is concave with positive
intercept on the absorbing branch and sublinear at infinity, so it crosses
the diagonal exactly once, from above, with slope below one. The large-$K$
limit of $\ell_m$ is $\delta$; the interior peak is exhibited numerically
in the replication package.

(iii) At $\theta = 1$ the news-free skeleton alternates via
$e_+ = e - R(e)$, $R$ the odd close-response function, so a period-two
cycle of amplitude $e^{\ast}$ requires $R(e^{\ast}) = 2e^{\ast}$, i.e.\
$P(2K e^{\ast}) = e^{\ast}$. At $\varphi_c = 2$ the auction absorbs
$Q(p) = p\,(\bar\rho + \psi_c + b p)$ in the normalization above, the base
and the arbitrage sector together supplying the linear term that the
effective depth of equation~\eqref{eq:aloadings} carries. Two caveats
attach to that splice. The linear coefficient has been rescaled from
$\mu_c \equiv 1$ to $\bar\rho+\psi_c$ while the curvature coefficient
$b = 1/p^{\ast}$ is still struck against the unrescaled base, so
$p^{\ast}$ in equation~\eqref{eq:amplitude} is the crossover concession at
which the fringe equals the \emph{dealer base} $\mu_c$, not the effective
linear capacity; reading it as the latter rescales both displays by
$(\bar\rho+\psi_c)$. And $\bar\rho$ is an object of the linear two-venue
model, derived under $P' \equiv \Lambda_c$; carrying it onto the concave
curve assumes the session block is unchanged by curvature at the auction,
which is exact only in the small-$b$ neighbourhood. With those readings,
$Q(e^{\ast}) = 2\ell e^{\ast}$ gives $b e^{\ast} = 2\ell - \bar\rho -
\psi_c$, which is equation~\eqref{eq:amplitude}. It exists exactly above
the oscillation threshold of Proposition~\ref{prop:ladder} read at $\theta = 1$,
and is the unique attractor of $|e_t|$ since $R(e)/e$ is strictly
decreasing in $|e|$. The no-news fixed point solves
$Q(r^{\ast}) = \ell r^{\ast}$, giving
$r^{\ast} = (\ell - \bar\rho - \psi_c)\,p^{\ast}$, which exists exactly
above the tipping threshold; there the slope of the map at zero exceeds one, so
the undisplaced price is unstable, and at $r^{\ast}$ the slope is below
one. At $\bar\rho = 1$ and $\psi_c = 0$ the two displays collapse to
$(2\ell-1)p^{\ast}$ and $(\ell-1)p^{\ast}$.
\end{proof}

\subsection{Conduct}
\label{app:conductproof}

\begin{proposition}[the manufactured order]
\label{prop:manufactured}
A unit of net buying at the close enlarges the automaton's executed order by
\begin{equation}
\frac{\partial q_L}{\partial X_c} \;=\; \frac{\ell}{1-\ell}
\label{eq:manufactured}
\end{equation}
shares, exceeding one beyond $\ell = \tfrac12$. By
equation~\eqref{eq:auctionfoc} the corresponding \emph{individual}
derivative is zero for every trader.
\end{proposition}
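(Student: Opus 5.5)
We work directly from the auction clearing identity, equation~\eqref{eq:fixedpoint}, with the arbitrage flow $X_c$ treated as an exogenous increment. The experiment is a marginal unit of net buying at the close, with the session price $r_s$ and the noise $z_c$ held fixed. The statement concerns the automaton's \emph{executed} order $q_L = K r_c$, so the whole derivative runs through the closing price. By equation~\eqref{eq:loopgain},
\[
r_c = \frac{1}{1-\ell}\big[r_s + \Lambda_c (X_c + z_c)\big],
\]
which gives $\partial r_c/\partial X_c = \Lambda_c/(1-\ell)$. Multiplying by $K$ and using $\ell = \Lambda_c K$ from Definition~\ref{def:loopgain} yields
\[
\frac{\partial q_L}{\partial X_c} = \frac{K\Lambda_c}{1-\ell} = \frac{\ell}{1-\ell}.
\]
We can also read this as the geometric series: the first round adds $\ell$ shares, each further round multiplies by $\ell$, and the sum is $\ell + \ell^2 + \cdots$. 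The threshold claim is then arithmetic. Since $\ell/(1-\ell) > 1$ if and only if $2\ell > 1$ for $\ell < 1$, the derivative exceeds one exactly beyond $\ell = \tfrac12$.

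The step that needs care is the domain and the meaning of ``holding fixed.'' The formula requires $\ell < 1$, the pole of this auction-stage map. We will note explicitly that this is a partial derivative of the auction block alone, with $r_s$ and the session position frozen, exactly as in the first part of the proof of Proposition~\ref{prop:loadings}. It is not the equilibrium derivative, which would also re-solve $\bar X$ and give the depth $\bar\rho+\psi_c$ instead. If arbitrage at the close were allowed to re-optimize, equation~\eqref{eq:auctionfoc} would put $\psi_c$ into the denominator. We will therefore state the result as the mechanical response to an exogenous unit of closing flow, matching the ``manufacturing leverage'' object used in Section~\ref{sec:quant}.

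For the individual claim, we use the atomistic structure. Each trader $i$ has measure zero, so $X_c = \int x_c^i\,di$ is unchanged by any single $x_c^i$, and therefore $\partial r_c/\partial x_c^i = 0$. It follows that $\partial q_L/\partial x_c^i = K\,\partial r_c/\partial x_c^i = 0$. This is consistent with equation~\eqref{eq:auctionfoc}, where each trader takes $r_c$ as given in the first-order condition.
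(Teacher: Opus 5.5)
Your proposal is correct and follows the paper's own proof. You differentiate the auction fixed point $r_c = [r_s + \Lambda_c(X_c+z_c)]/(1-\ell)$ in the aggregate $X_c$, multiply by $K$ to get $\ell/(1-\ell)$, and use measure zero to show the individual derivative is zero. Your added caveat is also consistent with the paper's remark that the result is free of both capacities and of $\rho$: this is the auction-block derivative with $X_c$ exogenous, so its pole sits at $\ell = 1$ rather than at $\bar\rho+\psi_c$.
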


\begin{proof}
Differentiating the auction fixed point $r_c = [\tilde r + \Lambda_c(X_c +
z_c)]/(1-\ell)$ in the aggregate purchase $X_c$ gives $\partial
r_c/\partial X_c = \Lambda_c/(1-\ell)$, and $q_L = K r_c$ gives
$\partial q_L/\partial X_c = K\Lambda_c/(1-\ell) = \ell/(1-\ell)$, which
equals one at $\ell = \tfrac12$, free of both capacities and of $\rho$.
Each trader is of measure zero, so $\partial r_c/\partial x^i = 0$ and
$\partial q_L/\partial x^i = 0$ for every $i$; the aggregate derivative is
a property of the equilibrium price system, not a sum of perceived
individual effects. The invariance $\partial q_{\text{prey}}/\partial x
\equiv 0$ that \citet{brunnermeierpedersen2005} use to separate predatory
trading from manipulation therefore fails for the aggregate at every
$\ell > 0$.
\end{proof}

\begin{proposition}[the self-reference share]
\label{prop:share}
Let the predetermined-reference benchmark size the automaton's order at the
session price, which the auction cannot move. The fraction of the closing
overshoot that exists only because the order is sized at the price it
moves is
\begin{equation}
s_R \;=\; \frac{\ell\,(1-\mathcal{S})}{\big[1 - \mathcal{S}(1-\rho)\big] + \psi_c - \ell\,\mathcal{S}}
\;=\; \frac{\ell\,(1-\mathcal{S})}{\bar\rho + \psi_c - \ell\,\mathcal{S}} ,
\label{eq:share}
\end{equation}
zero at $\ell = 0$, strictly increasing in $\ell$, equal to one exactly at
$\ell = \bar\rho+\psi_c$, and equal to $\ell/(1+\psi_c)$ when session
pre-positioning is absent. At $\rho = 1$ it collapses to
$\ell(1-\mathcal{S})/\big[(1+\psi_c) - \ell \mathcal{S}\big]$. Every primitive is held fixed
and only the keying of the order to the price it sets is removed, so a
reference the closing price cannot move sets $s_R = 0$ by construction.
\end{proposition}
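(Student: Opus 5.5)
The plan is to compute the benchmark overshoot directly and then take the ratio
\[
s_R \equiv \frac{\Pi_g - \Pi_g^{R}}{\Pi_g - 1},
\]
where $\Pi_g^{R}$ is the public-news loading of the close when the complex's order is keyed to the session price. In that benchmark the order is $q_L = K r_s$ rather than $K r_c$, and every primitive $(\ell,\rho,\psi_c,\psi)$ is held at its value in Proposition~\ref{prop:loadings}. I would hold the session share $\mathcal{S}$ fixed as well. This follows Remark~\ref{rem:invariance}: every public-channel object depends on the session block only through $\mathcal{S}$, so fixing $\mathcal{S}$ fixes the comparison without re-solving the session demand. The whole argument can run on the public channel, setting $u = z = z_c = 0$, because $s_R$ is defined as a share of the public-news overshoot.

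First, I would redo the auction stage of the proof of Proposition~\ref{prop:loadings} with the rebalance term $\ell r_c$ replaced by $\ell r_s$. The clearing equation then has $(1+\psi_c)$ rather than $(1-\ell+\psi_c)$ on its left side. On its right side it carries $(1+\psi_c)g$, the unwind term $(1-\rho)\Lambda\bar X^\ast$, and the new term $\ell r_s$. The session price is $r_s = (1+\mathcal{S}\,\ell_R)\,g$, where $\ell_R$ is the session overshoot in the benchmark; the order still rises with $r_s$, so pre-positioning still pays through the session price. The key step is to impose that the session share of the overshoot is $\mathcal{S}$ also in the benchmark, that is, $r_s - g = \mathcal{S}(r_c - g)$, and to use the identity $\bar\rho = 1 - \mathcal{S}(1-\rho)$ to collect the unwind term. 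Solving the resulting linear equation in $r_c$, I expect
\[
\Pi_g^{R} - 1 \;=\; \frac{\ell}{\bar\rho + \psi_c - \ell\,\mathcal{S}} .
\]
The linear form of the session block $(1+\mathcal{S}\ell_R)g$ is what produces the $-\ell\,\mathcal{S}$ in place of the $-\ell$ in $D = \bar\rho + \psi_c - \ell$.

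Second, I would form the ratio using $\Pi_g - 1 = \ell/D$ from Proposition~\ref{prop:loadings}:
\[
s_R \;=\; 1 - \frac{D}{\bar\rho + \psi_c - \ell\,\mathcal{S}} .
\]
The numerator of the complement is $\bar\rho + \psi_c - \ell\,\mathcal{S} - D = \ell(1-\mathcal{S})$, which gives equation~\eqref{eq:share}. The stated properties then follow by direct reading of this expression.
\begin{itemize}
\item $s_R$ vanishes at $\ell = 0$.
\item $s_R$ is strictly increasing in $\ell$ on the stable region, since the numerator increases in $\ell$ and the denominator decreases in it.
\item $s_R$ equals one when $\ell(1-\mathcal{S}) = \bar\rho + \psi_c - \ell\,\mathcal{S}$, that is, exactly at $\ell = \bar\rho + \psi_c$.
\item At $\mathcal{S} = 0$ we have $\bar\rho = 1$, so $s_R = \ell/(1+\psi_c)$.
\item At $\rho = 1$ we have $\bar\rho = 1$, which gives the collapsed form.
\end{itemize}
The final sentence of the statement holds by construction: if the reference cannot be moved by the close, the two economies coincide, so $\Pi_g^{R} = \Pi_g$ and $s_R = 0$.

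The main obstacle is the first step: justifying that the benchmark leaves the session leg's share at $\mathcal{S}$, rather than re-solving the session demand \eqref{eq:asessagg} under the new order rule. Re-solving would change $\E_s[r_c]$ and hence the carry. I would first try to lean on Remark~\ref{rem:invariance}, defining the benchmark at fixed $\mathcal{S}$ exactly as the calibration reads $\mathcal{S}$ off the window split. Only if that reading fails to reproduce the denominator $\bar\rho + \psi_c - \ell\,\mathcal{S}$ would I fall back on the full two-leg substitution of \eqref{eq:asessload}.
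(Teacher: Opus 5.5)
Your proposal is correct and follows the paper's route. You clear the auction with the order keyed to $r_s$, giving $r_c(1+\psi_c) = r_s(1+\ell) + \psi_c v - \Lambda_c\bar X + \Lambda_c z_c$, combine it with the session solution to get $\Pi_g^{R}-1 = \ell/(\bar\rho+\psi_c-\ell\mathcal{S})$, and form $s_R = 1 - (\Pi_g^{R}-1)/(\Pi_g-1)$. The step you flag is not a real obstacle: in the public channel the two-leg demand \eqref{eq:asessagg} reads $(r_s-g)(1+\psi+\psi_c/\rho) = \psi\,(r_c-g)$ whatever price the order is keyed to, so re-solving the session at the same $\psi$ returns $r_s-g = \mathcal{S}(r_c-g)$ exactly, which is what the paper's ``solving the session fixed point'' delivers (the single-$\mathcal{S}$ formula presupposes that same $\psi$, since $\operatorname{Var}_s(r_c)$ would otherwise move with $a_{z_c}$).
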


\begin{proof}
In the benchmark, $q_L = Kr_s$ with $r_s = g +
\Lambda(\bar X+z)$, so $\partial q_L/\partial r_c = 0$ while the flow still
arrives. Clearing the auction gives $r_c(1+\psi_c) = r_s(1+\ell) +
\psi_c v - \Lambda_c \bar X + \Lambda_c z_c$. Solving the session fixed point
with the two-leg demand of equation~\eqref{eq:asessagg} and taking the
ratio,
\[
s_R \;=\; 1 - \frac{\Pi_g^{P}-1}{\Pi_g-1}
\;=\; \frac{\ell\,(1-\mathcal{S})}{\bar\rho + \psi_c - \ell\,\mathcal{S}} ,
\]
which is equation~\eqref{eq:share}. It is zero at $\ell = 0$; strictly
increasing in $\ell$, since the numerator rises and the denominator falls;
equal to one exactly at $\ell = \bar\rho+\psi_c$, where numerator and
denominator coincide; and at $\mathcal{S} = 0$ it is
$\ell/(\bar\rho+\psi_c)|_{\bar\rho=1} = \ell/(1+\psi_c) =
\ell_{\text{eff}}$. Note that $s_R$ is expressed in the measured share $\mathcal{S}$
and not in $\psi$: the determinant form $\ell/(\DN+\ell)$ that a
carry-only session demand produces is \emph{not} equivalent to the display
above once the session position carries a value leg, whereas the
$\mathcal{S}$-form is, and it is the $\mathcal{S}$-form the quantification uses.
\end{proof}

The same session solution delivers the measurement identity used in
Section~\ref{sec:quant}. Writing $w_g$ for the session price's loading on
the public gap in excess of one,
\begin{equation}
\underbrace{w_g}_{\text{session drift}} = \frac{\mathcal{S}\,\ell}{D},
\qquad
\underbrace{\Pi_g - 1 - w_g}_{\text{auction-window pop}} = \frac{(1-\mathcal{S})\,\ell}{D},
\qquad\text{so}\qquad
\frac{w_g}{\Pi_g - 1 - w_g} \;=\; \frac{\mathcal{S}}{1-\mathcal{S}}
\label{eq:adrift}
\end{equation}
exactly, at every loop gain and every $(\rho,\psi_c)$, where
$D = \bar\rho+\psi_c-\ell$. Equivalently, the session's share of the
overshoot is
\begin{equation}
\mathcal{S} \;\equiv\; \frac{w_g}{\Pi_g - 1}
\;=\; \frac{\rho\,\psi}{\rho + \psi_c + \rho\,\psi} ,
\qquad\text{so}\qquad
\psi \;=\; \frac{\mathcal{S}}{1-\mathcal{S}}\cdot\frac{\rho+\psi_c}{\rho},
\qquad
\bar\rho \;=\; 1 - \mathcal{S}(1-\rho) .
\label{eq:asmap}
\end{equation}
Two readings of this map must be kept apart. The step from the measured
$\mathcal{S}$ to $\bar\rho$, and through it to every public-channel object, is free
of $\ell$, $\psi_c$ and $\rho$ beyond what $\bar\rho$ itself displays: a
single measured share pins the \emph{price} consequences of the session
leg at every dose and both depths, which is all Section~\ref{sec:quant}
uses. The step from $\mathcal{S}$ to the structural capacity $\psi$ is not: it
carries the factor $(\rho+\psi_c)/\rho$, so recovering $\psi$ requires
$\psi_c$ and $\rho$ as well, and the implied $\psi$ is therefore
name-specific wherever $\psi_c$ is. Under a carry-only session demand the
factor would be one and the two readings would coincide; that coincidence
is an artifact of omitting the value leg, not a property of the model.

\begin{proof}[Proof of Proposition~\ref{prop:predshare}]
The benchmark economy is the equilibrium speculator with the carry leg of
equation~\eqref{eq:asessdemand} switched off and the value leg left intact,
i.e.\ $\tau_s \to 0$ with $\tau$ unchanged. This is exactly the benign
standard of Section~\ref{sec:conduct}: the benchmark trader's session
position is $\bar X = \tau(v - r_s)$, always directed against the mispricing,
and he corrects at the auction precisely as the equilibrium speculator
does. It is a restriction on the trader's \emph{motive}, not on the price
system, which is why the two-leg demand gives the benign benchmark a
structural definition; a carry-only session model has no such
decomposition, and setting $\psi = 0$ there removes the session position
altogether rather than only its pre-positioning component.

Solving the session fixed point at $\tau_s = 0$ gives $\mathcal{S} = 0$ and hence
$\bar\rho = 1$, so
$\Pi_g^{\,0} = (1+\psi_c)/(1+\psi_c-\ell)$: the value leg trades against
the gap and moves the close, but it does so symmetrically at both venues
and leaves no depth wedge behind. Then
\[
\Pi_g - \Pi_g^{\,0}
= \frac{(\bar\rho+\psi_c)(1+\psi_c-\ell) - (1+\psi_c)(\bar\rho+\psi_c-\ell)}
       {(\bar\rho+\psi_c-\ell)(1+\psi_c-\ell)}
= \frac{\ell\,(1-\bar\rho)}{(\bar\rho+\psi_c-\ell)(1+\psi_c-\ell)} ,
\]
and dividing by $\Pi_g - 1 = \ell/(\bar\rho+\psi_c-\ell)$ gives
$\mathcal{P} = (1-\bar\rho)/(1+\psi_c-\ell)$, which is
$\mathcal{S}(1-\rho)/(1+\psi_c-\ell)$ by $\bar\rho = 1-\mathcal{S}(1-\rho)$; the identity
$\mathcal{S} = \psi/(1+\psi)$ is equation~\eqref{eq:asmap}. The signs require
$\ell < 1+\psi_c$ in addition to interiority, which is automatic in the
empirically relevant case $\rho \le 1$ (then
$\bar\rho \le 1$ and $\ell_{tip} = \bar\rho+\psi_c \le 1+\psi_c$) but not
in general: at $\rho > 1$ the band
$\ell \in (1+\psi_c,\,\bar\rho+\psi_c)$ is interior with $1-\bar\rho$ and
$1+\psi_c-\ell$ both negative, so $\mathcal{P} > 0$ there despite
$\rho > 1$, while the benchmark itself has no interior equilibrium
($\Pi_g^{\,0} < 0$). On $\rho \le 1$: $\mathcal{P} > 0$ iff $\bar\rho < 1$
iff $\rho < 1$ and $\psi > 0$; the stated monotonicities are immediate
from the closed form, and
$1 - \mathcal{P} = (\bar\rho+\psi_c-\ell)/(1+\psi_c-\ell) \in (0,1]$ on the
interior region, vanishing at the pole. For the quantity margin,
$r_c(1-\ell+\psi_c) = \tilde r + \psi_c v$ with
$\partial\tilde r/\partial \bar X = \Lambda - \Lambda_c$ gives
$\partial q_L/\partial \bar X = K(\Lambda-\Lambda_c)/(1-\ell+\psi_c) > 0$
for $\rho < 1$.
\end{proof}

\subsection{Welfare}
\label{app:welfareproof}

\begin{proof}[Proof of Proposition~\ref{prop:loss}]
The automaton trades $K r_c$ at $r_c$ against value $v$, so per unit of
rebalancing capital $W_h = \E[r_c(v-r_c)]$. Substituting the loadings and
$v = g + u$ with $g = \varepsilon$ in an isolated cycle,
\[
W_h = \E[r_c v] - \E[r_c^2]
= \big(\Pi_g\sigma_0^2 + \Pi_u\sigma_u^2\big)
- \big(\Pi_g^2\sigma_0^2 + \Pi_u^2\sigma_u^2 + a_z^2\sigma_z^2
+ a_{z_c}^2\sigma_c^2\big),
\]
which is equation~\eqref{eq:loss}. The public channel
$-\Pi_g(\Pi_g-1)\sigma_0^2 < 0$ at every $\ell > 0$ by
Corollary~\ref{cor:overshoot}; the private channel
$-\Pi_u(\Pi_u-1)\sigma_u^2$ is positive exactly when $\Pi_u < 1$, i.e.\
below $\ell = \bar\rho$; the noise channels are non-positive.

For the threshold, write $T = \bar\rho+\psi_c$ and $D = T-\ell > 0$. With
$\Pi_g = T/D$, $\Pi_u = \chi T/D$ and $a_z = (1-\chi)T/D$ in return units,
the identities $\Pi_g(\Pi_g-1) = T\ell/D^2$ and
$\Pi_u(\Pi_u-1) = \chi T[\ell - (1-\chi)T]/D^2$ give $W_h = -N(\ell)/D^2$
with
\[
N(\ell) \;=\; T\Big\{
\underbrace{\big[\sigma_0^2 + \chi\,\sigma_u^2\big]}_{\textstyle A\,>\,0}\,\ell
\;-\; \chi(1-\chi)\,T\,\sigma_u^2 \;+\; (1-\chi)^2\,T\,\sigma_z^2 \Big\} ,
\]
whose root is
\begin{equation}
\ell^{\ast} \;=\;
\frac{(\bar\rho+\psi_c)\,(1-\chi)\,
      \big[\chi\,\sigma_u^2 \;-\; (1-\chi)\,\sigma_z^2\big]}
     {\sigma_0^2 \;+\; \chi\,\sigma_u^2} ,
\qquad
\chi \;=\; \frac{\psi_c}{\rho+\psi_c} .
\label{eq:alosthreshold}
\end{equation}
This is equation~\eqref{eq:losthreshold} written in the general form. Under
a carry-only session demand, where $\chi = \psi_c/T$ and $1-\chi =
\bar\rho/T$, it reduces term by term to
$\bar\rho(\psi_c\sigma_u^2 - \bar\rho\sigma_z^2)/[(\bar\rho+\psi_c)\sigma_0^2
+ \psi_c\sigma_u^2]$; with the two-leg demand the same expression is
evaluated at $\chi = \psi_c/(\rho+\psi_c)$ instead. The revelation threshold is
$(1-\chi)T$ in both, so the comparison below is unaffected in form.
Substituting the reduced form,
\[
N(\ell) \;=\; \underbrace{\big[(\bar\rho+\psi_c)\sigma_0^2
+ \psi_c\sigma_u^2\big]}_{\textstyle A\,>\,0}\,\ell
\;-\; \psi_c\bar\rho\,\sigma_u^2 \;+\; \bar\rho^2\sigma_z^2
\qquad(\text{carry-only }\chi) .
\]
(the noise term is $\bar\rho^2\sigma_z^2$ and not
$\bar\rho^2\Lambda^2\sigma_z^2$ because, by the units convention of
Appendix~\ref{app:notation}, $\sigma_z$ is here the variance of the
\emph{return} displacement $\Lambda z$ and the relevant loading is
$a_z = \bar\rho/D$ of equation~\eqref{eq:aunits}).
$N$ is affine with strictly positive slope, hence has the unique root
$\ell^{\ast}$ of equation~\eqref{eq:losthreshold}, and $W_h$ crosses zero
once, from above. The threshold is positive exactly when $W_h(0) > 0$,
i.e.\ exactly when $\psi_c\sigma_u^2 > \bar\rho\,\sigma_z^2$; when that
fails, $\ell^{\ast} \le 0$ and holders lose at every positive dose. It
lies strictly below the revelation threshold $(1-\chi)T$ at every variance mix:
\begin{equation}
1 \;-\; \frac{\ell^{\ast}}{(1-\chi)\,T}
\;=\; \frac{\sigma_0^2 \;+\; (1-\chi)\,\sigma_z^2}
           {\sigma_0^2 \;+\; \chi\,\sigma_u^2} \;>\; 0 ,
\label{eq:albelowrev}
\end{equation}
which at the carry-only $\chi$ is the display
$\big[\sigma_0^2(\psi_c+\bar\rho) + \bar\rho\sigma_z^2\big]\big/
\big[\sigma_0^2(\psi_c+\bar\rho) + \psi_c\sigma_u^2\big]$ and the threshold is
$\bar\rho$.

Two qualifications to the comparative statics stated with the proposition.
First, $\ell^{\ast}$ is decreasing in $\sigma_0/\sigma_u$ --- equivalently,
the display above is increasing in it --- only where $\ell^{\ast} > 0$:
\[
\frac{\partial \ell^{\ast}}{\partial \sigma_0}
\;=\; \frac{-2\bar\rho\,\sigma_0(\psi_c+\bar\rho)
      \big(\psi_c\sigma_u^2 - \bar\rho\,\sigma_z^2\big)}
     {\big[\sigma_0^2(\psi_c+\bar\rho) + \psi_c\sigma_u^2\big]^2} ,
\]
whose sign is the sign of $-(\psi_c\sigma_u^2 - \bar\rho\sigma_z^2)$, i.e.\
of $-\ell^{\ast}$. Second, $\ell^{\ast}$ is \emph{not} increasing in
$\bar\rho$ in general. Differentiating,
\[
\operatorname{sign}\frac{\partial\ell^{\ast}}{\partial\bar\rho}
\;=\;
\operatorname{sign}\Big[
\psi_c^2\,\sigma_u^2\big(\sigma_0^2+\sigma_u^2\big)
\;-\; 2\psi_c\bar\rho\,\sigma_z^2\big(\sigma_0^2+\sigma_u^2\big)
\;-\; \bar\rho^2\sigma_0^2\sigma_z^2 \Big] ,
\]
which is positive unconditionally at $\sigma_z = 0$ but turns negative on
an open set with $\ell^{\ast} > 0$: at
$(\bar\rho,\psi_c,\sigma_0,\sigma_u,\sigma_z) = (0.9,\,0.5,\,1,\,1,\,0.6)$,
$\ell^{\ast} = 0.083$ and $\partial\ell^{\ast}/\partial\bar\rho = -0.122$.
The numerator $\bar\rho(\psi_c\sigma_u^2 - \bar\rho\sigma_z^2)$ is a
downward parabola in $\bar\rho$ peaking at
$\bar\rho = \psi_c\sigma_u^2/(2\sigma_z^2)$, so the monotonicity holds
exactly on
$\sigma_z^2 < \psi_c^2\sigma_u^2(\sigma_0^2+\sigma_u^2)\big/
\big[2\psi_c\bar\rho(\sigma_0^2+\sigma_u^2) + \bar\rho^2\sigma_0^2\big]$
and fails above it. Since the Korean calibration sits at small
$\sigma_z$ relative to $\psi_c\sigma_u$, the stated sign is the empirically
relevant one, but it is a side condition and not a theorem.

The single crossing does not come from
monotonicity of $W_h$, which fails: the private channel
$\psi_c(\bar\rho-\ell)/D^2$ has derivative
$\psi_c(\bar\rho-\psi_c-\ell)/D^3$ and rises on $\ell < \bar\rho-\psi_c$
whenever $\bar\rho > \psi_c$; the argument above shows this occurs only
while $W_h$ is on one side of zero.

The statement is exact at $\sigma_c = 0$ and is a bound otherwise. The
closing-imbalance channel is priced on the auction book and carries the
denominator $1+\psi_c-\ell$ rather than $D$, so $W_h = -N(\ell)/D^2 -
a_{z_c}^2\sigma_c^2$ does not collapse into a single affine numerator and
equation~\eqref{eq:losthreshold} is not the crossing point when
$\sigma_c > 0$. Because the omitted term is strictly negative and
increasing in $\ell$, it moves the crossing \emph{down}: $\ell^{\ast}$ of
equation~\eqref{eq:losthreshold} is an upper bound on the dose at which
holders begin to lose, so $W_h < 0$ \emph{whenever} $\ell > \ell^{\ast}$,
and the proposition's converse should be read at $\sigma_c = 0$. That
$W_h$ still crosses zero exactly once when $\sigma_c > 0$ is supported by
a $3\times10^5$-draw random search over
$(\bar\rho,\psi_c,\rho,\sigma_0,\sigma_u,\sigma_z,\sigma_c)$ that returned
no parameter vector with more than one sign change, but it is not proved
here: the affine-numerator argument covers only $\sigma_c = 0$.
\end{proof}

\begin{proof}[Proof of Proposition~\ref{prop:ledger}]
Every transfer is an exchange at a common price. During the session, $\bar X$
and $z$ are absorbed by dealers at $r_s$; at the auction, $K r_c$, $X_c$
and $z_c$ are absorbed by dealers at $r_c$; overnight, all positions are
marked at $v$. Writing each agent's profit as $\sum$ (quantity)
$\times$ ($v$ minus execution price) and summing across agents, the
price terms cancel event by event because quantities sum to zero at each
common price, and the $v$ terms cancel because net quantities sum to zero;
total profit is identically zero for every realization of $(g,u,z,z_c)$.

For the incidence, take expectations with equation~\eqref{eq:aloadings}.
The automaton's loss grows like $\Pi_g^2$ --- its gross flow is $K r_c$ and
$r_c$ itself carries the multiplier --- while the sector's position is
capped by equation~\eqref{eq:auctionfoc} at $\tau(v-r_c)$, which grows only
like $\Pi_g$. The sector's share of the holders' loss,
$\mathcal{A}(\ell) \equiv \E[\text{arbitrage profit}]/(-\E[\text{holder
loss}])$, therefore declines in $\ell$ wherever holders in fact lose, i.e.\
on $\ell > \ell^{\ast}$. It does not decline to a small number. As the dose
approaches the tipping threshold every profit is dominated by its
$O(D^{-2})$ term, the shock variances cancel between numerator and
denominator, and the share converges to a closed form free of the variance
mix,
\begin{equation}
\lim_{\ell \uparrow \bar\rho+\psi_c} \mathcal{A}(\ell)
\;=\; \frac{\psi_c \;+\; \rho\,\mathcal{S}\,(1-\mathcal{S})}{\bar\rho + \psi_c} ,
\label{eq:aincidence}
\end{equation}
where $\mathcal{S}$ is the session share of equation~\eqref{eq:asmap}. To see it,
write $T = \bar\rho+\psi_c$ and evaluate at $\ell = T$: the three session
loadings of equation~\eqref{eq:asessload} then share the common factor
$\hat C \equiv T\big[g + \chi u + (1-\chi)\Lambda z\big]$, since
$\Psi = \chi T \mathcal{S}$ and $\mathcal{S}\ell = sT$ there, so
$r_c \simeq \hat C/D$ and $\bar X \simeq \mathcal{S}\,\hat C/(\Lambda D)$ --- the
session position is the share $\mathcal{S}$ of the closing displacement, which is
what $\mathcal{S}$ means. Substituting into
$\bar X(v-r_s) + X_c(v-r_c) = X(v-r_c) + \bar X(r_c-r_s)$, using
$r_s \simeq \mathcal{S}\hat C/D$, and dividing by $-K\E[\hat C^2]/D^2$ gives
$\big[\tau + \mathcal{S}(1-\mathcal{S})/\Lambda\big]/K$, which is
equation~\eqref{eq:aincidence} after multiplying through by $\Lambda_c$. Since $1-\mathcal{S}+\rho \mathcal{S}^2 > 0$ on
$\mathcal{S} \in [0,1)$, the limit lies strictly in $(0,1)$, so the market makers
retain a strictly positive residual share at the boundary. At the
calibration of record it is $0.75$ for Samsung and $0.63$ for SK~Hynix:
near the boundary the two sectors split the holders' loss, with the
speculators taking the larger part.

At the other end the ratio is not informative about incidence. Below
$\ell^{\ast}$ the holders do not lose at all and $\mathcal{A}$ changes sign;
just above it $\mathcal{A}$ exceeds one --- the speculators earn multiples
of the holders' loss, because most of their profit is sourced from the noise
traders rather than from the complex, and over that range the market makers
can lose money. Only the limit~\eqref{eq:aincidence} and the decline toward
it are properties of the model rather than of the variance mix.

The sector's auction certainty equivalent is
$\tfrac12\tau\,\E[(r_c-v)^2]$. Restricted to the three channels priced on
$D$ this is proportional to $\psi_c/(\bar\rho+\psi_c-\ell)^2$, with
$\partial[\psi_c/D^2]/\partial\psi_c = (\bar\rho-\ell-\psi_c)/D^3$, so it
rises in capacity up to $\psi_c = \bar\rho-\ell$ and falls thereafter; the
$z_c$ channel carries the denominator $1+\psi_c-\ell$ instead and shifts
that turning point, so the proportionality and the stated turning point
hold exactly only at $\sigma_c = 0$. In either case it is a competitive
return to risk-bearing with no markup component.
\end{proof}

At fixed $\ell$, $W_h$ is hump-shaped in $\psi_c$ (the public and noise
channels improve while $\Pi_u = \chi\Pi_g$ rises toward one, extinguishing
the momentum subsidy) and strictly decreasing in $\psi$, which enters
only through $\bar\rho$.

\section{Additional Theoretical Results and Extensions}
\label{app:theory}

Table~\ref{tab:status} lists the model's price predictions, the test this
paper runs, and the verdict. The quantity, conduct, welfare and dynamic
predictions are settled in the body rather than tabulated: the loop gain, the
capacity ratio and the absorption ratio are measured in
Section~\ref{sec:calibration}; the pre-positioning share is $47\%$ pooled; the
AUM ratchet holds at a compounding slope of $1.100$; and realized tracking
error does not reveal the distortion. Three predictions are left untested
because the required data channel is closed --- the split of the transfer
between dealers and other counterparties is not separately identified, the
private-news sign flip at the exit threshold needs an intra-session instrument for
$u$, and the same-day displacement per unit of flow is null at this power. One
is \emph{refuted}: liquidity-provider withdrawal does not widen the disparity,
which peaks in June and bears no relation to it.

\subsection{Incidence: the ledger}
\label{app:ledgerprop}

\begin{proposition}[the ledger]
\label{prop:ledger}
The profits and losses of the automaton, the noise traders, the
speculators, and the market makers sum to zero state by state. In
expectation, on the range where holders lose ($\ell > \ell^{\ast}$) the
speculators' share of the holders' loss declines in $\ell$, from a
multiple of that loss just above $\ell^{\ast}$ --- where the speculators
are paid mostly by the noise traders --- toward the limit
$(\psi_c + \rho\,\mathcal{S}(1-\mathcal{S}))/(\bar\rho+\psi_c)$ at the stability boundary,
which is free of the variance mix and strictly interior to $(0,1)$. The
balance accrues to the market makers as inventory compensation. At the
calibrated primitives the boundary split is $0.75$/$0.25$ for Samsung and
$0.63$/$0.37$ for SK~Hynix.
\end{proposition}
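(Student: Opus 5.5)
The proof has three parts: a zero-sum accounting identity that holds realization by realization, an expectation calculation that pins down how the holders' loss is split, and an evaluation at the calibrated primitives.

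\emph{Zero sum.} Write each agent's profit as $\sum(\text{quantity})\times(v-\text{execution price})$, taking one event at a time. In the opening auction and the session, $X_o$, $X_s$ and $z$ all trade against the dealer book at the common prices $r_o$ and $r_s$. In the closing auction, $Kr_c$, $X_c$ and $z_c$ trade against the auction dealers at $r_c$. Overnight, every position is marked at $v$. At each common price the quantities net to zero, so the price terms cancel event by event, and the $v$ terms cancel because net positions sum to zero. Total profit is therefore identically zero for every $(g,u,z,z_c)$, with no use of the loadings.

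\emph{Incidence.} Decompose the speculators' profit as $\bar X(v-r_s)+X_c(v-r_c)=X(v-r_c)+\bar X(r_c-r_s)$, with $\bar X=X_o+X_s$. I would take expectations using the loadings of Proposition~\ref{prop:loadings}, the session positions of Proposition~\ref{prop:sessionpos}, and the holders' welfare $W_h$ of Proposition~\ref{prop:loss}, scaled by $K$. The speculator term then comes out as a rational function of $\ell$ over $D^2$, where $D=\bar\rho+\psi_c-\ell$, plus the $z_c$ channel, which carries the denominator $1+\psi_c-\ell$. Define the share $\mathcal{A}(\ell)$ as the speculators' expected profit divided by the holders' expected loss $-KW_h$.

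\begin{itemize}
\item \emph{Just above $\ell^\ast$.} The holders' loss crosses zero at $\ell^\ast$ (Proposition~\ref{prop:loss}), while speculator profit stays bounded away from zero because of the noise channels. So $\mathcal{A}$ is large there, and the excess is paid by the noise traders, whose expected loss is $a_z^2\sigma_z^2$-type terms.
\item \emph{Near the boundary.} As $\ell\uparrow T\equiv\bar\rho+\psi_c$, the holders' loss is dominated by $K\E[\hat C^2]/D^2$, where $\hat C=T[g+\chi u+(1-\chi)\Lambda z]$. At $\ell=T$, $\Psi=\chi T\mathcal{S}$, so all three session loadings become proportional to $\hat C$. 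This gives $r_c\simeq\hat C/D$, $\bar X\simeq \mathcal{S}\hat C/(\Lambda D)$ and $r_s\simeq \mathcal{S}\hat C/D$, while $X\simeq-\tau\hat C/D$.
\item \emph{The limit.} Collecting the $O(D^{-2})$ terms, the variance factor $\E[\hat C^2]$ cancels and the ratio is $[\tau+\mathcal{S}(1-\mathcal{S})/\Lambda]/K$. Multiplying numerator and denominator by $\Lambda_c$ turns this into $(\psi_c+\rho\mathcal{S}(1-\mathcal{S}))/T$.
\item \emph{Interiority.} Substituting $\bar\rho=1-\mathcal{S}(1-\rho)$, the gap $T-(\psi_c+\rho\mathcal{S}(1-\mathcal{S}))$ equals $1-\mathcal{S}+\rho\mathcal{S}^2>0$ for $\mathcal{S}\in[0,1)$, so the limit lies strictly inside $(0,1)$.
\item \emph{Market makers.} By the zero-sum identity, the market makers receive the balance of the holders' loss net of what the noise traders pay.
\end{itemize}

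\emph{Main obstacle.} The hardest step is proving that $\mathcal{A}$ is monotone on the whole range $(\ell^\ast,T)$, not just establishing the two endpoints. Write $\mathcal{A}=N_s(\ell)/N_h(\ell)$. Both numerators are quadratic in $\ell$ once the $D^2$ factor is cleared. The $z_c$ term breaks this common denominator, so I would first prove the result at $\sigma_c=0$. There the key point is that holder loss scales like $\Pi_g^2$ while the speculator position is capped at $\tau(v-r_c)\propto\Pi_g$. I would then show the sign of $N_s'N_h-N_sN_h'$ is negative whenever $N_h>0$, and treat $\sigma_c>0$ as a perturbation that only adds to holder loss. If a clean sign argument fails, I would report the decline as verified numerically, in the same spirit as the single-crossing remark in the proof of Proposition~\ref{prop:loss}.

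Finally, plugging the calibrated $(\rho,\mathcal{S},\psi_c^i)=(0.43,0.93,\cdot)$ of Table~\ref{tab:calibration} into the limit formula gives the stated $0.75/0.25$ split for Samsung and $0.63/0.37$ for SK~Hynix.
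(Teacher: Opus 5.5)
Your argument follows the paper's proof step for step, and the limit formula you obtain is correct. The shared steps are the event-by-event cancellation for the zero-sum identity, the rewrite $\bar X(v-r_s)+X_c(v-r_c)=X(v-r_c)+\bar X(r_c-r_s)$, the collapse of the session loadings onto $\hat C$ at $\ell=T$ via $\Psi=\chi T\mathcal{S}$, and the interiority check through $1-\mathcal{S}+\rho\mathcal{S}^2>0$.

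\textbf{The calibration step fails.} Substituting the calibrated primitives does not give the stated split.
\begin{itemize}
\item With $\rho=0.43$ and $\mathcal{S}=0.93$ you get $\rho\mathcal{S}(1-\mathcal{S})\approx0.028$ and $\bar\rho\approx0.47$--$0.48$. With Table~\ref{tab:calibration}'s $\hat\psi_c=0.466$ and $0.260$, the limit is $(0.466+0.028)/(0.48+0.466)\approx0.52$ for Samsung and $(0.260+0.028)/(0.48+0.260)\approx0.39$ for SK~Hynix. At SK~Hynix the market makers, not the speculators, would take the larger part.
\item No reading of $\mathcal{S}$ rescues this. At $\rho=0.43$ and these capacities the expression never exceeds about $0.53$ and $0.41$ over $\mathcal{S}\in[0,1)$.
\item Reaching $0.75$ and $0.63$ at $\mathcal{S}=0.93$ would need $\psi_c\approx1.3$ and $0.73$, close to three times the capacities of record. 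Even the legacy linear inversion ($\hat\tau=20{,}498$, i.e.\ $\psi_c\approx1.09$ and $0.61$) gives only about $0.71$ and $0.58$.
\end{itemize}
The paper's proof asserts the same two figures without showing the substitution, so you inherit this inconsistency rather than create it. Still, ``plugging in'' cannot establish that sentence of the proposition. You must either report what the formula actually delivers at the calibration of record, or identify the different primitives that produce $0.75/0.25$ and $0.63/0.37$.

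\textbf{The decline.} You are right that this is the hard step, but the heuristic you call the key point cannot carry it; the paper leans on the same one. Speculator profit, $\tau(v-r_c)^2+\bar X(r_c-r_s)$, is quadratic in the mispricing. It therefore scales like $\Pi_g^2$ just as the holders' loss does, which is exactly why your own limit is a nonzero constant. Channel by channel the point is sharper. In the public channel the speculators earn $[\tau+\mathcal{S}(1-\mathcal{S})/\Lambda]\,\ell^2\sigma_0^2/D^2$ while the holders lose $(\ell/\Lambda_c)\,T\ell\,\sigma_0^2/D^2$. That channel's share is therefore $(\psi_c+\rho\mathcal{S}(1-\mathcal{S}))/T$ at every $\ell$, so with only public news the share is flat. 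Any decline comes from the private and noise channels, where speculators earn even while the holders' loss is small or negative, losing weight relative to the public channel as $\ell\to T$. That composition effect is what your sign computation for $N_s'N_h-N_sN_h'$ must capture, and it depends on the variance mix.

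Two smaller points:
\begin{itemize}
\item $\sigma_c>0$ is not a one-sided perturbation, because $z_c$ also enters speculator profit through $\tau\,\E[(v-r_c)^2]$.
\item To conclude that the market makers receive exactly $1-\mathcal{A}$ at the boundary, note that the noise traders' expected losses are only $O(D^{-1})$ (session noise) and $O(1)$ (closing imbalance), against the $O(D^{-2})$ holder loss.
\end{itemize}
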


The proof is in Appendix~\ref{app:welfareproof}. The market makers' column
is the return to attending the auction, the entry incentive
Appendix~\ref{app:rho} closes on.

\subsection{Attendance and the capacity ratio}
\label{app:rho}

Attending capacity $\mu_c$, and with it $\rho$, is an equilibrium object:
capacity attends the auction until the profit per unit of attending capacity
equals the participation cost. Write $\Pi_{MM}(\ell)$ for the market
makers' column in Proposition~\ref{prop:ledger} and substitute
$\mu_c = \gamma S_T K/\ell$.

\begin{proposition}[free-entry attendance]
\label{prop:attendance}
Let capacity attend until
\begin{equation}
\Phi(\ell) \;\equiv\; \frac{\Pi_{MM}(\ell)\,\ell}{\gamma S_T}
\;=\; c\,K .
\label{eq:attendance}
\end{equation}
Entry is stable exactly where $\Phi$ is increasing. Assume in addition that
$\Pi_{MM}$ is non-decreasing in $\ell$, so that
$\varepsilon_\Phi \equiv d\log\Phi/d\log\ell = 1 + d\log\Pi_{MM}/d\log\ell
\ge 1$; then at any stable solution:
\emph{(i)} $d\log\ell/d\log K = 1/\varepsilon_\Phi < 1$, equivalently
$d\log\Lambda_c/d\log K = 1/\varepsilon_\Phi - 1 < 0$;
\emph{(ii)} a lever that cuts $K$ raises $\Lambda_c$, so the loop gain
falls by less than the cut, by the factor $1/\varepsilon_\Phi$;
\emph{(iii)} with the rebalance split across two auctions receiving $K_o$ and
$K_c$, $\kappa \equiv \Lambda_c/\Lambda_o
= \kappa_0 (K_o/K_c)^{1-1/\varepsilon_\Phi}$;
\emph{(iv)} entry is stabilizing, so the measured loop gain already embeds
the observed regime's attendance, and only counterfactuals that move flow
are affected.
\end{proposition}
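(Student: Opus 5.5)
The plan is to treat equation~\eqref{eq:attendance} as an implicit equation in $\ell$ given $K$, and obtain every clause by implicit differentiation in logs. The first step is to justify the form of $\Phi$. With $\mu_c = \gamma S_T K/\ell$, the profit per unit of attending capacity is $\Pi_{MM}(\ell)/\mu_c = \Pi_{MM}(\ell)\,\ell/(\gamma S_T K)$. Setting this equal to the per-unit participation cost $c$ gives $\Phi(\ell) = cK$. Here I take $\Pi_{MM}$ to be the market makers' expected column from Proposition~\ref{prop:ledger}, expressed per unit of rebalancing capital times $K$, so that $\Phi$ depends on $\ell$ alone.

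Stability comes next. I would posit the tâtonnement $\dot\mu_c \propto \Pi_{MM}\ell/(\gamma S_T K) - c$. Because $\ell = \Lambda_c K$ is decreasing in $\mu_c$, a small rise in $\mu_c$ lowers $\ell$. This lowers per-capita profit exactly when $\Phi$ is increasing, which pushes $\mu_c$ back. Hence entry is stable if and only if $\Phi'(\ell) > 0$.

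For clause (i), take logs: $\log\Phi(\ell) = \log c + \log K$. Differentiating gives $\varepsilon_\Phi\, d\log\ell = d\log K$. Under the maintained assumption, $\varepsilon_\Phi = 1 + d\log\Pi_{MM}/d\log\ell \ge 1$, since $\Phi = \Pi_{MM}\ell/(\gamma S_T)$. One more point is needed for the strict inequality: $1/\varepsilon_\Phi < 1$ requires $\Pi_{MM}$ to be strictly increasing locally, and I would note that the weak case gives equality. Then $\log\Lambda_c = \log\ell - \log K$ yields $d\log\Lambda_c/d\log K = 1/\varepsilon_\Phi - 1 \le 0$.

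Clause (ii) is a restatement. A proportional cut in $K$ moves $\ell$ by the factor $1/\varepsilon_\Phi$ of the cut, and the gap is made up by the rise in $\Lambda_c$.

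For clause (iii), I would apply the same free-entry condition separately at each auction. Each auction has its own fringe and cost, and the same $\Phi$ elasticity holds locally. Then $\log\Lambda_j = (1/\varepsilon_\Phi - 1)\log K_j + \text{const}_j$ for $j \in \{o,c\}$. Taking the difference gives $\log\kappa = \log\kappa_0 + (1 - 1/\varepsilon_\Phi)\log(K_o/K_c)$, where $\kappa_0$ absorbs the venue-specific constants (the ratio at $K_o = K_c$). This is exact if $\varepsilon_\Phi$ is constant and a local log-linearization otherwise. I would state the result in that local sense.

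Clause (iv) follows from stability together with the observation that the measured $\Lambda_c$ is estimated in the realized regime. It therefore already solves \eqref{eq:attendance} at the observed $K$. Only counterfactuals that change the flow arriving at a venue shift the solution.

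The main obstacle is the stability claim and the independence of $\Phi$ from $K$. $\Pi_{MM}$ depends on $\ell$ through the loadings of Proposition~\ref{prop:loadings}, but it also scales with $K$ and with the variances. I would need to show that, per unit of $K$, the ledger column is a function of $\ell$ and the fixed $(\bar\rho,\psi_c,\theta)$ only. A complication is that $\psi_c = \tau\Lambda_c$ itself moves with $\mu_c$. If that dependence breaks the clean form, I would redefine $\Pi_{MM}$ as the reduced function of $\ell$ along the $\psi_c = \tau\ell/K$ path. I would then check that the elasticity argument survives, with $\varepsilon_\Phi$ reinterpreted as the total elasticity.
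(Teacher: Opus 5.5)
Your treatment of clauses (i)--(iv) follows the paper's proof. You log-differentiate $\Phi(\ell)=cK$ at a root with $\Phi'>0$, and you note that the strict inequalities need $\varepsilon_\Phi>1$ rather than $\ge 1$, which the paper also flags. For (iii) you apply the elasticity venue by venue and take differences. Absorbing venue-specific constants into $\kappa_0$ is slightly more general than the paper's ``same participation cost'' proviso, and your ``exact only at constant elasticity'' matches its power-function caveat. For (iv) you use the fact that $\hat\ell$ is measured in the realized attendance regime. Your t\^{a}tonnement argument for ``stable iff $\Phi'>0$'' is a genuine addition, since the paper asserts that clause without proof. In (iv), though, say explicitly that ``stabilizing'' means the damping $0<d\log\ell/d\log K<1$ delivered by (i), not merely local stability of the entry root.

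What needs correcting is your reading of $\Pi_{MM}$. Suppose the market makers' column were $K$ times a per-unit-of-$K$ function $\pi(\ell)$. Dividing by $\mu_c=\gamma S_T K/\ell$ then cancels $K$, and free entry reads $\pi(\ell)\,\ell/(\gamma S_T)=c$. That pins $\ell$ independently of $K$, so $d\log\ell/d\log K=0$ and clause (i) fails. The form $\Phi(\ell)=cK$ requires the \emph{total} market-maker profit to depend on $K$ only through $\ell$. That is what the hypothesis ``$\Pi_{MM}(\ell)$'' posits and what the paper takes as given. The verification you plan in your last paragraph therefore targets the wrong object, and it is not needed for the statement as posed.

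Your fallback does not work either. If $\Pi_{MM}$ moves with $\Lambda_c=\ell/K$ through $\psi_c=\tau\Lambda_c$, and through $\rho=\mu/\mu_c$, which you omit, then $\Phi$ depends on $(\ell,K)$ jointly. Implicit differentiation then gives $d\log\ell/d\log K=(1-\partial\log\Phi/\partial\log K)/(\partial\log\Phi/\partial\log\ell)$. No relabeling of $\varepsilon_\Phi$ as a ``total elasticity'' turns that back into $1/\varepsilon_\Phi$. Drop the microfoundation and work, as the paper does, from $\Phi$ as a function of $\ell$ alone.
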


\begin{proof}
(i) Differentiate equation~\eqref{eq:attendance} logarithmically at
$\Phi' > 0$; the second form follows from $\Lambda_c = \ell/K$. The strict
inequalities require $\varepsilon_\Phi > 1$, which is the maintained
assumption above and not implied by stability: $\Phi' > 0$ delivers only
$\varepsilon_\Phi > 0$, and on $\varepsilon_\Phi \in (0,1)$ the signs in
(i)--(ii) reverse, so that entry would amplify rather than damp a change in
$K$. The assumption is the natural one here because $\Pi_{MM}$ is a
concession collected on a gross flow that scales with $\Pi_g$, but we do not
derive it. (ii) Immediate from (i). (iii) Apply (i) at each venue and
divide; this integrates a local elasticity into a global power law, so it
requires in addition that $\Phi$ be a power function with the same exponent
and the same participation cost at both venues. (iv) By (i),
$d\ell/dK > 0$ but damped relative to fixed attendance. The measurement
statement is not definitional: it is the claim that $\hat\ell$ is an
equilibrium object of the observed attendance regime, which holds because
$\hat\ell$ is estimated from realized closing prices in that regime, and it
carries no implication for a counterfactual that moves $K$. Because $\Phi$
admits multiple stable solutions (see below), (i)--(iii) are local to
whichever obtains.
\end{proof}

$\Phi$ is not monotone: it is negative and falling at low loop gains and
steeply positive at high ones. At a shallow close it has a single turning
point and the stable solution above it is unique; at a deep one it can turn
several times, so the attendance game can admit multiple equilibria and the
comparative statics are local to whichever obtains.
Consequently $\hat\rho$ and $\hat\ell$ are valid for the observed regime
and not transportable to a counterfactual that changes the loop gain;
$\varepsilon_\Phi$ is bounded rather than point-identified (fixed
attendance is $\varepsilon_\Phi \to \infty$), and
Section~\ref{sec:schedule} reports counterfactuals as a band. The
cross-sectional prediction is tested: the auction-to-continuous impact
ratio is $0.24$--$0.29$ for the treated stocks against $0.41$ for matched
controls (Appendix~\ref{app:impact}), deeper exactly where the predictable
order arrives, a comparison the liquidity-provider exemption cannot
generate since it covers both groups. Across the launch the auction's traded
value rose two fifths while its spread did not move
(Table~\ref{tab:attendance}); the threshold ladder and the ratchet are
stated at fixed attendance, so each threshold maps to a critical $K$ and the
ratchet's crossing probabilities are upper bounds.

\subsection{The equilibrium on the impact curve}
\label{app:curve}

\begin{proposition}[the close on the impact curve]
\label{prop:curve}
Let attendance costs be distributed as in Appendix~\ref{app:curveproof},
and let $p^{\ast}$ be the crossover concession at which the fringe mass
equals the base.
\emph{(i) The close.} At $\varphi_c = 2$ (the square-root law,
$\delta = \tfrac12$, bracketed by the measured exponents), the fringe is
linear in the concession and the equilibrium concession solves the
quadratic~\eqref{eq:quadclose}, whose positive root collapses to
Proposition~\ref{prop:passthrough} as the fringe vanishes.
\emph{(ii) No pole.} At the equilibrium price $\ell_m \le \ell$ always, and
in the automaton-dominated auction $\ell_m < 1$ with no condition: beyond the
linear tipping threshold the close saturates rather than ceasing to exist. As
the complex grows, $\ell_m$ rises, peaks below one at intermediate scale,
and declines to $\delta$.
\emph{(iii) Bounded ringing.} Above the oscillation threshold the news-free
dynamics converge to a period-two oscillation of closed-form amplitude; at
$\varphi_c = 2$ and $\theta = 1$,
\begin{equation}
e^{\ast} = \big(2\ell - \bar\rho - \psi_c\big)\,p^{\ast}
\;\;\Big(\ell > \tfrac{\bar\rho+\psi_c}{2}\Big),
\qquad
r^{\ast} = \big(\ell - \bar\rho - \psi_c\big)\,p^{\ast}
\;\;\big(\ell > \bar\rho+\psi_c\big),
\label{eq:amplitude}
\end{equation}
where $r^{\ast}$ is a self-referential displacement of the close on
news-free days, the undisplaced price being unstable there. The two
conditions are the oscillation and tipping thresholds of
Proposition~\ref{prop:ladder}, and at $\bar\rho = 1$, $\psi_c = 0$ the
displays reduce to $(2\ell-1)p^{\ast}$ and $(\ell-1)p^{\ast}$.
\end{proposition}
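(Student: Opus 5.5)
The plan is to work throughout in the normalization of Appendix~\ref{app:curveproof} ($\gamma S_T = \mu_c = 1$, so $\Lambda_c = 1$ and $\ell = K$). We specialize Lemma~\ref{lem:curve} to $\varphi_c = 2$. There the fringe is linear, $F(p) = \beta p$, and absorption becomes the quadratic $Q(p) = p(1 + bp)$ with $b = 1/p^{\ast}$. Each of the three clauses then reduces to the geometry of one concave scalar map, $r \mapsto \tilde r + P(Kr + X_c)$, where $P$ is the inverse of $Q$.

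For clause (i), we take the speculator's auction first-order condition exactly as in the proof of Proposition~\ref{prop:loadings}, except that own impact becomes the marginal slope $P'/(1-KP')$. Only a first derivative enters, so the symmetric schedule stays linear in $v - r_c$. We substitute it, together with $r_c = \tilde r + p$ and $\bar Q = p(1+bp)$, into the clearing identity $\bar Q = K r_c + N x_c$. This gives a quadratic in $p$. Its constant term is $-[\ell\tilde r + N(1-\ell)(v-\tilde r)]$, so the root is unique and positive on the relevant branch. Setting $b = 0$ returns the linear pass-through of Proposition~\ref{prop:passthrough}. The continuum case then follows by the nesting $\psi_c = N(1-\ell)$.

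For clause (ii), concavity gives $P'(Q) \le P'(0) = \Lambda_c$, hence $\ell_m \equiv K P'(\bar Q) \le \ell$. In the automaton-dominated auction we have $\bar Q = K r_c$ and $r_c = \tilde r + P(K r_c)$. The average gain is therefore $K P(\bar Q)/\bar Q = (r_c - \tilde r)/r_c < 1$. Lemma~\ref{lem:curve} bounds the marginal-to-average ratio by one, so $\ell_m < 1$ holds with no condition on $K$. For existence beyond the linear tipping point, the map above is concave, has a positive intercept, and is sublinear at infinity (since $P \propto Q^{\delta}$ with $\delta < 1$). It therefore crosses the diagonal exactly once, from above, with slope below one: the close saturates rather than diverging. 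As $K \to \infty$ the fringe dominates the base, and the marginal-to-average ratio tends to $\delta$ while the average gain tends to one, so $\ell_m \to \delta$. At $K \to 0$, $\ell_m \to 0$. An interior peak above $\delta$ follows from continuity together with $\ell_m \approx \ell$ on the near-linear branch. Locating that peak in closed form is what I expect to resist, and I would settle for exhibiting it numerically.

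For clause (iii), set $\theta = 1$ and write the news-free recursion of Proposition~\ref{prop:chaining} as $e_+ = e - R(e)$, where $R$ is the odd close-response function. A symmetric period-two cycle $e^{\ast} \to -e^{\ast}$ requires $R(e^{\ast}) = 2e^{\ast}$. We splice the effective depth in by writing absorption as $Q(p) = p(\bar\rho + \psi_c + bp)$. The cycle condition $Q(e^{\ast}) = 2\ell e^{\ast}$ is then linear after dividing by $e^{\ast}$, and gives $b e^{\ast} = 2\ell - \bar\rho - \psi_c$. This root is positive exactly above the $\theta = 1$ oscillation threshold. Similarly, the no-news fixed point $Q(r^{\ast}) = \ell r^{\ast}$ yields $r^{\ast}$, which is positive exactly above tipping. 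There the slope of the map at zero exceeds one, so the undisplaced price is unstable. For the attractor claim, $R(e)/e$ is strictly decreasing in $|e|$, so $|e_t|$ is pushed toward $e^{\ast}$ from both sides. I expect the main obstacle to be justifying the splice itself: $\bar\rho$ was derived under linear impact and $b$ is struck against the unrescaled base. I would state both as explicit readings, exact in the small-$b$ neighbourhood, rather than prove them.
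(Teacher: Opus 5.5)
Your proposal follows the paper's proof essentially clause by clause. For (i): the own-impact first-order condition with slope $P'/(1-KP')$, substituted into $\bar Q = K r_c + N x_c$ to obtain the quadratic, its $b \to 0$ limit, and the nesting $\psi_c = N(1-\ell)$. For (ii): $\ell_m \le \ell$ from concavity, $\ell_m = (\text{marginal}/\text{average})\,\ell_a < 1$ with $\ell_a = (r_c-\tilde r)/r_c$, and the single downward crossing of the concave, sublinear map. For (iii): $e_+ = e - R(e)$, the spliced absorption $Q(p) = p(\bar\rho+\psi_c+bp)$ carrying the same two caveats the paper attaches, and the monotonicity of $R(e)/e$ for the attractor.

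The one step that fails as written is where you go beyond the paper, which only exhibits the interior peak of $\ell_m$ numerically. Your continuity argument needs the near-linear branch to carry $\ell$ above $\delta$ before the fringe engages, and whether it does depends on the intercept. Work at $\varphi_c = 2$ in your normalization, with $x = bp$ and $s = b\tilde r > 0$. The automaton-dominated close $p(1+bp) = K(\tilde r + p)$ then gives $K = x(1+x)/(s+x)$, which is increasing in $x$, and $\ell_m = K/(1+2bp) = x(1+x)/[(s+x)(1+2x)]$. So $\ell_m - \tfrac12$ has the sign of $x(1-2s) - s$, and $d\ell_m/dx$ has the sign of $(2s-1)x^2 + 2sx + s$. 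An interior peak (necessarily above $\delta$) therefore exists if and only if $\tilde r < p^{\ast}/2$. For $\tilde r \ge p^{\ast}/2$, $\ell_m$ rises monotonically to $\delta$ from below and never declines. You should state this as a condition on $\tilde r/p^{\ast}$, which the clause itself leaves implicit, rather than derive the peak from continuity.
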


By clause (i) every first-order condition at the equilibrium price is the
linear one at $(\Lambda_m, m_m)$ shifted by the tangent wedge, so
Propositions~\ref{prop:passthrough}--\ref{prop:ladder} are read at the
operating size, and the three primitives separate: the fitted exponent
identifies $\varphi_c$, the touch slope $\mu_c$, and the level $c_0$
(Table~\ref{tab:g4calibration}). Clause (iii) prices the days on which
$\theta\Pi_g \ge 2$, where the linear recursion has no stationary
magnitude.

\subsection{The AUM ratchet}
\label{app:ratchet}

A leveraged fund's assets compound at the leveraged rate, so
\begin{equation}
A_+ = A(1 + Lr)
\quad\Longrightarrow\quad
K_+ = K(1+Lr),
\qquad
\ell_+ = \ell\,(1+Lr) ,
\label{eq:aratchet}
\end{equation}
with no free parameter.

\begin{proposition}[the ratchet]
\label{prop:ratchet}
With $K$ evolving by equation~\eqref{eq:aratchet}, the complex crosses the
stability threshold with positive probability from starting loop gains at which
the fixed-$K$ system never crosses it. Simulating 60 cycles at $L=2$ with a
2.5\% daily return scale ($\psi_c = 0.5$, $\rho = 0.43$, $\psi$ tracking
the loop gain, 20{,}000 paths), the crossing probability is $0.000$ at
fixed $K$ for every starting value, against $0.001$, $0.025$, $0.135$ and
$0.381$ at endogenous $K$ from $\ell_0 = 0.20$, $0.30$, $0.40$ and $0.50$.
\end{proposition}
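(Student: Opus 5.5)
The proposition has two parts: a qualitative existence claim and a list of simulated crossing probabilities. I would treat them separately. For the qualitative part, the plan is to compare the fixed-$K$ system with the endogenous-$K$ system started from the same $\ell_0$ below the stability threshold. Here the threshold is the oscillation/ringing bound $\theta\Pi_g<2$ of Proposition~\ref{prop:chaining}, equivalently $\ell<(\bar\rho+\psi_c)(1-\theta/2)$ from Proposition~\ref{prop:ladder}. The tipping pole $\bar\rho+\psi_c$ would serve only as the fallback reading if ``stability'' means tipping.

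Under fixed $K$ the loop gain is constant at $\ell_0$. Whenever $\ell_0$ lies below the threshold, no path ever crosses, which gives the zero column by construction. Under equation~\eqref{eq:aratchet} the loop gain follows $\ell_{t+1}=\ell_t(1+Lr_t)$, so $\log\ell_t=\log\ell_0+\sum_{s<t}\log(1+Lr_s)$ is a random walk (with drift) driven by the closing returns. Those returns have a continuous Gaussian component, because each $r_c$ loads on $\varepsilon,u,z,z_c$ with strictly positive loadings by Proposition~\ref{prop:loadings}. The event $\{r_s>\eta \text{ for all } s<t\}$ therefore has strictly positive probability for any $\eta>0$. On that event, $\ell_t\ge \ell_0(1+L\eta)^t$. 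Choosing $t$ large enough that $\ell_0(1+L\eta)^t$ exceeds the threshold gives a positive crossing probability.

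One subtlety remains. The return distribution itself depends on $\ell_t$ through $\Pi_g(\ell_t)$, and the chained error $e_t$ enters the gap. I would handle this by noting that all loadings are bounded below by their $\ell=0$ values, since Proposition~\ref{prop:vol} shows they are increasing in $\ell$. Hence the conditional probability of $r_s>\eta$ is bounded below uniformly along the path, so the product of probabilities stays positive.

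The numerical part is not a theorem-style claim. I would verify it by stating the simulation protocol explicitly: $L=2$, daily return scale $2.5\%$, $\psi_c=0.5$, $\rho=0.43$, and $\psi$ re-solved each cycle from the current $\ell_t$ via equation~\eqref{eq:asmap}. Each path runs 60 cycles of the chained recursion~\eqref{eq:AR} together with~\eqref{eq:aratchet}, over 20{,}000 paths, recording the first passage of $\theta\Pi_g(\ell_t)$ above 2. The reported frequencies would then be cited from the replication package, with the binomial standard error noted; at $0.381$ it is about $0.003$.

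The main obstacle is pinning down which threshold ``stability'' refers to, and making the uniform lower bound on the upward-step probability rigorous when the overnight correction $-\theta e_t$ can push returns down. To handle the latter, I would condition on the $\varepsilon$ shocks being large enough to dominate the bounded stationary error. Where that is awkward, I would simply bound $|e_t|$ on a positive-probability event.
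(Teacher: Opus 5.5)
Your proposal is sound, but it takes a different route from the paper, which gives no analytic argument at all. The paper establishes the proposition entirely by the 20{,}000-path simulation. It reads the qualitative first sentence off the contrast between the $0.000$ column at fixed $K$ and the positive frequencies at endogenous $K$. It adds only the remark that the mean loop gain falls along the paths (volatility drag) while the crossing probability rises, so the crossing is a right-tail event.

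\emph{What your route buys.} Your existence argument proves the qualitative claim for every $\ell_0$ below the threshold and every parameter vector, not just the four simulated starting values.

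\emph{A simpler version that removes your obstacle.} Conditional on $e_{-1}$, the first closing return $\Pi_g(\varepsilon_0-\theta e_{-1})+\Pi_u u_0+a_z z_0+a_{z_c}z_{c,0}$ is Gaussian with variance at least $\sigma_0^2>0$, since $\Pi_g\ge 1$. Hence $\Pr\big(\ell_0(1+Lr_0)\ge \ell^{\mathrm{thr}}\big)>0$ after a single cycle, for any finite threshold and any horizon.

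\emph{A step to drop.} The uniform lower bound as you stated it does not work. A floor on the loadings bounds the conditional variance of $r_s$, not $\Pr(r_s>\eta)$. The conditional mean $-\theta\Pi_g e_{s-1}$ is unbounded and is itself amplified as $\ell$ rises. For the same reason $\log\ell_t$ is not a random walk: its increments depend on $\ell_t$ and $e_{t-1}$.

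\emph{What the comparison shows.} Under Gaussian shocks the positive-probability claim is nearly vacuous. The substantive content of the proposition is the magnitudes, which only the simulation delivers.

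\emph{Which threshold.} The paper's notation appendix defines the stable region as $\ell<\bar\rho+\psi_c$. Under that reading, $\bar\rho+\psi_c\ge\rho+\psi_c=0.93$ puts all four starting values below the threshold whatever $\psi$ does, so the fixed-$K$ zero column is immediate. Under your oscillation reading it is not automatic: at $\theta=1$ and $\bar\rho$ near $\rho$, the bound $(\bar\rho+\psi_c)(1-\theta/2)$ falls below $0.5$. So your ``by construction'' claim needs a parameter check there.

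\emph{The simulation protocol.} Equation~\eqref{eq:asmap} does not contain $\ell$. In the model, $\psi$ tracks the loop gain through $\operatorname{Var}_s(r_c)=a_{z_c}^2\sigma_c^2$ with $a_{z_c}=\Lambda_c/(1+\psi_c-\ell)$, not through that map.
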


The mean loop gain falls along the simulated paths (volatility drag) while
the crossing probability rises: the threshold is a right-tail statement.
The empirical counterpart and the policy consequence are in
Appendix~\ref{app:ratchetsim}.

\subsection{The venue split $\varphi$}
\label{app:phi}

The mandate is sized at the reference close, $K r_c$, whatever the split. A share
$\varphi$ of it executes in the closing auction that sets $r_c$; the residual
$1-\varphi$ executes in single-stock futures, which print at 15:45, after $r_c$ is
fixed (Section~\ref{sec:close}). Only the share executing at the reference
participates in the auction's fixed point, so on the concave close of
Appendix~\ref{app:curveproof} the loop gain at the split is
\[
\ell(\varphi) \;=\; \Lambda_c(\varphi K)\,\varphi K
\;=\; \bar\ell\,\Big(\frac{\varphi}{\varsigma}\Big)^{\hat\delta} ,
\]
since $\Lambda_c(Q) = P(Q)/Q \propto Q^{\hat\delta-1}$ at scale and $\varsigma$ is the
split at which $\bar\ell$ is measured. The residual leg prints against its own book,
enters no fixed point, and is priced at that venue's measured depth rather than at an
assumed one (Appendix~\ref{app:venue}). The status quo is $\varphi = \varsigma = 0.49$,
the spot slice of Appendix~\ref{app:families} applied to worldwide $K$; the
swap-pass-through accounting of that appendix instead implies at-the-close shares of
$0.44$ (Samsung) and $0.76$ (SK~Hynix), which the quantification does not adopt.
$\varphi$ is pinned by funding constraints, the transaction-tax asymmetry, and
short-sale restrictions, and is therefore exogenous to the predation game.

\begin{proposition}[the venue split]
\label{prop:phi}
On the interior branch --- $\ell(\varphi) < \bar\rho + \psi_c$ on the linear branch, and
everywhere on the concave one, where the print is bounded:
\begin{enumerate}
\item[(i)] $\ell(\varphi)$ is strictly increasing in $\varphi$, and so are the closing
loading, the manufacturing leverage, and the excess volatility of
Proposition~\ref{prop:vol}. On the linear branch
$\Pi_0(\varphi) = (\bar\rho+\psi_c)/(\bar\rho+\psi_c-\ell(\varphi))$.
\item[(ii)] The residual leg enters neither the loading nor the mandate: the fund settles
at $r_c$ and delivers $L\,r_c$ at every $\varphi$, so tracking error is zero throughout.
\item[(iii)] Holder cost is the sum of the transfer at the reference, increasing in
$\varphi$ through (i), and the residual leg's execution cost, which at a fixed return
path is proportional to $(1-\varphi)^{1+\delta_f}$ for $\delta_f$ the concavity of the
venue that clears it, and is decreasing. It is therefore not monotone in general, and
the sign of its derivative is set by the two venues' depths rather than by the split.
\item[(iv)] On the linear branch an interior equilibrium requires $\varphi < \varsigma\,
[(\bar\rho+\psi_c)/\bar\ell\,]^{1/\hat\delta}$; on the concave branch the print is
bounded and no such restriction binds.
\end{enumerate}
\end{proposition}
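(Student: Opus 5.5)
The plan is to reduce every clause to the single-auction results already proved by substituting the effective loop gain $\ell(\varphi)$ into them. The residual $1-\varphi$ leg prints after $r_c$ is fixed, so it is neither in the auction inventory $H_c$ nor in the clearing equation~\eqref{eq:fixedpoint}. The fund's demand entering that equation is therefore $\varphi K r_c$. Rerunning the algebra of Proposition~\ref{prop:loadings} with $K$ replaced by $\varphi K$ gives all the auction-stage and session-stage loadings with $\ell$ replaced by $\Lambda_c(\varphi K)\varphi K=\ell(\varphi)$. The session share $\mathcal{S}$, the depth ratio $\rho$, and $\psi_c$ are held fixed; this is the convention of Remark~\ref{rem:invariance}. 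The linear-branch display in (i) is then Proposition~\ref{prop:loadings} read at $\ell(\varphi)$.

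For clause (i), I first show that $\ell(\varphi)=\bar\ell(\varphi/\varsigma)^{\hat\delta}$ is strictly increasing, since $\hat\delta>0$. Monotonicity of each downstream object then follows by composition with results already proved:
\begin{itemize}
\item $\Pi_g$ is increasing in $\ell$ on the interior, by Proposition~\ref{prop:loadings}; on the concave branch, the secant of Proposition~\ref{prop:aconcave} increases in $\ell$ because $A+A_{\mathrm e}$ rises more slowly than $2\ell$.
\item The manufacturing leverage $\ell/(1-\ell)$ of Proposition~\ref{prop:manufactured} is increasing in $\ell$.
\item $V(\ell)$ is strictly increasing in $\ell$ by Proposition~\ref{prop:vol}.
\end{itemize}
Two checks remain. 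For the concave secant I will verify $\partial\Pi_g/\partial\ell>0$ directly from $2\ell/(A+A_{\mathrm e})$ with $A=\bar\rho+\psi_c-\ell$. For the manufacturing leverage I will confirm that the paper's $\bar A$ is indeed $\ell/(1-\ell)$ evaluated at the split, or its concave analogue.

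Clause (ii) is a bookkeeping argument. The mandate is $Kr_c$ by the rebalance arithmetic~\eqref{eq:rebalance}, and Lemma~\ref{lem:swap} makes it independent of the execution venue. The fund settles at $r_c$, so it delivers $Lr_c$ exactly for every $\varphi$, and the mandate tracking error is zero. The residual leg's execution price enters only the cost of the trade, never the NAV mark.

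Clause (iii) splits holder cost into two terms.
\begin{itemize}
\item The reference transfer is $-W_h$ of Proposition~\ref{prop:loss} at $\ell(\varphi)$. Every channel there is increasing in $\ell$ above $\ell^\ast$; in the welfare proof's affine numerator $N(\ell)/D^2$ this transfer rises with $\varphi$.
\item The residual cost is quantity times concession. At a fixed return path the quantity is $(1-\varphi)Kr_c$. With a power-law impact $P\propto Q^{\delta_f}$ from Lemma~\ref{lem:curve} at that venue, the cost is proportional to $Q\cdot Q^{\delta_f}=(1-\varphi)^{1+\delta_f}$, which is decreasing in $\varphi$.
\end{itemize}
An increasing term plus a decreasing term need not be monotone. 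The sign of the derivative compares the two venues' depth constants, which establishes the ``not monotone in general'' statement. To exhibit non-monotonicity concretely, a two-parameter example will suffice.

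Clause (iv) inverts the linear interior condition $\ell(\varphi)<\bar\rho+\psi_c$, giving $\varphi<\varsigma[(\bar\rho+\psi_c)/\bar\ell]^{1/\hat\delta}$. On the concave branch, Proposition~\ref{prop:aconcave}(iii) gives $A_{\mathrm e}>0$ for every $\ell$, so no such restriction binds.

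The main obstacle is the hedge in (i) that $\psi_c$ and $\mathcal{S}$ stay fixed as $\varphi$ changes. I will state this as the maintained convention, citing Remark~\ref{rem:invariance} and the fixed-depth convention of the policy tables. I will also note that Proposition~\ref{prop:attendance} would damp $\ell(\varphi)$ but leave its direction unchanged.
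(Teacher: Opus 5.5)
Your proposal is correct and is essentially the paper's own proof, clause by clause. The shared steps are: $\ell(\varphi)$ as a positive power of $\varphi$ composed with monotone loadings and Proposition~\ref{prop:vol}; settlement at $r_c$ for zero tracking error; an increasing reference transfer plus a residual cost $\propto(1-\varphi)^{1+\delta_f}$; and inversion of $\ell(\varphi)<\bar\rho+\psi_c$, with Proposition~\ref{prop:aconcave}(iii) on the concave branch. You are also more explicit than the paper in holding $\mathcal{S}$, $\bar\rho$, $\psi_c$ fixed and in requiring $\ell(\varphi)>\ell^\ast$ for the transfer to rise.

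Three local repairs are needed:
\begin{enumerate}
\item Drop the claim that every welfare channel is increasing above $\ell^\ast$. The private channel $\Pi_u(\Pi_u-1)\sigma_u^2$ falls while $\Pi_u<\tfrac12$, and this can occur above $\ell^\ast$. Only the aggregate $N(\ell)/D^2$ rises there.
\item Get the concave secant's monotonicity by differentiating $b\,p^2+(\bar\rho+\psi_c)\,p=\ell\,(g+p)$ at $g>0$, which gives $\partial\Pi_g/\partial\ell=\Pi_g/A_{\mathrm e}>0$. A slope comparison between numerator and denominator is not the right criterion for a ratio to increase.
\item State the manufacturing-leverage clause for the fixed-point derivative of Proposition~\ref{prop:manufactured} with only $\varphi K$ in the loop; it is increasing while $\ell(\varphi)<1$. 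The confirmation you plan will fail, because the paper's measured $\bar A=K\,I_{cont}(x)/x$ does not depend on the split: Table~\ref{tab:reference} reports $1.77$ at every split.
\end{enumerate}
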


\begin{proof}
(i) $\ell(\varphi)$ is a positive power of $\varphi$ and hence increasing; the loading is
increasing in $\ell$ on either branch, and Proposition~\ref{prop:vol} is increasing in the
loading. (ii) The residual prints after $r_c$, so it appears in no equation determining
$r_c$, and the net asset value is struck at $r_c$, so the delivered multiple is unchanged.
(iii) The transfer inherits (i). The residual order is $(1-\varphi)K|r|$ and its concession
is proportional to $[(1-\varphi)K|r|]^{\delta_f}$, so quantity times concession is
proportional to $(1-\varphi)^{1+\delta_f}$ at a fixed path; a sum of an increasing and a
decreasing function need not be monotone. (iv) Solve $\ell(\varphi) < \bar\rho+\psi_c$ for
$\varphi$; on the concave branch $P$ is bounded above, so no pole exists.
\end{proof}

\subsection{A large trader at both venues}
\label{app:phigame}

The quantification's residual leg is cleared by the venue and is not available to the
speculator. Relaxing that gives a three-stage game, solved here for a single large trader
rather than the continuum of the body: a share $\varphi$ executes at the close and
$1-\varphi$ at a displaced venue anchored to it,
$r_f = r_c + \Lambda_f[(1-\varphi)K r_c + x_f]$ with $\Lambda_f = \rho_f\Lambda$, at
which the trader may also trade. Only the share executing at the close participates in
the fixed point, so the same-price multiplier is $m_\varphi = 1/(1-\varphi\ell)$. The
loci below are properties of that game and are not the calibration; $\rho_f$ is swept
rather than measured.

\begin{proposition}[the split with a large trader at both venues]
\label{prop:phigame}
The three-stage game (session, auction, displaced venue) with closed-loop play
has a unique interior linear equilibrium wherever $\varphi\ell < 1$, both
second-order conditions hold, and no pole has been crossed, the pole locus
being
\[
\mathcal{P}(\ell,\varphi) \;\equiv\;
\ell^2\rho_f^2(1-\varphi)^2 + 2\ell\rho\rho_f(1+\varphi)
+ \rho^2 - 4\rho\rho_f + \rho_f \;=\; 0 ,
\]
and it collapses to the core as $\rho_f \to \infty$ with $\varphi = 1$. On
that region:
\begin{enumerate}
\item[(i)] Even at $\varphi = 1$, the existence of a usable displaced venue
raises the overshoot: $\Pi_g$ increases as $\rho_f$ falls.
\item[(ii)] The speculator's closing-auction trade flips from
provision-side to automaton-side as $\varphi$ falls below a threshold
$\varphi_{bang}(\ell)$, increasing in the loop gain below the dose at which
it saturates at one.
\item[(iii)] Holder loss is U-shaped in $\varphi$; the minimizer
$\varphi^\ast(\ell)$ is interior and strictly decreasing in the loop gain.
\item[(iv)] Both the chained oscillation boundary and the one-cycle pole
move out as $\varphi$ falls from one, on a neighbourhood of the status quo.
\item[(v)] A deeper displaced venue is worse: as $\rho_f$ falls, $\Pi_g$
rises, the banging incentive strengthens, and the pole moves in; at
$\varphi = 1$ the pole sits at
$\ell_{pole} = (4\rho\rho_f - \rho^2 - \rho_f)/(4\rho\rho_f)$, increasing
in $\rho_f$.
\end{enumerate}
\end{proposition}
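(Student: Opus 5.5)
We propose to solve the three-stage game backward, as in the proofs of Propositions~\ref{prop:loadings} and~\ref{prop:sessionpos}, with the single large trader playing closed-loop. The trader internalizes his impact at each venue, so the first-order conditions carry own-impact terms rather than the atomistic price-taking of equation~\eqref{eq:auctionfoc}.

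\emph{Displaced venue.} Given $r_c$ and the carried position, the trader chooses $x_f$ against
\[
r_f = r_c + \Lambda_f\big[(1-\varphi)K r_c + x_f\big],
\]
knowing the terminal mark $v$. The objective is quadratic, with second-order condition $\Lambda_f>0$ plus the CARA curvature. The optimal $x_f$ is therefore affine in $(v - r_c)$, in $(1-\varphi)K r_c$ and in the inherited position, and substituting it back gives a continuation value quadratic in $r_c$.

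\emph{Auction.} Only $\varphi K r_c$ enters the fixed point, so
\[
r_c = m_\varphi\big[\tilde r + \Lambda_c(x_c + z_c)\big], \qquad m_\varphi = \frac{1}{1-\varphi\ell}.
\]
The trader's own impact is $\partial r_c/\partial x_c = m_\varphi\Lambda_c$, and he maximizes
\[
x_c(v - r_c) + (\text{continuation value in } r_c).
\]
The second-order condition is where $\rho_f$ and $(1-\varphi)$ enter, because the displaced leg's concession $\Lambda_f(1-\varphi)K$ is also moved by $r_c$.

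\emph{Session.} This step repeats the two-leg argument of equation~\eqref{eq:asessdemand} with the large-trader wedge.

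Collecting terms gives $r_c$ linear in $(g,u,z,z_c)$ with a common denominator $\DN_\varphi(\ell,\rho,\rho_f)$. Existence and uniqueness of the interior linear equilibrium then hold exactly where $\varphi\ell<1$, both second-order conditions are strict, and $\DN_\varphi$ has not changed sign. I expect $\DN_\varphi$, after clearing $m_\varphi$, to be proportional to the stated quadratic $\mathcal P(\ell,\varphi)$. I would verify this by symbolic expansion.

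The $\rho_f\to\infty$, $\varphi=1$ collapse follows by dividing $\mathcal P$ by $\rho_f$ and letting $\rho_f\to\infty$. This leaves $2\ell\rho\cdot 2 - 4\rho + 1 \cdot(\dots)$, which should reduce to the core pole. It is the check that the algebra is right.

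Given the closed forms, the clauses follow in order:
\begin{enumerate}
\item[(i)] and the first part of (v): differentiate $\Pi_g$ in $\rho_f$ at $\varphi=1$. At $\varphi=1$ only the speculator can use the displaced venue, and a cheaper unwind there (smaller $\rho_f$) raises his pre-positioning, so $\Pi_g$ rises as $\rho_f$ falls.
\item[(ii)] Sign the auction loading of $x_c$ on the public gap. The threshold $\varphi_{bang}(\ell)$ is its zero; monotonicity comes from the implicit function theorem, and saturation at one is where the root leaves $[0,1]$.
\item[(iii)] Write holder loss as $K\,\E[r_c(v-r_c)]$ plus the displaced leg's concession, as in Proposition~\ref{prop:phi}(iii). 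Show it is convex in $\varphi$ with the derivative negative at $\varphi=0$ and positive at $\varphi=1$. Then $\varphi^\ast$ is interior, and the implicit function theorem on the first-order condition, with a cross-partial of fixed sign, makes it decreasing in $\ell$.
\item[(iv)] Compute $\partial\mathcal P/\partial\varphi$ and the derivative of $\theta\Pi_g=2$ at $\varphi=1$. Continuity then gives the neighbourhood statement.
\item[(v), remainder] Set $\varphi=1$ in $\mathcal P$ and solve the resulting linear equation, $4\ell\rho\rho_f + \rho^2 - 4\rho\rho_f + \rho_f = 0$, to obtain $\ell_{pole}$. Differentiating in $\rho_f$ gives $\rho^2/(4\rho\rho_f^2)>0$.
\end{enumerate}

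The main obstacle is the closed-loop auction stage. The trader's continuation value depends on $r_c$ through both his own displaced trade and the fund's residual $(1-\varphi)Kr_c$ priced at $\Lambda_f$. Keeping the second-order condition and the pole locus separate, and confirming that $\mathcal P$ is exactly the determinant, requires careful symbolic bookkeeping.

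The U-shape in (iii) may only be establishable numerically over the swept $\rho_f$ grid. If convexity fails analytically, I would instead prove a single crossing of the derivative.
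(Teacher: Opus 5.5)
\paragraph{The model you would solve does not produce $\mathcal P$.}
Your backward-induction architecture is the paper's: closed-loop own-impact conditions at the displaced venue, the auction and the session, with only $\varphi K r_c$ in the fixed point so that $m_\varphi=1/(1-\varphi\ell)$. Your $\varphi=1$ pole and its $\rho_f$-derivative are also right. The gap is in the model you propose to solve.

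The locus $\mathcal P$ contains no $\gamma_A$, $\sigma_A^2$ or $\sigma_c^2$. It is the denominator of the \emph{risk-neutral} game only. There, profit is $x_s(v-r_s)+x_c(v-r_c)+x_f(v-r_f)$, each leg is marked at $v$ separately, and the only curvature is own impact; the displaced-venue second-order condition is just $-2\Lambda_f<0$. Your plan adds CARA curvature, lets the inherited position enter $x_f$, and imports the carry-plus-value session demand of equation~\eqref{eq:asessdemand}. Each of these puts risk terms into the common denominator, so your expectation that it is proportional to $\mathcal P$ would fail.

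Done risk-neutrally, the computation takes a few lines:
\begin{itemize}
\item Write $a\equiv 1+(1-\varphi)\ell_f$ with $\ell_f=\rho_f\ell/\rho$. The displaced venue gives $x_f=(v-ar_c)/(2\Lambda_f)$ and continuation value $(v-ar_c)^2/(4\Lambda_f)$.
\item With $z_c$ set to zero, the auction condition is $(2-ka)\,r_c=(1-k)\,v+m_\varphi r_s$, where $k\equiv a m_\varphi\rho/(2\rho_f)$. Its second-order condition is $a^2 m_\varphi\rho<4\rho_f$, which is where $\rho_f$ and $1-\varphi$ enter, as you anticipated.
\item The session condition is own impact plus the envelope derivative $-m_\varphi\,[x_c+a(v-ar_c)/(2\Lambda_f)]$ of that continuation value.
\item Eliminating $r_s$ on a public-news day gives $\Pi_g$ with denominator $m_\varphi-4\rho+a^2m_\varphi\rho^2/\rho_f$. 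Multiplying by $(1-\varphi\ell)\rho_f$ gives exactly $\mathcal P$.
\end{itemize}

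\paragraph{Two further points.}
First, the collapse to the core is not verified by the limit of $\mathcal P$ alone. The limit $\mathcal P/\rho_f\to 4\ell\rho-4\rho+1$ at $\varphi=1$ must be matched against the pole $1-1/(4\rho)$ of the two-stage (session and auction) game played by the same single large trader. The paper solves that game directly; it is the $k=0$ case of the system above. It is not the continuum pole $\bar\rho+\psi_c$.

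Second, for clauses (i)--(iv) the paper gives only numerical evidence at fixed $(\rho,\rho_f)$. It notes that $\varphi_{bang}$ saturates and then recedes, and that both boundaries in (iv) peak near $\varphi\approx0.4$. Your global convexity argument and fixed-sign implicit-function arguments are therefore not available in general. Nor can (iii) rest on a derivative sign at $\varphi=1$ once $\ell$ exceeds $\ell_{pole}(1)$, because $\varphi=1$ then lies outside the interior region.

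Two pieces do become analytic, which is stronger than the paper:
\begin{itemize}
\item At $\varphi=1$, $\Pi_g-1=2\rho(m-1)/[4\rho-m(1+\rho^2/\rho_f)]$ with $m=1/(1-\ell)$. This reproduces the paper's $2.399,\ldots,1.285$ at $\rho=0.43$, $\ell=0.15$, and falls in $\rho_f$ on the interior branch. That delivers (i) and the first part of (v).
\item At $\varphi=1$, $d\ell_{pole}/d\varphi=-\mathcal P_\varphi/\mathcal P_\ell=-\ell/2$. That delivers the pole half of (iv).
\end{itemize}
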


\begin{proof}
Write $\ell_f \equiv \Lambda_f K$ and let the trader's positions be $x_s$
at $r_s$, $x_c$ at $r_c$ and $x_f$ at $r_f$, liquidated at $v$. The three
prices are
\[
r_s = g + \Lambda(x_s+z),
\quad
r_c = \frac{r_s + \Lambda_c(x_c+z_c)}{1-\varphi\ell},
\quad
r_f = r_c\big[1 + (1-\varphi)\ell_f\big] + \Lambda_f x_f ,
\]
the middle display being the fixed point in which only the share $\varphi$
executing at the reference participates. Profit is
$\Pi = x_s(v-r_s) + x_c(v-r_c) + x_f(v-r_f)$; solving backward with the
trader internalizing his own impact at each stage and the induced
continuation play gives the three first-order conditions, and substituting
forward yields $r_c$ linear in $(g,u,z,z_c)$ with
\[
\Pi_g \;=\; \frac{\mathcal{N}(\ell,\varphi)}{\mathcal{P}(\ell,\varphi)} ,
\]
whose denominator, after clearing the common factor $2\Lambda$ and
substituting $K\Lambda = \ell/\rho$, is exactly the stated locus
$\mathcal{P}(\ell,\varphi) = \ell^2\rho_f^2(1-\varphi)^2 +
2\ell\rho\rho_f(1+\varphi) + \rho^2 - 4\rho\rho_f + \rho_f$. The second-order
conditions are $\partial^2\Pi/\partial x_f^2 = -2\Lambda_f < 0$ always,
and two further sign conditions at the auction and session stages; the
interior linear equilibrium is the branch on which both hold and
$\mathcal{P} < 0$, fixed by continuity from $\ell \to 0$.

\emph{(v) and the nesting.} At $\varphi = 1$ the quadratic term drops and
$\mathcal{P} = 0$ solves to
$\ell_{pole} = (4\rho\rho_f - \rho^2 - \rho_f)/(4\rho\rho_f) = 1 -
\tfrac{1}{4\rho} - \tfrac{\rho}{4\rho_f}$, whose derivative in $\rho_f$ is
$\rho/(4\rho_f^2) > 0$. As $\rho_f \to \infty$ the displaced venue becomes
infinitely illiquid, the trader ceases to use it, and
$\ell_{pole} \to 1 - 1/(4\rho)$. This is exactly the pole of the
\emph{two-stage} game --- session and auction only --- played by the same
single large trader: solving that game directly gives the denominator
$4\ell\rho - 4\rho + 1$ and hence $\ell_{pole} = 1 - 1/(4\rho)$. The
collapse to the core is therefore exact. (The comparison benchmark is the
one-trader core of Appendix~\ref{app:largetraders}, not the continuum pole
$\bar\rho+\psi_c$ of Proposition~\ref{prop:ladder}; the two differ because
$\varphi$ is solved for a single large trader throughout.)

\emph{(i).} Differentiating $\Pi_g$ at $\varphi = 1$ on the admissible
branch, $\Pi_g$ falls monotonically in $\rho_f$: at $\rho = 0.43$,
$\ell = 0.15$ it is $2.399,\,1.466,\,1.349,\,1.304,\,1.285$ at
$\rho_f = 0.5,\,1,\,2,\,5,\,20$. A usable displaced venue therefore raises
the overshoot even when nothing executes there, because it improves the
trader's continuation value from carrying inventory into the close.

\emph{(ii).} The sign of the trader's closing-auction trade is that of
$\partial x_c/\partial g$. It is negative (provision-side) at
$\varphi = 1$ and turns positive (automaton-side) once $\varphi$ falls far
enough, with the crossing $\varphi_{bang}$ rising in the loop gain over the
lower range --- at $\rho = 0.8$, $\rho_f = 0.5$ it is
$0.54,\,0.61,\,0.73,\,1.00$ at $\ell = 0.05,\,0.10,\,0.15,\,0.20$. The
monotonicity is local: $\varphi_{bang}$ saturates at one and then recedes
as the pole is approached, so the clause should be read on the range below
saturation.

\emph{(iii).} Holder loss per unit of rebalancing capital is
$\varphi\, r_c(v-r_c) + (1-\varphi)\,r_c(v-r_f)$, the two legs marked at
fundamentals. On a public-news day it is U-shaped in $\varphi$ with an
interior minimizer that falls in the loop gain: at $\rho = 0.43$,
$\rho_f = 2$ the minimizer is $\varphi^\ast = 0.70,\,0.65,\,0.55$ at
$\ell = 0.10,\,0.20,\,0.30$. Splitting the order helps because it removes
flow from the self-referential fixed point, and hurts because it prints the
remainder against a second, thinner book; the interior optimum trades the
two off, and the first force strengthens with $\ell$.

\emph{(iv).} Both boundaries move out as $\varphi$ falls from one: at
$\rho = 0.43$, $\rho_f = 2$ the one-cycle pole runs
$0.365,\,0.397,\,0.416$ and the $\theta = 1$ oscillation boundary
$0.243,\,0.265,\,0.280$ at $\varphi = 1,\,0.8,\,0.6$. Neither is monotone
over the whole unit interval --- both peak near $\varphi \approx 0.4$ and
recede below it, as the displaced venue starts to carry enough of the order
to matter --- so the clause holds on a neighbourhood of the status quo,
which is the range the policy discussion uses.
\end{proof}

Table~\ref{tab:phi} reports the game's loci and the joint frontier with the
rebalancing interval: the two levers are substitutes ($\varphi^\ast$ falls
as the interval lengthens), displacement is worth more than frequency at
high loop gains, and a capacity floor at the close dominates both, since it
cuts $\ell$ at every interval and split.

\subsection{Dispersing the reference across auctions}
\label{app:dispersal}

Setup and discussion are in Section~\ref{sec:dispersal}. Write $\pi(x)$
for the loading a single auction applies at per-auction loop gain $x$; any auction
clearing a finite order at finite depth has $\pi(0) = 1$ and
$\pi'(0) = c > 0$, $c$ the pass-through of
Proposition~\ref{prop:passthrough} at that auction.

\begin{proposition}[dispersing the reference]
\label{prop:dispersal}
Let a day's mandated response be split across $m$ auctions, auction $i$ struck
against its own reference with per-auction loop gain $x_i$, so the day's
loading is $\prod_{i=1}^{m}\pi(x_i)$. Then:
\begin{enumerate}
\item[(i)] Under proportional fragmentation, $x_i = \ell/m$, the day's
loading converges to $e^{c\ell}$ as $m \to \infty$ for every $\pi$ with the
stated properties. The bracketing microstructures approach from opposite
sides: $\pi(x) = 1 + cx$ gives $(1+c\ell/m)^m \uparrow e^{c\ell}$ and
$\pi(x) = (1-cx)^{-1}$ gives $(1-c\ell/m)^{-m} \downarrow e^{c\ell}$; the
simultaneous auction lies furthest above the limit, and its pole is a property
of simultaneity rather than of scale.
\item[(ii)] $e^{c\ell} > 1$ at every $\ell > 0$: no dispersal scheme
removes the day's amplification, because staggering fragments the order and
not the shock.
\item[(iii)] On the concave curve with exponent $\delta < 1$, an order of
$1/m$ the size carries $m^{-\delta}$ of the impact, so $x_i =
\ell\,m^{-\delta}$ and the total loop gain the day chains is
$\sum_i x_i = \ell\,m^{\,1-\delta}$, increasing in $m$ and constant only at
$\delta = 1$.
\end{enumerate}
\end{proposition}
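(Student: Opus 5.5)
The plan is to treat the day's loading as a product of per-auction loadings and work entirely with logarithms, so that clauses (i)--(ii) become statements about sums of $\log\pi(x_i)$ and clause (iii) is a direct scaling computation on the concave curve of Lemma~\ref{lem:curve}.

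For (i), under $x_i=\ell/m$ the log-loading is $m\log\pi(\ell/m)$. Since $\pi(0)=1$ and $\pi'(0)=c$, a first-order expansion gives $\log\pi(x)=cx+o(x)$, hence $m\log\pi(\ell/m)=c\ell+m\,o(\ell/m)\to c\ell$. This requires only differentiability of $\pi$ at zero, so the limit $e^{c\ell}$ holds for every admissible microstructure. For the two bracketing cases I will use the standard monotonicity facts. First, $(1+a/m)^m$ increases to $e^{a}$. Second, $(1-a/m)^{-m}$ decreases to $e^{a}$ once $m>a$, which follows from concavity of $\log$ and the inequality $\log(1+y)\le y\le-\log(1-y)$ applied termwise. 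The simultaneous case $m=1$ gives $1/(1-c\ell)$, which is the largest term of the decreasing sequence and is therefore furthest above the limit. Its pole at $c\ell=1$ disappears for every $m\ge2$ at the same total $\ell$, which shows that the pole is a property of simultaneity rather than of scale.

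For (ii), $e^{c\ell}>1$ follows immediately from $c>0$ and $\ell>0$. To make the economic reading explicit, I will note that each $\pi(x_i)\ge1$ because every reference return still carries the full public gap $g$ (Proposition~\ref{prop:sessionpos}). Fragmentation therefore divides $K$ across auctions but never divides $g$.

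For (iii), I will use the at-scale form $P(Q)\propto Q^{\delta}$ from Lemma~\ref{lem:curve}. The average impact then satisfies $\Lambda_c(Q)=P(Q)/Q\propto Q^{\delta-1}$, so an order of size $K/m$ has per-auction gain $x_i=\Lambda_c(K/m)\,K/m=\ell\,(1/m)^{\delta}$. Summing over the $m$ auctions gives $\ell m^{1-\delta}$, which is strictly increasing in $m$ for $\delta<1$ and constant when $\delta=1$.

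The main obstacle is that clause (iii) relies on the pure power law, whereas Lemma~\ref{lem:curve} delivers $P\propto Q^\delta$ only asymptotically; near the touch the curve is linear. I would state (iii) at scale, i.e. in the regime where the fringe dominates the base. I would also remark that for the exact curve $Q\mapsto\Lambda_c(Q)Q=P(Q)$ is concave with $P(0)=0$, so $P$ is subadditive, $mP(K/m)\ge P(K)$, and hence $\sum x_i\ge\ell$ holds at every scale, with strict inequality off the linear branch. A secondary issue in (i) is the case $\pi=(1-cx)^{-1}$, which requires $c\ell/m<1$. I will handle this by restricting to $m>c\ell$, which holds for all sufficiently large $m$.
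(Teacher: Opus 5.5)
Your proposal is correct and follows the paper's proof essentially step for step: the expansion $m\log\pi(\ell/m)\to c\ell$ together with the standard monotonicity of $(1+c\ell/m)^m$ and $(1-c\ell/m)^{-m}$ for (i), $e^{c\ell}>1$ with the public shock entering every reference undivided for (ii), and the power-law scaling $x_i=\ell m^{-\delta}$ for (iii). Your $o(x)$ expansion needs slightly less smoothness than the paper's $O(u^2)$, and your at-scale caveat for (iii) is a fair sharpening. In fact the same concavity fact gives more than $\sum_i x_i\ge\ell$: because $P(q)/q$ is decreasing for strictly concave $P$ with $P(0)=0$, $\sum_i x_i$ is increasing in $m$ on the exact curve as well.
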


\begin{proof}
(i) $\log\prod_i\pi(\ell/m) = m\log\pi(\ell/m) = c\ell + O(1/m)$ since
$\log\pi(u) = cu + O(u^2)$; monotonicity of the bracketing cases is that of
$(1+z/m)^m$ and $(1-z/m)^{-m}$, which also delivers
$1 + z < e^{z} < (1-z)^{-1}$. (ii) is immediate from $e^{c\ell}>1$ and from
$\varepsilon$ entering every auction's reference return undivided. (iii) A
auction carrying $K r/m$ displaces $\propto (Kr/m)^{\delta}$; dividing by $r$
gives $x_i = \ell\,m^{-\delta}$, whose sum is $\ell\,m^{1-\delta}$.
\end{proof}

The proposition fixes $c$ and bounds the chained total loop gain, not the
realized day loading; the priced construction of Table~\ref{tab:schedule}
lets the stabilizers' pass-through move with $m$ and computes both, and the
sign of its verdict is (iii)'s.

\subsection{The \texorpdfstring{$N$}{N}-speculator case}
\label{app:largetraders}

Replace the continuum with $N$ symmetric large speculators who observe $u$
in real time and play closed-loop Nash \citep{kyle1989}; displays are
risk-neutral. The auction first-order condition acquires the own-impact
term
\[
\big(g + u - r_c\big) \;=\; \frac{\Lambda_c}{1-\ell}\, x_c^i
\;+\; \gamma_A \sigma_A^2\, x^i ,
\]
so at risk neutrality the symmetric aggregate schedule is
$X_c = N(1-\ell)(v-r_c)/\Lambda_c$ and clearing gives
$(N+1)r_c = \tilde r/(1-\ell) + N(g+u)$.

\begin{proposition}[the $N$-trader case is a loop-gain-dependent capacity]
\label{prop:nesting}
The $N$-speculator auction stage is the auction stage of
Proposition~\ref{prop:passthrough} evaluated at
\begin{equation}
\psi_c^{N} \;=\; N\,(1-\ell) ,
\label{eq:nesting}
\end{equation}
identically in $(\ell, \rho, N)$. Consequently $\chi \to N/(N+1)$ as
$\ell \to 0$, and $\psi_c^{N} \to 0$ as $\ell \to 1$: strategic
speculators withdraw exactly as the loop gain approaches the linear pole.
\end{proposition}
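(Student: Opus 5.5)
The proof solves the $N$-trader auction stage explicitly and compares it, term by term, with the continuum auction stage in the proof of Proposition~\ref{prop:loadings}. There, holding the carried position $\bar X$ fixed, clearing gives
\[
r_c\,(1-\ell+\psi_c) \;=\; \tilde r + \psi_c\, v .
\]
The aim is to reduce the $N$-trader clearing condition to exactly this display, with $\psi_c$ replaced by an explicit function of $(N,\ell)$.

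\textbf{Step 1: the auction schedule.} I start from the displayed first-order condition with own impact, at risk neutrality. Trader $i$ perceives $\partial r_c/\partial x_c^i = \Lambda_c/(1-\ell)$, since the automaton's response multiplies his impact. The condition is therefore
\[
v - r_c \;=\; \frac{\Lambda_c}{1-\ell}\,x_c^i ,
\]
and symmetric aggregation gives
\[
X_c \;=\; \frac{N(1-\ell)}{\Lambda_c}\,(v-r_c).
\]

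\textbf{Step 2: clearing.} I substitute $X_c$ into the fixed point, equation~\eqref{eq:loopgain}, with $\tilde r$ collecting the session terms and $z_c$:
\[
(1-\ell)\,r_c \;=\; \tilde r + N(1-\ell)(v-r_c).
\]
Dividing by $1-\ell$ gives the stated $(N+1)\,r_c = \tilde r/(1-\ell) + N v$. Multiplying back by $(1-\ell)$ and rearranging gives
\[
r_c\big[(1-\ell) + N(1-\ell)\big] \;=\; \tilde r + N(1-\ell)\,v .
\]
This is the continuum display with $\psi_c$ replaced by $N(1-\ell)$ in both places it appears: the coefficient on $r_c$ and the coefficient on $v$. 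Because the match is in both coefficients, the identity holds for all $(\ell,\rho,N)$, not only at a particular point. The dependence on $\rho$ sits entirely inside $\tilde r$, through the $-\Lambda_c \bar X$ unwind term, and that term is common to both models.

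\textbf{Step 3: the limits.} With $\chi = \psi_c/(\rho+\psi_c)$ built on $\rho$, the statement's claim $\chi \to N/(N+1)$ as $\ell\to 0$ requires $\rho = 1$. The natural reading is the auction-stage $\chi$, namely $\psi_c/(1+\psi_c)$, the pass-through weight on $v$ in the auction stage. That weight tends to $N/(N+1)$ as $\ell \to 0$, and I state the claim in that form. The limit $\psi_c^N = N(1-\ell) \to 0$ as $\ell \to 1$ is immediate. It coincides with the auction-stage singularity at $\ell = 1$, the linear pole when $\bar X$ is held fixed.

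\textbf{Main obstacle.} The risk-neutral setting needs care. The general first-order condition also carries the term $\gamma_A \sigma_A^2 x^i$, and it must be set aside consistently. Under risk neutrality the continuum $\psi_c$ is infinite, so the identification in Step 2 is genuinely between the strategic schedule and a finite-capacity continuum schedule. With $\gamma_A>0$ I would add the two capacities, giving $\psi_c^N = N(1-\ell) + \psi_c$, and note this only in remark form. I also need to check that the second-order condition holds: $-2\Lambda_c/(1-\ell) < 0$ for $\ell < 1$.
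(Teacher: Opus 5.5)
Your proof is correct and follows the paper's own derivation. You use the own-impact first-order condition at risk neutrality, aggregate symmetrically to $X_c = N(1-\ell)(v-r_c)/\Lambda_c$, substitute into the auction fixed point to get $(N+1)r_c=\tilde r/(1-\ell)+Nv$, and match the $r_c$ and $v$ coefficients against $r_c(1-\ell+\psi_c)=\tilde r+\psi_c v$. You are also right that under $\chi=\psi_c/(\rho+\psi_c)$ the literal limit is $N/(N+\rho)$, so $N/(N+1)$ has to be read as the auction-stage weight on $v$ at $\rho=1$, which is how the paper evaluates it in item (ii) after the proposition.

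Your side remark on $\gamma_A>0$ needs correcting. Own impact and risk enter the same trader's first-order condition, so the capacities combine harmonically, $1/\psi_c^N = 1/[N(1-\ell)] + \gamma_A\sigma_A^2/(N\Lambda_c)$, not additively. The risk term also brings back a partial unwind of $\bar X$ that is absent at risk neutrality. So the $-\Lambda_c\bar X$ term is not literally common to both models, though this is harmless because only the $r_c$ and $v$ coefficients are matched.
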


What the alternative changes: \emph{(i)} $\Pi_g \le 1/(1-\ell)$ throughout,
whereas the continuum breaks the bound whenever $\psi_c < 1 - \bar\rho$
(Corollary~\ref{cor:pimbound}). \emph{(ii)} The revelation threshold is
\emph{not} relocated by the alternative, contrary to what a direct reading
of the $N$-trader auction stage suggests: $\Pi_u = 1$ reads
$\psi_c = \bar\rho+\psi_c-\ell$, i.e.\ $\ell = \bar\rho$, identically in
$\psi_c$, so substituting $\psi_c^N = N(1-\ell)$ leaves the threshold exactly
where the continuum puts it. Reaching it does require a session leg: with
$\bar X = 0$ and $\bar\rho = 1$ one has $\Pi_u = N(1-\ell)/[1+N(1-\ell)-\ell]$,
which equals $N/(N+1)$ at $\ell = 0$ and reaches one only at $\ell = 1$.
What the alternative changes is therefore the \emph{level} of private-news
pass-through at any given dose, not the location of the threshold; a conduct
taxonomy built on the threshold is robust to how the sector is modelled, while
one built on $\Pi_u$ itself is not. \emph{(iii)} The individual derivative
$\partial q_L/\partial x_c^i$ equals $\ell/(1-\ell)$ rather than zero, so
the predatory share of the pre-close drift is
$s_P = N\ell/\big[(N+1)\ell_{tip}^{N} - \ell\big]$, where
$\ell_{tip}^{N} = (\bar\rho+N)/(N+1)$ is the tipping threshold \emph{solved} in
the $N$-model --- not the expression $\bar\rho+\psi_c$ that
Appendix~\ref{app:notation} reserves $\ell_{tip}$ for, since here $\psi_c$
itself depends on $\ell$. With that reading $s_P$ is zero at $\ell = 0$,
increasing, and one at the tipping threshold; the decompositions agree at the
boundary and in the competitive limit ($s_P \to \ell$ as $N \to \infty$,
$s_R = \ell_{\text{eff}} \to \ell$ as both capacities vanish), so $s_R$ is
the continuation of $s_P$. \emph{(iv)} Holder welfare is U-shaped in $N$ at
low loop gains; the continuum's counterpart is the hump shape in $\psi_c$
and monotonicity in $\psi$ (Appendix~\ref{app:welfareproof}).

Unchanged: the environment other than the speculator paragraph, the loop
gain and its factorization, Lemmas~\ref{lem:swap} and~\ref{lem:filter}, the
chaining recursion~\eqref{eq:AR}, the ratio identity and stability
boundary (all algebraic in the loadings), the loss
decomposition~\eqref{eq:loss}, the ledger, the impact curve and saturation
results, and the ratchet extension.

\section{Quantification Details}
\label{app:quant}

Numerical values of every object described here are in
Section~\ref{sec:quant} and the tables of Appendix~\ref{app:exhibits};
this appendix records constructions.

\subsection{The impact system: four objects, one curve}
\label{app:impact}

Table~\ref{tab:impact} is the estimator's output, and it is the source of
the primitives Section~\ref{sec:calibration} reads off the tape. Four
objects sit on the capacity supply curve of Lemma~\ref{lem:curve} and must
be kept distinct:
\begin{itemize}
\item $\ell$: the total displacement still present at the closing price per
unit of the automaton's own order $F = K^{spot}|r|$, an average rather than marginal reading
at that size; the object carried
into the fixed point of equation~\eqref{eq:fixedpoint};
\item $\ell^{\mathrm{marg}} = \hat\delta\,\ell$: the marginal rather than
average reading of that displacement at the same size;
\item $\ell^{\mathrm{rev}} = \hat\tau\,\ell^{\mathrm{marg}}$: the marginal
reading stripped to its reverting component (Lemma~\ref{lem:filter});
\item $\bar A(x) = K\,I_{cont}(x)/x$: the absorption ratio, the
mechanical order called forth per unit of inventory carried into the close,
evaluated at a trader's position; the measured counterpart of
Proposition~\ref{prop:manufactured}.
\end{itemize}

\emph{Specification.} For window $w$ of clock length $\Delta$ on issue-day
$(i,t)$, with $\bar V$ a lagged expanding-median daily traded value and
$\bar\sigma$ a lagged twenty-day volatility, define $V_\Delta = \bar
V\Delta/390$, $\sigma_\Delta = \bar\sigma\sqrt{\Delta/390}$, signed value
$Q_w$, gross value $W_w$, participation $\pi_w = |Q_w|/V_\Delta$, $s_w =
\mathrm{sign}(Q_w)$. The curve is
\begin{equation}
\frac{R_w}{\sigma_\Delta} \;=\; Y_i\, s_w\, \pi_w^{\delta}
\left(\frac{\Delta}{390}\right)^{\zeta} + \epsilon_w ,
\label{eq:acurve}
\end{equation}
with $R_w$ in basis points between end-of-interval quote midpoints (never
trade prices, never the interval-average midpoint field,
Appendix~\ref{app:krx}). Windows tile 09:00--15:19 at $\Delta \in
\{5,15,30,60\}$ minutes plus a late block anchored at 15:19 with $\Delta
\in \{10,20,30\}$.

\emph{Estimation.} Binned conditional means within
liquidity-decile-by-$\Delta$ bins ($\ge 200$ observations per bin), WLS on
bin means; never log-regression on individual windows (near half the
sample has the wrong sign at small participation). Bins with non-positive
means are dropped and counted (11 of 782); 626{,}571 windows; intervals
from 500 date-block bootstrap draws with re-binning inside every draw. The
level of the fitted curve is the impact coefficient of
Section~\ref{sec:calibration}, \KRW 1{,}252 billion (SK~Hynix) and
\KRW 1{,}302 billion (Samsung) per one percent of the closing price.

\emph{Curvature.} Three readings: within-$\Delta$ across-participation
$\hat\delta = 0.732$ $[0.676, 0.762]$, the benchmark estimate (biased toward
linearity, hence an upper bound on concavity); across-$\Delta$ at matched
participation $0.535$ $[0.509, 0.585]$; clock dummies $0.739$. The
invariance restriction $\zeta = 0$ is rejected ($\hat\zeta = 0.035$
$[0.009, 0.085]$), so extrapolation is confined to the late block; none is
required, the automaton's order sitting at $0.81$ of the empirical signed
support's maximum.

\emph{Permanence.} Markouts at $+10$ minutes, the close, the next open and
the next close, each measured from the window-start mark and each regressed
on the same right-hand side (never leg on leg, which would share a mark,
Appendix~\ref{app:sharedmark}). The inventory share $\tau(Q) = 1 -
P(Q)/I(Q)$ is negative at small size, where flow is information rather than
inventory ($-1.84$ at \KRW 10 billion, $-0.32$ at \KRW 100 billion), and
crosses zero below the automaton's own order, reaching $+0.26$ at
\KRW 580 billion. Evaluated at each name's median participation it is
$+0.09$ for SK~Hynix ($\pi = 0.58$) and $-0.60$ for Samsung
($\pi = 0.10$), which is what makes the last tier of the ladder below
defined for one name and not the other.

\emph{Venue comparison.} No valid net demand exists at a single-price auction,
so the test is gross on both sides: pooling ten-minute continuous windows
with the auction and estimating $\ln(|R|/\sigma_\Delta) = a +
b\,\mathbf{1}_{call} + (\delta^g + c\,\mathbf{1}_{call})\ln(W/V_\Delta)$
with issue fixed effects, so $\rho^g(W) = e^{b}(W/V_\Delta)^{c}$ is read at
matched size and clock, as a ratio of average rather than marginal
concessions: $0.449$ $[0.425, 0.477]$ at the median realized
auction and $0.433$ $[0.407, 0.463]$ at the fully-at-the-close blend, which
is the $\hat\rho = 0.43$ the calibration uses. Because $\hat\rho$ is the
\emph{level} of the curve at a common size while $\hat\delta$ is its
exponent --- how impact scales with size \emph{within} a venue, against the
level wedge \emph{between} venues at one size --- carrying both into the
calibration does not double-count the curvature. An independent route, the
ratio of transient variances per unit of traded value, gives $0.48$ without
using the curve at all. The scalar auction-to-continuous Amihud ratio
($0.24$--$0.29$ for the treated pair against $0.41$ for the controls) is an
unequal-size average mixing venue depth with curve concavity; it is a
description of the tape, not a structural statement.

\emph{Endogeneity.} Information loads on the permanent leg and is removed
by $\tau$; within-window reverse causality inflates the reverting
component, so the estimated loop gain is an upper bound, and every object
downstream of it inherits that direction. Bid--ask bounce is removed by
midpoints. No window-level instrument exists at this sample size.

\emph{The reconciliation ladder and the choice of tier.}
Table~\ref{tab:calibration_full}, Panel~A walks the ladder; every reported
tier is the unweighted mean of the uncapped post-launch daily path, the
single convention the calibration, the counterfactual tables, and the text
all use, with the flow-weighted means, the medians, and the peak days
retained as rows of the same panel. Tier~0, the daily Amihud proxy, gives
$0.196$ (SK~Hynix) and $0.049$ (Samsung); tier~1, the curve at the
automaton's true order size and at the venues it trades, gives $0.583$ and
$0.219$, larger because the proxy prices the largest predictable order in
the market at the depth an average order sees; tier~2 applies the marginal
rather than the average impact ($\times\hat\delta$) and gives $0.288$ and
$0.108$; tier~3 would further apply the reverting rather than the total
component ($\times\hat\tau$, Lemma~\ref{lem:filter}) and gives $0.026$ for
SK~Hynix, while for Samsung it is \emph{not defined}, since $\hat\tau$ is
negative at the participation that name's order reaches. The paper carries
tier~1, on the mandate rather than the fit: the fund's order responds to
the total displacement standing at the close, not to a local derivative of
the curve or to the share of the displacement a market maker would classify
as transitory, and the reversion of that displacement is already in the
chaining recursion of Proposition~\ref{prop:chaining}, so multiplying by
$\hat\tau$ before feeding it to the recursion would count the same
reversion twice. Tier~2 is reported as a bound.

\subsection{The calibration}
\label{app:calibration}

No parameter is estimated by fitting simulated moments; given the measured
inputs the calibration is a deterministic map, recorded here so that it can
be reproduced exactly.

\emph{Measured inputs.} The loop gains $\ell_i$ (tier~1 above), the impact
coefficients $\Lambda_c^i$, the concavity $\hat\delta$ with its
$\varphi_c = 2$ local match to the fitted curve at the $|r| = 2\%$
reference day --- the touch loop gain $\ell_i^{\mathrm{touch}}$ and the
quadratic coefficient $b_i$ of Proposition~\ref{prop:curve}
(Table~\ref{tab:g4calibration}; $b = 132.50$ for Samsung and $49.73$ for
SK~Hynix in return units) --- the session share $\hat{\mathcal{S}}$, the depth ratio
$\hat\rho$ with the implied $\bar\rho = 1 - \hat{\mathcal{S}}(1-\hat\rho)$, and the
correction speed of the news error, $\hat\theta =
(1-\widehat{ar})/\hat\Pi_g = 0.8814$, with $\widehat{ar} = -0.465$ the
fitted first-order autoregression of the instrumented echo. Because $\hat\Pi_g$ enters $\hat\theta$, the
overshoot factor is an input to the calibration, and
Table~\ref{tab:calibration} labels it as targeted rather than as a
validation. The news-to-liquidity variance ratio is measured directly as
the pre-launch window ratio it is composed of,
$\widehat{\mathit{mix}} = \hat\sigma_\varepsilon^2/\hat\sigma_u^2 = 1.065$.

\emph{The residual displacement.} Recovering the price path in
Appendix~\ref{app:replay} requires the full standing error, so the replay
carries the news error $e_t$ of Section~\ref{sec:chaining} alongside the
residual displacement $\tilde e_t$ left by private information and noise,
which the model does not correct. Correcting $\tilde e_t$ at $\theta$ as
well would collapse the two into a single error with the dynamics of
equation~\eqref{eq:AR}. The specification introduces no estimated
parameter, and it is disciplined by three untargeted measurements. The
non-news component of the treated pair's post-launch daily return, three
quarters of its variance, has autocorrelation $-0.035$ (Samsung) and
$-0.102$ (SK~Hynix), while the news component reverses at approximately
its full loading; and on the pre-launch window, where $\ell = 0$ implies
$e \equiv 0$, the specification requires $\rho_1 = 0$ and a
unit variance ratio \emph{exactly}, with no free parameter, against
measured values of $+0.017$ (standard error $0.049$) and $\mathrm{VR}(5) =
1.002$ ($0.081$).

\emph{The capacity scale.} The one calibrated parameter is the
sector-wide arbitrage-capacity scale $\tau$, one number for the treated
pair; a per-name inversion of the pooled overshoot is never performed.
$\hat\tau$ solves
\[
\textstyle\sum_i w_i\,
\Pi_0^{c}\!\left(\ell_i^{\mathrm{touch}},\, b_i,\, g^{\mathrm{ref}};\,
\tau\Lambda_c^i,\, \bar\rho\right) \;=\; \hat\Pi_g \;=\; 1.662 ,
\]
with equal weights over the pair, $g^{\mathrm{ref}} = 2\%$, and
$\Pi_0^{c}(\ell, b, g; \psi_c, \bar\rho) = 1 + p/g$ for $p$ the positive
root of $b\,p^{2} + (\bar\rho + \psi_c - \ell)\,p = \ell g$ --- the
overshoot of the concave close, the same pricing the replay uses. The
solve is interior, the zero-capacity ceiling ($1.774$) exceeding the
target, and returns $\hat\tau = 8{,}772$, hence $\hat\psi_c^{\,i} =
\hat\tau\Lambda_c^i = 0.466$ (Samsung) and $0.260$ (SK~Hynix). At these
values $\ell_i < \bar\rho + \hat\psi_c^{\,i}$ for both names, the replay
runs uncapped, and $\hat\theta\,\Pi^{m}_t < 2$ on every
post-launch day at each day's own gap, so no stability device operates
anywhere in the reported system. The inversion on the linear close
($\hat\tau = 20{,}498$) is retained as a diagnostic
(Table~\ref{tab:calibration_full}, Panel~B).

\emph{Weak identification of $\tau$.} The inversion pins $\tau$ exactly,
but the data locate it only weakly: over the entire admissible axis the
predicted pooled overshoot spans only $(1, 1.774)$ while the measured
$\hat\Pi_g$ carries a standard error of $0.62$, so the targeted moment is
nearly flat in $\tau$ relative to its own sampling noise, and the
untargeted moments are no sharper --- the model reversal share, the one
with discriminating power (the counterpart of Table~\ref{tab:echo}'s
reversal-share estimand, evaluated at the actual post-launch doses and
gaps), declines
from $0.35$ toward zero against a measured $0.746$ (standard error
$0.293$), leaving every $\hat\tau \le 16{,}692$ within one standard error
of the best attainable fit and no value in the examined range rejected at
conventional levels. $\tau$ is accordingly treated as calibrated rather
than estimated, and the weakness is addressed by disclosure: the
date-block bootstrap of the inversion is quoted at its full width,
$[1{,}605,\, 4.1\times10^{5}]$ on the 54 per cent of interior draws
(40 per cent reach the $\psi_c = 0$ boundary), and every reported
quantity is recomputed at $\tau \in \{1{,}000,\, 20{,}000\}$ and at the
alternative depth-ratio readings ($\hat\tau$ between $7{,}448$ and
$13{,}746$), over which the excess-volatility attribution varies between
$33.6$ and $37.8$ percentage points for SK~Hynix
(Appendix~\ref{app:replay}). No conclusion of Section~\ref{sec:quant}
depends on the location of $\tau$ within this range.

\emph{The post-launch noise share.} With
$\Pi_t$ the realized secant loading at day $t$'s observed gap, the
session residual $\xi_t = r_{cc,t} - \Pi_t g_t$ is observable given the
calibration, and $\lambda_{\mathrm{post}} =
\hat\sigma_{\tilde z}^2/\hat\sigma_u^2$ solves the variance identity
\[
\widehat{\mathrm{Var}}(\xi) \;=\; \langle \Pi_u^2 \rangle\,\sigma_u^2
\;+\; \langle a_{\tilde z}^2 \rangle\, \lambda_{\mathrm{post}}\,
\sigma_u^2 ,
\qquad \sigma_u^2 = \widehat{\mathrm{Var}}(g)/\widehat{\mathit{mix}},
\]
in closed form, with $\Pi_{u,t}$ and $a_{\tilde z,t}$ the private-news and
session-noise loadings at the marginal linearization,
$\langle\cdot\rangle$ post-launch means, and all variances on the
post-launch window: $\hat\lambda_{\mathrm{post}} = 1.06$ pooled at the
regression weights ($1.43$ Samsung, $0.68$ SK~Hynix). A non-positive
implied value would be reported as a refusal of the identity, never
truncated; the observed values are interior.

\emph{Untargeted moments.} Table~\ref{tab:calibration}, Panel~B reports
the calibration against the moments it never used: the per-name
closing-return autocorrelations, the overnight share of daily variance and
the total-to-overnight variance ratio --- computed from one
common-random-number simulation of the two-stock recursion over the actual
daily dose path (1{,}200 replications; simulation enters no objective and
serves only as the diagnostic, the constant-loading closed form
overstating $|\rho_1|$ under a day-varying dose) --- together with the
reversal share and the pooled overshoot evaluated at the actual gaps, the
two pre-launch zero-restrictions above, and the per-name overshoot
factors. Each is shown with the data value and its date-block-bootstrap
standard error.

\subsection{Recovering the shocks, and the $\ell = 0$ replay}
\label{app:replay}

The counterfactual machinery is a deterministic transformation of the
observed history, not a simulation. A Kalman filter decomposes each day's
observed prices into the day's structural shocks and the standing pricing
error; the replay then re-prices the identical shock history --- the same
news, the same private information, the same noise realizations, day by
day --- through the same equilibrium with the loop gain set to zero. No
random draw enters any reported counterfactual, and two runs of the
pipeline produce identical output to machine precision; Monte Carlo
simulation appears in the quantification exactly once, in the untargeted
diagnostic moments of Appendix~\ref{app:calibration}, and enters no
reported price path, volatility, or transfer.

The recovery applies the exact Kalman filter and Rauch--Tung--Striebel
smoother to the two-state form of the error decomposition of
Appendix~\ref{app:calibration}, observation the daily pair (overnight gap
$g_t$, close-to-close return $r_{cc,t}$), state
$(e_{t-1}, \tilde e_{t-1})$. Two facts make the
filter exact rather than an approximation. First, the day's print is computable ahead of the state: at
the observed gap the concave close prints $p_t$ solving $b\,p^2 +
(\bar\rho+\psi_c-\ell_t)\,p = \ell_t g_t$, the price carries the secant
loading $\Pi_t = 1 + p_t/g_t$, and every within-day flow --- the private
and noise orders, and the correction flow $-\theta e_{t-1}$, an increment
to the day's order --- is priced at the marginal linearization $\Pi^m_t =
1 + \ell_t/(\bar\rho+\psi_c-\ell_t+2b|p_t|)$. Second, substituting
$\varepsilon_t = g_t + \theta e_{t-1}$ makes the state transition
linear with coefficients known at $t$, so no news series enters the
filter. The transitions are
\[
e_t = (1-\theta\Pi^m_t)\,
e_{t-1} + (\Pi^m_t - 1)\,\varepsilon_t ,
\]
\[
\tilde e_t = \tilde e_{t-1} + (\Pi_{u,t}-1)\,u_t + a_{\tilde z,t}\tilde z_t
+ (\Pi_t - \Pi^m_t)\,g_t ,
\]
where the last term, the bounded residual between the realized print and
its marginal attribution, preserves the accounting identity $e_t -
e_{t-1} = r_{cc,t} - \varepsilon_t - u_t$ exactly. The correction flow
must be priced at the marginal rather than the secant loading: the two
coincide on the linear close, but attributing the secant to the error
stock leaves it without finite moments wherever the dose approaches the
pole. Variance levels are $\sigma_0^2$ equal to the post-launch variance
of the observed gap, $\sigma_u^2 = \sigma_0^2/\widehat{\mathit{mix}}$, and
$\sigma_{\tilde z}^2 = \hat\lambda_{\mathrm{post}}\sigma_u^2$; the
recovered error paths are invariant to the common scale of the three
(verified numerically, deviations of order $10^{-11}$).

The recovered shocks and fundamental $F_t = P_t - e_t$ are replayed
through the same equilibrium with the loop gain set to zero from the
launch date, the error stocks seeded at their smoothed pre-launch values
so that the offshore-era distortion is inherited rather than discarded;
replaying the factual dose returns the observed return path to machine
precision, which is the wiring check every run asserts. The difference
between the paths is the mechanical channel. Level objects --- the holder
loss and the per-unit shortfalls --- are marked on the identified public
component $e_t$: left uncorrected, $\tilde e_t$ is a within-sample level
the data do not discipline, and the reported objects that require the full
price path, the July drawdown comparison and the per-unit shortfall, are
accordingly also reported for $\tilde e_t$ correcting at $0.10$ and at
$\hat\theta$.

Two engines are run deliberately differently. The reduced-form engine
prices only the instrumented public channel, propagating $d_t =
(1-\hat\theta\hat\Pi_g)d_{t-1} + (\hat\Pi_g-1)\,b\,u_t$ with $b$ implied by
the measured reversal legs; it uses no impact estimate. The structural
engine prices all three channels at the measured loop gain and capacity
ratio and is the engine quoted in Section~\ref{sec:quant}; the two agree
in magnitude on the transfer.

\emph{Uncertainty.} Three axes are propagated through the full recovery
and replay on the actual, unresampled history, and the resulting ranges
are reported with the point estimates. (i) The capacity scale, across the
weakly identified range of Appendix~\ref{app:calibration} and the
alternative depth-ratio readings: SK~Hynix's full-window excess volatility spans
$33.6$--$37.8$ percentage points. (ii) The correction rate of $\tilde e_t$, taken in turn as
$0$, $0.10$ and $\hat\theta$: the
excess-volatility attribution spans $33.9$--$36.8$ points; the
counterfactual July drawdown spans $43.5$--$49.2$ per cent against a
realized $50.1$, so no drawdown attribution is claimed at a point; and
the per-unit shortfall, quoted at the record in
Section~\ref{sec:quant}, spans $5.4$--$17.6$ per cent of the initial
investment for SK~Hynix and $2.2$--$7.7$ for Samsung.
(iii) The dose, rescaled by the out-of-sample correction of the fitted
impact curve at rebalance scale (the power law fitted on participation
below $0.75$ overpredicts realized impact above it by ratio $0.79$, with
a date-block interval of $[0.32, 1.59]$ that contains one): the
attribution is $18.7$ points at the lower end and $32.2$ at the point
correction; at the upper end SK~Hynix's rescaled dose leaves the model's
stationary support, and that cell is reported as non-evaluable rather
than as a number.

\emph{Leverage rules.} The size designs of Table~\ref{tab:flexrules} are
replayed on the same shocks, but each at the fixed point of the manager's own
rule and of the price path that rule produces. Capital is evolved
endogenously, since a lower multiple destroys less equity in a crash, while
the measured net-flow path is held fixed; nothing is re-estimated. Each rule
is written as a \emph{cap} rather than a target, so the fund carries its
stated $2\times$ and the rule binds only where the state is extreme, which is
what the circular \citep{sfc2026} contemplates and what leaves the product recognizable to the
investor who bought it. Feedback capital is the sensitivity of the close's
order to the closing price, $A(L^{2}-L)$ wherever the multiple is held and
zero on a day the rule does not trade; the capital and flow columns are
reported as ratios to the status quo evaluated at each design's own
equilibrium rather than at the status quo's.

\subsection{Null calibrations}
\label{app:nullcal}

The identical pipeline applied cell by cell (Samsung pre-product, Samsung
offshore-only, the U.S.\ index complexes, SK~Hynix pre-domestic-launch,
both names post-launch) returns recovered excess volatility monotone in
the cell's loop gain, rank correlation $0.95$ across eight cells spanning
loop gains from $0.0001$ to the post-launch maximum; Samsung's pre-launch
cell returns $+0.004$ percentage points and the S\&P~500 complex $+0.076$.
The exercise establishes a property of the machinery (monotonicity, and a
zero at zero), not of the calibration; its levels were estimated under the
earlier proxy-based loop gains and are superseded by the measured system.
Five alternative daily impact proxies span a factor of four in level while
agreeing on the cross-section and on the sign of the attribution in all 110
admissible calibrations.

\subsection{The venue split and the futures leg}
\label{app:venue}

The quantification prices the share $\varsigma = 0.49$ of the worldwide
mandate that settles in the cash closing auction. This is the spot slice of
Appendix~\ref{app:families}, applied to worldwide $K$ and common to both
names; the loop gain $\bar\ell = \Lambda_c K^{\mathrm{spot}}$ is built on that
slice, and every counterfactual replays it.

The residual share $1-\varsigma$ settles in single-stock futures, which trade
to 15:45 and therefore print after the 15:30 reference. Two conventions govern
it. First, it does not enter the cash price. It cannot enter the day's
close-to-close return, which is marked at 15:30; across the post-launch window
the next cash open loads $-0.03$ $(0.28)$ on the futures closing-call return in
the control roots and $+0.62$ $(1.49)$ in the treated pair, clustering by date.
Second, its execution cost is attributed to holders. The leg's order is
$(1-\varsigma)K_t|r_t|$, its displacement is evaluated on the impact curve of
Appendix~\ref{app:impact} with the futures venue's own traded value substituted
for the cash auction's, and the cost is quantity times displacement, summed over
the window. Two execution windows are reported: the 09:00--15:30 futures
session, which the tables carry, and the 15:30--15:45 window between the two
prints. Median futures value over May~27--July~31 is \KRW 19.2 trillion in the
session and \KRW 516 billion in the reset window for SK~Hynix, against
\KRW 14.4 trillion and \KRW 307 billion for Samsung; at matched order size the
implied concession per won in the futures closing call is $2.65$ times the cash
closing auction's for SK~Hynix and $3.44$ for Samsung. Designs replayed at a
rebalancing cycle other than one day carry no futures leg, since neither a
day-matched venue value nor the daily order is defined for them, and their
holder-loss cell is left blank.

All size splits --- across auctions, across staggered prints, and across
venues --- rescale the measured dose by the size ratio raised to $\hat\delta$,
the concavity exponent of the impact curve, so that a design's dose and the
calibration's dose are taken on one exponent.

\subsection{Ledger conventions}
\label{app:ledger}

The simulation respects Proposition~\ref{prop:ledger} to machine precision:
the four groups' profits sum to zero on every replayed day, and the
automaton's row equals the realized loss
$\sum_t K_t r_{cc,t} e_t$ exactly. Every holder-loss figure reported in
Section~\ref{sec:quant} and in Tables~\ref{tab:schedule},
\ref{tab:flexrules} and \ref{tab:reference} adds to that quantity the futures
leg's execution cost of Appendix~\ref{app:venue} and then subtracts the same
design's $\ell\to0$ replay, so each reported number is a design's cost net of
its own no-loop benchmark rather than a level; the no-loop row is zero by
construction. The loss marks realized
worldwide rebalancing flow against the identified public component of the
recovered pricing error --- the basis of every level object under the
maintained specification of $\tilde e_t$
(Appendix~\ref{app:replay}) --- and
the reduced-form floor, which uses no impact estimate, is reported beside
it. Two conventions attach to any incidence number. The currency of
the loss is worldwide $K$ (for SK~Hynix mostly the Hong~Kong product),
whereas the per-unit shortfall a domestic holder experiences covers the
sixteen Korean products; the two differ by roughly a factor of three
mechanically. The split across investor types marks each type's daily net
purchases of the underlying against the recovered error under the
assumption that average execution occurs at the closing price, which no
data verify; it is an allocation, not realized profit and loss. The
product-level fact needs no model: since the launch, retail bought
\KRW 14.0 trillion of the single-stock products while institutions sold
\KRW 16.2 trillion.

\subsection{The interval, the venue split, and the two-auction design}
\label{app:designs}

Four of the designs
depend on the model in ways worth isolating.

\emph{Interval.} $\ell(T) = \gamma(\sigma_u^2 T + \sigma_0^2)K/\mu_c$
rises with the interval; shrinking it, $\ell(T)/\ell(1) \to
\sigma_0^2/(\sigma_u^2+\sigma_0^2) > 0$, since pre-session public news is
unresolved at every auction. The per-event intermediation cost is unmeasured,
so the break-even cost per rebalancing event is reported instead of a net
benefit.

\emph{Band.} A rebalancing band concentrates several days of un-rebalanced
return into one auction; the loop gain is linear in size, not in count, so the
firing day's loop gain is a multiple of the daily one. The speculators'
response to a near-firing band is not modelled.

\emph{Venue split.} Panel~A of Table~\ref{tab:reference} varies $\varphi$, the share
of the worldwide mandate settling in the cash closing auction, holding the
mandate itself fixed. The cash leg is rescaled from the measured dose as
$\ell(\varphi) = \bar\ell\,(\varphi/\varsigma)^{\hat\delta}$ and enters the chain; the
futures leg is priced as in Appendix~\ref{app:venue} and does not. At
$\varphi = \varsigma = 0.49$ the row reproduces the realized replay exactly, which is
gated. The large-trader extension of Appendix~\ref{app:phigame} is not the panel's
model.

\emph{Two-price reference.} The opening auction's depth is bounded rather
than identified: the 09:00 bar mixes the auction with the first continuous
minute and carries no auction flag, so the bar bounds the auction from above
and the first minute is netted out at the 09:01 rate
(Table~\ref{tab:kappa}); the opening auction is the shallower venue on every
window. The ratio halves across the launch entirely through its
denominator (opening auction flat, $+0.08$, $t = 0.6$; closing auction $+0.40$,
$t = 5.1$), and what grows the closing auction is the complex's own order, so
the design is priced at the pre-launch ratio; the post-launch ratio would
be circular, and even the pre-launch denominator carries the Hong~Kong
order. Under the measured curvature, splitting one order across two auctions
raises total execution cost by $\approx 2^{1-\hat\delta}$ even as it lowers
each auction's saturation: the design's gain is on the resonance and on
saturation, not on the cost of trading.

\subsection{The ratchet as a scale bound}
\label{app:ratchetsim}

Because the loop gain compounds at the fund's own leverage
(Proposition~\ref{prop:ratchet}), a launch-date eligibility screen is
uninformative about the same complex weeks later; the instrument of
Section~\ref{sec:threshold} is therefore written on the day's own
rebalancing order against the auction that must clear it (the saturation
ratio of equation~\eqref{eq:saturation}, linear in capital and observable
daily), which tracks the ratchet automatically. Empirically, 63\% of
SK~Hynix's rebalancing-capital growth from launch to peak was manufactured
by leveraged compounding rather than subscribed, while Samsung's ramp was
almost entirely subscriptions. Because the impact curve is concave, a given
cut in the order delivers less than a proportional cut in the loop gain
while $\bar A$ falls one for one, so the ceiling is the stronger instrument
against the strategic channel. The majority of SK~Hynix's rebalancing
capital is listed offshore and rebalances into the same Seoul auction, so a
ceiling written on domestically listed assets alone would leave most of the
order outside its own numerator.

\clearpage

\section{Additional Tables and Figures}
\label{app:exhibits}

\subsection{Institutional detail}

\begin{table}[H]
\centering
\caption{\captitle{The worldwide leveraged-ETF complex on the treated stocks}
Panel~A presents every product family referencing the treated pair, its listing date, reference price, peak assets over May~27--July~21, 2026, and share of worldwide rebalancing capital $K = \sum_i A_i(L_i^2 - L_i)$ (equation~\eqref{eq:K}), FX-converted at daily rates. KOSPI200 pass-through chases the 15:45 index-futures close and is excluded from $K$.
Panel~B scales the mandated flow by the 15:20--15:30 closing auction over May~27--June~30 in the native intraday window; $S_t$ is equation~\eqref{eq:saturation}.
See Appendices~\ref{app:families}--\ref{app:kappa} for construction.}
\label{tab:complex}
\medskip
\singlespacing
\footnotesize
\textit{Panel A. Composition: who mandates the flow, and at which auction
(AUM in \KRW tn)}\\[2pt]
\setlength{\tabcolsep}{3pt}
\begin{tabular}{lllrr}
\toprule
Product family & Listed & Reference price & AUM peak & Share of $K$ \\
\midrule
\multicolumn{5}{l}{\textit{SK Hynix}} \\
\quad Korea domestic (8 products) & May 27, 2026 &
  Seoul auction 15:20--15:30 & 11.5 & 34\% \\
\quad Hong Kong (CSOP $2\times$, 7709.HK) & Oct 21, 2025 &
  same Seoul auction & 25.9 & 66\% \\
\quad U.S. ADR products ($2\times$) & Jul 13--14, 2026 &
  ADR close 16:00 ET & 0.1 & 0\% \\
\quad Index pass-through (KOSPI200) & 2016-- &
  index-futures close 15:45 &, & excluded \\
\addlinespace
\multicolumn{5}{l}{\textit{Samsung Electronics}} \\
\quad Korea domestic (8 products) & May 27, 2026 &
  Seoul auction 15:20--15:30 & 6.1 & 99\% \\
\quad Hong Kong (CSOP $2\times$, 7747.HK) & May 28, 2025 &
  same Seoul auction & 0.5 & 1\% \\
\quad Index pass-through (KOSPI200) & 2016-- &
  index-futures close 15:45 &, & excluded \\
\bottomrule
\end{tabular}

\bigskip
\textit{Panel B. Scale: the mandated order against the auction that clears
it}\\[2pt]
\begin{tabular}{lrr}
\toprule
 & SK Hynix & Samsung \\
\midrule
Worldwide $K$, post-launch mean (peak day), \KRW tn & 42.4\ (74.5) & 8.5\ (12.3) \\
Median value cleared by the closing auction, \KRW bn & 1{,}081 & 850 \\
Auction share of daily traded value, median, pre $\to$ post & 7.1\% $\to$ 8.3\% & 6.5\% $\to$ 9.3\% \\
\addlinespace
Mandated flow per 1\% move (both legs), \KRW bn & 424 & 85 \\
\quad as a multiple of the median auction & 0.39$\times$ & 0.10$\times$ \\
\quad spot slice at the reference price, multiple & 0.19$\times$ & 0.05$\times$ \\
\addlinespace
Median absolute close-to-15:20 return, post-launch & 5.1\% & 5.2\% \\
Realized order $\div$ realized auction value ($S_t$): & & \\
\quad median & \textbf{1.06} & 0.25 \\
\quad maximum & 7.50 & 0.68 \\
\quad days above one & 26 of 46 & 0 of 46 \\
\bottomrule
\end{tabular}
\end{table}

\subsection{Theory}

\begin{table}[H]
\centering
\caption{\captitle{Model predictions and their tests, prices}
The table presents one row per price prediction, its designated test, and the status of that test. ``Supported'' means the test passed under the paper's inferential standard, exact randomization inference for every differenced estimate. Magnitudes are the SMH readings, the instrument the randomization inference was run on; the calibration's benchmark instrument is QQQ (Table~\ref{tab:pi0_instruments}).}
\label{tab:status}
\medskip
\singlespacing
\footnotesize
\setlength{\tabcolsep}{4pt}
\begin{tabular}{p{4.9cm}p{6.0cm}p{3.7cm}}
\toprule
Prediction & Test, and where & Status \\
\midrule
\multicolumn{3}{l}{\textit{Prices}}\\
Close overshoots public news at every $\ell>0$
 (Corollary~\ref{cor:overshoot})
 & news-loading path at the reference price
   (Figure~\ref{fig:loading_cycle})
 & supported ($+0.81$, RI $p=0.0005$) \\
One-cycle reversal $-\Pi_g(\Pi_g-1)$
 (Proposition~\ref{prop:chaining}(ii))
 & triple-difference echo (Table~\ref{tab:echo})
 & supported ($-0.79$; 1 of 66 pairs, 0 of 47 dates) \\
Reversal begins overnight, splitting $-(\Pi_g-1)$ / $-(\Pi_g-1)^2$ (iii)
 & window split (Table~\ref{tab:window_split})
 & supported (both legs; leg ratio $0.88$ vs $0.90$) \\
Ratio identity total/overnight $=\Pi_g$ (iv)
 & Table~\ref{tab:pi0_instruments}
 & supported and over-identified \\
Reversal scales with the size of the move
 & median-shock split (Table~\ref{tab:mechanical}, A)
 & supported (large $-0.81$; small null) \\
Reversal switches with the sign of the move
 & sign split (Table~\ref{tab:mechanical}, B)
 & partial ($z=0.87$ SMH; $z=2.15$ QQQ) \\
Effect is graded in the loop gain
 & all comparison cells (Table~\ref{tab:echo_full})
 & supported (null in every low-$\ell$ cell) \\
Reversal does not cumulate
 & horizons $h\le21$ (Section~\ref{sec:empirics})
 & supported \\
Cycle-return variance rises with $\ell$, conditionally on the size of
 the news (Proposition~\ref{prop:vol})
 & news-conditional aftershock (Table~\ref{tab:aftershock});
   idiosyncratic-volatility levels (Table~\ref{tab:battery})
 & supported ($+0.66$, $t=3.1$; levels $+18$--$55$ per cent,
   RI $p=0.033$) \\
Auction carries no news (deterministic order)
 & auction-leg loading; $R^2$ on the instrument
 & supported ($-0.03$, RI $p=0.14$) \\
\bottomrule
\end{tabular}
\end{table}



\begin{table}[H]
\centering
\caption{\captitle{The execution split and the joint frontier}
The table evaluates the execution split at $\rho = \rho_f = 2$ and $N = 1$.
Panel~A presents the banging locus of Proposition~\ref{prop:phigame}(ii), the holder-optimal split, and the holder loss at $\varphi^\ast$ against $\varphi = 1$, masked to the region where the second-order conditions hold and no pole has been crossed, the branch fixed by continuity from $\ell \to 0$; the final column records whether the chained system damps at $\varphi = 1$. The locus reaches $\varphi = 1$ at $\ell = 0.5$ and has no crossing inside the unit interval above it, marked --.
Panel~B minimizes $C(T,\varphi) = -K W_h(\ell(T),\varphi) + c\mu_c/T$ at $K = 0.3$, $\ell(T) = 0.15(T{+}1)$, $c\mu_c = 0.02$ and unit per-day shock variances. Loop gains are in model units.
The displaced-venue stage is solved for a single large trader, so the table illustrates the loci of Proposition~\ref{prop:phigame} rather than transporting to the calibrated primitives; the measured splits are priced at the calibration in Table~\ref{tab:reference}.}
\label{tab:phi}
\medskip
\singlespacing
\small
\textit{Panel A. The split at a fixed rebalancing interval}\\[2pt]
\begin{tabular}{lrrrrl}
\toprule
$\ell$ & $\varphi_{bang}$ & $\varphi^\ast$ & $-W_h(\varphi^\ast)$ &
$-W_h(\varphi{=}1)$ & chain damps at $\varphi{=}1$? \\
\midrule
0.10 & 0.481 & 0.840 & $-0.078$ & $-0.075$ & yes \\
0.20 & 0.542 & 0.535 & $+0.067$ & $+0.149$ & yes \\
0.30 & 0.619 & 0.385 & $+0.275$ & $+0.610$ & yes \\
0.40 & 0.729 & 0.275 & $+0.617$ & $+1.815$ & yes \\
0.50 & 1.000 & 0.175 & $+1.256$ & $+7.000$ & no; rings \\
0.60 & -- & 0.085 & $+2.701$ & $+187.0$ & no; rings \\
\bottomrule
\end{tabular}

\medskip
\textit{Panel B. The joint frontier in the interval $T$ and the split}\\[2pt]
\begin{tabular}{lrrrr}
\toprule
$T$ (days) & $\ell(T)$ & $C$ at $\varphi{=}1$ & $\varphi^\ast(T)$ &
$C$ at $\varphi^\ast$ \\
\midrule
0.25 & 0.188 & $+0.114$ & 0.560 & $+0.094$ \\
0.50 & 0.225 & $+0.110$ & 0.490 & $\mathbf{+0.073}$ \\
1 & 0.300 & $+0.203$ & 0.385 & $+0.103$ \\
2 & 0.450 & $+1.008$ & 0.225 & $+0.274$ \\
3 & 0.600 & $+56.1$ & 0.085 & $+0.817$ \\
4 & 0.750 & past the pole & 0.020 & $+4.497$ \\
\bottomrule
\end{tabular}
\end{table}

\subsection{Empirical extensions}

\begin{table}[H]
\centering
\caption{\captitle{The echo, all cells and both horizons}
The table presents the full version of Table~\ref{tab:echo}: same-day and next-day loadings on the pre-open shock by cell under each shock instrument.
The rows present the Korean treated pair in its three product regimes, Korean large-cap controls, U.S.\ single-stock $2\times$ complexes and U.S.\ index complexes, with the triple differences from the stacked Korean panel in the bottom block.}
\label{tab:echo_full}
\medskip
\singlespacing
\footnotesize
\begin{tabular}{lrrrrrl}
\toprule
 & & \multicolumn{2}{c}{same day} & \multicolumn{2}{c}{next day} & \\
\cmidrule(lr){3-4}\cmidrule(lr){5-6}
Cell & Obs & $\beta$ & $t$ & $\beta$ & $t$ & Model \\
\midrule
\multicolumn{7}{l}{\textit{shock: SMH}} \\
\quad Korea treated, post-launch & 82 & 0.77** & 2.30 & -0.83*** & -2.83 & $<0$ \\
\quad Korea treated, offshore only & 365 & 0.84*** & 5.95 & -0.07 & -0.45 & $=0$ \\
\quad Korea treated, no product & 721 & 0.48*** & 9.78 & 0.02 & 0.43 & $=0$ \\
\quad Korea controls, post & 410 & 0.20 & 1.46 & 0.17 & 1.34 & $=0$ \\
\quad Korea controls, pre & 5430 & 0.22*** & 5.71 & 0.03 & 0.76 & $=0$ \\
\quad US single-stock, 2x products & 5168 & 0.70*** & 11.51 & 0.02 & 0.31 & $=0$ \\
\quad US index & 2584 & 0.33*** & 13.39 & 0.02 & 0.83 & $=0$ \\
\addlinespace
\multicolumn{7}{l}{\textit{shock: QQQ}} \\
\quad Korea treated, post-launch & 82 & 1.59** & 2.28 & -1.76*** & -2.68 & $<0$ \\
\quad Korea treated, offshore only & 365 & 1.51*** & 5.65 & -0.09 & -0.35 & $=0$ \\
\quad Korea treated, no product & 721 & 0.77*** & 8.36 & -0.04 & -0.39 & $=0$ \\
\quad Korea controls, post & 410 & 0.59** & 2.13 & 0.23 & 1.02 & $=0$ \\
\quad Korea controls, pre & 5430 & 0.39*** & 5.45 & 0.02 & 0.34 & $=0$ \\
\quad US single-stock, 2x products & 5168 & 1.49*** & 12.44 & -0.05 & -0.36 & $=0$ \\
\quad US index & 1938 & 0.62*** & 13.07 & 0.03 & 0.51 & $=0$ \\
\addlinespace
\multicolumn{7}{l}{\textit{shock: SPY}} \\
\quad Korea treated, post-launch & 82 & 2.34* & 1.77 & -3.92*** & -2.94 & $<0$ \\
\quad Korea treated, offshore only & 365 & 1.91*** & 5.18 & -0.07 & -0.19 & $=0$ \\
\quad Korea treated, no product & 721 & 0.98*** & 7.43 & -0.09 & -0.73 & $=0$ \\
\quad Korea controls, post & 410 & 1.13** & 2.00 & -0.27 & -0.65 & $=0$ \\
\quad Korea controls, pre & 5430 & 0.59*** & 5.80 & 0.02 & 0.24 & $=0$ \\
\quad US single-stock, 2x products & 5168 & 2.04*** & 11.04 & -0.07 & -0.35 & $=0$ \\
\quad US index & 1938 & 1.04*** & 13.73 & 0.00 & 0.04 & $=0$ \\
\addlinespace
\midrule
\multicolumn{7}{l}{\textit{Triple difference, Korean panel (treated $\times$ post)}} \\
\quad SMH & 7008 & & & -0.79*** & -4.21 & $<0$ \\
\quad QQQ & 7008 & & & -1.55*** & -3.35 & $<0$ \\
\quad SPY & 7008 & & & -2.29*** & -3.02 & $<0$ \\
\midrule
\multicolumn{7}{p{0.95\textwidth}}{\footnotesize Each cell row is a separate regression of the day-$t$ return (``same day'') or the day-$t{+}1$ return (``next day'') on the exogenous overnight shock, with the intervening shock controlled in the next-day column; standard errors clustered by date; */**/*** = 10/5/1\%. Korea's shock is the US index return on the previous Korean trading date, set before Korea opens; US cells use the same index ETF's own overnight gap. The same-day column is benign news transmission; the next-day column is the test.} \\
\bottomrule
\end{tabular}
\end{table}

\begin{table}[H]
\centering
\caption{\captitle{The window split of the echo}
The table splits the one-cycle reversal of Table~\ref{tab:echo} into its overnight leg, close $t$ to open $t{+}1$ in Panel~A, and its next-session leg, open $t{+}1$ to close $t{+}1$ in Panel~B, by cell and shock under the identical frozen specification. Proposition~\ref{prop:chaining} predicts loadings $-(\Pi_g-1)$ and $-(\Pi_g-1)^2$ in the treated-post cell only.
The ratio of the two legs, which the model requires to equal $\Pi_g - 1$, is $0.88$, $0.68$ and $0.37$ on the three instruments, against the $0.90$, $0.69$ and $0.38$ implied by the separately measured $\hat\Pi_g$.
$^{*}$, $^{**}$, $^{***}$ mark $10\%$, $5\%$, $1\%$.}
\label{tab:window_split}
\medskip
\singlespacing
\small
\begin{tabular}{lcccr}
\toprule
 & SMH & QQQ & SPY & Obs \\
\midrule
\multicolumn{5}{l}{\textit{Panel A: overnight leg $\ln(O_{i,t+1}/C_{i,t})$}} \\
\addlinespace
Treated $\times$ post & $-0.42^{**}$ & $-1.00^{**}$ & $-2.75^{***}$ & 82 \\
 & (0.20) & (0.41) & (0.86) &  \\
Treated $\times$ pre (offshore only) & $-0.06$ & $-0.09$ & $-0.09$ & 365 \\
 & (0.08) & (0.15) & (0.23) &  \\
Control $\times$ post & $0.07$ & $0.11$ & $-0.18$ & 410 \\
 & (0.06) & (0.13) & (0.23) &  \\
Control $\times$ pre & $-0.01$ & $-0.03$ & $-0.04$ & 5,430 \\
 & (0.02) & (0.04) & (0.06) &  \\
\midrule
Triple difference ($u_t \times \text{treated} \times \text{post}$) & $-0.32^{***}$ & $-0.79^{***}$ & $-1.49^{***}$ & 7,008 \\
 & (0.11) & (0.23) & (0.40) &  \\
\midrule
\multicolumn{5}{l}{\textit{Panel B: next-session leg $\ln(C_{i,t+1}/O_{i,t+1})$}} \\
\addlinespace
Treated $\times$ post & $-0.37^{**}$ & $-0.68^{*}$ & $-1.01$ & 82 \\
 & (0.18) & (0.41) & (0.74) &  \\
Treated $\times$ pre (offshore only) & $0.05$ & $0.08$ & $0.13$ & 365 \\
 & (0.11) & (0.20) & (0.26) &  \\
Control $\times$ post & $0.09$ & $0.10$ & $-0.11$ & 410 \\
 & (0.10) & (0.17) & (0.29) &  \\
Control $\times$ pre & $0.05$ & $0.06$ & $0.07$ & 5,430 \\
 & (0.03) & (0.05) & (0.07) &  \\
\midrule
Triple difference ($u_t \times \text{treated} \times \text{post}$) & $-0.43^{**}$ & $-0.68^{*}$ & $-0.69$ & 7,008 \\
 & (0.18) & (0.40) & (0.69) &  \\
\bottomrule
\end{tabular}
\end{table}


\begin{table}[H]
\centering
\caption{\captitle{Splits and sample cuts of the echo}
Each row presents the triple-difference coefficient $b_3$, shock $\times$ treated $\times$ post, on the next-day close-to-close return under the frozen baseline of Table~\ref{tab:echo}, modified only as the row label states.
Panel~A splits shock days at the median absolute shock, and Panel~B by shock sign. Panel~C drops extreme days, the crash month, and each treated stock alone.
$^{*}$, $^{**}$, $^{***}$ mark $10\%$, $5\%$, $1\%$.}
\label{tab:mechanical}
\medskip
\singlespacing
\footnotesize
\begin{tabular}{lccr}
\toprule
 & SMH & QQQ & Obs. \\
\midrule
\multicolumn{4}{l}{\textit{Panel A. Shock-size split (at the median absolute shock)}} \\
Large shocks (top half) & $-0.81^{***}$ & $-1.50^{***}$ & 3,504 \\
 & (0.21) & (0.46) & \\
Small shocks (bottom half) & $0.63$ & $2.86$ & 3,504 \\
 & (0.54) & (3.27) & \\
\addlinespace
\multicolumn{4}{l}{\textit{Panel B. Shock-sign split}} \\
Up-shock days & $-0.58$ & $0.43$ & 3,972 \\
 & (0.58) & (1.20) & \\
Down-shock days & $-1.38^{***}$ & $-2.94^{***}$ & 3,024 \\
 & (0.36) & (0.63) & \\
Difference (up $-$ down), $z$ & $1.17$ & $2.48$ & \\
\addlinespace
\multicolumn{4}{l}{\textit{Panel C. Sample and specification cuts}} \\
Baseline (all days) & $-0.79^{***}$ & --- & 7,008 \\
 & (0.19) & & \\
Drop days with $|$return$| > 10\%$ & $-0.75^{***}$ & --- & 6,905 \\
 & (0.19) & & \\
Drop July 2026 (the crash month) & $-0.77^{***}$ & --- & 6,780 \\
 & (0.20) & & \\
SK Hynix only & $-0.78^{***}$ & --- & 6,424 \\
 & (0.24) & & \\
Samsung only & $-0.79^{***}$ & --- & 6,424 \\
 & (0.15) & & \\
\bottomrule
\end{tabular}

\end{table}

\begin{table}[H]
\centering
\caption{\captitle{The direction of investor flow}
Panel~A regresses each investor group's net purchase of the leveraged funds, scaled by fund assets, on the product's own NAV return, with product fixed effects and date-clustered standard errors; the lower block presents the same correlation in the underlying shares, before and after the launch. Retail buys as prices fall, and institutions take the other side.
Panel~B splits the day into the overnight gap, which retail cannot have caused, and the intraday move; the two loadings are indistinguishable, so the contrarian sign is not reverse causality from retail's own pressure.
Panel~C presents net buying against the trailing five-day return and creations against the same-day return; neither responds, and creations track retail demand at $t-2$ instead, the liquidity-provider inventory chain.
Estimation samples differ across rows with the lag depth. $^{*}$, $^{**}$, $^{***}$ mark $10\%$, $5\%$, $1\%$.}
\label{tab:flowchasing}
\medskip
\singlespacing
\footnotesize
\begin{tabular}{lcccc}
\toprule
 & Retail & Institutions & Foreigners & Other \\
\midrule
\multicolumn{5}{l}{\textit{Panel A. Net buying on the LETF tape, per won of fund AUM}} \\
Same-day NAV return & $-0.301^{***}$ & $+0.292^{***}$ & $+0.010$ & $-0.000$ \\
 & (0.028) & (0.028) & (0.007) & (0.004) \\
NAV return, $t-1$ & $+0.080^{**}$ & $-0.077^{*}$ & $-0.002$ & $-0.001$ \\
 & (0.040) & (0.041) & (0.009) & (0.002) \\
NAV return, $t-2$ & $+0.074^{**}$ & $-0.060^{*}$ & $-0.007$ & $-0.006^{**}$ \\
 & (0.032) & (0.032) & (0.005) & (0.003) \\
\addlinespace
Samsung stock, corr.\ with return (pre-launch) & $-0.639$ & $+0.388$ & $+0.514$ & $-0.184$ \\
\quad post-launch & $-0.748$ & $+0.650$ & $+0.455$ & $-0.510$ \\
SK Hynix stock, corr.\ with return (pre-launch) & $-0.644$ & $+0.614$ & $+0.489$ & $-0.705$ \\
\quad post-launch & $-0.783$ & $+0.633$ & $+0.494$ & $-0.631$ \\
\addlinespace
\multicolumn{5}{l}{\textit{Panel B. The same day split into its predetermined and its contemporaneous half}} \\
Overnight gap (predetermined) & $-0.296^{***}$ & $+0.303^{***}$ & $-0.008$ & $+0.000$ \\
 & (0.047) & (0.043) & (0.013) & (0.005) \\
Intraday move & $-0.315^{***}$ & $+0.280^{***}$ & $+0.033^{**}$ & $+0.001$ \\
 & (0.045) & (0.049) & (0.013) & (0.006) \\
\addlinespace
\multicolumn{5}{l}{\textit{Panel C. Longer horizons, and the primary market}} \\
 & \multicolumn{2}{c}{Estimate} & \multicolumn{2}{l}{} \\
Retail net buying on the trailing 5-day return & \multicolumn{2}{c}{$+0.003$} & \multicolumn{2}{l}{} \\
 & \multicolumn{2}{c}{(0.016)} & \multicolumn{2}{l}{} \\
Creation rate on the same-day return & \multicolumn{2}{c}{$+0.014$} & \multicolumn{2}{l}{} \\
 & \multicolumn{2}{c}{(0.029)} & \multicolumn{2}{l}{} \\
\quad on the return at $t-2$ & \multicolumn{2}{c}{$-0.222^{***}$} & \multicolumn{2}{l}{} \\
 & \multicolumn{2}{c}{(0.031)} & \multicolumn{2}{l}{} \\
Creation rate on retail net buying at $t-2$ & \multicolumn{2}{c}{$+0.659^{***}$} & \multicolumn{2}{l}{\footnotesize the provider-inventory chain} \\
 & \multicolumn{2}{c}{(0.063)} & \multicolumn{2}{l}{} \\
\midrule
Product-days estimated & \multicolumn{4}{c}{640--720} \\
Date clusters & \multicolumn{4}{c}{46} \\
\bottomrule
\end{tabular}
\end{table}

\begin{table}[H]
\centering
\caption{\captitle{The echo against the same-sector control group}
The table re-estimates the specification of Table~\ref{tab:echo} on the treated pair and the ten technology-peer controls of Appendix~\ref{app:inference}; the treated rows are identical to Table~\ref{tab:echo} by construction.
The peers' own post-launch cell carries a reversal roughly a quarter of the treated cell's, spillover together with their own low-loop-gain index pass-through, so this triple difference is a lower bound: it is $-0.48$, $-1.03$ and $-1.65$ on the next-day total against the large-cap group's $-0.79$, $-1.55$ and $-2.29$.
Standard errors are clustered by date.}
\label{tab:echo_tech}
\medskip
\singlespacing
\footnotesize
\setlength{\tabcolsep}{4.5pt}
\begin{tabular}{lccccccr}
\toprule
 & \multicolumn{3}{c}{Next-day (total)} & \multicolumn{3}{c}{Overnight leg} & \\
\cmidrule(lr){2-4}\cmidrule(lr){5-7}
 & SMH & QQQ & SPY & SMH & QQQ & SPY & Obs. \\
\midrule
Treated $\times$ post & $-0.83$ & $-1.76$ & $-3.92$ & $-0.44$ & $-1.05$ & $-2.88$ & 82 \\
 & (0.29) & (0.66) & (1.34) & (0.21) & (0.43) & (0.93) & \\
Treated $\times$ pre & $+0.01$ & $-0.03$ & $-0.07$ & $-0.02$ & $-0.06$ & $-0.07$ & 1,086 \\
 & (0.06) & (0.10) & (0.14) & (0.04) & (0.07) & (0.10) & \\
Tech peers $\times$ post & $-0.20$ & $-0.43$ & $-1.48$ & $-0.10$ & $-0.25$ & $-1.04$ & 410 \\
 & (0.17) & (0.33) & (0.68) & (0.09) & (0.19) & (0.33) & \\
Tech peers $\times$ pre & $+0.08$ & $+0.07$ & $+0.05$ & $-0.01$ & $-0.04$ & $-0.05$ & 5,430 \\
 & (0.04) & (0.07) & (0.10) & (0.02) & (0.04) & (0.06) & \\
\midrule
Triple difference & $-0.48$ & $-1.03$ & $-1.65$ & $-0.20$ & $-0.54$ & $-1.02$ & 7,008 \\
 & (0.17) & (0.40) & (0.67) & (0.09) & (0.21) & (0.35) & \\
\bottomrule
\end{tabular}
\end{table}

\begin{table}[H]
\centering
\caption{\captitle{The correction speed}
The table presents two readings of the correction speed per instrument and sample, the ratio of the $h{=}2$ to the $h{=}1$ cycle return and a minimum-distance fit to all nine profile coefficients.
The chained pricing error is an autoregression with coefficient $1-\theta\Pi_g$ (Proposition~\ref{prop:chaining}), so the horizon profile identifies $\theta$ given $\Pi_g$. The ratio identity is exactly free of $\theta$, so none of this feeds back into $\hat\Pi_g$.}
\label{tab:theta}
\medskip
\singlespacing
\small
\begin{tabular}{llrrrrr}
\toprule
Shock & Sample & $\hat\Pi_g$ & AR coef & $\hat\theta$ (AR) & $\hat\theta$ (min.\ dist.) & $\hat\theta\hat\Pi_g$ \\
\midrule
SMH & treated-post & 1.85 & -0.29 & 0.70 & 0.63 & 1.11 \\
SMH & DDD & 1.89 & -0.31 & 0.69 & 0.64 & 1.18 \\
QQQ & treated-post & 1.66 & -0.47 & 0.88 & 0.69 & 1.19 \\
QQQ & DDD & 1.68 & -0.48 & 0.88 & 0.69 & 1.23 \\
SPY & treated-post & 1.35 & -0.55 & 1.15 & 0.74 & 1.30 \\
SPY & DDD & 1.33 & -0.62 & 1.22 & 0.75 & 1.32 \\
\midrule
\multicolumn{7}{p{0.95\textwidth}}{\footnotesize AR coef $=1-\theta\Pi_g$, the ratio of the $h{=}2$ to the $h{=}1$ cycle return following a public shock (Prop.\ 22). $\hat\theta$(AR) $=(1-\text{AR})/\hat\Pi_g$; the minimum-distance column fits $(\Pi_g,\theta)$ to all nine profile coefficients ($h=1,2,3$ $\times$ overnight/session/total), weighted by inverse bootstrap variance. $1<\hat\theta\hat\Pi_g<2$ is the ringing region: damped, sign-alternating echoes.} \\
\bottomrule
\end{tabular}
\end{table}

\begin{table}[H]
\centering
\caption{\captitle{The volatility design across control sets}
The table presents the volatility difference-in-differences of Section~\ref{sec:volvol} under each control group: treated $\times$ period coefficients on the log absolute daily return, raw and residualized on rolling two-factor models with betas through $t-1$, with stock and date fixed effects.
The rows present the large-cap and technology-peer groups of Appendix~\ref{app:inference}, the eight KOSPI shares nearest the listing screen, and their union.
The columns present joint Wald $p$-values for zero pre-launch coefficients and the pooled treated $\times$ post estimate on residual volatility, with date-clustered $t$ and randomization rank.
Comparable controls restore parallel trends, and the estimate is $+55$ per cent against the broad union (exact $p = 0.033$) and $+18$ per cent, imprecise, against the technology peers alone.}
\label{tab:battery}
\medskip
\singlespacing
\footnotesize
\setlength{\tabcolsep}{4pt}
\begin{tabular}{lcccccc}
\toprule
 & & \multicolumn{3}{c}{Joint pre-trend $p$} & \multicolumn{2}{c}{Pooled DiD, $\log|\varepsilon|$} \\
\cmidrule(lr){3-5}\cmidrule(lr){6-7}
Control set & $N$ & raw & residual & no-prod.\ era & $b$ $(t)$ & RI rank ($p$) \\
\midrule
Never-eligible large caps & 10 & $<0.001$ & $0.007$ & $0.016$ & $+0.70$ $(+5.1)$ & 1 of 66 $(0.015)$ \\
Nearest eligibility screen & 8 & $<0.001$ & $0.175$ & $0.045$ & $-0.03$ $(-0.2)$ & 25 of 45 $(0.556)$ \\
Tech/semiconductor peers & 10 & $0.025$ & $0.326$ & $0.407$ & $+0.16$ $(+1.2)$ & 15 of 66 $(0.227)$ \\
Union of all sets & 23 & $<0.001$ & $0.234$ & $0.480$ & $+0.44$ $(+3.6)$ & 10 of 300 $(0.033)$ \\
\bottomrule
\end{tabular}

\end{table}

\begin{table}[H]
\centering
\caption{\captitle{Close-specific volatility}
The table presents difference-in-differences estimates, treated $\times$ post with stock and date fixed effects and date-clustered standard errors, for the log absolute close-to-close return, the log absolute open-to-open return, and their within-stock difference $z$, which nets out every volatility component common to the two marks \citep{frenchroll1986}.
The pre-trend columns present joint Wald $p$-values.
The rise loads on the two marks nearly equally, and the close-specific component is statistically zero. $^{*}$, $^{**}$, $^{***}$ mark $10\%$, $5\%$, $1\%$.}
\label{tab:closespec}
\medskip
\singlespacing
\small
\begin{tabular}{lcccc}
\toprule
 & \multicolumn{2}{c}{vs.\ large caps} & \multicolumn{2}{c}{vs.\ tech peers} \\
\cmidrule(lr){2-3}\cmidrule(lr){4-5}
Outcome & $b$ $(t)$ & pre-trend $p$ & $b$ $(t)$ & pre-trend $p$ \\
\midrule
Close-anchored, $\log|r^{cc}|$ & $+0.52^{***}$ $(+3.3)$ & $<0.001$ & $+0.22$ $(+1.4)$ & $0.024$ \\
Open-anchored, $\log|r^{oo}|$ & $+0.45^{***}$ $(+3.0)$ & $<0.001$ & $+0.08$ $(+0.6)$ & $0.157$ \\
Difference $z$ (close $-$ open) & $+0.07$ $(+0.4)$ & $0.453$ & $+0.14$ $(+0.8)$ & $0.731$ \\
\addlinespace
$z$, excluding July 2026 & $+0.15$ $(+0.6)$ & & $+0.25$ $(+1.0)$ & \\
RI rank of $z$ (66 pairs) & 19 of 66 & & 16 of 66 & \\
\bottomrule
\end{tabular}

\end{table}

\begin{table}[H]
\centering
\caption{\captitle{News-conditional aftershock volatility}
The table presents the next-day absolute return $|r_{t+1}|$ on the absolute pre-open shock $|\varepsilon_t|$ interacted with treated $\times$ post, estimated with stock and date fixed effects, lead $|\varepsilon_{t+1}|$ interactions as controls, and date-clustered standard errors.
The coefficient is the treated-specific change at the launch in how strongly yesterday's news size predicts today's movement: $+0.66$ ($t = 3.08$) against the large-cap group and $+0.81$ ($t = 3.09$) against the technology peers, both surviving exclusion of the crash month. The true pair ranks 1st and 3rd of 66, and no placebo date of 47 is as large. $^{*}$, $^{**}$, $^{***}$ mark $10\%$, $5\%$, $1\%$.}
\label{tab:aftershock}
\medskip
\singlespacing
\small
\begin{tabular}{lcccc}
\toprule
Shock instrument & \multicolumn{2}{c}{vs.\ large caps} & \multicolumn{2}{c}{vs.\ tech peers} \\
\midrule
$|$SMH$|$ & $+0.66^{***}$ & $(0.21)$ & $+0.81^{***}$ & $(0.26)$ \\
$|$QQQ$|$ & $+1.15^{***}$ & $(0.43)$ & $+1.42^{***}$ & $(0.50)$ \\
$|$SPY$|$ & $+1.13$ & $(0.69)$ & $+1.39^{*}$ & $(0.78)$ \\
\addlinespace
$|$SMH$|$, excluding July 2026 & $+0.55^{**}$ & $(0.25)$ & $+0.84^{***}$ & $(0.31)$ \\
RI: true-pair rank (66 pairs) & \multicolumn{2}{c}{1 of 66} & \multicolumn{2}{c}{3 of 66} \\
RI: placebo launch dates $\ge$ true & \multicolumn{2}{c}{0 of 47} & \multicolumn{2}{c}{---} \\
Joint pre-launch slope path $p$ & \multicolumn{2}{c}{0.77} & \multicolumn{2}{c}{0.72} \\
\bottomrule
\end{tabular}

\end{table}

\begin{table}[H]
\centering
\caption{\captitle{The migration into the reference windows}
The table presents difference-in-differences estimates, treated $\times$ post with contract or stock and date fixed effects and date-clustered standard errors, on shares of daily traded value in percentage points.
The futures outcomes are the 15:45 closing auction's share and the final-thirty-minute share, over April~1--June~30, 2026, on the treated pair and 22 Tier-A roots. The cash outcomes are the 15:20--15:30 auction's share and the final-thirty-minute continuous share, over May~4--June~30, on the treated pair and 91 Tier-A names.
The final column presents exact randomization inference over treated-pair relabelings. The futures closing auction carries the sharp result; the cash auction-share rise does not clear the pair permutation and is reported as a null. $^{*}$, $^{**}$, $^{***}$ mark $10\%$, $5\%$, $1\%$.}
\label{tab:quantity}
\medskip
\singlespacing
\footnotesize
\setlength{\tabcolsep}{4pt}
\begin{tabular}{lcccc}
\toprule
Share of session traded value (pp) & Treated pre & Treated post & DiD & RI rank ($p$) \\
\midrule
Futures: closing-auction (15:45) share & 0.28 & 1.36 & $+0.96^{***}$ & 4 of 276 $(0.014)$ \\
 & & & (0.09) & \\
Futures: final-30-minute share & 11.41 & 14.69 & $+1.89^{**}$ & 21 of 276 $(0.076)$ \\
 & & & (0.92) & \\
Cash: closing-auction share & 7.34 & 9.75 & $+0.95^{**}$ & 648 of 4278 $(0.151)$ \\
 & & & (0.38) & \\
Cash: final-30-minute continuous share & 6.01 & 6.68 & $-0.60^{*}$ & 812 of 4278 $(0.190)$ \\
 & & & (0.34) & \\
\bottomrule
\end{tabular}

\end{table}


\begin{table}[H]
\centering
\caption{\captitle{Six estimators of the closing-auction impact coefficient}
Each row presents the capital that displaces the closing price by one percent and the implied loop gain at the complex's own order on a one-percent day.
The model's $\Lambda_c$ requires signed net order flow, flow in the window where the complex trades, and displacement still present at the close. Row~(1), the Amihud ratio, fails the first, and row~(2), the square-root law, uses no data of ours; together they bracket the plausible range. Rows~(3)--(6) present the native tape at four venue-horizon combinations, and row~(6) is the only one satisfying all three properties and the estimator carried through Section~\ref{sec:quant}.}
\label{tab:impact_menu}
\medskip
\small
\begin{tabular}{lrrrr}
\toprule
Estimator of $\Lambda_c$ & \multicolumn{2}{c}{\KRW bn moving the close 1\%} & \multicolumn{2}{c}{implied $\ell$ at a 1\% move} \\
\cmidrule(lr){2-3}\cmidrule(lr){4-5}
 & Samsung & SK Hynix & Samsung & SK Hynix \\
\midrule
(1) daily Amihud{,} gross turnover & 1{,}690 & 2{,}023 & 0.02 & 0.10 \\
(2) square-root law{,} $c=1$ & 356 & 343 & 0.34 & 0.76 \\
(3) tape{,} mid-session $\to$ window end & 186 & 189 & 0.35 & 1.03 \\
(4) tape{,} mid-session $\to$ close & 200 & 199 & 0.39 & 0.99 \\
(5) tape{,} pre-close $\to$ block end & 2{,}204 & 2{,}064 & 0.18 & 0.36 \\
(6) tape{,} pre-close $\to$ close & 1{,}302 & 1{,}252 & 0.18 & 0.40 \\
\bottomrule
\end{tabular}
\end{table}


\begin{figure}[H]
\centering
\includegraphics[width=.92\textwidth]{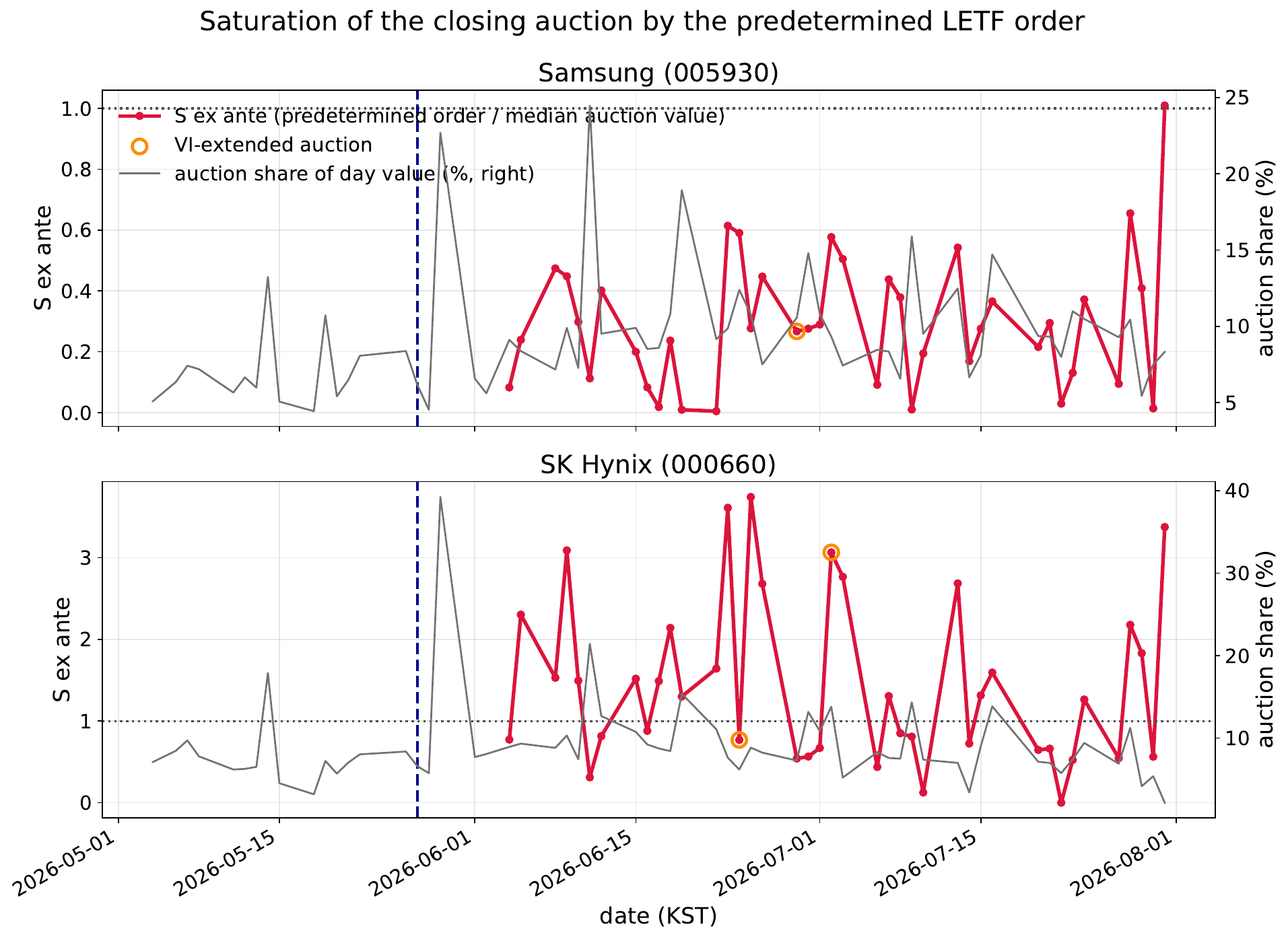}
\caption{\captitle{Auction saturation over time}
The figure plots the saturation ratio $S_t$ of equation~\eqref{eq:saturation} for SK~Hynix around the May~27, 2026 launch: near zero before, jumping at it, and pinned around one after (post-launch median $1.06$, maximum $7.50$, above one on 26 of 46 days).
Vertical lines mark the launch and the July~16 listing freeze.}
\label{fig:saturation}
\end{figure}

\subsection{Quantification details}

\begin{table}[H]
\centering
\caption{\captitle{The calibration on the endogenous impact curve}
The table reads the measured impact system through the concave-impact layer of Proposition~\ref{prop:curve} at its closed-form case $\varphi_c = 2$, per treated name at the $|r| = 2\%$ reference day.
Panel~A maps the fitted curve into the layer's primitives at the complex's own order size, at the pooled $\hat\psi_c$ of Section~\ref{sec:calibration} rather than at a capacity fitted here.
Panel~B presents the three loop gains a concave curve separates, touch, average at the complex's own size, and marginal, with four readings of the overshoot: the concave close at zero capacity and at $\hat\psi_c$ --- the pricing the calibration inverts and the replay uses --- the linear chain at $\hat\psi_c$ as a diagnostic, and the pooled measured $\hat\Pi_g$ they are held against.
Panel~C prices in closed form the regimes the linear model cannot, the sustained-ringing amplitude and the no-news bootstrap price of equation~\eqref{eq:amplitude}, evaluated at $\theta = 1$ where the news-free skeleton is defined, so its ring crossing is $(\bar\rho+\psi_c)/2$; Proposition~\ref{prop:ladder} locates the same threshold at the $\theta$-general $(\bar\rho+\psi_c)(1-\theta/2)$, which at the measured $\hat\theta = 0.88$ sits slightly higher.}
\label{tab:g4calibration}
\medskip
\small
\begin{tabular}{lcc}
\toprule
 & SK Hynix & Samsung \\
\midrule
\multicolumn{3}{l}{\textit{Panel A: layer parameters at the operating point ($\varphi_c = 2$, $|r| = 2\%$)}} \\
local exponent at own size $\hat\delta$ (curve, within-$\Delta$) & 0.732 & 0.732 \\
fringe/base concession ratio $x$ & 0.58 & 0.58 \\
print concession $\bar p = \ell_a|r|$ (bp) & 117 & 44 \\
crossover concession $p^{*}$ (bp) & 201 & 75 \\
crossover size $Q^{*}$ (KRW bn) & 1,749 & 365 \\
arbitrage capacity $\hat\psi_c$ (stage~02, pooled; \emph{not} fitted here) & 0.260 & 0.466 \\
\midrule
\multicolumn{3}{l}{\textit{Panel B: the three doses, and the overshoot (untargeted throughout)}} \\
 & & Targeted \\
touch dose $\ell$ (linear core) & 0.92 & 0.35 \\
average dose at own size $\ell_a$ (stage~07 ladder) & 0.58 & 0.22 \\
marginal dose at own size $\ell_m = \hat\delta\,\ell_a$ & 0.43 & 0.16 \\
concave-close $\Pi_g$ at $\psi_c = 0$ \hfill (no) & 2.21 & 1.34 \\
concave-close $\Pi_g$ at $\hat\psi_c$ \hfill (no) & 2.06 & 1.27 \\
linear-chain $\Pi_g$ at $\hat\psi_c$ (the replay's loading) \hfill (no) & 4.80 & 1.30 \\
pooled measured $\hat\Pi_g$ (ratio identity; QQQ) \hfill (no) & \multicolumn{2}{c}{1.66} \\
\midrule
\multicolumn{3}{l}{\textit{Panel C: regimes in closed form}} \\
ring crossing $\ell > (\bar\rho+\psi_c)/2$ / pole $\ell > \bar\rho+\psi_c$ \hfill ($\theta = 1$) & 0.37 / 0.74 & 0.47 / 0.94 \\
ringing at $\psi_c = 0$: $e^{*} = (2\ell - \bar\rho)p^{*}$ (bp) & 275 & 16 \\
bootstrap print at $\psi_c = 0$: $r^{*} = (\ell - \bar\rho)p^{*}$ (bp) & 89 & -- \\
at $\hat\psi_c$: $e^{*} = (2\ell - \bar\rho - \psi_c)p^{*}$, $r^{*} = (\ell - \bar\rho - \psi_c)p^{*}$ (bp) & 222, 37 & --, -- \\
\quad at across-$\Delta$ $\hat\delta = 0.53$: $e^{*}$, $r^{*}$ (bp; $\psi_c = 0$) & 148, 70 & 19, 8 \\
\bottomrule
\end{tabular}
\end{table}


\begin{table}[H]
\centering
\caption{\captitle{The opening auction against the closing auction}
Panel~A measures $\kappa$, the opening auction's depth relative to the closing auction, as the ratio of the two auctions' traded value. The 09:00 bar mixes the auction with the first continuous minute (Appendix~\ref{app:krx}), so it bounds the auction from above and the headline nets the first minute out at the 09:01 rate; every reading is an upper bound, the conservative direction for the two-print counterfactuals of Table~\ref{tab:schedule}.
Panel~B runs each leg separately in logs against the ten never-eligible large caps, with stock and date fixed effects and date-clustered standard errors: the opening auction does not move across the launch while the closing auction grows two fifths, so counterfactuals that move the order are calibrated on the pre-launch window.
The tape runs from 2026-05-04 to 2026-06-30 on 926 issues.}
\label{tab:kappa}
\medskip
\singlespacing
\small
\begin{tabular}{lrr}
\toprule
 & Samsung & SK Hynix \\
\midrule
\multicolumn{3}{l}{\textit{Panel A. $\kappa = W^{\text{open}}/W^{\text{close}}$, median of daily ratios}} \\
pre-launch $^\dagger$ & 0.75 & 0.59 \\
full sample & 0.39 & 0.36 \\
post-launch & 0.30 & 0.28 \\
\quad trading days & 15 / 59 / 44 & 15 / 56 / 41 \\
\quad upper bound, whole 09:00 bar (full sample) & 0.50 & 0.48 \\
\midrule
\multicolumn{3}{l}{\textit{Panel B. Which auction moved: treated versus Tier-A controls, log level}} \\
 & $\beta^{\text{DiD}}$ & $t$ \\
opening auction value & +0.089 & +0.82 \\
closing auction value & +0.479 & +8.13 \\
continuous session value & +0.404 & +6.10 \\
$\kappa$ & -0.388 & -3.26 \\
\bottomrule
\end{tabular}
\end{table}


\begin{table}[H]
\centering
\caption{\captitle{The attendance response of the closing auction}
The table presents the attendance visibility test for Proposition~\ref{prop:attendance}, on the treated pair and the ten never-eligible large caps with stock and date fixed effects and date-clustered standard errors. $S$ is the predetermined saturation $|K^{\text{spot}}_{t-1}\hat r_t| / \text{med}(W^{\text{auc}})_{t-1}$, with a lagged denominator and $\hat r$ ending at 15:10, so outcome and regressor share no price mark.
The two outcomes are traded value, which the mandated order contaminates, and the effective spread, which it does not. They agree at the intensive margin, where a higher loop gain brings more value and a tighter spread, and part at the extensive one, where value jumps two fifths across the launch while the spread does not move.
The depth response is therefore bounded rather than pinned, and Table~\ref{tab:schedule} prices the two-print design at both ends. The positive spread reading is controls-dependent, and the close-bar spread fields are partly degenerate (Appendix~\ref{app:krx}).}
\label{tab:attendance}
\medskip
\singlespacing
\small
\begin{tabular}{lrr}
\toprule
 & $\beta$ & ($t$) \\
\midrule
\multicolumn{3}{l}{\textit{Panel A. The extensive margin: treated $\times$ post}} \\
closing auction, traded value & +0.444 & (+7.89) \\
closing auction, effective spread & +0.129 & (+1.57) \\
\quad memo: opening auction, traded value & +0.089 & (+0.82) \\
\quad memo: continuous session, traded value & +0.404 & (+6.10) \\
\midrule
\multicolumn{3}{l}{\textit{Panel B. The intensive margin: the predetermined dose $S$}} \\
closing auction, traded value & +0.226 & (+3.73) \\
\quad + $|\hat r|$ and lagged auction size & +0.047 & (+0.74) \\
closing auction, effective spread & +0.256 & (+1.15) \\
\quad + $|\hat r|$ and lagged auction size & +0.076 & (+0.33) \\
\midrule
\multicolumn{3}{l}{\textit{Panel C. As a depth response, $d\log\Lambda_c/dS$}} \\
extensive margin, via traded value & -0.573 & -- \\
extensive margin, via spread & +0.166 & -- \\
intensive margin, via traded value & -0.047 & -- \\
intensive margin, via spread & +0.076 & -- \\
\quad memo: $d\log\hat\rho^{\,g}/d\log W$ (stage 47, across sizes) & -0.092 & -- \\
\bottomrule
\end{tabular}
\end{table}

\begin{table}[H]
\centering
\caption{\captitle{The schedule counterfactuals priced on both treated names}
The table runs Table~\ref{tab:schedule}'s three timing levers for Samsung as well as SK~Hynix, columns and method as there, with the excess volatility and the mean pricing error restored, the mandate-tracking-error column omitted, zero for every design here, and the manufacturing leverage carried under its symbol $\bar A$.
Panel~A is priced at the design's most favorable, as in the main table: dispersal worsens both names monotonically in $m$, and the smaller complex deteriorates more slowly.}
\label{tab:schedule_both}
\medskip
\singlespacing
\scriptsize
\setlength{\tabcolsep}{3pt}
\begin{tabular}{lrrrrrr@{\hskip 14pt}rrrrrr}
\toprule
 & \multicolumn{6}{c}{Samsung} & \multicolumn{6}{c}{SK Hynix} \\
\cmidrule(lr){2-7}\cmidrule(lr){8-13}
 & $\bar\ell$ & $\bar A$ & vol & excess & mean $|e|$ & holder loss & $\bar\ell$ & $\bar A$ & vol & excess & mean $|e|$ & holder loss \\
 &  &  & (\%) & (pp) & (\%) & (\KRW bn) &  &  & (\%) & (pp) & (\%) & (\KRW bn) \\
\midrule
\multicolumn{13}{l}{\textit{Benchmarks}} \\
status quo$^\dagger$ & 0.219 & 0.34 & 115.3 & 13.3 & 0.33 & 79 & 0.583 & 1.77 & 136.7 & 36.6 & 0.99 & 1{,}380 \\
no loop ($\ell\to0$) & 0.000 & 0.00 & 102.0 & 0.0 & 0.00 & 0 & 0.000 & 0.00 & 100.0 & 0.0 & 0.00 & 0 \\
\addlinespace
\multicolumn{13}{l}{\textit{Panel A. Staggered reference prints: $m$ equal funds, each rebalancing at its own intraday moment against its own print}} \\
$m = 2$ & 0.132 & 0.34 & 118.9 & 16.9 & 0.40 & 98 & 0.351 & 1.77 & 146.9 & 46.9 & 1.21 & 1{,}858 \\
$m = 3$ & 0.098 & 0.34 & 121.2 & 19.2 & 0.44 & 110 & 0.261 & 1.77 & 152.0$^{\S}$ & -- & -- & 2{,}098$^{\S}$ \\
$m = 5$ & 0.067 & 0.34 & 124.4 & 22.4 & 0.49 & 128 & 0.180 & 1.77 & 155.7$^{\S}$ & -- & -- & 2{,}235$^{\S}$ \\
\addlinespace
\multicolumn{13}{l}{\textit{Panel B. The rebalancing interval: the mandate re-formed every $T$ days, on the cumulative $T$-day return}} \\
$T=2$ & 0.296 & 0.34 & 117.1 & 14.0 & 0.47 & -- & 0.675 & 1.77 & 143.2 & 41.0 & 1.36 & -- \\
\addlinespace
\multicolumn{13}{l}{\textit{Panel C. Two reference prints: the order split between the opening and closing auctions}} \\
equal depth & 0.261 & 0.34 & 109.7 & 7.7 & 0.21 & 46 & 0.665 & 1.77 & 118.2 & 18.2 & 0.43 & 497 \\
endogenous attendance & 0.285 & 0.34 & 110.3 & 8.3 & 0.23 & 51 & 0.738 & 1.77 & 120.5 & 20.5 & 0.52 & 592 \\
pre-launch depth & 0.285 & 0.34 & 110.4 & 8.4 & 0.24 & 53 & 0.736 & 1.77 & 121.0 & 21.0 & 0.56 & 635 \\
\bottomrule
\end{tabular}
\end{table}

\begin{table}[H]
\centering
\caption{\captitle{The reference counterfactuals priced on both treated names}
The table runs Table~\ref{tab:reference}'s execution split and averaged reference prices for Samsung as well as SK~Hynix, columns and method as there, with the excess volatility and the mean pricing error restored, the mandate-tracking-error column omitted, and the manufacturing leverage carried under its symbol $\bar A$.
The measured split differs sharply across the pair, $\hat\varphi = 0.76$ for SK~Hynix against $0.44$ for Samsung, so the same counterfactual $\varphi$ is a cut for one name and close to the status quo for the other.}
\label{tab:reference_both}
\medskip
\singlespacing
\scriptsize
\setlength{\tabcolsep}{2pt}
\begin{tabular}{lrrrrrr@{\hskip 8pt}rrrrrr}
\toprule
 & \multicolumn{6}{c}{Samsung} & \multicolumn{6}{c}{SK Hynix} \\
\cmidrule(lr){2-7}\cmidrule(lr){8-13}
 & $\bar\ell$ & $\bar A$ & vol & excess & mean $|e|$ & holder loss & $\bar\ell$ & $\bar A$ & vol & excess & mean $|e|$ & holder loss \\
 &  &  & (\%) & (pp) & (\%) & (\KRW bn) &  &  & (\%) & (pp) & (\%) & (\KRW bn) \\
\midrule
\multicolumn{13}{l}{\textit{Benchmarks}} \\
status quo$^\dagger$ & 0.219 & 0.34 & 115.3 & 13.3 & 0.33 & 79 & 0.583 & 1.77 & 136.7 & 36.6 & 0.99 & 1{,}380 \\
no loop ($\ell\to0$) & 0.000 & 0.00 & 102.0 & 0.0 & 0.00 & 0 & 0.000 & 0.00 & 100.0 & 0.0 & 0.00 & 0 \\
\addlinespace
\multicolumn{13}{l}{\textit{Panel A. The execution split $\varphi$: two auctions of the same depth, $\varphi$ into the reference auction}} \\
$\varphi = 1.00$ & 0.22/0.00 & 0.34 & 115.3 & 13.3 & 0.33 & 79 & 0.58/0.00 & 1.77 & 136.7 & 36.6 & 0.99 & 1{,}380 \\
$\varphi = 0.75$ & 0.18/0.08 & 0.34 & 117.8 & 15.8 & 0.50 & 160 & 0.47/0.21 & 1.77 & 146.7 & 46.7 & 1.46 & 2{,}908 \\
$\varphi = 0.50$ & 0.13/0.13 & 0.34 & 117.5 & 15.6 & 0.55 & 211 & 0.35/0.35 & 1.77 & 146.4 & 46.4 & 1.61 & 3{,}702 \\
$\varphi = 0.25$ & 0.08/0.18 & 0.34 & 116.1 & 14.1 & 0.57 & 252 & 0.21/0.47 & 1.77 & 142.2 & 42.1 & 1.62 & 4{,}150 \\
$\varphi = 0.00$ & 0.00/0.22 & 0.34 & 111.0 & 9.1 & 0.47 & 253 & 0.00/0.58 & 1.77 & 125.2 & 25.2 & 1.23 & 3{,}411 \\
\addlinespace
\multicolumn{13}{l}{\textit{Panel B. Reference averaging: the mandate settles on an average of several prints}} \\
same-day average (open, close)$^{\ddagger}$ & 0.109 & 0.22 & 122.1 & 20.1 & 0.93 & 309 & 0.291 & 1.49 & \multicolumn{4}{c}{\textit{no damped eq.}} \\
MA(3) settlement mark & 0.073 & 0.11 & 107.5 & 5.5 & 0.16 & 11 & 0.194 & 0.59 & 116.2 & 16.2 & 0.51 & 237 \\
MA(5) settlement mark & 0.044 & 0.07 & 105.6 & 3.6 & 0.11 & 5 & 0.117 & 0.35 & 110.7 & 10.7 & 0.36 & 161 \\
\bottomrule
\end{tabular}
\end{table}

\begin{table}[H]
\centering
\caption{\captitle{The calibration in full}
The table presents four panels mirroring Table~\ref{tab:calibration}.
Panel~A measures the loop gain as a ladder: tier~0 the daily Amihud proxy; tier~1, the one carried forward, the size-aware closing-impact curve at each day's own order and venue split; tier~2 marginal rather than average ($\times\hat\delta$); and tier~3 ($\times\hat\tau$, the reverting component alone), which is defined for SK~Hynix and dashed for Samsung, whose order reaches a participation at which $\hat\tau$ is still negative.
Panel~B carries the one calibrated parameter by both routes: the concave pooled inversion $\hat\tau = 8{,}772$ of record (Appendix~\ref{app:calibration}), with its bootstrap interval, the profile bound, and the share of draws at the boundary, and the linear-close inversion $\hat\tau = 20{,}498$ retained as a legacy diagnostic; the row $\hat\psi_c^{\,i} = \hat\tau\Lambda_c^i$ turns one sector-wide capacity into two per-name ones.
Panel~C presents the restrictions the calibration never targeted, the per-name model overshoot against the pooled measured $\hat\Pi_g$, the leg ratio under all three instruments, and the depth and loop gain that would reproduce $\hat\Pi_g$ exactly; its per-name inversion rows are a sensitivity of the map, not a per-name test.
Panel~D presents what the self-reinforcement explains. Its volatilities are on the price-level convention of the shock-recovery replay and run about two points above the log convention the policy tables use; its per-unit row marks both replays at the recovered fundamental rather than at each one's own displaced close, which is the basis Section~\ref{sec:effect} reports.}
\label{tab:calibration_full}
\medskip
\singlespacing
\scriptsize
\begin{tabular}{lrr}
\toprule
 & Samsung & SK Hynix \\
\midrule
\multicolumn{3}{l}{\textit{Panel A. The measured loop gain $\ell = \Lambda_c K$, tier by tier}}\\
(0) Figure~\ref{fig:dose} proxy: Amihud $\times$ $K$ & 0.049 & 0.196 \\
(1) tape impact at own size and venue $^\dagger$ & 0.219 & 0.583 \\
\quad median day / peak day & 0.24 / 0.28 & 0.61 / 0.79 \\
\quad flow-weighted mean (weights $\propto K_t$) & 0.235 & 0.608 \\
(2) $\times\,\hat\delta$ (marginal) & 0.108 & 0.288 \\
(3) $\times\,\hat\tau$ (reverting component only) & -- & 0.026 \\
\quad $\hat\tau$ at the median participation & -0.60 at $\pi = 0.10$ & +0.09 at $\pi = 0.58$ \\
\midrule
\multicolumn{3}{l}{\textit{Panel B. The calibrated capacity: one pooled $\hat\tau$, two routes}}\\
$\hat\tau$, concave pooled inversion (calibration of record) & \multicolumn{2}{c}{8{,}772} \\
\quad bootstrap 95\% interval (concave draws) & \multicolumn{2}{c}{[1{,}605, 410{,}559]} \\
\quad range within one s.e.\ of the best untargeted fit & \multicolumn{2}{c}{$\hat\tau \le 16{,}692$, open at the low end} \\
\quad share of bootstrap draws at the concave boundary & \multicolumn{2}{c}{40\%} \\
$\hat\psi_c^{\,i} = \hat\tau\,\Lambda_c^i$ & 0.466 & 0.260 \\
\quad zero-capacity pooled concave $\Pi_g$ (ceiling) vs measured & \multicolumn{2}{c}{1.77 vs 1.66} \\
$\hat\tau$, linear inversion (legacy diagnostic) & \multicolumn{2}{c}{20{,}498} \\
\quad legacy bootstrap 95\% interval / boundary share & \multicolumn{2}{c}{[8{,}232, 138{,}574]; 0\%} \\
\quad share of draws with $\hat\Pi_g \le 1$ (no overshoot) & \multicolumn{2}{c}{6\%} \\
\quad $\hat\tau$ at the tape tier (linear / concave) & \multicolumn{2}{c}{20{,}498 / 8{,}772} \\
\quad $\hat\tau$ at the marginal tier (linear / concave) & \multicolumn{2}{c}{10{,}873 / 0$^{b}$} \\
\quad $\hat\tau$ at the amihud tier (linear / concave) & \multicolumn{2}{c}{0$^{b}$ / 0$^{b}$} \\
\midrule
\multicolumn{3}{l}{\textit{Panel C. The untargeted moments, and the solved repairs}}\\
model $\Pi_g$ at $\hat\psi_c$, linear chain (per name, untargeted) & 1.30 & 4.80 \\
\quad zero-capacity ceiling $m = 1/(1-\bar\ell)$ & 1.28 & 2.40 \\
model $\Pi_g$ at $\hat\psi_c$, concave close (per name, untargeted) & 1.27 & 2.06 \\
pooled measured $\hat\Pi_g$ (QQQ) & \multicolumn{2}{c}{1.66 [0.84, 3.27]} \\
next-session\,:\,overnight leg, data / $\hat\Pi_g-1$ (QQQ) & \multicolumn{2}{c}{0.68 / 0.69} \\
next-session\,:\,overnight leg, data / $\hat\Pi_g-1$ (SMH) & \multicolumn{2}{c}{0.88 / 0.90} \\
next-session\,:\,overnight leg, data / $\hat\Pi_g-1$ (SPY) & \multicolumn{2}{c}{0.37 / 0.38} \\
$\ell$ reproducing $\hat\Pi_g$ at $\hat\psi_c$ (linear route) & 0.375 & 0.293 \\
\quad implied $K$ multiple at $\hat\delta = 0.73$ & 2.09 & 0.39 \\
\quad worldwide $K$ required vs actual (\KRW tn) & 18 vs 8.5 & 17 vs 42.4 \\
measured $\hat\rho^{\,g}$ (curvature evidence, not a parameter) & \multicolumn{2}{c}{0.43 [0.41, 0.46]} \\
\midrule
\multicolumn{3}{l}{\textit{Panel D. What the loop explains at $(\ell^{\mathrm{meas}}, \hat\psi_c)$}}\\
realized annualized volatility (\%) & 117.5 & 138.9 \\
counterfactual at $\ell=0$ (\%) & 104.3 & 102.2 \\
\quad excess (pp) & 13.2 & 36.7 \\
holder loss (\KRW bn) & 79 & 1{,}380 \\
\KRW 10{,}000 becomes at the fundamental, vs no loop & 5{,}971 vs 6{,}740 & 4{,}790 vs 6{,}554 \\
\bottomrule
\end{tabular}
\end{table}

\begin{table}[H]
\centering
\caption{\captitle{The measured impact system}
The table presents the output of the estimator of Appendix~\ref{app:impact} on 626{,}571 signed windows of the native cash tape.
Panel~A identifies the curvature of the impact curve of equation~\eqref{eq:acurve} three ways and reports the invariance restriction as an over-identification test; Panel~B measures permanence, which is negative at small size, where flow is information, and crosses zero below the automaton's own order; Panel~C prices the closing auction against the continuous session at matched size and clock, and the calibration reads $\hat\rho = 0.433$ off its $\varphi = 1.00$ row, the last row being an independent route to the same object that does not use the curve.
Intervals are 500 date-block bootstrap draws with re-binning inside each draw.
The reconciliation ladder is in Table~\ref{tab:calibration_full}, Panel~A, flow-weighted over the post-launch days.}
\label{tab:impact}
\medskip
\singlespacing
\small
\setlength{\tabcolsep}{4pt}
\begin{tabular}{lr}
\toprule
\multicolumn{2}{l}{\emph{A. The impact curve}\quad $R_w/\sigma_\Delta = Y_i\,s_w\,\pi_w^{\delta}\,(\Delta/390)^{\zeta}$} \\
$\hat\delta$ (within-$\Delta$, across-$\pi$) & 0.732 [0.676, 0.762] \\
$\hat\zeta$ (clock) & 0.035 [0.009, 0.085] \\
$\hat\delta$ implied by $\zeta$ (across-$\Delta$) & 0.535 [0.509, 0.585] \\
$\hat\delta$, $\Delta$ dummies (robustness) & 0.739 \\
$\hat Y$, Samsung & 2.36 [1.99, 2.93] \\
$\hat Y$, SK Hynix & 2.64 [2.17, 3.34] \\
$\hat Y$, tier-A median (untreated reference) & 1.97 \\
bins / dropped ($m_k\leq0$) / $R^2$ & 782 / 11 / 0.859 \\
\multicolumn{2}{l}{\quad invariance test $\zeta=0$: \textbf{zeta = 0 REJECTED -- extrapolate only within the late block}} \\
\midrule
\multicolumn{2}{l}{\emph{B. Permanence} ($\tau=1-P/I$ at the next open, grid late20)} \\
\quad $\hat\tau$ at $Q=$ KRW 10bn & -1.84 \\
\quad $\hat\tau$ at $Q=$ KRW 100bn & -0.324 \\
\quad $\hat\tau$ at $Q=$ KRW 580bn & 0.260 \\
\quad $\tau_{adv}$ = Hs / effective spread (Samsung / SK Hynix) & 0.373 / 0.594 \\
\midrule
\multicolumn{2}{l}{\emph{C. The auction on the curve}\quad $\rho^g(W)=e^{b}(W/V_\Delta)^{c}$, gross, matched 10-min clock} \\
\quad $b$ (auction intercept) / $c$ (auction slope) & -0.766 / -0.092 \\
\quad $\rho^g$ at the median $W^{auc}$ & 0.45 [0.42, 0.48] \\
\quad $\rho^g$ at $\varphi F$, $\varphi=0.25$ & 0.492 \\
\quad $\rho^g$ at $\varphi F$, $\varphi=0.44$ & 0.467 \\
\quad $\rho^g$ at $\varphi F$, $\varphi=0.76$ & 0.444 \\
\quad $\rho^g$ at $\varphi F$, $\varphi=1.00$ & 0.433 \\
\quad $\rho^{inv}$ (transient variance, equal size) & 0.482 \\
\midrule
\multicolumn{2}{p{0.78\textwidth}}{\footnotesize Implements \texttt{notes/impact\_estimator\_design\_2026-07-31.md}; the specification, the three curvature readings and the venue comparison are described in Appendix~\ref{app:impact}. Windows with signed coverage below 90\%, missing marks, or fewer than $\Delta/2$ traded bars are dropped; $\pi$ and every return leg are winsorised at 1\%/99\% within issue$\times$grid, and bins hold at least 200 observations ($\leq20$ per liquidity$\times\Delta$ cell). Confidence intervals are 500 date-block bootstrap draws (seed 20260725), re-binning inside every draw. The 96 VI-extended issue-days leave the venue comparison: a 12-minute interrupted auction is not a 10-minute auction. Empirical signed support reaches $\pi=1.33$ against $\pi=1.07$ implied by $F$ = KRW 580bn, an extrapolation factor of 0.808$\times$.} \\
\bottomrule
\end{tabular}
\end{table}


\begin{table}[H]
\centering
\caption{\captitle{The overshoot factor under each shock instrument}
The table applies the ratio identity of Proposition~\ref{prop:chaining}(iii) to the treated post-launch cell of Table~\ref{tab:echo} under each pre-open U.S.\ instrument. $\hat\Pi_g$ is the one-cycle reversal over its overnight leg, exactly free of $\theta$, and $\hat\theta$ is read separately off the decay of successive legs (Table~\ref{tab:theta}).
The final column re-identifies the calibrated capacity under each instrument. $\dagger$ marks the benchmark: QQQ is the middle estimate, and its implied $\hat\theta = 0.88$ lies strictly inside the unit interval, unlike the broad-market instrument's $1.15$.
Intervals are from 800 date-block bootstrap draws.}
\label{tab:pi0_instruments}
\medskip
\singlespacing
\small
\begin{tabular}{lrrrrr}
\toprule
Instrument & one-cycle & overnight & $\hat\Pi_g$ [95\% CI] & $\hat\theta$ & implied $\hat\psi_c$ \\
 & reversal & leg & (ratio identity) & & (SK~Hynix) \\
\midrule
QQQ$^\dagger$ & -1.78 & -1.07 & 1.66 [0.84, 3.27] & 0.88 & 0.608 \\
SMH & -0.84 & -0.45 & 1.85 [0.93, 6.07] & 0.70 & 0.492 \\
SPY & -3.98 & -2.94 & 1.35 [0.76, 1.98] & 1.15 & 1.072 \\
\bottomrule
\end{tabular}
\end{table}

\end{document}